\documentclass[11pt]{article}

\usepackage[margin=1in]{geometry}
\usepackage{indentfirst}
\usepackage{times}
\usepackage{amsmath}
\usepackage{bm}
\usepackage{amssymb}
\usepackage{amsthm}
\usepackage{mathtools}
\usepackage{mleftright}
\usepackage{thmtools}
\usepackage{comment}
\usepackage{algorithm}
\usepackage[noend]{algpseudocode}
\usepackage[numbers]{natbib}
\usepackage[colorlinks=true,linkcolor=blue,citecolor=blue,urlcolor=blue]{hyperref}
\usepackage[nameinlink,capitalize]{cleveref}
\usepackage{enumitem}
\usepackage{tikz}
\usetikzlibrary{arrows.meta,calc,positioning}
\newtheorem{theorem}{Theorem}[section]
\newtheorem{lemma}[theorem]{Lemma}
\newtheorem{claim}[theorem]{Claim}

\newtheorem{definition}[theorem]{Definition}

\RenewCommandCopy{\theHtheorem}{\thetheorem}

\DeclareMathOperator*{\Prob}{Pr}

\newcommand{\bool}{\mleft\{0,1\mright\}}

\newcommand{\poly}{\mathsf{poly}}

\newcommand{\PSAMP}{\mathsf{PSAMP}}
\newcommand{\supp}{\mathsf{Support}}

\newcommand{\eqdef}{\triangleq}

\renewcommand{\P}{\mathsf{P}}
\newcommand{\NP}{\mathsf{NP}}
\newcommand{\coNP}{\mathsf{coNP}}
\newcommand{\PH}{\mathsf{PH}}
\newcommand{\BPP}{\mathsf{BPP}}
\newcommand{\Avg}{\mathsf{Avg}}
\newcommand{\Dist}{\mathsf{Dist}}
\newcommand{\BPTIME}{\mathsf{BPTIME}}
\newcommand{\prBPP}{\mathsf{prBPP}}
\newcommand{\promise}{\mathsf{promise}\text{-}}

\newcommand{\DistPH}{\Dist\PH}

\newcommand{\AM}{\mathsf{AM}}

\newcommand{\SAT}{\mathsf{SAT}}

\newcommand{\K}{\mathsf{K}}
\newcommand{\Kt}{\mathsf{Kt}}
\newcommand{\pK}{\mathsf{pK}}

\newcommand{\nK}{\mathsf{nK}}

\newcommand{\pnK}{\mathsf{pnK}}

\newcommand{\lK}{\text{-}\mathsf{K}}
\newcommand{\lpK}{\text{-}\mathsf{pK}}
\newcommand{\lpnK}{\text{-}\mathsf{pnK}}
\newcommand{\MINKT}{\mathsf{MINKT}}

\newcommand{\MINpKT}{\mathsf{MINpKT}}

\newcommand{\MINlpKT}{\mathsf{MIN\ell\text{-}pKT}}

\newcommand{\GapMINpKT}{\mathsf{Gap}\textnormal{-}\allowbreak\MINpKT}
\newcommand{\MildGapMINpKT}{\mathsf{Mild}\textnormal{-}\allowbreak\mathsf{Gap}\textnormal{-}\allowbreak\MINpKT}
\newcommand{\GapMINKT}{\mathsf{Gap}\textnormal{-}\allowbreak\MINKT}
\newcommand{\numberthis}{\addtocounter{equation}{1}\tag{\theequation}}

\title{The Gap Between Average-Case and Worst-Case Complexity\vspace{0.4cm}}
\author{
Halley Goldberg\thanks{University of Warwick. \texttt{E-mail:~halley.goldberg@warwick.ac.uk}}
\and
Jinqiao Hu\thanks{University of Warwick. \texttt{E-mail:~jinqiao.hu@warwick.ac.uk}}
\and
Zhenjian Lu\thanks{University of Victoria. \texttt{E-mail:~zhenjianlu@uvic.ca}}
\and
Jingyi Lyu\thanks{University of Michigan. \texttt{E-mail:~jylyu@umich.edu}}
\and
Igor C. Oliveira\vspace{0.6cm}\thanks{University of Warwick. \texttt{E-mail:~igor.oliveira@warwick.ac.uk}}
}

\begin{document}
\maketitle

\vspace{-0.3cm}

\begin{abstract}
Understanding the average-case complexity of $\NP$ is a central goal in complexity theory and cryptography. A necessary milestone is to understand the corresponding average-case versus worst-case question for the polynomial hierarchy ($\PH$). In a celebrated result, Hirahara \citep{DBLP:conf/stoc/Hirahara21} proved that if every distributional problem in $\PH$ is easy on average, then every language in $\PH$ admits worst-case algorithms running in time $2^{O(n/\log n)}$. Remarkably, although the statement of Hirahara's theorem makes no reference to meta-complexity, all known proofs appear to use time-bounded Kolmogorov complexity and its associated meta-computational problems in an essential way.

We show that the reliance on techniques and perspectives from meta-complexity is not simply a coincidence. Building on previous work of Hirahara \citep{DBLP:conf/focs/Hirahara20} and on a recent result of Hu, Lu, and Oliveira \citep{DBLP:journals/eccc/HuL025}, we prove that relating the average-case and worst-case complexities of $\PH$ is in fact \emph{equivalent} to a purely meta-computational implication: efficient algorithms for a \emph{standard gap version} of $\MINpKT$ with $\PH$ oracle access imply efficient algorithms for a \emph{mild gap version}. Crucially, in these promise problems the standard gap separates a YES instance with time bound $t$ from a NO instance with time bound $\poly(t)$, whereas the mild gap keeps the YES threshold $t$ and moves the NO threshold to $t+\poly(n)$. 

Together with prior work, this equivalence points to the study of \emph{gap reductions} as a potentially promising route toward understanding the relation between average-case and worst-case complexity. We make progress by proving that the gap in running time can be eliminated in the average-case setting.  Finally, we prove that in the worst-case setting any efficient reduction between the two computational problems  must use nonrelativizing techniques.
\end{abstract}

\normalsize

\clearpage

\tableofcontents

\clearpage

\section{Introduction}

\subsection{Overview}

Understanding the average-case complexity of $\NP$ is a fundamental research direction in complexity theory. A central problem in this area asks whether worst-case hardness of $\NP$ can be transformed into average-case hardness of $\NP$ against efficient algorithms. This question is particularly relevant to the complexity-theoretic foundations of cryptography, where one would ideally like to base cryptographic primitives on worst-case intractability assumptions about  $\NP$ problems. We refer to the survey of Bogdanov and Trevisan \citep{DBLP:journals/fttcs/BogdanovT06} for background on average-case complexity and its connections to cryptography.

Formally, before resolving the average-case complexity of $\NP$, one must first understand the analogous question for the Polynomial Hierarchy ($\PH$).\footnote{In other words, the implication $\mathsf{DistPH} \subseteq \mathsf{AvgBPP} \Longrightarrow \mathsf{PH} \subseteq \mathsf{BPP}$ follows from the implication $\mathsf{DistNP} \subseteq \mathsf{AvgBPP} \Longrightarrow \mathsf{NP} \subseteq \mathsf{BPP}$.} In the language of Impagliazzo's five worlds \citep{DBLP:conf/coco/Impagliazzo95}, the desired statement is the exclusion of $\PH$ Heuristica:
\begin{equation}\label{eq:excl_PH_Heuristica}
    \Dist\PH \subseteq \Avg\BPP \quad \Longrightarrow \quad \PH \subseteq \BPP.
\end{equation}
Equivalently, if every problem in $\PH$ is easy on every polynomial-time samplable distribution, then every problem in $\PH$ is easy in the worst case.

A few years ago, a breakthrough result of Hirahara \citep{DBLP:conf/stoc/Hirahara21} (see also \citep{GKLO22}) showed that a weaker worst-case conclusion follows from the average-case easiness of $\PH$:
\begin{equation}\label{eq:Hir21}
       \Dist\PH \subseteq \Avg\BPP
    \quad \Longrightarrow \quad
    \PH \subseteq \BPTIME[2^{O(n/\log n)}].
\end{equation}
Hirahara's Theorem (\Cref{eq:Hir21}) offered the first significant progress on this problem after a long gap.  Note that the theorem itself is a statement about complexity classes and their relation. Nevertheless, Hirahara's argument seems to employ Meta-Complexity in an essential way, through Kolmogorov complexity and related compression problems. 

Recall that meta-complexity studies the complexity of computational tasks whose inputs are themselves computational objects, such as strings,  or truth tables. A central example is time-bounded Kolmogorov complexity, which measures the length of the shortest program that outputs a given string within a prescribed time bound. The associated meta-computational problem asks for the time-bounded Kolmogorov complexity of a given input string. For more information about meta-complexity, we refer to \citep{SimonsMetaComplexity2023}. For a modern treatment of time-bounded Kolmogorov complexity and its extensions to randomized computation, see \citep{DBLP:journals/eatcs/LuO22}.

Subsequent works have provided simplifications to the original proof of \citep{DBLP:conf/stoc/Hirahara21} and  extensions of the result \citep{GK22, GKLO22, DBLP:conf/itcs/ChenHV22}.  All of these developments continue to rely on meta-complexity. We refer to \citep{DBLP:journals/eatcs/Hirahara22} for a comprehensive treatment of the meta-complexity approach to average-case complexity.

Motivated by these developments and the  intriguing role of meta-complexity, we consider the following natural question:
\medskip
\begin{center}
    \emph{Is Meta-Complexity formally necessary for understanding the relation between\\ average-case and worst-case complexity?}
\end{center}
\medskip
As our main contribution, we give a positive answer to this question. We show that excluding $\PH$ Heuristica (\Cref{eq:excl_PH_Heuristica}) is equivalent to proving a  \emph{gap reduction} between two worst-case meta-computational tasks. This gives a precise sense in which meta-complexity is inevitable: any proof excluding $\PH$ Heuristica can be converted into a proof of the gap reduction from the standard gap meta-complexity problem to the mild-gap meta-complexity problem; conversely, any proof of this gap reduction excludes $\PH$ Heuristica. 

\subsection{Results}

We now state our results in more detail.\footnote{A preliminary version of a subset of these results appeared as part of the unpublished ECCC preprint~\citep{ECCC2025TR215}.}

\paragraph{Average-Case vs Worst-Case Complexity as Meta-Complexity.}
We use $\pK^{t,\mathcal O}(x)$ to denote probabilistic $t$-time bounded Kolmogorov complexity with oracle access to $\mathcal O$. Informally, $\GapMINpKT^{\mathcal O}$ asks one to distinguish strings with small $\pK^{t,\mathcal O}$ complexity from strings whose $\pK^{\poly(t),\mathcal O}$ complexity is larger by a logarithmic term. This is the standard gap setting for several time-bounded Kolmogorov complexity estimation problems in meta-complexity. In contrast, the mild version $\MildGapMINpKT^{\mathcal O}$ asks for the analogous distinction when the larger time bound is only $t+\poly(n)$ rather than $\poly(t)$. We refer to \Cref{sec:preliminaries} for the formal notions. \Cref{fig:gap-instance-inclusions} informally describes the input instances of the two gap problems and their inclusion relations.

\medskip

\begin{figure}[ht]
\centering

\resizebox{0.98\linewidth}{!}{%
\begin{tikzpicture}[
    font=\footnotesize,
    every node/.style={
        align=center,
        inner sep=1pt
    },
    universe/.style={
        fill=gray!8,
        draw=black,
        line width=0.8pt
    },
    yesregion/.style={
        fill=green!18,
        draw=green!55!black,
        line width=0.8pt
    },
    mildregion/.style={
        fill=orange!18,
        draw=orange!65!black,
        line width=0.8pt,
        rounded corners=16pt
    },
    standardregion/.style={
        fill=red!22,
        draw=red!65!black,
        line width=0.8pt,
        rounded corners=12pt
    }
]

    \filldraw[universe]
        (0,2) rectangle (15.4,7.5);

    \node[
        anchor=north west,
        font=\small\bfseries
    ] at (0.25,7.25)
        {$\bm{U=\{0,1\}^n}$};

    \filldraw[yesregion]
        (2.6,4.7) ellipse
        [x radius=1.5, y radius=1.4];

    \node[
        text width=3.5cm
    ] at (2.6,4.7)
        {
        {\small\textbf{YES}}\\[4pt]
        for both problems\\[5pt]
        {\scriptsize $\bm{\pK^t}$ is at most $\bm{s}$}
        };

    \node[
        text width=3.2cm
    ] at (6.25,4.7)
        {
        {\small\textbf{Nonpromise}}\\[4pt]
        for both problems\\[5pt]
        {\scriptsize outside YES and Mild NO}
        };

    \filldraw[mildregion]
        (8.2,2.55) rectangle (14.85,7);

    \node[
        text width=5.8cm,
        font=\small\bfseries
    ] at (11.52,6.62)
        {NO at time $\bm{t+\poly(n)}$};

    \filldraw[standardregion]
        (11.05,3.4) rectangle (14.3,5.9);

    \node[
        text width=2.7cm
    ] at (12.67,4.65)
        {
        {\small\textbf{NO at time $\bm{\poly(t)}$}}\\[5pt]
        NO for both\\[3pt]
        {\scriptsize Standard gap NO}
        };

    \node[
        text width=2.1cm
    ] at (9.55,4.65)
        {
        {\small\textbf{Mild NO}}\\[5pt]
        Standard gap\\
        nonpromise
        };

    \node[
        font=\small\bfseries
    ] at (7.7,1.45)
        {
        $\mathsf{NO}_{\poly(t)}
        \subseteq
        \mathsf{NO}_{t+\poly(n)}$
        };

    \node[
        font=\footnotesize
    ] at (7.7,0.95)
        {
        Standard gap promise
        $=
        \mathsf{YES}
        \cup
        \mathsf{NO}_{\poly(t)}$
        };

    \node[
        font=\footnotesize
    ] at (7.7,0.45)
        {
        Mild gap promise
        $=
        \mathsf{YES}
        \cup
        \mathsf{NO}_{t+\poly(n)}$
        };

\end{tikzpicture}%
}

\caption{
Instance sets for the standard and mild gap problems for fixed
description length bound $s$ and time bound $t$.
Under $\poly(t)\geq t+\poly(n)$, monotonicity of time-bounded
probabilistic Kolmogorov complexity gives
$
\pK^{t+\poly(n),\mathcal O}(x)
\geq
\pK^{\poly(t),\mathcal O}(x).
$
Consequently,
$
\mathsf{NO}_{\poly(t)}
\subseteq
\mathsf{NO}_{t+\poly(n)}.
$
}
\label{fig:gap-instance-inclusions}

\end{figure}

For an oracle-dependent problem $A$ such as $\GapMINpKT$ or $\MildGapMINpKT$, the notation $A^\PH \in \prBPP$ means that the corresponding promise problem is in randomized polynomial time for every oracle language $\mathcal{O} \in \PH$.

\begin{restatable}[Main Equivalence]{theorem}{PHHEUR}\label{t:PH-heuristica-characterization}
	The following statements are equivalent\emph{:}
	\begin{enumerate}
		\item \emph{(Exclusion of PH Heuristica).} $\Dist\mathsf{PH}\subseteq\Avg\BPP\implies \mathsf{PH}\subseteq\BPP$.
		\item  \emph{(Gap Reduction for $\mathsf{pK}^{\mathsf{PH}}$).} $\mathsf{Gap}\textnormal{-}\mathsf{MINpKT}^{\mathsf{PH}} \in \prBPP \implies \mathsf{Mild}\textnormal{-} \mathsf{Gap}\textnormal{-}\mathsf{MINpKT}^{\mathsf{PH}} \in \prBPP$.
	\end{enumerate}
\end{restatable}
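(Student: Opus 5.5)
We will prove the two implications separately, passing through two known characterizations that we treat as black boxes. From Hirahara \citep{DBLP:conf/focs/Hirahara20} and the framework of \citep{DBLP:conf/stoc/Hirahara21,GKLO22} we take: $\Dist\PH\subseteq\Avg\BPP$ holds if and only if $\GapMINpKT^{\PH}\in\prBPP$. The forward direction is the standard argument that, under average-case easiness, the universal time-bounded distribution is easy on average, so one can estimate $\pK^{t,\mathcal O}$ up to the polynomial time gap and additive logarithmic slack. The backward direction uses the standard argument that $\pK^{\poly(t),\mathcal O}$ estimates yield average-case algorithms for every $\PSAMP$ distribution, via optimal coding for samplable distributions. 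From \citep{DBLP:journals/eccc/HuL025} we take: $\PH\subseteq\BPP$ holds if and only if $\MildGapMINpKT^{\PH}\in\prBPP$.

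With these in hand, the equivalence is a formal substitution. Statement~1 has premise ``$\Dist\PH\subseteq\Avg\BPP$'', which we rewrite as $\GapMINpKT^{\PH}\in\prBPP$. Its conclusion ``$\PH\subseteq\BPP$'' we rewrite as $\MildGapMINpKT^{\PH}\in\prBPP$. Each rewriting is an unconditional equivalence, so statement~1 is equivalent to statement~2.

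We will check the two characterizations in the following order.
\begin{enumerate}
\item For $\PH\subseteq\BPP\Rightarrow\MildGapMINpKT^{\PH}\in\prBPP$ the argument is easy. For $\mathcal O\in\PH$, the YES side ($\pK^{t,\mathcal O}(x)\le s$) can be decided approximately. We use an $\NP^{\mathcal O}$-type approximate counting over random strings $r$ of whether some program of length $\le s$ prints $x$ in time $t$. This lies in $\BPP^{\PH}\subseteq\PH$ by Sipser--Gács--Lautemann, hence in $\BPP$.
\item For the converse, $\MildGapMINpKT^{\PH}\in\prBPP\Rightarrow\PH\subseteq\BPP$, we use the result of \citep{DBLP:journals/eccc/HuL025}. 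The sketch is that a mild-gap solver with the $\PH$ oracle lets one decide a $\PH$ language $L$ by a self-reduction. Take strings encoding $(x,L(x))$: the correct bit has short description within time $t+\poly(n)$ given oracle access, while the wrong bit does not. The additive $\poly(n)$ time slack is what makes this work, since it only lets the description recompute one level of the hierarchy.
\item We also need the uniformity detail: the oracle in each direction must range over all of $\PH$. The quantifier ``for every $\mathcal O\in\PH$'' must be preserved when the level of $\mathcal O$ is raised by a constant in the reductions. This holds because $\PH$ is closed under that.
\end{enumerate}

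The main obstacle is checking that the parameters match exactly. The Hirahara characterization must be stated with precisely the same gap as our $\GapMINpKT$: a time gap of $\poly(t)$ and a $\log$-size additive gap in description length. The Hu--Lu--Oliveira characterization must likewise use precisely our mild gap: time $t+\poly(n)$ and the same size gap. If the cited versions use different slack, for instance $O(\log n)$ versus $\log t$, I would first try padding and the monotonicity shown in \Cref{fig:gap-instance-inclusions}. For example, $\pK^{t+\poly(n)}\ge\pK^{\poly(t)}$ lets a weaker gap be reduced to a stronger one. I would also reduce between gap variants by appending random strings to adjust thresholds, so that each side's promise problem is exactly the one in \Cref{t:PH-heuristica-characterization}.
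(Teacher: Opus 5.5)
Your top-level reduction is the same as the paper's. Once you have $\Dist\PH\subseteq\Avg\BPP\iff\GapMINpKT^{\PH}\in\prBPP$ (\Cref{t:GapMINpnKT-PH}) and $\NP\subseteq\BPP\iff\MildGapMINpKT^{\PH}\in\prBPP$ (\Cref{t:MINpnKT-NP}), the theorem is a formal substitution. Your step~1 (estimate the fraction of good $r$ with an $\NP^{\mathcal{O}}$ oracle, then apply Sipser--G\'acs--Lautemann) is a fine replacement for the paper's $\mathsf{prAM}^{\mathcal{O}}$ argument. The gap is that you treat both correspondences as citable black boxes, and they are not available in this form. \citep{DBLP:conf/focs/Hirahara20} characterizes $\Dist\PH\subseteq\Avg\P$ by $\GapMINKT^{\PH}\in\P$ for deterministic $\K^{t}$. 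The relevant result of \citep{DBLP:journals/eccc/HuL025} is $\mathsf{MINnKT}\in\BPP\Rightarrow\NP\subseteq\BPP$, for the \emph{exact} nondeterministic measure. Moving to $\pK^{t,\PH}$ with randomized algorithms, and above all from an exact hypothesis to a mild-gap one, cannot be done by padding or by the monotonicity of \Cref{fig:gap-instance-inclusions}. That step is the actual content of the proof, so parameter matching is not the main obstacle.

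The one substantive sketch you give, for $\MildGapMINpKT^{\PH}\in\prBPP\Rightarrow\PH\subseteq\BPP$, fails. For any oracle, $\pK^{t,\mathcal{O}}(x,0)$ and $\pK^{t,\mathcal{O}}(x,1)$ differ by at most an additive constant (describe $x$, append a literal bit). So the ``wrong bit'' string is not hard to describe, and an $O(1)$ difference is invisible under the $\log\rho$ slack anyway.

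The paper instead samples witnesses, which requires $\pK^{\poly(n)}(y\mid x)\le\log|A_x|+O(\log n)$ for every witness $y\in A_x$. This combines two ingredients. The first is language compression for $\pK^{\poly(n),\SAT}(y\mid x)$, inherited from $\pnK$ via \Cref{l:pKt-SAT-superscript} and \Cref{t:lc-pnK}. The second is an oracle-removal inequality $\pK^{\poly(n)}(y\mid x)\le\pK^{\poly(n),\SAT}(y\mid x)+O(\log n)$, and this is where the hypothesis is used. The mild-gap solver queries no oracle, so using it as a distinguisher in the direct-product reconstruction yields an \emph{unrelativized} description of $y$ given $x$. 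The cost is the depth term $\pK^{t,\SAT}(x)-\pK^{\poly(t),\SAT}(x)$. This term is eliminated by passing to $\ell\lpK$, whose weak symmetry of information loses only additive $\poly(n)$ time, and telescoping over $t,t+p(n),t+2p(n),\dots$. That telescoping, not ``recomputing one level of the hierarchy,'' is why the NO threshold must be $t+\poly(n)$.

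Likewise, the backward direction of the first correspondence does not follow from optimal coding. After reducing to the uniform distribution via Impagliazzo--Levin, the paper proves by induction on $k$ (\Cref{l:alg-Sigma-k-cd}) that each $\Sigma_k^{\P}$ language is decidable in time $2^{O(\pK^{q(n),\mathcal{O}}(x)-\K(x))}\poly(n)$. It uses symmetry of information to bound $\pK^{\poly(t)}(y_1^*\mid x)$ by the depth of $x$. It then discards candidates $y_1$ that fail an approximate test $\pK^{\poly(n),\mathcal{O}}(x,y_1)\lesssim\pK^{\poly(n),\mathcal{O}}(x)$, so that the recursive calls stay within the time budget.
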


Note that the second item refers to the \emph{worst-case} complexities of the corresponding meta-computational problems.
Thus, the average-case versus worst-case complexity of $\PH$ is equivalent to a comparison between two approximation regimes for probabilistic time-bounded Kolmogorov complexity with $\PH$ oracle access. Roughly speaking, the standard gap problem captures average-case complexity, while the mild gap problem captures worst-case complexity. Establishing a gap reduction, namely that an efficient algorithm for the former yields an efficient algorithm for the latter, is exactly what is needed to rule out $\PH$ Heuristica.\footnote{For completeness, we spell out here a reduction that is sufficient for excluding $\PH$ Heuristica. Fix an arbitrary oracle $\mathcal{O} \in \PH$. Exhibit a polynomial $p$ and an oracle $\mathcal{O}' \in \PH$ such that, for every large enough polynomial $q$, separating $\pK^{t, \mathcal{O}}(x) \leq s$ \;vs\; $\pK^{t + p(|x|), \mathcal{O}}(x) \geq s + \log p(|x|)$ reduces in randomized polynomial time to separating $\pK^{t,\mathcal{O'}}(x) \leq s$ \;vs\; $\pK^{q(t), \mathcal{O'}}(x) \geq s + \log q(|x|)$.} The proof of \Cref{t:PH-heuristica-characterization} appears in \Cref{sec:proof_equiv}.

There is a deterministic analogue of the characterization: replacing $\pK^{t,\PH}$ by deterministic $\K^{t,\PH}$ and replacing randomized algorithms by deterministic algorithms gives the corresponding equivalence between
\[
    \Dist\PH\subseteq\Avg\P \Longrightarrow \PH \subseteq \P
\]
and the implication from the deterministic standard gap version of $\MINKT^\PH$ to its deterministic mild gap version. We refer to \Cref{thm:deterministic} for the details.

\paragraph{Eliminating the Gap in the Average-Case Setting.} In our next result, we show that the implication in \Cref{t:PH-heuristica-characterization} (Item 2) can be established in the average-case setting. Indeed, we are able to prove in this case that the easiness of $\mathsf{Gap}\textnormal{-}\mathsf{MINpKT}^{\mathsf{PH}}$ implies the easiness of  $\mathsf{MINpKT}^{\mathsf{PH}}$, where there is \emph{no gap} in the running time when comparing positive and negative instances. We consider both the zero-error setting of $\mathsf{AvgBPP}$ and the two-sided error setting of $\mathsf{HeurBPP}$.
\begin{restatable}[Average-Case Gap Reductions]{theorem}{AVG}\label{t:gap-mild-gap-avg}
The following two equivalences hold:
\begin{itemize}
    \item $
\textnormal{``}\mathsf{Gap}\textnormal{-}\mathsf{MINpKT}^{\mathsf{PH}}\in \mathsf{AvgBPP}\textnormal{''}
\iff
\textnormal{``}\mathsf{MINpKT}^{\mathsf{PH}}\in \mathsf{AvgBPP}\textnormal{''}$.
\item $\textnormal{``}\mathsf{Gap}\textnormal{-}\mathsf{MINpKT}^{\mathsf{PH}}\in \mathsf{HeurBPP}\textnormal{''}
\iff
\textnormal{``}\mathsf{MINpKT}^{\mathsf{PH}}\in \mathsf{HeurBPP}\textnormal{''}$.
\end{itemize}
\end{restatable}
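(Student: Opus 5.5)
The direction ``$\Leftarrow$'' in both items is immediate. For every input distribution, any algorithm for $\MINpKT^{\mathcal O}$ (with threshold at the single time bound $t$) also solves the promise problem $\GapMINpKT^{\mathcal O}$. The reason is that YES instances coincide, and NO instances of the gap problem satisfy $\pK^{t,\mathcal O}(x)\ge \pK^{\poly(t),\mathcal O}(x) > s+\log\poly(n)>s$ by monotonicity. So the real content is ``$\Rightarrow$''. I would route it through the average-case versus worst-case machinery rather than attempt a direct syntactic reduction.

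The plan for ``$\Rightarrow$'' has three steps, written for $\AvgBPP$; $\mathsf{HeurBPP}$ is analogous.

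\textbf{Step 1.} Show that $\GapMINpKT^{\PH}\in\AvgBPP$ implies $\Dist\PH\subseteq\AvgBPP$. This is the standard meta-complexity argument in the style of Hirahara and Hu--Lu--Oliveira. A sample $x\sim D$ from a samplable $D$ has $\pK^{t,\mathcal O}(x)\lesssim \log(1/D(x))+O(\log n)$ by coding/language compression for $\pK$. A $\PH$ problem can then be decided on average via the $\PH$-oracle search-to-decision and universal extrapolation that the $\GapMINpKT$ algorithm enables. I would cite the known result that average-case easiness of $\GapMINpKT^{\PH}$ is equivalent to $\Dist\PH\subseteq\AvgBPP$; presumably the paper has this as a lemma or uses it in \Cref{sec:proof_equiv}.

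\textbf{Step 2.} Show that $\Dist\PH\subseteq\AvgBPP$ implies $\MINpKT^{\PH}\in\AvgBPP$, where the distribution on instances $(x,1^t,1^s)$ is an arbitrary samplable one.

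\textbf{Step 3.} Note that the exact decision problem ``$\pK^{t,\mathcal O}(x)\le s$'' is not obviously in $\PH$, because $\pK$ quantifies over a probability of success at least $2/3$ over random strings. I would handle this in two ways. First, use approximate counting with a $\PH$ oracle (Stockmeyer/Sipser--Gács--Lautemann), which places a version with a constant-factor slack in success probability (e.g.\ $2/3$ versus $1/2$) in $\PH$; the definition of $\pK$ is robust to such constant changes up to $O(1)$ additive terms, via amplification that costs only $O(\log)$ extra time factors. Second, and this is the crux, the \emph{time} threshold stays exactly $t$: a machine runs for at most $t$ steps, which a $\Sigma_k$ predicate checks exactly. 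So no time gap arises. The only slack is in the probability, and that is absorbed on average.

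The main obstacle is exactly this probability threshold: deciding $\pK^t\le s$ \emph{exactly} requires counting, which seems beyond $\PH$. My first attempt would show that instances near the probability boundary are rare under any samplable distribution, or else handle them with the errorless/heuristic semantics: an $\AvgBPP$ algorithm may output $\bot$, and a $\mathsf{HeurBPP}$ algorithm may err on a small fraction. Alternatively, I would randomly perturb the threshold so that with high probability no program sits near the cutoff. If that fails, I would fall back on the instance-wise argument of Hu--Lu--Oliveira: under $\Dist\PH\subseteq\AvgBPP$, apply the worst-to-average argument of \Cref{t:PH-heuristica-characterization}'s proof restricted to samplable inputs. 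On such inputs $\pK$ is bounded by sampling complexity, so the mild-gap and no-gap versions coincide on average.
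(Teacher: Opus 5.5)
\textbf{There is a genuine gap: the $\mathsf{HeurBPP}$ item is not analogous to the zero-error item.} For zero error, your Step 1 matches the paper's argument. The paper composes the worst-case-to-average-case reduction (\Cref{t:worst-to-avg-red}) with Correspondence I (\Cref{l:GapMINpnKT-to-PH}); your coding-theorem sketch is not what is used. That route, however, passes through $\GapMINpKT^{\PH}\in\prBPP$, and Hirahara's reduction needs an \emph{errorless} heuristic. It runs the heuristic on $\mathsf{DP}_k(x;\mathcal{U})$ for the worst-case input $x$, which is not one fixed samplable distribution. Only the guarantee that the heuristic never rejects a low-complexity string controls its behaviour there. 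No two-sided-error analogue is known, so for $\mathsf{HeurBPP}$ your Step 1 has no proof.

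\textbf{The missing idea is \Cref{t:prg-pK}.} By \Cref{t:ph-pessilang}, if $\Dist\PH\not\subseteq\mathsf{HeurBPP}$, then there is a single $f\in\PH$ with the following property. For every stretch, there is an $f$-computable, min-entropy-preserving PRG $G_n\colon\bool^n\to\bool^{m(n)}$, with $m(n)=n+\log\alpha(n)$, that is secure against PPT. Fixing $f$ first fixes the $\rho$ for which $\mathsf{Gap}_\rho\textnormal{-}\MINpKT^{f}$ is heuristically easy; one then chooses $\alpha\gg\rho$.
\begin{itemize}
\item Outputs $G_n(r)$ are YES instances at $s=n+O(\log n)$, $t=n^d$.
\item A uniform $x\in\bool^{m(n)}$ is a NO instance with probability $1-1/n$, by counting.
\item The heuristic is guaranteed only on polynomial-time samplable distributions, and $G_n(\mathcal{U}_n)$ needs the $f$ oracle to sample. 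The paper runs the heuristic only over the uniform distribution and uses min-entropy preservation: $\Prob[G_n(\mathcal{U}_n)=y]\le 2^{-n}=\alpha(n)2^{-m(n)}$. So an error of $1/\alpha(n)^2$ under $\mathcal{U}_{m(n)}$ becomes at most $1/\alpha(n)$ under the generator.
\end{itemize}
The resulting distinguisher contradicts security, which gives $\Dist\PH\subseteq\mathsf{HeurBPP}$.

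\textbf{Steps 2--3.} The ``main obstacle'' you identify does not exist. $\MINpKT^{\mathcal{O}}$ is defined with probability thresholds $b/a$ versus $(b-1)/a$ given in unary. In the protocol, the verifier sends $r$, and the prover sends a program of length at most $s$ that is run for exactly $t$ steps. This protocol already has a $1/a$ gap, amplified by repeating the protocol rather than changing $t$. Hence $\MINpKT^{\mathcal{O}}\in\promise\AM^{\PH}\subseteq\promise\PH$, as in \Cref{c:minpkt-avg-easy}. Your alternative, appealing to robustness of $\pK$ in $\lambda$ via amplification, would alter the time bound and reintroduce a time gap. Your final fallback (mild-gap and no-gap ``coinciding on average'') is unsupported.
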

In other words, the time gap between instances can be eliminated in these average-case settings.  
The necessary definitions are presented in \Cref{ss:ave_case_def}, while the proof of \Cref{t:gap-mild-gap-avg} appears in \Cref{sec:average}.

\paragraph{Limits of Relativizing Techniques.} The natural next step is to obtain a gap reduction in the worst case, which by \Cref{t:PH-heuristica-characterization} is both necessary and sufficient to exclude PH Heuristica. We finish with a proof that the implication between the two meta computational tasks in \Cref{t:PH-heuristica-characterization} (Item 2)  cannot be established by relativizing techniques alone.

\begin{theorem}[Relativization Barrier]\label{thm:intro-relativization-barrier}
There is an oracle $\mathcal{O}$ such that
\[
\mathsf{Gap}\textnormal{-}\mathsf{MINpKT}^{\mathsf{PH}^{\mathcal{O}},\mathcal{O}} \in \prBPP^{\mathcal{O}}
\quad\text{but}\quad
\mathsf{Mild}\textnormal{-}\mathsf{Gap}\textnormal{-}\mathsf{MINpKT}^{\mathsf{PH}^{\mathcal{O}},\mathcal{O}} \notin \prBPP^{\mathcal{O}}.
\]
\end{theorem}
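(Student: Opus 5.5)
Via the relativized form of \Cref{t:PH-heuristica-characterization}, this reduces to an oracle world that is ``$\PH$ Heuristica'', then we add a direct diagonalization against the mild gap problem. Concretely, I will build $\mathcal{O}$ so that $\Dist\PH^{\mathcal{O}} \subseteq \Avg\BPP^{\mathcal{O}}$ but $\PH^{\mathcal{O}} \not\subseteq \BPP^{\mathcal{O}}$. Two points need to be checked. First, the direction ``standard gap easy $\Leftarrow$ average-case easiness'' should relativize: the known argument that $\Dist\PH\subseteq\Avg\BPP$ implies $\GapMINpKT^{\PH}\in\prBPP$ goes through a coding/universal-distribution argument and language compression. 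Second, the direction ``mild gap easy $\Rightarrow$ $\PH\subseteq\BPP$'' should also relativize. Then \Cref{thm:intro-relativization-barrier} follows by contraposition in that world.

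Since the second relativization is delicate, I would instead aim to show directly that $\MildGapMINpKT^{\PH^{\mathcal{O}},\mathcal{O}}\notin\prBPP^{\mathcal{O}}$ by embedding a hard $\PH^{\mathcal{O}}$ language into it. The natural construction is a Heuristica-type oracle in the style of Impagliazzo's relativized Heuristica.

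\textbf{The oracle.} Let $\mathcal{O}$ consist of a $\mathsf{PSPACE}$-complete part, which makes $\PH$ collapse to something easy and thereby trivializes the standard gap, together with a sparse random ``hidden'' part $R$. For each length $n$, $R$ contains at most a single string $z_n$, placed at a hard-to-query location, for example a random $y_n$ with $|y_n|=n^{\log n}$ or a superpolynomial-length key. The intended effect is:

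\textbf{Standard gap is easy.} With $\PH$ access one can find $z_n$ in time $\poly(t)$ by brute force, but not in time $t$. Any algorithm solving the standard gap is allowed to err on strings whose complexity changes only between $t$ and $\poly(t)$, so a $\BPP^{\mathcal{O}}$ algorithm that ignores $R$ is correct on the promise. The algorithm computes $\pK$ relative to the $\mathsf{PSPACE}$ part alone, using the $\mathsf{PSPACE}$ oracle, and the $\log$-gap absorbs the few bits needed to describe $R$'s sparse content. To argue this, I would show that any time-$t$ oracle program querying $R$ can be simulated in time $\poly(t)$ without $R$, up to $O(\log t)$ advice describing which queries hit $R$.

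\textbf{Mild gap is hard.} I would design $R$ so that whether $z_n$ satisfies some predicate changes $\pK$ by a large amount at time $t+\poly(n)$, where $\poly(n)$ extra steps suffice to query a location of length $\le n$. The $\BPP^{\mathcal{O}}$ machine must nonetheless query exponentially many positions to detect it. A standard diagonalization against all $\BPP^{\mathcal{O}}$ machines finishes the argument: at each stage, fix a machine, and choose $R$ on a fresh length so that the machine errs with constant probability, by a counting/hybrid argument over random hidden strings.

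\textbf{Main obstacle.} The hardest step is reconciling the two requirements. The $\PH^{\mathcal{O}}$ oracle inside $\MINpKT$ can itself find hidden strings, since $\PH$ queries to $R$ are exponential search. The hidden information must therefore be invisible to $\PH^{\mathcal{O}}$ within time $t$ yet accessible with $\poly(n)$ extra time. I would handle this by letting the time bound $t$ itself govern access: place $z_n$ at oracle positions of length about $t$, so that a machine running for $t+\poly(n)$ steps can write the query while a $\BPP$ machine running in time $\poly(n)\ll t$ cannot. I would then verify that the gap version's $\poly(t)$ threshold lets a simulator pay for it with $O(\log)$ description bits.
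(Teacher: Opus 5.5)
Your first paragraph is essentially the paper's route, and it is the one that works. The gap is that you then set it aside for a direct construction, and that construction fails on both halves.

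\emph{The standard gap is not easy in your world.} Take $L\in\PH^{\mathcal{O}}$ to be $\NP^{\mathcal{O}}$-complete. A program with $L$-access recovers $z_n$ by prefix search in $\poly(n)$ steps; no brute force in time $\poly(t)$ is involved. So the first half $u_n$ of $z_n$ satisfies $\pK^{\poly(n),L,\mathcal{O}}(u_n)=O(\log n)$. By counting, a uniformly random $u\in\bool^{n/2}$ is, with high probability, a NO instance of $\mathsf{Gap}_\rho$-$\MINpKT^{L,\mathcal{O}}$ at the same parameters. Your algorithm that ignores $R$ rejects $u_n$, and no $\BPP^{\mathcal{O}}$ algorithm can do better. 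The $\mathsf{PSPACE}$ part is independent of $R$; if it were $\mathsf{PSPACE}^{R}$-complete, $\PH^{\mathcal{O}}$ would collapse to $\P^{\mathcal{O}}$ and the mild gap would be easy. So to tell $u_n$ from a random $u$, the algorithm must query an $R$-string extending its input, which happens with probability at most $\poly(n)\cdot 2^{-n/2}$ over the choice of $z_n$. Your simulation claim breaks exactly here: $L$-queries leak $\Theta(n)$ bits of $R$, not $O(\log t)$. Moving $z_n$ to superpolynomial length $N$ changes nothing. The same instances reappear at scale $N$: programs with time bound $\poly(N)$ issue $L$-queries of length $\poly(N)$, and the decider, though it may run for $\poly(N)$ steps, still cannot search.

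This is not an accident of details. Since the paper's correspondences relativize, any oracle satisfying the theorem must have $\Dist\PH^{\mathcal{O}}\subseteq\Avg\BPP^{\mathcal{O}}$ and $\PH^{\mathcal{O}}\not\subseteq\BPP^{\mathcal{O}}$. A hidden random string per length already makes the tally set $\{(1^n,i):\text{the $i$-th bit of } z_n \text{ is } 1\}\in\NP^{\mathcal{O}}$ hard on a trivially samplable distribution.

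\emph{The mild-gap half has no argument behind it.} The instance $(x,1^s,1^t,1^a,1^b)$ contains $1^t$. So a $\prBPP^{\mathcal{O}}$ decider may run for $\poly(t)\gg t+\poly(n)$ steps and can make any query a time-$(t+\poly(n))$ program makes. ``Letting $t$ govern access'' therefore separates nothing. You also never exhibit the relevant hard instances: strings of large complexity at time $t+\poly(n)$ but small complexity at time $\poly(t)$ that cannot be told apart from YES instances. Nor do you carry out the diagonalization.

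The repair is to keep your first paragraph and build nothing. The paper imports the Impagliazzo / Hirahara--Nanashima oracle with $\Dist\PH^{\mathcal{O}}\subseteq\Avg\BPP^{\mathcal{O}}$ and $\PH^{\mathcal{O}}\not\subseteq\BPP^{\mathcal{O}}$; constructing a relativized Heuristica is itself a substantial theorem.

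\emph{Standard-gap easiness.} This follows from the relativized worst-case-to-average-case reduction (Hirahara's technique, as in Proposition~58 of Hu--Lu--Oliveira), applied to $\MINKT^{L,\mathcal{O}}\in\PH^{\mathcal{O}}$. Language compression plays no role there; it enters only in the mild-gap direction.

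\emph{Mild-gap hardness.} The Hu--Lu--Oliveira argument relativizes. Its ingredients are language compression for $\pnK$ transferred to $\pK^{t,\SAT^{\mathcal{O}}}$, the lifted measure $\ell\lpK$ with its additive-overhead weak symmetry of information, direct-product reconstruction with the assumed $\BPP^{\mathcal{O}}$ decider as distinguisher, and telescoping. Hence easiness of the mild gap would give $\NP^{\mathcal{O}}\subseteq\BPP^{\mathcal{O}}$, hence $\PH^{\mathcal{O}}\subseteq\BPP^{\mathcal{O}}$, a contradiction. The relativization you called delicate is exactly the step that is needed, and it is the step the paper checks.
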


Thus, in a relativized world, the standard gap problem can be easy even while the mild gap problem remains hard. We refer to \Cref{ss:ave-case-rel-barrier} for the proof of this result and related discussions.

\subsection{Techniques}\label{ss:intro_techniques}

In this section, we provide an overview of our proofs and techniques.

\paragraph{Main Equivalence.} The proof of \Cref{t:PH-heuristica-characterization} builds heavily on the approaches and techniques from \citep{DBLP:conf/focs/Hirahara20} and \citep{DBLP:journals/eccc/HuL025}. The result follows from two correspondences between \emph{structural complexity} and \emph{meta-complexity}. As explained below, the average-case easiness of $\PH$ is captured by the complexity of the standard gap problem, where a negative instance is allowed to use a much larger time bound $\poly(t)$. In contrast, the worst-case easiness of $\PH$ is captured by the complexity of the mild gap problem, where a negative instance only receives an additive increase of the form $t+\poly(n)$.  Figure~\ref{fig:two_correspondences} provides an overview.

\begin{figure}[ht]
\centering
\resizebox{\textwidth}{!}{%
\begin{tikzpicture}[
    head/.style={font=\bfseries, align=center, text width=5.8cm},
    box/.style={draw, rounded corners=2pt, align=center, text width=5.8cm, minimum height=1.15cm, inner sep=7pt},
    lab/.style={font=\small, align=center, fill=white, inner sep=2pt},
    arr/.style={-{Stealth[length=2mm]}, thick},
    eqarr/.style={{Stealth[length=2mm]}-{Stealth[length=2mm]}, thick}
]
    \node[head] (lefthead) at (-6.2,0) {Complexity of $\bm{\PH}$};
    \node[head] (righthead) at (6.2,0) {Meta-Complexity};

    \node[box] (avg) at (-6.2,-1.45) {Average-case easiness\\[2pt]$\Dist\PH \subseteq \Avg\BPP$};
    \node[box] (worst) at (-6.2,-5.15) {Worst-case easiness\\[2pt]$\PH \subseteq \BPP$};

    \node[box] (gap) at (6.2,-1.45) {Standard gap easiness\\[2pt]$\GapMINpKT^\PH \in \prBPP$};
    \node[box] (mild) at (6.2,-5.15) {Mild gap easiness\\[2pt]$\MildGapMINpKT^\PH \in \prBPP$};

    \draw[eqarr] (avg.east) -- (gap.west);
    \draw[eqarr] (worst.east) -- (mild.west);
    \draw[arr] (avg.south) -- (worst.north);
    \draw[arr] (gap.south) -- (mild.north);

    \node[lab] at ($(avg.east)!0.5!(gap.west)+(0,0.55)$) {\textbf{Correspondence I}};
    \node[lab] at ($(worst.east)!0.5!(mild.west)+(0,-0.55)$) {\textbf{Correspondence II}};
    \node[lab, anchor=east] at ($(avg.south)!0.5!(worst.north)+(-0.55,0)$) {\textbf{Exclusion of}\\\textbf{$\bm{\PH}$ Heuristica}};
    \node[lab, anchor=west] at ($(gap.south)!0.5!(mild.north)+(0.55,0)$) {\textbf{Gap}\\\textbf{Reduction}};

\end{tikzpicture}
}
\caption{Relating the average-case and worst-case complexities of the polynomial hierarchy ($\PH$) is equivalent to a question from meta-complexity: Is there a gap reduction for computing $\mathsf{pK}^{\PH}$?}
\label{fig:two_correspondences}
\end{figure}

The first correspondence identifies average-case easiness of $\PH$ with easiness of the standard gap problem:
\[
    \Dist\PH \subseteq \Avg\BPP
    \quad \Longleftrightarrow \quad
    \GapMINpKT^\PH \in \prBPP.
\]
This correspondence is a randomized extension of Hirahara's characterization of $\Dist\PH\subseteq\Avg\P$ by deterministic worst-case meta-complexity \citep{DBLP:conf/focs/Hirahara20}. We note that in the deterministic setting, a substantial part of the argument from \citep{DBLP:conf/focs/Hirahara20} is the construction of an explicit pseudorandom generator from the assumed easiness of $\GapMINKT^\PH$. This is used not just to conclude $\mathsf{DistPH} \subseteq \Avg\P$ but also as a key technical element of the proof, which manipulates the (deterministic) time-bounded Kolmogorov complexity measure $\K^{t,\PH}$. By working directly with probabilistic Kolmogorov complexity $\pK^{t,\PH}$ and with randomized procedures, we are able to bypasses this specific pseudorandom generator construction.\footnote{As in  \citep{DBLP:conf/focs/Hirahara20}, the remaining steps of the proof still rely on techniques from computational pseudorandomness and on the reconstruction procedure of another pseudorandom generator.} %
Moreover, we provide an induction-based proof that differs somewhat from, and simplifies, the argument of \citep{DBLP:conf/focs/Hirahara20}. We briefly outline the proof below. %

The forward direction follows from a worst-case-to-average-case reduction for estimating time-bounded Kolmogorov complexity. If every language in $\PH$ is easy on every samplable distribution, then $\MINKT^{\mathcal{O}}$ is easy on average for every $\mathcal{O} \in \PH$. This average-case easiness can then be upgraded to a worst-case algorithm for $\GapMINpKT^{\mathcal{O}}$ using standard techniques from meta-complexity.

We now discuss the backward direction. Fix $k_0\geq 1$ and a language $\mathcal{O}$ that is complete for $\Sigma_{k_0}^{\P}$. Assuming that $\GapMINKT^{\mathcal{O}} \in \prBPP$, our goal is to decide any language in $\Sigma_k^{\P}$, for $k\leq k_0$, on input $x$ in time roughly
\begin{equation}\label{eq:cd}
    2^{\pK^{\poly(n),\mathcal{O}}(x)-\K(x)}.
\end{equation}
Since $\K(x)$ is close to $n$ for a typical $x\in\bool^n$, we have $\pK^{\poly(n),\mathcal{O}}(x)-\K(x)\le O(\log n)$
on average. Consequently, \Cref{eq:cd} yields an average-case polynomial-time algorithm for $\PH$.

To establish \Cref{eq:cd}, standard techniques from meta-complexity imply that the assumption $\GapMINKT^{\mathcal{O}} \in \prBPP$ yields a symmetry-of-information statement of the following form: for every $x,y\in\bool^*$ and every $t$,
\[
    \pK^{\poly(t)}(y\mid x)
    \lesssim
    \pK^{t,\mathcal{O}}(x,y)
    -
    \pK^{\poly(t),\mathcal{O}}(x).
\]
Now let $x$ be an instance of the $\Sigma_k^{\P}$ language that we wish to decide, and, if $x$ is a YES instance, let $y^*$ denote the lexicographically smallest witness corresponding to the first existential quantifier. Taking $t$ to be a sufficiently large polynomial, the symmetry-of-information inequality gives
\begin{align*}
    \pK^{\poly(t)}(y^* \mid x) &\lesssim \pK^{t,\mathcal{O}}(x,y^*) - \pK^{\poly(t),\mathcal{O}}(x)  \eqqcolon s\\
    &\lesssim \pK^{2t,\mathcal{O}}(x) - \K(x).
\end{align*}
The second inequality above uses the fact that $y^*$ can be generated efficiently from $x$ given oracle access to $\mathcal{O}$, since $\mathcal{O}$ is complete for a sufficiently high level of $\PH$.

A natural next step is therefore to enumerate all programs of length at most $s$, run them on input $x$, and consider their outputs as candidate witnesses $y$. The difficulty is that checking whether a candidate $y$ is a valid witness requires solving a $\Pi_{k-1}^{\P}$ problem on the instance $(x,y)$. At first sight, one might try to invoke the inductive hypothesis directly. However, doing so would only give a running time bounded by
\[
    2^{\pK^{\poly(n),\mathcal{O}}(x,y)-\K(x,y)}
    \lesssim
    2^{\pK^{\poly(n),\mathcal{O}}(x,y)-\K(x)},
\]
whereas we need the overall running time to remain within
\[
    2^{\pK^{\poly(n),\mathcal{O}}(x)-\K(x)}.
\]
For an arbitrary candidate $y$, there is no reason for
$\pK^{\poly(n),\mathcal{O}}(x,y)$ to be comparable to
$\pK^{\poly(n),\mathcal{O}}(x)$.

The key observation is that we only need to guarantee successful verification of $y^*$, when such a witness exists. For this particular witness, we have
\[
    \pK^{\poly(n),\mathcal{O}}(x,y^*)
    \lesssim
    \pK^{\poly(n),\mathcal{O}}(x),
\]
because $y^*$ can be recovered from $x$ in polynomial time using oracle access to $\mathcal{O}$. Thus, before invoking the inductive-hypothesis algorithm on a candidate $y$, we can first test whether
\[
    \pK^{\poly(n),\mathcal{O}}(x,y)
    \lesssim
    \pK^{\poly(n),\mathcal{O}}(x).
\]
Since $\GapMINKT^\PH\in\prBPP$, this test can be implemented efficiently (up to the usual approximation slack). Candidates that fail the test are discarded. Crucially, $y^*$ passes the test, while every candidate that passes can be verified using the inductive hypothesis within time
\[
    2^{\pK^{\poly(n),\mathcal{O}}(x)-\K(x)}
\]
up to the suppressed polynomial factors. This gives the desired inductive running-time bound.

\medskip

As alluded to above, the second correspondence identifies the worst-case easiness of $\PH$ with the easiness of the mild gap  problem:
\[
    \PH \subseteq \BPP
    \quad \Longleftrightarrow \quad
    \MildGapMINpKT^\PH \in \prBPP.
\]

The easy direction works as follows: it is possible to show that the mild gap problem has an Arthur-Merlin protocol relative to the oracle, and this allows us to prove that if $\PH \subseteq \BPP$ then $\MildGapMINpKT^\PH \in \prBPP$.

The hard direction relies on the witness compression framework of \citep{DBLP:journals/eccc/HuL025}. Suppose $\MildGapMINpKT^\SAT$ is easy. Since $\PH \subseteq \BPP$ follows from $\NP \subseteq \BPP$, it is enough to establish the latter inclusion. Crucially, we derive from the easiness assumption the following consequence: For every $\NP$ verifier, every yes instance $x$, and every witness $y$ for $x$,
\begin{equation}\label{eq:intro_witness_compression}
    \pK^{\poly(n)}(y \mid x) \le \log |W_x| + O(\log n),
\end{equation}
where $W_x$ is the witness set of $x$. Therefore, by choosing a random short program and random coins, one samples an accepting witness with inverse polynomial probability. Repetition then gives a $\BPP$-algorithm for every $\NP$-language, as desired.

An important ingredient in the proof of  \Cref{eq:intro_witness_compression} in \citep{DBLP:journals/eccc/HuL025}  is that probabilistic nondeterministic Kolmogorov complexity ($\mathsf{pnK}^t$) admits a \emph{language compression theorem}: witnesses for an $\NP$ instance can be described using essentially the logarithm of the number of witnesses. Our first observation is that $\pK^{t,\SAT}$ is strong enough to inherit this compression property, since SAT oracle access can simulate the nondeterministic choices appearing in $\pnK^t$ with only polynomial overhead. Note that this is not sufficient to prove \Cref{eq:intro_witness_compression}, where we need an upper bound on $\pK^t$ instead of $\pK^{t,\SAT}$. 

The remaining task is precisely to remove the $\SAT$ oracle from the description of $y$ given $x$. The key ingredient here is that the assumption $\MildGapMINpKT^\PH \in \prBPP$ provides an ordinary probabilistic polynomial-time algorithm $A$ for testing whether a string has low $\pK^{t,\SAT}$ complexity. Although the complexity measure tested by $A$ allows its programs to query $\SAT$, the algorithm $A$ itself has no oracle access. Consequently, if $A$ can be used as a distinguisher in a reconstruction argument, the resulting reconstruction procedure is also an ordinary probabilistic procedure. It therefore yields a bound on the unrelativized complexity $\pK^{\poly(t)}(y\mid x)$, rather than merely on $\pK^{\poly(t),\SAT}(y\mid x)$. This is the mechanism by which the proof removes the $\SAT$ oracle.

Concretely, we compare the $\SAT$-oracle $\pK$ complexity of $(x,y)$ in two ways. The upper bound describes $x$ using a $\SAT$ oracle program and then describes $y$ given $x$ using another such program:
\[
\pK^{t+\poly(n),\SAT}(x,y)
\leq
\pK^{t,\SAT}(x)
+
\pK^{\poly(n),\SAT}(y\mid x)
+
O(\log t).
\]
For the lower bound, fix $k$ slightly below the ordinary complexity $\pK^{\poly(t)}(y\mid x)$ and compare the distributions $(x,\mathcal U_{mk+k})$ and $(x,\mathsf{DP}_k(y,\mathcal U_{mk}))$, where $\mathsf{DP}$ (Direct Product) is a pseudorandom generator with suitable pseudorandomness and reconstruction properties. Weak symmetry of information implies that a uniformly random extension of $x$ usually has high $\SAT$ relative complexity, so $A$ rejects it at an appropriate threshold. If $A$ behaved noticeably differently on the generated extension, the reconstruction property of the direct product generator would turn $A$ into an ordinary probabilistic description of $y$ given $x$ of length at most $k+O(\log t)$, contradicting the choice of $k$. Thus, $A$ must also reject some generated extension. With some additional work, this can be used to prove the following inequality:
\[
\pK^{t+\poly(n),\SAT}(x,y)
+
O(\log t)
\geq
\pK^{\poly(t)}(y\mid x) + \pK^{\poly(t),\SAT}(x).
\]
The important asymmetry is that the term involving $x$ remains $\SAT$ relative, while the term involving $y \mid x$ is now unrelativized.

Combining the upper and lower bounds then rearranging gives
\[
\pK^{\poly(t)}(y\mid x)
\leq
\pK^{\poly(n),\SAT}(y\mid x)
+
\Big ( \pK^{t,\SAT}(x) - 
\pK^{\poly(t),\SAT}(x) \Big )
+
O(\log t).
\]
Thus, the $\SAT$ oracle has been removed from the description of $y$, at the cost of a \emph{computational depth term} measuring the decrease in the $\SAT$ relative complexity of $x$ between the two time bounds $t$ and $\poly(t)$. 

To eliminate this remaining cost, we use a relativized extension of the lifted measure $\ell\lpK_{\lambda,\gamma}^t$ from \citep{DBLP:journals/eccc/HuL025}. Crucially, this measure admits a more efficient form of symmetry of information, allowing the two relevant time bounds to differ by only an additive $\poly(n)$ amount. The assumed algorithm for $\MildGapMINpKT^\SAT$ lifts to the corresponding meta-computational problem for the relativized $\ell\lpK$ variant considered here, and a standard telescoping argument now finds a polynomially bounded time scale at which the lifted complexity of $x$ changes by only $O(1)$. At this time scale, the two $\SAT$ relative terms involving $x$ cancel up to logarithmic overhead, leaving
\[
\pK^{\poly(n)}(y\mid x)
\leq
\pK^{\poly(n),\SAT}(y\mid x)
+
O(\log n).
\]
Combining this oracle removal inequality with language compression for $\pK^{t,\SAT}$ is sufficient to prove  \Cref{eq:intro_witness_compression}.

\medskip

Finally, the two correspondences described above yield \Cref{t:PH-heuristica-characterization}.

\paragraph{Equivalence of Gap and No-Gap Hardness in the Average-Case Setting.} Here we describe the ideas behind the proof of \Cref{t:gap-mild-gap-avg}. It is not hard to see that the tractability of $\mathsf{MINpKT}^{\mathsf{PH}}$ implies that of $\mathsf{Gap}\textnormal{-}\mathsf{MINpKT}^{\mathsf{PH}}$. Therefore, it suffices to prove the converse implication in each item.

We first consider the zero-error setting of $\mathsf{AvgBPP}$. The starting point is the observation that standard worst-case-to-average-case reduction techniques for meta-complexity problems, developed in~\cite{DBLP:conf/focs/Hirahara18,GKLO22}, allow us to obtain that if $\mathsf{Gap}\textnormal{-}\mathsf{MINpKT}^{\mathsf{PH}}$ admits a zero-error average-case algorithm, then it also admits a worst-case algorithm. In other words, $\mathsf{Gap}\textnormal{-}\mathsf{MINpKT}^{\mathsf{PH}} \in \mathsf{prBPP}$.
Then by Correspondence~I (illustrated in Figure~1), this yields $\Dist\PH \subseteq \Avg\BPP$.
Finally, we observe that the non-gap problem, $\mathsf{MINpKT}^{\mathsf{PH}}$, is decidable in (promise) $\AM^{\mathsf{PH}}=\mathsf{PH}$. Consequently, $\mathsf{MINpKT}^{\mathsf{PH}}$ also admits a zero-error average-case algorithm.

We now turn to the two-sided-error setting of $\mathsf{HeurBPP}$. First, it is worth noting that existing work implicitly establishes that the (two-sided-error) average-case tractability of $\mathsf{Gap}\textnormal{-}\mathsf{MINpKT}$ and $\mathsf{MINpKT}$ are equivalent. Indeed,~\cite{DBLP:conf/stoc/IlangoRS22} showed that the (two-sided-error) average-case hardness of approximating the time-unbounded Kolmogorov complexity $\K$ is equivalent to the existence of one-way functions. Their proof can be adapted to $\pK^t$ to show that the (two-sided-error) average-case hardness of $\mathsf{Gap}\textnormal{-}\mathsf{MINpKT}$ is equivalent to the existence of one-way functions (and hence to the existence of pseudorandom generators~\cite{DBLP:journals/siamcomp/HastadILL99}).\footnote{In fact, here we will only use the implication that if one can efficiently approximate $\pK^t$-complexity, then pseudorandom generators do not exist. The intuition is that such an approximation would allow us to break any candidate pseudorandom generator: the output of a generator has small time-bounded Kolmogorov complexity, roughly bounded by the length of its seed, whereas a truly random string has high complexity.
For the converse direction, the result in \cite{DBLP:conf/stoc/IlangoRS22} relies on the coding theorem for $\K$-complexity. The same argument extends to $\pK^t$ by applying the optimal coding theorem for $\pK^t$-complexity established in \cite{DBLP:conf/icalp/LuOZ22}.} On the other hand,~\cite{DBLP:conf/crypto/LiuP23} established that the (two-sided-error) average-case hardness of $\mathsf{MINpKT}$ is also equivalent to the existence of one-way functions.

It is therefore natural to try to extend these equivalences to the setting with $\mathsf{PH}$ oracles by considering $\PH$-computable one-way functions. However, there are a couple of obstacles. First, it is unclear how to adapt the ideas of~\cite{DBLP:conf/stoc/IlangoRS22} to show that if $\mathsf{Gap}\textnormal{-}\mathsf{MINpKT}^{\mathsf{PH}}$ is easy on average, then no $\mathsf{PH}$-computable pseudorandom generators (equivalently, $\mathsf{PH}$-computable one-way functions) exist. Ideally, one would use an average-case algorithm for $\mathsf{Gap}\textnormal{-}\mathsf{MINpKT}^{\mathsf{PH}}$ to break the security of such a generator, since every output of a $\mathsf{PH}$-computable generator has small time-bounded Kolmogorov complexity with $\mathsf{PH}$ oracles. The issue is that a $\mathsf{PH}$-computable pseudorandom generator induces a distribution that is polynomial-time samplable only with access to a $\mathsf{PH}$ oracle, whereas our assumption asserts only that $\mathsf{Gap}\textnormal{-}\mathsf{MINpKT}^{\mathsf{PH}}$ is easy over polynomial-time samplable distributions without such oracle access. Second, the techniques of~\cite{DBLP:conf/crypto/LiuP23}  yield only the following relativized statement: for every oracle $\mathcal{O}$ (for example, $\mathcal{O}\in\mathsf{PH}$), if there are no one-way functions computable with oracle access to $\mathcal{O}$ that are secure against probabilistic polynomial-time algorithms \emph{with oracle access} to $\mathcal{O}$, then $\mathsf{MINpKT}^{\mathcal{O}} \in \mathsf{HeurBPP}^{\mathcal{O}}$.
Thus, these results do not yield the desired equivalence between the average-case tractability of $\mathsf{Gap}\textnormal{-}\mathsf{MINpKT}^{\mathsf{PH}}$ and that of $\mathsf{MINpKT}^{\mathsf{PH}}$ via $\mathsf{PH}$-computable one-way functions.

Instead, we take a different route based on~\cite{DBLP:conf/innovations/HiraharaS22a}. Specifically,~\cite{DBLP:conf/innovations/HiraharaS22a} proved that
\[
\Dist\PH \not\subseteq \mathsf{HeurBPP}
\]
is equivalent to the existence of $\mathsf{PH}$-computable pseudorandom generators that are secure against probabilistic polynomial-time algorithms. Moreover, the pseudorandom generators obtained there satisfy a certain entropy-preserving property that allows us to break them using an average-case algorithm for $\mathsf{Gap}\textnormal{-}\mathsf{MINpKT}^{\mathsf{PH}}$ over the uniform distribution. Consequently, the (two-sided-error) average-case tractability of $\mathsf{Gap}\textnormal{-}\mathsf{MINpKT}^{\mathsf{PH}}$ implies that
\[
\Dist\PH \subseteq \mathsf{HeurBPP}.
\]
Again, since the non-gap problem $\mathsf{MINpKT}^{\mathsf{PH}}$ is decidable in (promise) $\AM^{\mathsf{PH}}=\mathsf{PH}$, it follows that $\mathsf{MINpKT}^{\mathsf{PH}}$ also admits a two-sided-error average-case algorithm.

One technical point omitted from the discussion above is that the characterization of~\cite{DBLP:conf/innovations/HiraharaS22a} only concerns pseudorandom generators with arbitrary logarithmic stretch. A minor extension of their result is nevertheless sufficient for our purposes, because it is enough to establish the desired connection with the tractability of $\mathsf{Gap}_{\rho}\textnormal{-}\mathsf{MINpKT}^{\mathsf{PH}}$, where $\rho$ is any fixed polynomial (see \Cref{t:prg-pK} for details).

\paragraph{Relativization Barrier.} Finally, we briefly comment on the proof of \Cref{thm:intro-relativization-barrier}. 
The relativization barrier is proved by applying the same correspondences described above inside a suitable oracle world. Impagliazzo \citep{DBLP:conf/coco/Impagliazzo11}, followed by Hirahara and Nanashima \citep{DBLP:conf/focs/HiraharaN21}, give an oracle $\mathcal{O}$ for which $\Dist\PH^\mathcal{O}\subseteq\Avg\BPP^\mathcal{O}$ but $\PH^\mathcal{O}\nsubseteq\BPP^\mathcal{O}$. The first property feeds into the \emph{relativized} average-case correspondence, giving $\mathsf{Gap}\textnormal{-}\mathsf{MINpKT}^{\mathsf{PH}^{\mathcal{O}},\mathcal{O}} \in \prBPP^{\mathcal{O}}$. On the other hand, if the corresponding mild gap problem were also easy, the \emph{relativized} worst-case correspondence would imply $\NP^\mathcal{O}\subseteq\BPP^\mathcal{O}$ and hence $\PH^\mathcal{O}\subseteq\BPP^\mathcal{O}$, contradicting the oracle separation. Most of the work goes into the verification that the techniques employed to establish these correspondences relativize. 

\subsection{Related Work}\label{ss:related_work}

The modern theory of average-case complexity for distributional problems was initiated by Levin \citep{DBLP:journals/siamcomp/Levin86} and developed into the framework surveyed by Bogdanov and Trevisan \citep{DBLP:journals/fttcs/BogdanovT06}. Impagliazzo's five worlds \citep{DBLP:conf/coco/Impagliazzo95} placed the average-case versus worst-case question at the center of complexity theory and cryptography. The particular question relevant here is the exclusion of $\PH$ Heuristica, namely whether $\Dist\PH\subseteq\Avg\BPP$ implies $\PH\subseteq\BPP$.

Classical barrier results explain why this question has resisted the most direct reduction based approaches. Feigenbaum and Fortnow \citep{DBLP:journals/siamcomp/FeigenbaumF93} showed that strong forms of random self reducibility for complete sets would imply unexpected collapses of the polynomial hierarchy. Bogdanov and Trevisan \citep{DBLP:conf/focs/BogdanovT03} extended this barrier to nonadaptive self correction with respect to samplable distributions: if an $\NP$ complete problem had such a self corrector, then $\coNP\subseteq\NP/\poly$ and the polynomial hierarchy collapses. Thus, black box nonadaptive reductions cannot by themselves base average-case hardness in $\NP$ on worst case hardness of an $\NP$-complete problem, except under unlikely collapses. Gutfreund, Shaltiel, and Ta Shma \citep{DBLP:journals/cc/GutfreundST07} obtained a positive but weaker connection: if $\NP\nsubseteq\BPP$, then for every fixed algorithm there is a polynomial time samplable distribution on which it errors often. The caveat is that the hard distribution may depend on the algorithm, so this does not produce a single distributional problem hard for all efficient algorithms.

Hirahara developed the meta-complexity approach to this question through a series of works (see, e.g.,~\citep{DBLP:journals/eatcs/Hirahara22}). The first step was a non-black-box worst-case to average-case reduction for time-bounded Kolmogorov complexity \citep{DBLP:conf/focs/Hirahara18}. Hirahara then characterized $\Dist\PH\subseteq\Avg\P$ by the worst-case easiness of $\mathsf{Gap}\textnormal{-}\mathsf{MINKT}^{\PH}$ \citep{DBLP:conf/focs/Hirahara20}, and later proved that average-case easiness of $\PH$ yields worst case algorithms for $\PH$ running in time $2^{O(n/\log n)}$ \citep{DBLP:conf/stoc/Hirahara21}. Goldberg, Kabanets, Lu, and Oliveira \citep{GKLO22} lifted the original result to the probabilistic setting and simplified the proof using probabilistic Kolmogorov complexity.

Other works have refined and extended this paradigm. Chen, Hirahara, and Vafa \citep{DBLP:conf/itcs/ChenHV22} obtained average-case hardness consequences for $\NP$ and $\PH$ from fine-grained worst-case hardness assumptions. Hirahara \citep{DBLP:conf/coco/Hirahara22} related the exclusion of Heuristica to symmetry of information and hardness of approximating conditional time-bounded Kolmogorov complexity. 

Our work is different in that the main theorem offers an \emph{equivalence} for the  implication $\Dist\PH\subseteq\Avg\BPP\Rightarrow\PH\subseteq\BPP$, rather than only a sufficient condition. Moreover, the equivalence is between this complexity-theoretic implication and a reduction between two meta-computational approximation regimes, the standard gap and mild gap versions of $\MINpKT^\PH$.

Relativization has also played an important role in understanding what meta-complexity methods can and cannot prove. Impagliazzo \citep{DBLP:conf/coco/Impagliazzo11} constructed relativized separations between worst-case and average-case complexity for $\NP$, and Hirahara and Nanashima \citep{DBLP:conf/focs/HiraharaN21} studied worst-case learning in relativized Heuristica. Ren and Santhanam \citep{DBLP:conf/stacs/RenS22} initiated a systematic study of relativization in meta-complexity. Among other oracle worlds, they separate decision and search versions of $\mathsf{MCSP}$, show that different $\mathsf{MCSP}$ thresholds can lead to different complexities, that $\Kt$ can become efficiently approximable, and that robust reductions from $\Dist\NP$ to $\mathsf{Gap}\textnormal{-}\mathsf{MINKT}$ face oracle barriers. Hirahara, Lu, and Ren \citep{DBLP:conf/coco/HiraharaLR23} later introduced bounded relativization as a finer way to classify proof techniques that may go beyond ordinary relativization while still relativizing against restricted oracle classes.

In contrast, our relativization theorem deals with a delicate regime of parameters and constructs an oracle world where the standard gap version of $\MINpKT^\PH$ is easy while the mild gap version remains hard. In this sense, the oracle result isolates the exact nonrelativizing step needed by our main equivalence: transforming average-case easiness into worst-case easiness of the standard gap problem is relativizing, but upgrading the standard gap regime to the mild gap regime cannot be done by relativizing arguments alone.

Nonrelativizing techniques have also played an important role in establishing hardness results for meta-computational problems. Hirahara \citep{DBLP:conf/focs/Hirahara22} employed ideas from algorithmic information theory to overcome a relativization barrier for showing hardness of learning programs and, in the same work, proved $\NP$-hardness under randomized polynomial-time reductions for the partial function variant $\mathsf{MCSP}^*$ and related problems such as $\mathsf{MKTP}^*$ and $\mathsf{MINKT}^*$. The proof uses conditional Kolmogorov complexity and pseudorandom generators to quantify the information available to a small program, providing a complementary example of nonrelativizing techniques in meta-complexity.

A separate line of work has sought to reduce the \emph{compression gap} between the description length thresholds in the yes and no cases of meta-computational promise problems. Beginning with Hirahara's non-black-box worst-case to average-case reduction for time-bounded Kolmogorov complexity, subsequent results progressively reduced this loss \citep{DBLP:conf/focs/Hirahara18,DBLP:conf/stoc/Hirahara21,GKLO22}. Most recently, Liu and Pass \citep{DBLP:conf/innovations/LiuP26} eliminated the compression gap entirely in the large threshold regime, where the description length threshold is close to the input length, obtaining a worst 
-case to average-case reduction for the exact problem under a standard derandomization assumption. This progress concerns the gap between description lengths. By contrast, reducing the separate \emph{time gap} between the running time bounds in the two promise cases appears to have received less attention. Investigation of changes in the time bound appear in the fine-grained reductions of Chen, Hirahara, and Vafa \citep{DBLP:conf/itcs/ChenHV22} and in the recent witness compression framework of Hu, Lu, and Oliveira \citep{DBLP:journals/eccc/HuL025}. Our work shows that the \emph{time gap} is fundamental to our understanding of average-case complexity.

\vspace{0.6cm}

\noindent \textbf{Use of AI Tools.} All ideas, arguments, and proofs presented in this paper were developed by the authors. The proofs were checked for correctness using GPT 5.6 and GPT 6.0.\\

\noindent \textbf{Acknowledgements.}
 This work received support from the UKRI Frontier Research Guarantee Grant EP/Y007999/1 and the Centre for Discrete Mathematics and its Applications (DIMAP) at the University of Warwick. The authors would also like to thank the Isaac Newton Institute for Mathematical Sciences, Cambridge, for support and hospitality during the programme Logical Foundations of Complexity Theory (LFC), where part of the work on this paper was undertaken. The LFC programme was supported by EPSRC grant EP/Z000580/1.

\section{Preliminaries}\label{sec:preliminaries}

We use $\epsilon$ to denote the empty string. The length of a string $x$ is denoted $|x|$. Given strings $x \in \{0,1\}^n$ and $y \in \{0,1\}^m$, we let $x \circ y \in \{0,1\}^{n + m}$ denote their concatenation. 

We write $x \sim \mathcal{D}$ to denote that $x$ is sampled  according to the distribution $\mathcal{D}$. For an element $a$ in the support of $\mathcal{D}$, we let $\mathcal{D}(a)$ denote its probability. We use $\mathcal{U}_\ell$ to denote the uniform distribution over $\{0,1\}^\ell$.

\subsection{Computational Models}

We consider multi-tape Turing machines over the alphabet $\Sigma = \{0,1,\bot\}$. In addition to the usual input tape and work tapes, in some contexts we allow a machine to access three additional tapes: one holding an auxiliary string, one holding a non-deterministic guess, and one holding the randomness. We assume that the machine has \emph{sequential access} (i.e., no random access) to the randomness tape and the non-deterministic tape.\footnote{To avoid confusion, sequential access means that the tape head can remain at a given tape cell or move left or right at each step of the computation. In particular, we do not require read-only or one-way access to the tape.}

For a machine $M$, time bound $t$, and string $x$, we let $\mathsf{K}^t_M(x)$ be the minimum length of a string $p \in \{0,1\}^*$ such that $M$ outputs $x$ on input string $p$ when running for at most $t$ steps. We say that a (universal) machine $U$ is \emph{time-efficient} if for every machine $U'$ there is a constant $C$ such that, for every string $x$ and time bound $t$, we have $\mathsf{K}_U^{C \cdot t \log t}(x) \leq \mathsf{K}_{U'}^{t}(x) + C$. It is known that time-efficient machines exist (see, e.g.,~\citep{DBLP:series/txcs/LiV19}). We fix such a machine $U$ when specifying Kolmogorov complexity.

We also assume an encoding function $\langle \cdot \rangle$ for machines (which we omit for notational simplicity) such that $U$, when given (the encoding of) a machine $M$, an input $\alpha$, an auxiliary string $x$, a non-deterministic guess $w$, and a random string $r$, runs in time $\widetilde{O}(|M|, t)$ and outputs the string produced by $M(\alpha; x; w; r)$, where $|M|$ denotes the encoding length of $M$ and $t$ is the number of steps $M$ runs on $(\alpha; x; w; r)$.

A \emph{program} $\Pi$ is defined as a pair $\Pi \vcentcolon= (M, \alpha)$, where $M$ is a valid encoding of a machine and $\alpha \in \bool^*$ is an ``advice'' string. The output of $\Pi$ on $(x; w; r)$ (which corresponds to the auxiliary input, the non-deterministic guess, and the randomness), denoted $\Pi(x; w; r)$, is given by $M(\alpha; x; w; r)$.

The \emph{running time} of the program $\Pi$ on input $(x, w, r)$ is defined to be the number of steps $M$ runs on input $\mleft(\alpha; x; w; r\mright)$.

We also define the \emph{size} of the program $\Pi$. To encode a machine-advice pair $(M, \alpha)$, we use the following format:
\[
s_1 \,01\, \langle M \rangle\, s_2 \,01\, \alpha,
\]
where $\langle M \rangle$ is an encoding of the machine $M$,\footnote{As noted above, we assume a fixed standard encoding for Turing machines.} and $s_1$ (resp.\ $s_2$) is the binary representation of the integer specifying the length of $\langle M \rangle$ (resp.\ $\alpha$), with each bit duplicated. The size of $\Pi$, denoted $|\Pi|$, is the total bit length of this description.

We say that a program $\Pi$ \emph{non-deterministically} outputs $y$ on input $(x; r)$ in time $t$ if:
\begin{itemize}
    \item $\forall w \in \{0,1\}^t$, $\Pi(x; w; r)$ outputs either $y$ or $\bot$ in time $t$, and
    \item $\exists w \in \{0,1\}^t$, $\Pi(x; w; r)$ outputs $y$ in time $t$.
\end{itemize}

\paragraph{A Computational Model with Efficient Composition.}

The above computational model suffices for most of our results, except for \Cref{t:PH-heuristica-characterization}. As in the main result of \cite{DBLP:journals/eccc/HuL025}, \Cref{t:PH-heuristica-characterization} requires a computational model that possesses an additional natural efficiency property called ``efficient composition''. Specifically, this property states that there exists a polynomial $p$ such that the following holds. For any $x, y, z, r \in \{0,1\}^*$, if:
\begin{itemize}
    \item there is a program of size $s_1$ that  outputs $y$ on input $(x; r)$ in time $t_1$, and
    \item there is a program of size $s_2$ that  outputs $z$ on input $y$ in time $t_2$,
\end{itemize}
then there is a program of size $s_1 + s_2 + \log p(|x| + |y| + |z|)$ that  outputs $z$ on input $(x; r)$ in time $t_1 + p(t_2) + p(|x| + |y| + |z|)$. (As discussed in detail in \cite{DBLP:journals/eccc/HuL025}, this property is typically implicitly assumed in the literature on time-bounded Kolmogorov complexity.)

To guarantee the efficient composition property, we consider \emph{a sequence of machine-advice pairs}. More formally, a program $\Pi$ is now defined as
\[
\Pi \vcentcolon= \left( (M_1, \alpha_1), \dots, (M_\ell, \alpha_\ell) \right),
\]
where each $M_i$ is a valid encoding of a machine and $\alpha_i \in \bool^*$. The output of $\Pi$ on $(x; r)$ (which corresponds to the auxiliary input and the randomness), denoted $\Pi(x; r)$, is obtained as follows:\footnote{For a string $z$ of length $\ell$ and a subset $S \subseteq [\ell]$, where $[\ell] = \{1, 2, \ldots, \ell\}$, we let $z_S$ denote the substring of $z$ obtained by concatenating the bits of $z$ at indices in $S$. For $a < b$, we let $[a:b]$ denote the set $\{a, a+1, \ldots, b\}$.}
\begin{itemize}
    \item Let $y_1$ be the output of $M_1(\alpha_1; x; r)$. Let $t_1$ be the number of steps $M_1(\alpha_1; x; r)$ takes to produce $y_1$.
    \item Let $y_2$ be the output of $M_2(\alpha_2; x \circ y_1; r_{[t_1+1:|r|]})$, and let $t_2$ be the corresponding running time.
    \item[] \begin{center}{$\vdots$}\end{center}
    \item Let $y_\ell$ be the output of $M_\ell\mleft(\alpha_\ell; x \circ y_{\ell-1}; r_{\mleft[1+\sum_{j<\ell}t_{j}:|r|\mright]}\mright)$, and let $t_\ell$ be the corresponding running time.
    \item Output $y_\ell$.
\end{itemize}

The running time of the program $\Pi$ on input $(x; r)$ is given by $\sum_{i=1}^{\ell} t_i$.

We now describe how the program $\Pi$ is encoded. A machine-advice pair is encoded as described before. The full description of $\Pi$ is obtained by concatenating the encodings of all pairs $(M_i, \alpha_i)$ for $i \in [\ell]$. The size of $\Pi$, denoted $|\Pi|$, is the total bit length of this description.

It can be shown that this definition of a program satisfies the efficient composition property (see \cite[Proposition 57]{DBLP:journals/eccc/HuL025}). The definition can be extended to programs that have oracle access to a language $\mathcal{O}$ in the natural way.

\subsection{Time-Bounded Kolmogorov Complexity}\label{sec:prelim_Kolm_complex}

We record only the Kolmogorov notions used in the main results and their proofs.

\begin{definition}[$\mathsf{K}^t$]
For $x,y,r\in\bool^*$, $t\in\mathbb{N}$, and an oracle $\mathcal{O}$, define
\[
\K^{t,\mathcal{O}}(x\mid y; r) \eqdef \min_{\Pi \in \{0,1\}^*} \mleft\{|\Pi|  \,\,\middle\vert\,\,  \Pi^{\mathcal{O}}(y;\epsilon;r) \textnormal{ outputs $x$ within $t$ steps}\mright\}.
\]
We write $\K^{t,\mathcal{O}}(x\mid y)$ for $\K^{t,\mathcal{O}}(x\mid y;\epsilon)$ and $\K^{t,\mathcal{O}}(x)$ for $\K^{t,\mathcal{O}}(x\mid \epsilon)$. We write $\K(x)$ for the corresponding unbounded plain Kolmogorov complexity. If oracle queries are non adaptive, we write $\K^{t,\parallel\mathcal{O}}$.
\end{definition}

\begin{definition}[$\mathsf{pK}^t$]
For $x,y\in\bool^*$, $\lambda\in [0,1]$, $t\in\mathbb{N}$, and an oracle $\mathcal{O}$, define
\[
\pK^{t,\mathcal{O}}_\lambda(x\mid y)\eqdef\min\mleft\{s\in\mathbb{N}\;\middle\vert\;\Prob_{r\sim\{0,1\}^t}\mleft[\K^{t,\mathcal{O}}(x\mid y;r)\le s\mright]\ge\lambda\mright\}.
\]
We omit $\mathcal{O}$ when it is empty and omit $\lambda$ when $\lambda=2/3$.
\end{definition}

\begin{definition}[$\mathsf{nK}^t$]
For $x,y,r\in\bool^*$ and $t\in\mathbb{N}$, define
\[
\nK^t(x\mid y;r) \eqdef \min_{\Pi \in \{0,1\}^*} \mleft\{|\Pi|  \,\,\middle\vert\,\,  
\begin{aligned}
&\forall w\in\bool^t, \Pi(y;w;r) \textrm{ outputs $x$ or $\bot$ in $t$ steps}; \\
&\exists w \in \{0, 1\}^t, \Pi(y;w;r) \textrm{ outputs $x$ in $t$ steps}
\end{aligned}
\mright\}.
\]
We write $\nK^t(x\mid y)$ for $\nK^t(x\mid y;\epsilon)$.
\end{definition}

\begin{definition}[$\mathsf{pnK}^t$]\label{def:pnK-t}
For $x,y\in\bool^*$, $\lambda\in [0,1]$, and $t\in\mathbb{N}$, define
\[
\pnK^t_\lambda(x\mid y)\eqdef\min\mleft\{s\in\mathbb{N}\;\middle\vert\;\Prob_{r\sim\{0,1\}^t}\mleft[\nK^t(x\mid y;r)\le s\mright]\ge\lambda\mright\}.
\]
We omit $\lambda$ when $\lambda=2/3$.
\end{definition}

\subsection{Meta-Complexity}

We now define the promise problems that appear in the theorem statements. The deterministic problems for the deterministic analogue are included  in \Cref{t:PH-heuristica-characterization-det}.

\begin{definition}[$\mathsf{MINKT}$]
    For an oracle $\mathcal{O}$, we define the promise language $\mathsf{MINKT}^{\mathcal{O}} \eqdef (\mathsf{YES}, \mathsf{NO})$ as
    \[
        \left\{
        \begin{aligned}
            \mathsf{YES} & = \mleft\{(x,1^s,1^t) \mid \K^{t,\mathcal{O}}(x)\leq s\mright\} \\
            \mathsf{NO} & = \mleft\{(x,1^s,1^t) \mid \K^{t,\mathcal{O}}(x)> s\mright\}
        \end{aligned}
        \right..
    \]
\end{definition}

\begin{definition}[$\mathsf{Gap}\textnormal{-}\mathsf{MINKT}$]\label{def:gap-MINKT}
        For an oracle $\mathcal{O}$ and $\rho\colon\mathbb{N} \to \mathbb{N}$, we define the promise language
        \[
\mathsf{Gap}_{\rho}\textnormal{-}\mathsf{MINKT}^{\mathcal{O}}\eqdef (\mathsf{YES}, \mathsf{NO})
        \]
        as
    \[
        \left\{
        \begin{aligned}
            \mathsf{YES} & = \mleft\{(x,1^s,1^t) \mid \K^{t, \mathcal{O}}(x)\leq s\mright\} \\
            \mathsf{NO} & = \mleft\{(x,1^s,1^t) \mid \K^{\rho(t), \mathcal{O}}(x)\geq s + \log\rho(t)\mright\}
        \end{aligned}
        \right..
    \]
    For a complexity class $\mathcal{C}$, we say that $\mathsf{Gap}\textnormal{-}\mathsf{MINKT}^{\mathcal{O}} \in \mathcal{C}$ if there exists a constant $c > 0$ such that $\mathsf{Gap}_\rho\textnormal{-}\mathsf{MINKT}^{\mathcal{O}} \in \mathcal{C}$ with $\rho(n) = n^c$.

\end{definition}

\begin{definition}[$\mathsf{Mild}\textnormal{-}\mathsf{Gap}\textnormal{-}\mathsf{MINKT}$]\label{def:mild-gap-MINKT}
        For an oracle $\mathcal{O}$ and a function $\rho\colon\mathbb{N}\to \mathbb{N}$, we define the promise language $\mathsf{Mild}\textnormal{-}\mathsf{Gap}_{\rho}\textnormal{-}\mathsf{MINKT}^{\mathcal{O}}\eqdef (\mathsf{YES}, \mathsf{NO})$ as
    \[
        \left\{
        \begin{aligned}
            \mathsf{YES} & = \mleft\{(x,1^s,1^t) \mid \K^{t, \mathcal{O}}(x)\leq s \mright\} \\
            \mathsf{NO} & = \mleft\{(x,1^s,1^t) \mid \K^{t + \rho(|x|), \mathcal{O}}(x)\geq s + \log\rho(t)\mright\}
        \end{aligned}
        \right..
    \]
    For a complexity class $\mathcal{C}$, we say that $\mathsf{Mild}\textnormal{-}\mathsf{Gap}\textnormal{-}\mathsf{MINKT}^{\mathcal{O}} \in \mathcal{C}$ if there exists a constant $c > 0$ such that $\mathsf{Mild}\textnormal{-}\mathsf{Gap}_\rho\textnormal{-}\mathsf{MINKT}^{\mathcal{O}} \in \mathcal{C}$ with $\rho(n) = n^c$.
    
\end{definition}

\begin{definition}[$\MINpKT$]
    For an oracle $\mathcal{O}$, we define the promise language $\MINpKT^{\mathcal{O}} \eqdef (\mathsf{YES}, \mathsf{NO})$ as
    \[
        \left\{
        \begin{aligned}
            \mathsf{YES} & = \mleft\{(x,1^s,1^t,1^a,1^b) \mid \pK^{t,\mathcal{O}}_{b/a}(x)\leq s\mright\} \\
            \mathsf{NO} & = \mleft\{(x,1^s,1^t,1^a,1^b) \mid \pK^{t,\mathcal{O}}_{(b-1)/a}(x)> s\mright\}
        \end{aligned}
        \right..
    \]
\end{definition}

\begin{definition}[$\mathsf{Gap}\textnormal{-}\MINpKT$]
        For an oracle $\mathcal{O}$ and $\rho\colon\mathbb{N} \to \mathbb{N}$, we define the promise language
        \[
\mathsf{Gap}_{\rho}\textnormal{-}\MINpKT^{\mathcal{O}}\eqdef (\mathsf{YES}, \mathsf{NO})
        \]
        as
    \[
        \left\{
        \begin{aligned}
            \mathsf{YES} & = \mleft\{(x,1^s,1^t,1^a,1^b) \mid \pK^{t, \mathcal{O}}_{b/a}(x)\leq s\mright\} \\
            \mathsf{NO} & = \mleft\{(x,1^s,1^t,1^a,1^b) \mid \pK^{\rho(t+a+b), \mathcal{O}}_{(b-1)/a}(x)\geq s + \log\rho(t+a+b)\mright\}
        \end{aligned}
        \right..
    \]
    For a complexity class $\mathcal{C}$, we say that $\mathsf{Gap}\textnormal{-}\mathsf{MINpKT}^{\mathcal{O}} \in \mathcal{C}$ if there exists a constant $c > 0$ such that $\mathsf{Gap}_\rho\textnormal{-}\mathsf{MINpKT}^{\mathcal{O}} \in \mathcal{C}$ with $\rho(n) = n^c$.%
\end{definition}

\begin{definition}[$\mathsf{Mild}\textnormal{-}\mathsf{Gap}\textnormal{-}\MINpKT$]
        For an oracle $\mathcal{O}$ and a function $\rho\colon\mathbb{N}\to \mathbb{N}$, we define the promise language $\mathsf{Mild}\textnormal{-}\mathsf{Gap}_{\rho}\textnormal{-}\MINpKT^{\mathcal{O}}\eqdef (\mathsf{YES}, \mathsf{NO})$ as
    \[
        \left\{
        \begin{aligned}
            \mathsf{YES} & = \mleft\{(x,1^s,1^t,1^a,1^b) \mid \pK^{t, \mathcal{O}}_{b/a}(x)\leq s \mright\} \\
            \mathsf{NO} & = \mleft\{(x,1^s,1^t,1^a,1^b) \mid \pK^{t + \rho(|x|+a+b), \mathcal{O}}_{(b-1)/a}(x)\geq s + \log\rho(t+a+b)\mright\}
        \end{aligned}
        \right..
    \]
    For a complexity class $\mathcal{C}$, we say that $\mathsf{Mild}\textnormal{-}\mathsf{Gap}\textnormal{-}\mathsf{MINpKT}^{\mathcal{O}} \in \mathcal{C}$ if there exists a constant $c > 0$ such that $\mathsf{Mild}\textnormal{-}\mathsf{Gap}_\rho\textnormal{-}\mathsf{MINpKT}^{\mathcal{O}} \in \mathcal{C}$ with $\rho(n) = n^c$.
\end{definition}

    For all the promise problems 
    \[\Pi \in \{\mathsf{Gap}_{\rho}\textnormal{-}\MINKT, \mathsf{Mild}\textnormal{-}\mathsf{Gap}_{\rho}\textnormal{-}\MINKT, \mathsf{Gap}_{\rho}\textnormal{-}\MINpKT, \mathsf{Mild}\textnormal{-}\mathsf{Gap}_{\rho}\textnormal{-}\MINpKT\},\] we adopt the following convention. For complexity classes $\mathcal{C}$ and $\mathcal{D}$, we say that $\Pi^{\mathcal{D}} \in \mathcal{C}$ if, for every language $L \in \mathcal{D}$, there exists a constant $c > 0$ such that $\Pi^{L} \in \mathcal{C}$ with gap parameter $\rho(n) = n^c$. In particular, note that $\rho$ may depend on the choice of $L$.

\subsection{Useful Tools}
Here, we state some useful tools about Kolmogorov complexity.

\begin{lemma}[Standard Counting Argument]\label{l:incomp-strings}
    For every $t\in\mathbb{N}$, $\mu \in \{\K, \K^t, \nK^t\}$, and every $n, \gamma \in \mathbb{N}$, we have
    \[
        \Prob_{x \sim \{0,1\}^n} \left[ \mu(x) < n - \gamma \right] < \frac{1}{2^\gamma}.
    \]
    Also, for every $t\in\mathbb{N}, \lambda \in (0, 1)$, $\mu \in \{\pK_{\lambda}^t, \pnK_{\lambda}^t\}$, and every $n, \gamma \in \mathbb{N}$, we have
    \[
        \Prob_{x \sim \{0,1\}^n} \left[ \mu(x) < n - \gamma \right] < \frac{1}{\lambda \cdot 2^\gamma}.
    \]
    Moreover, the inequalities above continue to hold in the presence of an oracle $\mathcal{O}$ and when conditioning on a fixed string $y$.
\end{lemma}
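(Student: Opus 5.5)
The first inequality follows from a direct count of short descriptions, and the second from an averaging argument over the random string $r$ that reduces it to the first. I treat the plain case first and then the probabilistic one. Oracles and conditioning play no role in either step.

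\emph{Deterministic measures.} Fix $\mu\in\{\K,\K^t,\nK^t\}$, possibly relative to an oracle $\mathcal{O}$ and conditioned on a fixed $y$. Consider the map sending each program $\Pi$ with $|\Pi|<n-\gamma$ to the string it outputs, if any.
\begin{itemize}
\item For $\K$ and $\K^t$ the output is unique by determinism, since the oracle $\mathcal{O}$, the string $y$ and $r=\epsilon$ are all fixed.
\item For $\nK^t$, the definition requires every accepting branch to print the same $x$ (all other branches output $\bot$). So each program again nondeterministically outputs at most one string.
\end{itemize}
Hence the number of $x$ with $\mu(x)<n-\gamma$ is at most the number of programs of length at most $n-\gamma-1$, namely $\sum_{i<n-\gamma}2^i=2^{n-\gamma}-1<2^{n-\gamma}$. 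Dividing by $2^n$ gives probability strictly less than $2^{-\gamma}$.

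\emph{Probabilistic measures.} Take $\mu\in\{\pK^t_\lambda,\pnK^t_\lambda\}$ and let $\nu$ be the corresponding randomized-string measure, $\K^t(\cdot\mid y;r)$ or $\nK^t(\cdot\mid y;r)$. Set
\[
B=\{x\in\bool^n:\mu(x)<n-\gamma\}.
\]
By definition, every $x\in B$ satisfies $\Prob_{r\sim\bool^t}[\nu(x;r)\le n-\gamma-1]\ge\lambda$. Summing over $x\in B$ and swapping the sum with the expectation over $r$ gives
\[
\lambda|B|\le \mathbb{E}_r\bigl|\{x:\nu(x;r)<n-\gamma\}\bigr|.
\]
For each fixed $r$, the deterministic count above (which holds for every fixed $r$) bounds the inner set size by $2^{n-\gamma}-1$. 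Therefore $|B|<2^{n-\gamma}/\lambda$, and dividing by $2^n$ yields the claimed $1/(\lambda2^\gamma)$ bound.

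No step here is a real obstacle. The only points needing care are the strict inequality, which comes from the $-1$ in the count of programs, and the uniqueness of the output for nondeterministic programs, which the definition of $\nK^t$ provides.
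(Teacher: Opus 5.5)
Your proof is correct and takes essentially the same approach as the paper: the deterministic cases are the same count of fewer than $2^{n-\gamma}$ short programs, each yielding at most one output, and your averaging over $r$ is the paper's double counting of edges in the bipartite graph between pairs $(r,\Pi)$ and strings $x$, written as an expectation. Your explicit treatment of the strict inequality and of output uniqueness for $\nK^t$ is, if anything, slightly more careful than the paper's sketch.
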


\begin{proof}

    For simplicity, we consider the oracle-free case and $y = \epsilon$. (The general case is analogous.) The case of $\K$ directly follows from counting. Indeed, the number of programs with length less than $n - \gamma$ is at most $2^{n - \gamma}-1$. Therefore,
    \[
        \Prob_{x \sim \{0, 1\}^n}\left[\K(x) < n - \gamma\right] < \frac{2^{n - \gamma}}{2^n} = \frac{1}{2^{\gamma}}.
    \]

    The deterministic and nondeterministic time bounded cases follow similarly, since each program of length less than $n-\gamma$ contributes at most one output string.

    For $\mu = \pK^t_\lambda, \pnK^t_\lambda$, consider a bipartite graph where the left vertices represent randomness-program pairs  $\{0,1\}^t \times \{0,1\}^{<n - \gamma}$, and the right vertices represent strings in $\{0,1\}^n$. There is an edge between a left vertex $(r, M)$ and a right vertex $x$ if $M(r)$ outputs $x$. For a right vertex $x$ to have $\pK_{\lambda}^t$-complexity at most $n - \gamma$, it must have degree at least $2^t \cdot \lambda$. 
    
    However, there are less than $2^t \cdot 2^{n - \gamma}$ total edges in the graph. Therefore, the number of right vertices with degree at least $2^t \cdot \lambda$ is less than $2^{n-\gamma}/\lambda$.
\end{proof}

	\begin{lemma}[Following {\cite[Lemma 18]{GKLO22}}]\label{l:K-pK}
	Let $L$ be any computable language. There exists a constant $c>0$ such that for any $x\in\bool^*$, time bound $t\in\mathbb{N}$, $\K(x\mid t) \leq \pK^{t,L}(x) + c\cdot \log |x|$.
\end{lemma}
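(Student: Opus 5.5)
We prove the statement of \Cref{l:K-pK}, namely $\K(x\mid t) \leq \pK^{t,L}(x) + c\log|x|$, by direct enumeration. Since $L$ is computable, the randomness can be eliminated given $t$ and the program length: we search for the string $x$ that is output with high probability by a short oracle program. Let $s = \pK^{t,L}(x)$. By definition,
\[
\Prob_{r\sim\{0,1\}^t}\mleft[\K^{t,L}(x\mid \epsilon; r)\le s\mright]\ge 2/3.
\]
There are fewer than $2^{s+1}$ programs of length at most $s$. Averaging over $r$, some fixed program $\Pi$ with $|\Pi|\le s$ satisfies
\[
\Prob_r\mleft[\Pi^{L}(\epsilon;\epsilon;r) \text{ outputs } x \text{ within } t \text{ steps}\mright] \ge \frac{2}{3}\cdot 2^{-(s+1)}.
\]
This bound is too weak to single out $x$ directly, so instead of fixing one program we weigh all programs together.

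Consider the following procedure, given $t$, $s$ and an index $i$. For every program $\Pi$ of length at most $s$ and every $r\in\{0,1\}^t$, simulate $\Pi^{L}(\epsilon;\epsilon;r)$ for $t$ steps; oracle queries to $L$ are answered by running a decider for $L$, which is possible because $L$ is computable. This yields a list of pairs $(\Pi,r)$ with their outputs. Call a string $z$ \emph{heavy} if the fraction of $r$ for which some program of length at most $s$ outputs $z$ is at least $2/3$. Because the procedure determines exactly the set of $r$ on which some short program outputs $z$, it can test heaviness of each output string exactly. The procedure outputs the $i$-th heavy string in lexicographic order. Since the count is exact, $x$ is heavy.

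It remains to bound the number of heavy strings. A string is heavy only if it has at least $\tfrac{2}{3}2^t$ incident pairs $(r,\Pi)$ in the bipartite graph from the proof of \Cref{l:incomp-strings}. That graph has fewer than $2^{t}\cdot 2^{s+1}$ edges, so there are at most $\tfrac{3}{2}\cdot 2^{s+1}$ heavy strings. Hence the index $i$ needs only $s+O(1)$ bits.

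The description of $x$ given $t$ therefore consists of $i$, together with $s$ encoded in $O(\log s)$ bits, plus a constant for the machine and the decider for $L$. We may assume $s\le |x|+O(1)$, because otherwise the trivial program already gives $\K(x\mid t)\le |x|+O(1)\le s$. Thus $O(\log s)=O(\log|x|)$, and $\K(x\mid t)\le s+c\log|x|$.

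The only subtle point is that the enumeration must run over all of $\{0,1\}^t$, so $t$ must be known; this is exactly why the statement conditions on $t$. It must also query $L$, which computability of $L$ permits. Running time does not matter because $\K$ is time-unbounded. The constant $c$ depends only on $L$ through the size of its decider.
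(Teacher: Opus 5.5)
Your proof is correct and is essentially the standard argument behind \cite[Lemma 18]{GKLO22}, which the paper cites without giving its own proof. Given $t$ and $s=\pK^{t,L}(x)$, you use a decider for $L$ to brute-force the fewer than $3\cdot 2^{s}$ strings $z$ with $\pK^{t,L}(z)\le s$ (the counting bound of \Cref{l:incomp-strings}), and you describe $x$ by its $(s+2)$-bit index together with an $O(\log s)=O(\log|x|)$-bit encoding of $s$.
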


\begin{theorem}[Language Compression for $\mathsf{pnK}^t$ \cite{DBLP:journals/eccc/HuL025}]\label{t:lc-pnK}
	There is a polynomial $p_{_\mathsf{LC}}(\cdot)$ such that for every set $A\vcentcolon=\{A_n\subseteq \bool^n\}_{n\in\mathbb{N}}$, and for all $n\in\mathbb{N}$ and $x\in A_n$,
	\[
	\mathsf{pnK}^{p_{_\mathsf{LC}(n)}, A_n}(x) \leq \log |A_n| + \log p_{_\mathsf{LC}}(n).
	\]
\end{theorem}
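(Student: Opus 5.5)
The plan is to prove the bound with a hashing argument. Randomness selects a hash function $h$. The description of $x$ is the hash value $h(x)$ together with $n$. Nondeterminism guesses a preimage of $h(x)$, and membership queries to $A_n$ filter out candidates outside the set. I read the superscript $A_n$ in $\mathsf{pnK}^{p_{_\mathsf{LC}}(n),A_n}(x)$ as oracle access to the set $A_n$, which is needed to test membership.

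Let $k \vcentcolon= \lceil \log |A_n| \rceil + \lceil 2\log n \rceil + c_0$ for a suitable constant $c_0$. Fix a pairwise-independent family $\mathcal{H}_{n,k}$ of functions $\{0,1\}^n \to \{0,1\}^k$ whose members are described by $O(n+k)$ random bits and evaluated in time $\poly(n)$, for example Toeplitz-matrix affine maps. Pairwise independence gives
\[
\Prob_{h}\bigl[\exists\, y\in A_n\setminus\{x\} : h(y)=h(x)\bigr] \le \frac{|A_n|}{2^{k}} \le \frac{1}{3}
\]
for every fixed $x \in A_n$, once $c_0$ is large enough. Given a random string $r \in \{0,1\}^{t}$ with $t = p_{_\mathsf{LC}}(n)$, let $h_r$ be the hash function determined by the first $O(n)$ bits of $r$. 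These are read sequentially from the start of the randomness tape, which the model permits.

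The program $\Pi$ for $x$ hardwires $n$ (using $O(\log n)$ bits) and the value $v = h_r(x) \in \{0,1\}^k$. On nondeterministic guess $w$, it takes the first $n$ bits $y$ of $w$ and checks two things: $y \in A_n$, using one oracle query, and $h_r(y) = v$. If both checks pass it outputs $y$, and otherwise it outputs $\bot$. Choosing $w$ to start with $x$ gives output $x$. Whenever $h_r$ is injective on $A_n$ at $x$ (the good event above, probability at least $2/3$), every accepting branch outputs exactly $x$. Hence $\nK^{t,A_n}(x \mid \epsilon; r) \le |\Pi|$ for at least a $2/3$ fraction of the strings $r$.

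The description length $|\Pi|$ is $k$ plus $O(\log n)$ for $n$ and the length fields of the encoding, plus the constant-size machine. This is at most $\log|A_n| + O(\log n)$, which we absorb into $\log p_{_\mathsf{LC}}(n)$ by taking $p_{_\mathsf{LC}}$ to be a large enough polynomial. The same polynomial also bounds the running time, $\poly(n)$, and the lengths of the random and nondeterministic tapes.

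The main obstacle is not the probabilistic estimate but the bookkeeping. We must make sure the machine $\Pi$ is uniform: it may not depend on $A_n$ beyond oracle access, and $v$ is the only $x$-specific advice. We must also keep the encoding overhead, the description of $n$ and the $O(\log)$-size length fields, within $\log p_{_\mathsf{LC}}(n)$. Finally, the construction has to respect the sequential-access conventions of the computational model.
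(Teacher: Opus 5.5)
Your proposal is correct. You pick a pairwise-independent hash $h_r$ using the random tape, store $h_r(x)$ with about $\log|A_n|+O(1)$ bits, and let the nondeterministic branch guess a preimage that is checked against $h_r(x)$ and against the oracle $A_n$; this is the standard hashing proof of language compression for $\mathsf{pnK}^t$, and it is essentially how the cited work \cite{DBLP:journals/eccc/HuL025} obtains the theorem, which this paper imports without reproving. (The extra $2\log n$ in $k$ is not needed for $\lambda=2/3$ but is harmless, and slack of this kind is what gives the higher-confidence variants with $\lambda = 1-1/\poly(n)$ used later in the paper.)
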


\paragraph{Direct Product Generator.}
For $x,y\in\{0,1\}^n$, we denote their inner product by $x\cdot y=\bigoplus_{i=1}^nx_iy_i$.
\begin{definition}[Direct Product Generator (DPG) \cite{DBLP:conf/stoc/Hirahara21}]\label{def:DPG}
    For $n, k \in \mathbb{N}$, the $k$-wise direct product generator is the function
    \[
        \mathsf{DP}_k \colon \{0,1\}^n \times \{0,1\}^{nk} \to \{0,1\}^{nk + k},
    \]
    defined by
    \[
        \mathsf{DP}_k(x, z_1 \circ \cdots \circ z_k) = (z_1 \circ \cdots \circ z_k \circ x \cdot z_1 \circ \cdots \circ x \cdot z_k).
    \]
\end{definition}
\begin{lemma}[$\pK^t$ Reconstruction Lemma \cite{GKLO22}]\label{lem:recon}
		There is a polynomial $p_{_\mathsf{DP}}$ such that the following holds. For every $n,m,k\in\mathbb{N}^+$, $x \in \bool^n$, let $D$ be a function that $(1/m)$-distinguishes $\mathsf{DP}_k(x; \mathcal{U}_{nk})$ from $\mathcal{U}_{nk + k}$. Then  
		\[\pK^{p_{_\mathsf{DP}}(n+m+k), \parallel D}(x) \leq k +\log p_{_\mathsf{DP}}(n+m+k).\]
	\end{lemma}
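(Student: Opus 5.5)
The plan is the standard hybrid argument followed by Goldreich–Levin list decoding. The one real subtlety is arranging for the description to work with probability at least $2/3$ over the random tape. All auxiliary objects (the $z_j$'s and the Goldreich–Levin samples) will be read off the random string $r$. Only a few bits are hardwired as advice: an index, $k-1$ inner-product values, $O(\log(nmk))$ bits of Goldreich–Levin guesses, and a list index. This yields total length $k+O(\log(n+m+k))$ and running time $\poly(n+m+k)$ with nonadaptive access to $D$.

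\textbf{Step 1: hybrid to a next-bit predictor.} Assume without loss of generality that $\Pr[D(\mathsf{DP}_k(x;\mathcal U_{nk}))=1]-\Pr[D(\mathcal U_{nk+k})=1]\ge 1/m$; the other sign is handled by one advice bit. Define hybrids $H_i=(z_1,\dots,z_k,x\cdot z_1,\dots,x\cdot z_i,b_{i+1},\dots,b_k)$ with uniform $b_j$'s. Some $i\in[k]$ gives advantage at least $1/(mk)$ between $H_{i-1}$ and $H_i$. This yields the usual predictor: on input $z_i$, guess $b_i$, run $D$ on the hybrid, and output $b_i$ or $1-b_i$ accordingly. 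It predicts $x\cdot z_i$ with advantage $1/(mk)$ over all of $z_1,\dots,z_k,b_{i+1},\dots,b_k$. By Markov, with probability at least $1/(2mk)$ over the choice of $(z_j)_{j\ne i}$ and $b_{>i}$, the restricted predictor $P(z)$ has advantage at least $1/(2mk)$ over uniform $z$. Such a fixing is determined by $r$ except for the values $x\cdot z_j$ for $j<i$, which are hardwired ($k-1$ bits), plus $\log k$ bits for $i$.

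\textbf{Step 2: amplifying to probability $2/3$.} A single fixing succeeds only with probability $1/(2mk)$ over $r$. I would therefore let $r$ contain $T=O(mk)$ independent blocks, each encoding a candidate fixing, and add $\log T$ bits of advice naming a good block. With probability at least $2/3$ over $r$ some block is good, and the program uses that block's $k-1$ hardwired inner-product bits. These bits depend on $r$, which is allowed, since $\pK$ only asks for a short program for most $r$. Note that the advice length is $k-1+O(\log(mk))$, not $T(k-1)$, because only the chosen block's bits are hardwired.

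\textbf{Step 3: Goldreich–Levin.} Apply the Goldreich–Levin decoder to $P$ with advantage $\varepsilon=1/(2mk)$. It uses $\ell=O(\log(n/\varepsilon))$ random seeds from $r$, guesses the $\ell$ inner products $x\cdot s_j$ (hardwired as advice, $O(\log(nmk))$ bits), and forms pairwise-independent combinations. It then recovers each bit $x_\iota$ by majority vote over nonadaptive queries $P(s_S\oplus e_\iota)$. Each query to $P$ is one query to $D$ on a fully specified string, so all queries are nonadaptive. The decoder errs with small constant probability over its seeds. Allocating fresh randomness and union-bounding with Step 2 (adjusting constants, e.g.\ requiring $9/10$ in each stage) keeps the overall success probability at least $2/3$.

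I expect the main obstacle to be this probability bookkeeping. The definition of $\pK$ requires that a single short program, of length $k+O(\log(n+m+k))$ and depending on $r$, output $x$ for at least a $2/3$ fraction of random strings $r$. One must check that the block-selection trick and the Goldreich–Levin guesses add only logarithmic advice and that the time stays $\poly(n+m+k)$ on the universal machine. This fixes the polynomial $p_{_\mathsf{DP}}$.
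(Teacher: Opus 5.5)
Your proposal is correct and follows the standard route behind this lemma; the paper states it with a citation to GKLO22 and gives no proof of its own. The route is: a hybrid argument giving a next-bit predictor with the bits $x\cdot z_j$ for $j<i$ as advice, amplification over the random tape via $O(mk)$ independent repetitions with the good one named by $O(\log(mk))$ advice bits, and Goldreich--Levin decoding, all with nonadaptive queries to $D$. Minor bookkeeping: draw $b_i$ from $r$ too, so the restricted predictor is deterministic and the pairwise-independence analysis applies verbatim, and spend $O(\log(n+m+k))$ bits encoding $n,m,k$; the ``list index'' is redundant once the Goldreich--Levin guesses are hardwired.
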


\paragraph{A SAT Oracle Simulation Lemma.}

\begin{lemma}\label{lem:kA-leq-nk-super-script}
    There exists a constant $c > 0$ such that for any strings $x, y \in \bool^*$ and $t \in \mathbb{N}$,
    \[
        \K^{t', \parallel \SAT}(x \mid y) \leq \nK^t(x \mid y) + c\cdot \log t,
    \]
    where $t' = (t + |y|)^c$.
\end{lemma}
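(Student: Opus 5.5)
The idea is to use the $\SAT$ oracle to resolve the nondeterminism in a witnessing $\nK^t$ program, and to recover the accepting branch bit by bit with non-adaptive queries. Fix a program $\Pi$ of size $\nK^t(x\mid y)$ that nondeterministically outputs $x$ on input $y$ in time $t$. For every $w\in\bool^t$, $\Pi(y;w)$ outputs either $x$ or $\bot$, and some $w$ yields $x$. Since all non-$\bot$ outputs equal $x$, it suffices to learn each bit of $x$ separately. A single accepting $w$ is not needed.

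The new oracle program $\Pi'$ has $\Pi$ hardwired, together with the length $m=|x|\le t$, encoded in $O(\log t)$ bits. On input $y$, for each $i\in[m]$ it forms the query $\varphi_i$: ``there exists $w\in\bool^t$ such that $\Pi(y;w)$ halts within $t$ steps with an output whose $i$-th bit is $1$.'' This is an $\NP$ statement about a computation of size $\poly(t+|y|+|\Pi|)$. By Cook--Levin, $\varphi_i$ can be written as a CNF formula of size $\poly(t+|y|)$ (note that $|\Pi|\le t$ may be assumed). All $m$ queries are computable from $(\Pi,y,m)$ without knowing any previous answers, so they can be issued in parallel. Then $\Pi'$ outputs the string whose $i$-th bit is the answer to $\varphi_i$.

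Correctness follows because every non-$\bot$ output equals $x$. Hence $\varphi_i$ is satisfiable exactly when $x_i=1$: if $x_i=1$, the accepting branch witnesses $\varphi_i$; if $x_i=0$, no branch outputs a string with $i$-th bit $1$. Note that $m$ is needed so that $\Pi'$ knows how many bits to recover. Alternatively, we could add queries about the output length, but hardwiring $m$ is cheaper and costs only $O(\log t)$ bits.

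For the accounting, $|\Pi'|\le |\Pi|+O(\log t)+O(1)$, where the constant covers the fixed wrapper machine and the self-delimiting encoding of the advice format. Writing down all the formulas takes time $m\cdot\poly(t+|y|)\le (t+|y|)^{c}$ for a suitable $c$, and the universal machine's overhead $\widetilde O(|M|,t)$ is absorbed in the same polynomial.

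The main obstacle is bookkeeping rather than any conceptual difficulty. Two points need care. First, the Cook--Levin formula for the universal simulation must be constructible in time polynomial in $t+|y|$ and independent of $|x|$ beyond $m\le t$. Second, the constant $c$ must serve both as the additive overhead $c\log t$ and as the exponent in $t'$. To make $c\log t$ dominate the additive constant for small $t$, I would take $c$ large and treat $t\ge 2$, handling trivial cases separately.
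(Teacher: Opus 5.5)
Your proposal is correct and follows the paper's proof: fix a program $\Pi$ witnessing $\nK^t(x \mid y)$, hardwire it together with $|x|$, and recover $x$ bit by bit. The $i$-th $\SAT$ query asks whether some guess $w \in \bool^t$ makes $\Pi(y;w)$ output a string whose $i$-th bit is $1$. Since every non-$\bot$ output equals $x$, this query is satisfiable exactly when $x_i = 1$, and the total overhead is $O(\log t)$ bits.

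Your query $\varphi_i$ also depends on $t$, so $\Pi'$ must have $t$ hardwired as well, as the paper does. This fits in the same $O(\log t)$ budget. Your remarks on non-adaptivity and on assuming $|\Pi| \le t$ fill in details the paper leaves implicit.
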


\begin{proof}
    Let $x, y \in \bool^*$ and $t \in \mathbb{N}$. We assume without loss of generality that $t \geq |x|$. Define $k \vcentcolon= \nK^t(x \mid y)$, and let $\Pi \in \bool^k$ be a program that nondeterministically outputs $x$ on input $y$ in time $t$. 

    Given $\Pi$, $|x|$, $t$, and $y$, we reconstruct $x$ as follows: For each $i \in [|x|]$, we use the $\SAT$ oracle to answer the following query: 
    \begin{quote}
        Does there exist a $w \in \bool^t$ such that $\Pi^{\mathcal{O}}(y; w)$ outputs a string $z \in \bool^{|x|}$ in time $t$ and the $i$-th bit of $z$ is $1$?
    \end{quote}
    We then set the $i$-th bit of $x$ to be $1$ if and only if the answer to this query is yes.    

    It is easy to verify that the above procedure runs in time $\poly(|x| + |y| + t)$. Also, by the fact that $\Pi$ nondeterministically outputs $x$ on input $y$ in time $t$, it can correctly recover $x$. Thus, for $t' = \poly(|y| + t)$, we obtain
    \begin{align*}
        \K^{t', \parallel \SAT}(x \mid y) &\leq |\Pi| + O(\log |t|) \\
        &\leq k + O(\log t),
    \end{align*}
    as desired.
\end{proof}

\begin{lemma}\label{l:pKt-SAT-superscript}
        For every oracle $\mathcal{O}$ and every language $L$ that is hard for $\NP^{\mathcal{O}}$ under deterministic polynomial-time reductions, there exists a constant $c  > 0$ such that, for all $x,y\in\bool^*$ and $t\in\mathbb{N}$,
        \[
            \pK^{t^c,L}(x\mid y)\leq\pnK^{t, \mathcal{O}}(x\mid y) + c\cdot \log t.
        \]
    \end{lemma}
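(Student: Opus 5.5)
The plan is to mirror the proof of \Cref{lem:kA-leq-nk-super-script}, now in the probabilistic and relativized setting, and to handle the randomness pointwise. Set $k \vcentcolon= \pnK^{t,\mathcal{O}}(x\mid y)$. By \Cref{def:pnK-t} (relativized to $\mathcal{O}$), for at least a $2/3$ fraction of $r\in\bool^t$ we have $\nK^{t,\mathcal{O}}(x\mid y;r)\le k$. The key step is a deterministic statement for each fixed good $r$:
\[
\K^{t',L}(x\mid y;r)\le \nK^{t,\mathcal{O}}(x\mid y;r)+O(\log t),
\]
where $t'=\poly(t+|y|)$. Given this, the same set of good $r$ witnesses the bound on $\pK$. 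The length mismatch between $r\in\bool^t$ and the $t^c$ random bits used by $\pK^{t^c,L}$ is handled by letting the new program read only the first $t$ bits of its random tape. The prefix of a uniform string of length $t^c$ is uniform on $\bool^t$, so the success probability stays at least $2/3$.

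For the pointwise statement, fix a good $r$ and a program $\Pi$ with $|\Pi|\le k$ that nondeterministically outputs $x$ on $(y;r)$ relative to $\mathcal{O}$. The new program $\Pi'$ hard-wires $\Pi$, $|x|$ and $t$, which costs $O(\log t)$ extra bits; we may assume $|x|\le t$. For each $i\in[|x|]$, it decides the following question: does there exist $w\in\bool^t$ such that $\Pi^{\mathcal{O}}(y;w;r)$ outputs a string of length $|x|$ within $t$ steps whose $i$-th bit is $1$?

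This question, viewed as a language of tuples $(\Pi,y,r,1^t,1^{|x|},i)$, is in $\NP^{\mathcal{O}}$. An $\NP^{\mathcal{O}}$ machine guesses $w$ and simulates $\Pi$ for $t$ steps, forwarding $\Pi$'s oracle queries to $\mathcal{O}$. Since $L$ is $\NP^{\mathcal{O}}$-hard under deterministic polynomial-time reductions, $\Pi'$ can compute the reduction itself and make a single query to $L$ per bit. Because $\Pi$ outputs either $x$ or $\bot$ on every $w$, and outputs $x$ on some $w$, every answer equals $x_i$. Hence $\Pi'$ outputs $x$ in time $\poly(t+|y|)$.

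The subtle points are bookkeeping rather than conceptual. First, $\Pi'$ must feed the same $r$ into each query, so it copies the first $t$ bits of its random tape. Since the randomness tape has sequential access, this takes $O(t)$ time. Second, the reduction to $L$ depends only on $\mathcal{O}$ and the fixed $\NP^{\mathcal{O}}$ machine, so its description is a constant absorbed into $c$. Finally, we need $t^c\ge \poly(t+|y|)$. I expect this to be the main obstacle, since the stated bound has time $t^c$ without a $|y|$ term. I would handle it by assuming without loss of generality that $t\ge |y|$, because a program running in time $t$ can read only $t$ bits of $y$. Alternatively, I would note that the intended use permits $t^c$ to absorb $|y|$ in the same way as in \Cref{lem:kA-leq-nk-super-script}.
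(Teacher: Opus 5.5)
Your proposal is correct and follows the paper's proof essentially verbatim: for each good $r$, recover $x$ bit by bit via queries to the $\NP^{\mathcal{O}}$ language $L_0$ of tuples $(\Pi,y,r,i,1^t,1^{|x|})$, replace $L_0$ by $L$ through the deterministic reduction, and pay $O(\log t)$ bits for hard-wiring $t$ and $|x|$. You also handle the random-tape length mismatch and the $|y|$ dependence more carefully than the paper, which just writes $\poly(t)$; for the $|y|$ issue, rather than assuming $t\ge|y|$, have $\Pi'$ copy only the first $t+1$ bits of $y$ into each query, which is legitimate because $\Pi$ cannot reach beyond them within $t$ steps.
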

    \begin{proof}

        The proof follows the same idea as \Cref{lem:kA-leq-nk-super-script}. Let $k = \pnK^{t,\mathcal{O}}(x \mid y)$, and suppose that with probability at least $2/3$ over $r \sim \bool^t$, there exist $\Pi \in \bool^k$ such that, on input $(y;r)$ and with oracle access to $\mathcal{O}$, $\Pi$ non-deterministically outputs $x$ within time $t$.

        Let $L_0 \in \NP^{\mathcal{O}}$ be the following language:
        \[L_0 = \mleft\{(\Pi,y,r, i, 1^t, 1^n) \,\,\middle\vert\,\, 	
        	\begin{aligned}
		&\exists w\in\bool^t \textrm{ such that
		$\Pi^{\mathcal{O}}(y;w;r)$ outputs a string}\\
        &x \in \bool^n \textrm{ within time $t$ and the $i$-th bit of $x$ is $1$.}
	\end{aligned}
    \mright\}.\]

        Then, with probability at least $2/3$ over $r \sim \bool^t$, given $(\Pi, y, r, t, n)$ as above hard-coded, one can recover the string $x$ within time $\poly(t)$ given oracle access to $L_0$. In particular, one can simply query $(\Pi,y,r, i, 1^t, 1^n)$ to $L_0$ for every $i \in [n]$ and return the concatenation of the answers from the oracle. By definition of $\Pi$ and $t$, this process is guaranteed to output $x$. 
        
        Lastly, for any $\NP^{\mathcal{O}}$-complete language $L$, we may replace the $L_0$-oracle with an $L$-oracle by simulating the reduction from $L_0$ to $L$. This increases the time to produce $x$ by at most a factor of $\poly(t)$. 

        Overall, we have
        \begin{align*}
            \pK^{\poly(t), L}(x \mid y) &\leq |(\Pi, t, n)| + O(\log t)
            \\&\leq  |\Pi| + O(\log t)
            \\&= \pnK^{t,\mathcal{O}}(x \mid y)) + O(\log t),
        \end{align*}
        as desired.
    \end{proof}

     \subsection{Average-Case Complexity}\label{ss:ave_case_def}
     \paragraph{Basic Definitions and Facts.}
We review some basic definitions and facts in average-case complexity. For more information about average-case complexity, we refer to \cite{DBLP:journals/fttcs/BogdanovT06}.
     
Recall that a pair $(L, \mathcal{D})$ is a \emph{distributional problem} if $L \subseteq \{0,1\}^*$ and $\mathcal{D} = \{\mathcal{D}_n\}_{n \in \mathbb{N}}$ is a distribution family, where each $\mathcal{D}_n$ is a distribution over $\{0,1\}^*$.

We let $\mathsf{DistNP}$ denote the set of distributional problems $(L, \mathcal{D})$ such that $L \in \mathsf{NP}$ and $\mathcal{D}$ is polynomial-time samplable. Here, a distribution family $\mathcal{D} = \{\mathcal{D}_n\}_{n \in \mathbb{N}}$ is \emph{polynomial-time samplable} if there exists a randomized polynomial-time algorithm $A$ such that for every $n \geq 1$, the output of $A(1^n)$ is distributed according to $\mathcal{D}_n$.  We let $\mathsf{PSAMP}$ be the collection of distribution families that can be sampled in polynomial time.
 For a complexity class $\mathcal{C}$ (e.g., $\mathcal{C}$ = NP), we let $\Dist\mathcal{C}$ denote the set of distributional problems $(L, \mathcal{D})$
with $L \in \mathcal{C}$ and $\mathcal{D} \in \mathsf{PSAMP}$.

A distributional problem $(L,\mathcal{D})$ is said to admit an \emph{errorless heuristic scheme} if there exists a deterministic polynomial-time algorithm $A$ such that, for every $n,k\in\mathbb{N}$,
\begin{enumerate}
    \item for every $x\in\supp(\mathcal{D}_n)$,
    \[
    A(x,1^n,1^k)\in\{L(x),\bot\},
    \]
    \item and
    \[
    \Prob_{x\sim\mathcal{D}_n}\!\left[A(x,1^n,1^k)\neq\bot\right]
    \ge 1-\frac1k.
    \]
\end{enumerate}
We let $\mathsf{AvgP}$ denote the class of distributional problems that admit an errorless heuristic scheme.

The above definition naturally extends to randomized algorithms. A distributional problem $(L,\mathcal{D})$ is said to admit a \emph{randomized errorless heuristic scheme} if there exists a probabilistic polynomial-time algorithm $A$ such that, for every $n,k\in\mathbb{N}$,
\begin{enumerate}
    \item for every $x\in\supp(\mathcal{D}_n)$,
    \[
    \Prob_A\!\left[A(x,1^n,1^k)\in\{L(x),\bot\}\right]\ge\frac45,
    \]
    \item and
    \[
    \Prob_{x\sim\mathcal{D}_n}\!\left[
    \Prob_A\!\left[A(x,1^n,1^k)=\bot\right]
    <\frac15
    \right]
    \ge 1-\frac1k.
    \]
\end{enumerate}
We let $\mathsf{AvgBPP}$ denote the class of distributional problems that admit randomized errorless heuristic schemes.

The above definitions require the heuristic algorithm to be errorless, allowing it only to output $\bot$ instead of an incorrect answer. We can relax this requirement by permitting the algorithm to err on both yes- and no-instances.

A distributional problem $(L,\mathcal{D})$ is said to admit a \emph{randomized two-sided-error heuristic scheme} if there exists a probabilistic polynomial-time algorithm $A$ such that, for every $n,k\in\mathbb{N}$,
\[
\Prob_{x\sim\mathcal{D}_n}\!\left[
\Prob_A\!\left[A(x,1^n,1^k)=L(x)\right]
>\frac35
\right]
\ge 1-\frac1k.
\]
We let $\mathsf{HeurBPP}$ denote the class of distributional problems that admit randomized two-sided-error heuristic schemes.

We will need the following result.
\begin{theorem}[\cite{DBLP:conf/focs/ImpagliazzoL90}]\label{t:IL90}
    Let $A$ be any oracle. If $(\NP^{A},\mathcal{U})\subseteq\Avg\BPP$, then for every distribution $\mathcal{D} \in \mathsf{PSAMP}$ and $L \in \NP^A$, we have $(L,\mathcal{D}) \in \mathsf{AvgBPP}$.
\end{theorem}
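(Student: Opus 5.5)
The plan is to reprove the Impagliazzo–Levin theorem and check that every step relativizes to $A$. The oracle enters only through the $\NP^A$ verifier of $L$. The sampler for $\mathcal{D}$, the hash functions and the reductions are oracle-free, so every auxiliary language we build is still in $\NP^A$. Fix $L\in\NP^A$ with verifier $V^A$, and a sampler $S(1^n;r)$ for $\mathcal{D}_n$ with $r\in\bool^{m}$, $m=\poly(n)$.

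\textbf{Step 1 (reduce to flat, hashed instances).} For $x$ in the support, let $k_x\approx -\log\mathcal{D}_n(x)$. Following Impagliazzo and Levin, we map $x$ together with a guess $k$ and a random pairwise-independent hash $h\colon\bool^{m}\to\bool^{m-k+O(\log n)}$ to the instance $(h,h(r))$, where $r$ ranges over the sampler preimages of $x$. When $r$ is uniform and $h$ is random, the pair $(h,h(r))$ is close to uniform by the leftover hash lemma. The $\NP^A$ language $L'$ on uniform inputs $(k,h,z)$ asks:
\[
\exists r\;\bigl[h(r)=z \wedge \exists w\;V^A(S(1^n;r),w)=1\bigr],
\]
with suitable restrictions that force $r$ into the preimage set of a single $x$ of weight $\approx 2^{-k}$.

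The cleanest route I would take is the search version. From $(\NP^A,\mathcal{U})\subseteq\Avg\BPP$ we get errorless average-case search for $\NP^A$ on uniform inputs. This uses the standard bit-by-bit search-to-decision reduction, with Valiant–Vazirani-style isolation by additional hashing, and the queries are again uniformly distributed $\NP^A$ instances. With this search ability we invert the sampler: given $(x,k,h)$ we recover some $r$ with $S(r)=x$ and a prescribed hash, and we also recover a witness $w$ for $x$ whenever $x\in L$.

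\textbf{Step 2 (errorless decision for $(L,\mathcal{D})$).} On input $x$, try all $k\le m$ and polynomially many random $h$, and run the uniform-distribution errorless algorithms on the induced instances. We output $1$ if a witness $w$ is found and verified with $V^A(x,w)=1$. We output $0$ only if the errorless algorithm for the complementary question certifies it. Since $\Avg\BPP$ is a class of decision problems, we obtain this from the errorless decision algorithm for the $\NP^A$ language whose positive instances are hashed $\mathcal{D}$-samples in $L$: a non-$\bot$ answer $0$ is trustworthy by errorlessness. In every other case we output $\bot$. Correctness on every input holds with probability $4/5$, because all non-$\bot$ answers are either verified or come from an errorless algorithm. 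For the failure probability: if $\mathcal{D}$-mass at least $1/k$ produced $\bot$, then by the domination argument the induced uniform instances would make the uniform errorless algorithm fail on mass at least $1/\poly(n,k)$. Choosing that algorithm's accuracy parameter to be a large enough polynomial in $(n,k)$ rules this out.

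\textbf{Main obstacle.} I expect the hardest step to be this domination/bookkeeping argument. Heavy strings, with $k_x$ small, must be handled separately, since there are few of them and they can be treated by brute force over the preimages of $x$. We also have to show that the instances produced for a $\mathcal{D}$-random $x$ are dominated by the uniform distribution up to a $\poly(n)$ factor, uniformly over the correct guess of $k$. Throughout, we must check that the oracle $A$ never has to be queried on a distribution other than the one already built into the $\NP^A$ languages; this is immediate, since only the verifier touches $A$.
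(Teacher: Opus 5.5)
The paper gives no proof of this statement. It cites Impagliazzo--Levin and tacitly assumes that their reduction relativizes, which is what your opening and closing remarks claim. That claim is right in spirit, but it is not ``immediate'' for the algorithm you actually describe, and your reconstruction of the reduction has two further gaps.

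\emph{First gap: your algorithm queries $A$.} The $\mathsf{AvgBPP}$ algorithms in both the hypothesis and the conclusion have no access to $A$; the paper uses the theorem to get oracle-free heuristics for $\Sigma^{\P}_k$. Yet your Step~2 outputs $1$ only after checking $V^A(x,w)=1$, which is a query to $A$. The usual search-to-decision reduction also ends by checking the candidate witness against the verifier. In the relativized setting the final algorithm may directly check only oracle-free conditions, such as $S(r)=x$ and hash equations. The validity of a witness must itself come from certified answers of the errorless algorithms. For example, obtain certified answers that an isolated solution exists and is unique (both are $\NP^A$ questions on instances dominated by uniform), then read off its bits from certified bit queries. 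Relatedly, bit-by-bit prefix queries are not uniformly distributed, because the prefix is determined by the witness; it is the isolation-plus-bit-query form whose queries are dominated by uniform.

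\emph{Second gap: Step~1 does not tie the instance to $x$.} Hashing the \emph{seed} to $m-k+O(\log n)$ bits leaves $(k,h,z)$ unrelated to $x$. A bucket holds about $2^{k-O(\log n)}$ seeds, typically of many different samples, and only an $n^{-\Omega(1)}$ fraction of buckets contain any preimage of $x$. Your leftover-hash claim holds for a seed uniform over $\bool^m$, but conditioned on $x$ the seed has min-entropy only $m-k$, below your output length. ``Weight $\approx 2^{-k}$'' is a counting condition, not an $\NP^A$ predicate. Recovering a preimage by searching on $(x,k,h)$ is circular, because that instance is $\mathcal{D}$-distributed rather than uniform.

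The missing idea is to hash two things:
\begin{itemize}
\item hash the \emph{sample} $x$ to about $k+O(\log n)$ bits, so that with high probability $x$ is alone in its bucket among the at most $2^k$ strings of weight $\ge 2^{-k}$;
\item hash the seed to about $m-k-O(\log n)$ bits with a uniformly random target $z_2$, which by the leftover hash lemma is statistically close to $h_2(r)$ for a random preimage $r$ of $x$.
\end{itemize}
The $x$-free instance $(k,h_1,h_1(x),h_2,z_2)$ is then dominated by uniform up to a $\poly(n)$ factor. Demanding a seed $r$ with $h_1(S(r))=h_1(x)$ and $h_2(r)=z_2$ acts as an $\NP$-checkable lower bound on the weight, which excludes lighter colliding strings.

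\emph{Third gap: certified ``no'' answers are not sound as used.} A certified ``no'' on such an instance says only that no seed in the bucket maps into $L$. This implies $x\notin L$ only if some seed of $x$ actually lies in the bucket. That fails for wrong guesses of $k$: for $k$ well below $k_x$, the seed hash is longer than $\log|S^{-1}(x)|$, so typically no preimage of $x$ lands in the bucket. Errorlessness must hold for every $x$ in the support, so ``output $0$ if some $k$ certifies no'' is unsound. Before trusting a $0$, you must first obtain a seed $r$ in the bucket by certified search on the same instance and check $S(r)=x$. A $1$ likewise requires a certified solution $(r,w)$ of the bucket instance with $S(r)=x$ checked. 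With these changes, $A$ is used only inside the $\NP^A$ languages handed to the uniform-distribution algorithms, as the relativized statement requires.
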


\paragraph{Notions of Average-Case Tractability for Meta-Complexity Problems.}

	Let $A$ be an algorithm, $\Pi\vcentcolon=(\mathsf{YES}, \mathsf{NO})$ be a promise problem, and $z\in\bool^*$, we write ``$A$ decides $\Pi$ on $z$'' if the following holds.
	\[
	A(z) = \begin{cases}
		1  & \text{if }z\in \mathsf{YES}, \\
		0 & \text{if } z\in \mathsf{NO},\\
		\textbf{either $0$ or $1$} & \text{otherwise.}
	\end{cases}
	\]
	
	\begin{definition}[``$\mathsf{Gap}\textnormal{-}\mathsf{MINpKT}^{\mathsf{PH}}\in \mathsf{Heur}\BPP$'']\label{def:two-sided}
    Let $O$ be an oracle and let $\rho\colon\mathbb{N} \to \mathbb{N}$. We say that $\mathsf{Gap}_{\rho}\textnormal{-}\mathsf{MINpKT}^{O}$ is easy on average with \emph{two-sided error} if for every polynomial-time samplable distribution family $\{\mathcal{D}_n\}_n$ supported over $\bool^n$, there exist a PPT algorithm $A$ such that for every $n,s,t,a,b,k\in\mathbb{N}$,
\[
\Prob_{x\sim\mathcal{D}_n}\!\left[
\Prob_A\mleft[A(-;1^k) \textnormal{ decides } \mathsf{Gap}_{\rho}\textnormal{-}\mathsf{MINpKT}^{O} \textnormal{ on } (x,1^s,1^t,1^a,1^b)\mright]
>\frac35
\right]
\ge 1-\frac1k.
\]
    We say that $\mathsf{Gap}\textnormal{-}\mathsf{MINpKT}^{\mathsf{PH}}$ is easy on average with two-sided error, denoted as ``$\mathsf{Gap}\textnormal{-}\mathsf{MINpKT}^{\mathsf{PH}}\in \mathsf{Heur}\BPP$'', if for every $\mathcal{O}\in \mathsf{PH}$, there exist a polynomial $\rho$ such that $\mathsf{Gap}_{\rho}\textnormal{-}\mathsf{MINpKT}^{\mathcal{O}}$ is easy on average with two-sided error.%
	\end{definition}

	In the following zero error setting, we further require that the algorithm will never output an incorrect answer on both the positive and negative instances (except with small probability over the internal randomness of the algorithm), while it can output $\bot$ to indicate that it does not know the correct answer.
	
	\begin{definition}[``$\mathsf{Gap}\textnormal{-}\mathsf{MINpKT}^{\mathsf{PH}}\in \mathsf{Avg}\BPP$'']\label{def:zero-sided}
		   Let $O$ be an oracle and let $\rho\colon\mathbb{N} \to \mathbb{N}$. We say that $\mathsf{Gap}_{\rho}\textnormal{-}\mathsf{MINpKT}^{O}$ is easy on average with \emph{zero error} if for every polynomial-time samplable distribution family $\{\mathcal{D}_n\}_n$ supported over $\{0,1\}^n$, there exist a PPT algorithm $A$ such that for all $n, s, t, a, b, k\in\mathbb{N}$,
\[
\Prob_{x\sim\mathcal{D}_n}\!\left[
\Prob_A\mleft[A(-;1^k) \textnormal{ decides } \mathsf{Gap}_{\rho}\textnormal{-}\mathsf{MINpKT}^{O} \textnormal{ on } (x,1^s,1^t,1^a,1^b)\mright]
>\frac35
\right]
\ge 1-\frac1k.
\]and for every $x\in\supp(\mathcal{D}_n$,
		\[
		\Prob_{A} \mleft[A(-;1^k) \textnormal{ decides } \mathsf{Gap}_{\rho}\textnormal{-}\mathsf{MINpKT}^{O} \textnormal{ on } (x,1^s,1^t,1^a,1^b) \textnormal{\textbf{ OR }} A(x,1^s,1^t,1^a,1^b;1^k)=\bot\mright] \geq \frac{4}{5}.
		\]
    We say that $\mathsf{Gap}\textnormal{-}\mathsf{MINpKT}^{\mathsf{PH}}$ is easy on average with zero error, denoted as ``$\mathsf{Gap}\textnormal{-}\mathsf{MINpKT}^{\mathsf{PH}}\in \mathsf{Avg}\BPP$'', if for every $\mathcal{O}\in \mathsf{PH}$, there exist a polynomial $\rho$ such that $\mathsf{Gap}_{\rho}\textnormal{-}\mathsf{MINpKT}^{\mathcal{O}}$ is easy on average with zero error.%
	\end{definition} 
	
    It is straightforward to extend the above definitions to $\mathsf{Mild}\textnormal{-} \mathsf{Gap}\textnormal{-}\mathsf{MINpKT}^{\mathsf{PH}}$ and $\mathsf{MINpKT}^{\mathsf{PH}}$.

\section{Proof of the Equivalence Theorem}\label{sec:proof_equiv}

In this section, we prove \Cref{t:PH-heuristica-characterization}, which is restated below.

\PHHEUR*

\Cref{t:PH-heuristica-characterization} follows directly from the following two theorems using that $\NP \subseteq \BPP$ implies $\PH \subseteq \BPP$.

\begin{theorem}\label{t:MINpnKT-NP}
    		The following holds.
		\[
		\NP\subseteq\BPP \iff \mathsf{Mild}\textnormal{-} \mathsf{Gap}\textnormal{-}\mathsf{MINpKT}^{\mathsf{PH}} \in \prBPP.
		\]
\end{theorem}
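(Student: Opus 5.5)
We prove the two directions separately. The forward direction ($\NP\subseteq\BPP \Rightarrow \MildGapMINpKT^{\PH}\in\prBPP$) is the easy one. The backward direction follows the witness-compression route sketched in \Cref{ss:intro_techniques}.

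\textbf{Forward direction.} Fix $\mathcal{O}\in\PH$, say $\mathcal{O}\in\Sigma_k^{\P}$. We show that the mild gap problem lies in $\promise\AM^{\mathcal{O}}$, using a set-lower-bound protocol on the acceptance probability.

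On a YES instance, with probability at least $b/a$ over $r\sim\bool^t$ there is a program of length at most $s$ that prints $x$ in time $t$ using $\mathcal{O}$.

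On a NO instance, with probability less than $(b-1)/a$ over $r\sim\bool^{t+\rho}$ there is a program of length at most $s+\log\rho$ printing $x$ in time $t+\rho$. This holds a fortiori for programs of length at most $s$ and time $t$, since padding $r$ only enlarges the probability space.

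The gap between $b/a$ and $(b-1)/a$ is inverse-polynomial, since $a$ is given in unary. We amplify it by taking $m=\poly(a)$ independent copies of $r$ and asking for the count of good copies to be large. The resulting Goldwasser--Sipser protocol is an $\AM$ protocol whose verifier predicate is in $\NP^{\mathcal{O}}$.

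Hence the problem lies in $\promise\AM^{\Sigma_k^\P}\subseteq \promise\Sigma_{k+3}^\P$. Under $\NP\subseteq\BPP$ we have $\PH\subseteq\BPP$, and this lifts to promise versions via $\prBPP$ closure under oracle calls. So the problem is in $\prBPP$.

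\textbf{Backward direction.} Assume $\MildGapMINpKT^{\SAT}\in\prBPP$; we need only the oracle $\SAT\in\PH$. Fix an $\NP$ verifier $V$, a YES instance $x$, its witness set $W_x$, and any $y\in W_x$. The goal is the witness compression bound \Cref{eq:intro_witness_compression}. Given that bound, a $\BPP$ algorithm samples a random program of length $\log|W_x|+O(\log n)$ for each guess of $\log|W_x|$, together with random coins, and checks $V$. This succeeds with probability $1/\poly(n)$, and repetition amplifies it.

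Step 1 (compression with the oracle). Apply \Cref{t:lc-pnK} to $A=W_x$, with $x$ given as input so that membership is checked in polynomial time. This gives $\pnK^{\poly(n)}(y\mid x)\le \log|W_x|+O(\log n)$. Then \Cref{l:pKt-SAT-superscript} converts it into $\pK^{\poly(n),\SAT}(y\mid x)\le\log|W_x|+O(\log n)$.

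Step 2 (removing the oracle). Show $\pK^{\poly(n)}(y\mid x)\le \pK^{\poly(n),\SAT}(y\mid x)+O(\log n)$. We bound $\pK^{\cdot,\SAT}(x,y)$ from both sides.

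The upper bound comes from efficient composition: $\pK^{t+\poly(n),\SAT}(x,y)\le\pK^{t,\SAT}(x)+\pK^{\poly(n),\SAT}(y\mid x)+O(\log t)$.

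For the lower bound, let $A$ be the assumed $\prBPP$ algorithm and set $k\approx\pK^{\poly(t)}(y\mid x)-O(\log t)$. Compare $A$ on $(x,\mathcal U)$ and on $(x,\mathsf{DP}_k(y,\mathcal U))$ at threshold $\pK^{\poly(t),\SAT}(x)+k$.

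1. A random extension of $x$ is a NO instance except with small probability, by \Cref{l:incomp-strings} together with a weak symmetry of information for $\SAT$-relative $\pK$.
2. If $A$ accepted the $\mathsf{DP}$ extensions noticeably more often, then $A$, which needs no oracle, would distinguish. \Cref{lem:recon} would then give an oracle-free description of $y$ given $x$ of length at most $k+O(\log t)$, contradicting the choice of $k$.
3. Hence some $\mathsf{DP}$ output, whose $\SAT$-relative complexity is at most $\pK^{t+\poly(n),\SAT}(x,y)+O(\log t)$, is not a YES instance at the threshold.

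This yields
\[
\pK^{\poly(t)}(y\mid x)+\pK^{\poly(t),\SAT}(x)\le \pK^{t+\poly(n),\SAT}(x,y)+O(\log t).
\]
Subtracting the upper bound leaves a computational depth term $\pK^{t,\SAT}(x)-\pK^{\poly(t),\SAT}(x)$.

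Step 3 (killing the depth term). This is the main obstacle. The gap between $t$ and $\poly(t)$ must be collapsed to an additive $\poly(n)$. We would replace $\pK^{\SAT}$ by the relativized lifted measure $\ell\lpK^{t,\SAT}_{\lambda,\gamma}$ of \citep{DBLP:journals/eccc/HuL025}. Its symmetry of information only loses an additive $\poly(n)$ in time, and the mild gap algorithm lifts to it.

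We then telescope over times $t_i=t_0+i\cdot\poly(n)$ for $i\le \poly(n)$. The lifted complexity of $x$ is at most $n+O(\log n)$ and nonincreasing in time, so some $i$ has a drop of $O(1)$. Running Step 2 at that $t_i$, with the mild gap in place of $\poly(t)$, gives the oracle removal inequality.

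The delicate points are the following. The lifted measure must be compatible with \Cref{lem:recon}. All time bounds must stay polynomial. The probability parameters $a,b$ must be threaded through consistently.
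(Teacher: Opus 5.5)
Your proposal is correct and follows the paper's proof. The forward direction places the problem in $\promise\AM^{\mathcal{O}}$ and collapses it under $\PH\subseteq\BPP$; you pass through $\promise\Pi_{k+2}^{\P}$ rather than the paper's chain $\mathsf{prAM}^{\PH}\subseteq\mathsf{prAM}\subseteq\mathsf{prBPP}^{\NP}\subseteq\prBPP$, which is an inessential difference. The backward direction is exactly the paper's route: language compression for $\pnK$, oracle removal via direct-product reconstruction against the oracle-free mild-gap algorithm, and telescoping with the lifted measure $\ell\lpK^{t,\SAT}$.

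Two small slips to fix. First, your direct-product threshold and the complexity bound on the $\mathsf{DP}$ outputs should both carry the seed length $mk$. Second, the telescoping for \Cref{l:lpnK-cond-pK-bound-cond-pnK} must advance $\ell$ by $\tau^3$ and shift the probability parameters at each step, not only the time bound.
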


\begin{restatable}{theorem}{PHMINpKT}\label{t:GapMINpnKT-PH}
        		The following holds.
		\[
         \Dist\mathsf{PH}\subseteq\Avg\BPP \iff
		\mathsf{Gap}\textnormal{-}\mathsf{MINpKT}^{\mathsf{PH}} \in \prBPP.
		\]
\end{restatable}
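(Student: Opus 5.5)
The statement immediately above is \Cref{t:GapMINpnKT-PH}, so I will prove its two directions separately.

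\textbf{Forward direction ($\Rightarrow$).} Assume $\Dist\PH\subseteq\Avg\BPP$ and fix $\mathcal{O}\in\PH$.

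First, $\MINKT^{\mathcal{O}}$ and its probabilistic variant lie in $\PH$. The decision problem with a given threshold $\pK^{t,\mathcal{O}}_{b/a}(x)\le s$ can be placed in $\PH$ via approximate counting, at the cost of slack in the probability parameter, which the promise formulation $(b/a)$ versus $((b-1)/a)$ absorbs. Hence the problem is errorless average-case easy over every samplable distribution, in particular over uniform $x$ paired with arbitrary parameters.

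Next, I apply the worst-case-to-average-case reduction of \citep{DBLP:conf/focs/Hirahara18,GKLO22}, relativized to $\mathcal{O}$. On input $(x,1^s,1^t,1^a,1^b)$, compute $\mathsf{DP}_k(x;z)$ for random $z$ and $k\approx s+O(\log t)$, and run the errorless average-case algorithm for $\MINpKT^{\mathcal{O}}$ at threshold roughly $nk+k-1$. The two cases go as follows.
\begin{itemize}
\item If $\pK^{t,\mathcal{O}}(x)\le s$, every output of $\mathsf{DP}_k$ has complexity at most $nk+s+O(\log t)<nk+k$. So the errorless algorithm never says ``high''; it either outputs $\bot$ or ``low''.
\item If $x$ is a NO instance, suppose the algorithm outputs $\bot$ or ``low'' with noticeably different probability on $\mathsf{DP}_k(x;\mathcal U)$ than on $\mathcal U_{nk+k}$. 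On uniform strings it mostly says ``high'' by \Cref{l:incomp-strings}. So the algorithm is a distinguisher, and \Cref{lem:recon} gives $\pK^{\poly(t),\parallel D}(x)\le k+O(\log t)$.
\end{itemize}
The distinguisher $D$ is a $\BPP$ algorithm, so it can be simulated within time $\poly(t)$ without any oracle. This gives $\pK^{\poly(t),\mathcal{O}}(x)< s+\log\rho$, contradicting the NO promise. Accepting exactly when the algorithm fails to output ``high'' therefore solves $\mathsf{Gap}_\rho$ for a suitable polynomial $\rho$.

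\textbf{Backward direction ($\Leftarrow$).} Assume $\mathsf{Gap}\textnormal{-}\MINpKT^{\mathcal{O}}\in\prBPP$ for $\mathcal{O}$ complete for $\Sigma^{\P}_{k_0}$. By \Cref{t:IL90}, relativized as needed, it suffices to handle uniform distributions over $\Sigma^{\P}_k$ languages for $k\le k_0$.

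\emph{Step 1 (symmetry of information).} Derive the inequality $\pK^{\poly(t)}(y\mid x)\lesssim\pK^{t,\mathcal{O}}(x,y)-\pK^{\poly(t),\mathcal{O}}(x)$. The argument is the direct-product one from the overview:
\begin{itemize}
\item Use the gap algorithm $A$, which has no oracle, to distinguish $(x,\mathsf{DP}_k(y,\mathcal U))$ from $(x,\mathcal U)$.
\item Random extensions are rejected, by weak symmetry of information plus counting.
\item If $k$ lies below $\pK^{\poly(t)}(y\mid x)$, reconstruction (\Cref{lem:recon}) would otherwise yield a short oracle-free description of $y$ given $x$.
\end{itemize}

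\emph{Step 2 (induction on $k$).} Build an algorithm deciding $\Sigma^{\P}_k$ on input $x$ in time $2^{\pK^{\poly(n),\mathcal{O}}(x)-\K(x)}\poly(n)$.
\begin{itemize}
\item Let $y^*$ be the lexicographically first witness; it is $\P^{\mathcal{O}}$-computable from $x$.
\item Step 1 then bounds $\pK^{\poly}(y^*\mid x)\lesssim\pK^{\poly,\mathcal{O}}(x)-\K(x)$, using \Cref{l:K-pK} to replace the $\pK^{\poly,\mathcal{O}}$ lower term by $\K(x)$.
\item Enumerate all programs of that length with random coins, producing candidate witnesses $y$.
\item Filter the candidates with the gap algorithm, keeping only those with $\pK^{\poly,\mathcal{O}}(x,y)\lesssim\pK^{\poly,\mathcal{O}}(x)$.
\item Verify each surviving candidate's $\Pi_{k-1}$ condition using the inductive algorithm on $(x,y)$. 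Its cost is $2^{\pK(x,y)-\K(x,y)}\le 2^{\pK(x)-\K(x)+O(\log n)}$.
\end{itemize}
The unknown quantities $\pK^{\poly,\mathcal{O}}(x)$ and $\K(x)$ are handled by trying all values. Running time is controlled by halting when the budget $2^{j}$ is exceeded, which makes the algorithm errorless. On average over $x$ the exponent is $O(\log n)$ by \Cref{l:incomp-strings}, giving $\Avg\BPP$.

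\textbf{Main obstacle.} I expect Step 1 to be the main obstacle: the oracle-free symmetry of information, and keeping the polynomial time blow-ups consistent across $k_0$ levels of induction. The induction's time exponents must not compound. I would fix $t$ as a single polynomial chosen depending on $k_0$, and track that each level loses only $O(\log n)$ in the exponent.
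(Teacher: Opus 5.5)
Your forward direction is the standard Hirahara-style reduction and is fine. Your Step~1 is also a legitimate alternative to the paper's route: you run the direct-product argument with the (oracle-free) algorithm for $\GapMINpKT^{\mathcal{O}}$ as distinguisher, whereas the paper combines oracle-free symmetry of information (\Cref{l:soi-pKt}) with the oracle-removal bound of \Cref{l:pK-pnK-3}.

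The genuine gap is in Step~2: as you describe it, the algorithm can never correctly answer ``no''. A witness search may output $\mathsf{False}$ only after exhausting a candidate set guaranteed to contain $y_1^*$ whenever $x\in L$. So it needs a \emph{computable} upper bound on $\pK^{\poly(t)}(y_1^*\mid x)$. You substitute $\K(x)$ for the lower term before the search and then handle it by ``trying all values'' under a time budget. But no lower bound on $\K(x)$ can ever be certified, so no guess is ever known to suffice. For $x\notin L$ no candidate verifies at any length, so the search either runs to the trivial bound $v(n)$, costing exponential time on every NO instance, or hits the cutoff and outputs $\bot$ on every NO instance. The latter violates $\Pr_x[\text{output}\neq\bot]\ge 1-1/\kappa$. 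Guessing $\K(x)\ge n-\log\kappa$ and answering $\mathsf{False}$ after exhaustion does not rescue this. It produces errors rather than $\bot$ on compressible $x$, and it cannot go through the induction, because the level-$(k-1)$ algorithm runs on pairs $(x,y_1)$ that are far from uniform.

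The missing idea is to keep the lower term as $\pK^{p(t),\mathcal{O}}(x)$ and estimate the computational depth with the assumed algorithm. From Step~1 and $\pK^{\poly(n),\mathcal{O}}(y_1^*\mid x)=O(1)$ you get $\pK^{p(t)}(y_1^*\mid x)\le \pK^{t,\mathcal{O}}(x)-\pK^{p(t),\mathcal{O}}(x)+\log p(t)$. Two calls to the gap-based approximator, at time scales $\rho^{-1}(t)$ and $p(t)$, give an $s$ with $\pK^{t,\mathcal{O}}(x)-\pK^{p(t),\mathcal{O}}(x)\le s\le \pK^{\rho^{-1}(t),\mathcal{O}}(x)-\pK^{\rho(p(t)),\mathcal{O}}(x)+O(\log t)$ (\Cref{c:alg-depth}). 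Enumerating programs of length at most $s+\log p(t)$ then provably covers $y_1^*$, so the algorithm is correct on \emph{every} input with high probability.

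$\K(x)$ enters only the running-time analysis. There you use \Cref{l:K-pK} to get $\pK^{\rho(p(t)),\mathcal{O}}(x)\ge \K(x)-O(\log n)$, and choose $t\ge\rho(q(n))$ so that $2^{s}\le 2^{\pK^{q(n),\mathcal{O}}(x)-\K(x)+O(\log n)}$. The cutoff that yields an errorless scheme is applied only at the end, to this worst-case algorithm.

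Your worry about compounding across levels goes away if you state the inductive lemma for every polynomial $q$. The filter compares approximations at time scales $\rho^{-1}(q')$ and $q(n)$ with $q'\gg q$. Then $y_1^*$ passes, every survivor satisfies $\pK^{q',\mathcal{O}}(x,y_1)\le\pK^{q(n),\mathcal{O}}(x)+O(\log n)$, and the level-$(k-1)$ hypothesis is simply invoked with $q'$.
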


In the remainder of this section, we show \Cref{t:MINpnKT-NP} and \Cref{t:GapMINpnKT-PH}.
We prove \Cref{t:MINpnKT-NP} in \Cref{sec:NP-hardness-MINpKT-PH}.
The forward direction of \Cref{t:GapMINpnKT-PH} is established by \Cref{l:PH-to-GapMINpnKT}, which is proved in \Cref{sec:PH-to-GapMINpnKT}. The backward direction follows from \Cref{l:GapMINpnKT-to-PH}, with the proof provided in \Cref{sec:GapMINpnKT-to-PH}.

\subsection{Characterizing Easiness of \texorpdfstring{$\NP$}{} via Easiness of \texorpdfstring{$\mathsf{Mild}\textnormal{-}\mathsf{Gap}\textnormal{-}\mathsf{MINpKT}^{\mathsf{PH}}$}{}}\label{sec:NP-hardness-MINpKT-PH}
In this section, we prove \Cref{t:MINpnKT-NP}. The theorem follows directly from \Cref{t:MINpnKT-NP-easy} and \Cref{t:MINpnKT-NP-hard}, which are proved in the subsections below.

\subsubsection{The Easy Direction}

\begin{lemma}\label{t:MINpnKT-NP-easy}
	If $\NP\subseteq\BPP$, then $\mathsf{Mild}\textnormal{-} \mathsf{Gap}\textnormal{-}\mathsf{MINpKT}^{\mathsf{PH}} \in \prBPP$.
\end{lemma}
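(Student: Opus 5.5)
Fix an oracle $\mathcal{O}\in\PH$, say $\mathcal{O}\in\Sigma_{k}^{\P}$. The plan is to exhibit, for a suitable polynomial $\rho$, a promise problem in the polynomial hierarchy (in fact in promise-$\AM^{\mathcal{O}}$, hence in promise-$\Sigma_{k+2}^{\P}$) that separates the YES and NO instances of $\mathsf{Mild}\textnormal{-}\mathsf{Gap}_{\rho}\textnormal{-}\mathsf{MINpKT}^{\mathcal{O}}$. We then collapse it using the assumption. Indeed, $\NP\subseteq\BPP$ implies $\PH\subseteq\BPP$, and the same argument applies to promise classes: $\NP\subseteq\BPP$ gives $\promise\NP\subseteq\prBPP$ by standard arguments (search-to-decision and amplification), and then inductively $\promise\Sigma_j^{\P}\subseteq\prBPP$. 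Any promise problem in $\promise\PH$ is therefore in $\prBPP$. The whole task thus reduces to placing the mild gap problem in the promise polynomial hierarchy.

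The protocol is a standard Goldwasser--Sipser style approximate-counting argument. Let $G$ be the set of random strings $r\in\bool^{t}$ for which some program $\Pi$ with $|\Pi|\le s$ satisfies that $\Pi^{\mathcal{O}}(\epsilon;\epsilon;r)$ outputs $x$ within $t$ steps. Deciding whether $r\in G$ is an $\NP^{\mathcal{O}}$ predicate: guess $\Pi$, then simulate it, answering its $\mathcal{O}$-queries with the oracle.

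For YES instances, $|G|\ge (b/a)2^t$. For NO instances we use monotonicity. A program that outputs $x$ within time $t$ on randomness $r$ also outputs $x$ within time $t+\rho(\cdot)$ on any extension $r'$ of $r$, because the extra random bits are never read. Hence the fraction of $r'\in\bool^{t+\rho}$ admitting a program of length less than $s+\log\rho$ is below $(b-1)/a$. In particular $|G|<((b-1)/a)2^t$, where $G$ here ranges over programs of length $\le s$.

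So the gap in set size is an additive $1/a$ fraction, and $a$ is given in unary. The Goldwasser--Sipser lower-bound protocol needs a multiplicative gap, so we amplify by taking the product set $G^{m}$ for $m=\poly(a)$. Directly, the $m$-fold product turns the ratio $b/(b-1)$ into a constant-factor gap only when $b$ is small relative to $a$. The more robust fix is to count tuples $(r_1,\dots,r_m)$ in which at least a $(b-1/2)/a$ fraction of the $r_i$ lie in $G$. By Chernoff bounds, this set has size at least $2^{mt}/2$ on YES instances and at most $2^{mt}/4$ on NO instances. Membership in this set is still $\NP^{\mathcal{O}}$: guess the programs for those coordinates. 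Goldwasser--Sipser with pairwise-independent hashing then gives a promise-$\AM^{\NP^{\mathcal{O}}}$ protocol, which lies inside $\promise\PH$.

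The main obstacle is the bookkeeping between the two notions of promise violation. We must make sure that the NO condition of the mild gap problem, which involves time $t+\rho(|x|+a+b)$ and threshold $s+\log\rho$, really implies the counting upper bound at time $t$ and threshold $s$. This requires monotonicity of $\pK$ in the time bound and the description length, and care with the fact that $\pK^{t}$ samples randomness of length exactly $t$. For the reduction itself the mild gap gives only slack, and we never need the gap, so any polynomial $\rho$, say $\rho(n)=n$, works. A secondary point is to check carefully that $\NP\subseteq\BPP$ yields $\promise\AM^{\Sigma_k^{\P}}\subseteq\prBPP$. Here I would cite the collapse $\PH\subseteq\BPP$ and note that the promise versions follow from the same inductive proof.
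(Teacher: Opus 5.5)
Your proposal is correct and follows essentially the same route as the paper. The paper also places the mild-gap problem in promise-$\mathsf{AM}^{\mathcal{O}}$ and then collapses using $\NP\subseteq\BPP\Rightarrow\PH\subseteq\BPP$. Your randomness-extension argument for the NO case is exactly the monotonicity the paper leaves implicit. One difference: your Goldwasser--Sipser hashing layer is unnecessary. Arthur can simply send $r$ and have Merlin reply with a program of length at most $s$, which makes the acceptance probability exactly $|G|/2^t$; the additive $1/a$ gap is then amplified by parallel repetition with a threshold, and this is the paper's protocol.
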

\begin{proof}
It is easy to see that, for any oracle $\mathcal{O}$, we have $\mathsf{Mild}\textnormal{-} \mathsf{Gap}\textnormal{-}\mathsf{MINpKT}^{\mathcal{O}} \in \mathsf{prAM}^{\mathcal{O}}$.\footnote{This is true even for $\mathsf{MINpKT}^{\mathcal{O}}$.} Also, if $\NP \subseteq \BPP$, then $\PH \subseteq \BPP$. It follows that $\mathsf{prAM}^{\PH} \subseteq \mathsf{prAM}^{\BPP} = \mathsf{prAM} \subseteq \mathsf{prBPP}^{\NP} \subseteq  \mathsf{prBPP}^{\BPP} = \mathsf{prBPP}$.
\end{proof}

\subsubsection{The Hard Direction}
\begin{lemma}\label{t:MINpnKT-NP-hard}
    If $\mathsf{Mild}\textnormal{-} \mathsf{Gap}\textnormal{-}\mathsf{MINpKT}^{\mathsf{\SAT}} \in \prBPP$, then $\NP\subseteq\BPP$.
\end{lemma}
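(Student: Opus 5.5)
The plan follows the witness-compression route sketched in \Cref{ss:intro_techniques}. Fix an $\NP$ relation $V$ with witness sets $W_x\subseteq\bool^{m}$, $m=\poly(n)$. The target is the witness compression bound
\[
\pK^{\poly(n)}(y\mid x)\le \log|W_x|+O(\log n)\qquad\text{for every yes-instance } x \text{ and every } y\in W_x.
\]
Given this bound, the $\BPP$ algorithm is as follows. For each guess $\ell\le m$ of $\log|W_x|$, sample a random program of length $\ell+O(\log n)$ and random coins, run it on $x$ for $\poly(n)$ steps, and check the output with $V$. With probability $1/\poly(n)$ this produces a witness. Repeating polynomially many times gives one-sided error, so $\NP\subseteq\BPP$.

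First step (compression with a $\SAT$ oracle). Apply \Cref{t:lc-pnK} to the set $W_x$, conditioned on $x$. Since $W_x$ is recognizable in polynomial time given $x$, a $\pnK$ program can guess $y$ and verify membership. This gives $\pnK^{\poly(n)}(y\mid x)\le\log|W_x|+O(\log n)$. \Cref{l:pKt-SAT-superscript} then converts this to $\pK^{\poly(n),\SAT}(y\mid x)\le\log|W_x|+O(\log n)$.

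Second step (oracle removal). We must show
\[
\pK^{\poly(n)}(y\mid x)\le\pK^{\poly(n),\SAT}(y\mid x)+O(\log n).
\]
Let $A$ be the assumed oracle-free $\prBPP$ algorithm for $\MildGapMINpKT^{\SAT}$.

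\begin{itemize}
\item Upper bound on the joint complexity: by efficient composition, $\pK^{t+\poly(n),\SAT}(x,y)\le\pK^{t,\SAT}(x)+\pK^{\poly(n),\SAT}(y\mid x)+O(\log t)$.
\item Lower bound on the joint complexity: set $k\approx\pK^{\poly(t)}(y\mid x)-O(\log t)$. Compare $A$ on $(x,\mathcal U_{mk+k})$ with $A$ on $(x,\mathsf{DP}_k(y,\mathcal U_{mk}))$, at threshold roughly $\pK^{\poly(t),\SAT}(x)+k$.
\begin{itemize}
\item By counting (\Cref{l:incomp-strings}) and a weak symmetry-of-information argument, random extensions of $x$ have high $\SAT$-relative complexity, so $A$ rejects them.
\item If $A$ accepted $\mathsf{DP}$ outputs noticeably more often, then $D(\cdot)=A(x,\cdot)$ would be a distinguisher. \Cref{lem:recon} would then give an oracle-free description of $y$ given $x$ of length about $k+O(\log t)$, contradicting the choice of $k$. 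The reconstruction is oracle-free precisely because $A$ is.
\item So some output of $\mathsf{DP}_k(y,\cdot)$ is rejected. Since $(x,y,z)$ efficiently determines $(x,\mathsf{DP}_k(y,z))$, this yields $\pK^{t+\poly(n),\SAT}(x,y)+O(\log t)\ge\pK^{\poly(t)}(y\mid x)+\pK^{\poly(t),\SAT}(x)$.
\end{itemize}
\item Combining the two bounds leaves the desired inequality up to the computational-depth term $\pK^{t,\SAT}(x)-\pK^{\poly(t),\SAT}(x)$.
\end{itemize}

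Third step (killing the depth term). This is the main obstacle. Over polynomially many time scales $t$, the $\SAT$-relative complexity of $x$ drops by at most $n$ in total. If each step of the chain were only an additive $\poly(n)$ increase in time, telescoping would find a scale where the drop is $O(1)$. The difficulty is that the lower bound above loses a multiplicative $\poly(t)$ in time, because of the reconstruction and the symmetry of information.

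I would first try to replace $\pK^{t,\SAT}(x)$ by a lifted measure $\ell\lpK$, following \citep{DBLP:journals/eccc/HuL025} and relativized to $\SAT$. For this measure, symmetry of information holds with only additive $\poly(n)$ time loss. I would then verify two things:
\begin{itemize}
\item the assumed algorithm $A$ also solves the mild-gap problem for this lifted measure;
\item the argument of the second step goes through with time bounds $t$ and $t+\poly(n)$.
\end{itemize}
Telescoping over $t\in\{t_0, t_0+p(n),\dots,t_0+n\cdot p(n)\}$ then gives a polynomial scale at which the depth term is $O(\log n)$. This yields the oracle-removal inequality and, combined with the first step, the witness compression bound.
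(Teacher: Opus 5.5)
Your proposal is correct and follows essentially the same route as the paper. The ingredients match: language compression for $\pnK$ transferred to $\pK^{\SAT}$, oracle removal by bounding the joint $\SAT$-relative complexity from above and below with the oracle-free algorithm as the $\mathsf{DP}_k$ distinguisher, and removal of the depth term via the relativized lifted measure $\ell\lpK$ (additive-time weak symmetry of information) plus telescoping. When writing it out, fix the following details:
\begin{itemize}
\item The rejection threshold in the lower bound must include the seed length. It should be about $\pK^{\poly(t),\SAT}(x)+mk+k-O(\log t)$, not $\pK^{\poly(t),\SAT}(x)+k$; otherwise subtracting the $mk$ bits spent on $z$ leaves nothing.
\item In the lifted version, the telescoping must advance $\ell$ (by $\tau^3$ per step) and the probability thresholds together with $t$, not just $t$.
\item Because success probabilities are tracked at granularity $1/n^3$, you need the compression bound for $\pK^{\tau,\SAT}_{1-1/n^3}(y\mid x)$ rather than for the default $\lambda=2/3$.
\end{itemize}
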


The proof of \Cref{t:MINpnKT-NP-hard} follows the approach in \cite{DBLP:journals/eccc/HuL025}, which shows that if $\mathsf{MINnKT} \in \BPP$, then $\mathsf{NP} \subseteq \BPP$. Here, we sketch the proof and highlight the key steps and modifications.

We begin by describing the approach at a high level. Let $L \in \NP$, and let $x \in \bool^n$ be an input instance. Our goal is to find a witness for $x$. To this end, suppose $x \in L$, and let $A_x$ denote the set of $L$-witnesses for $x$. We aim to establish the following: for every $y \in A_x$,
\begin{equation}\label{eq:pK-sampling-witness}
\pK^{\poly(n)}(y \mid x) \leq \log |A_x| + O(\log n).
\end{equation}
The above is a consequence of a result known as \emph{language compression} (for $\pK^t$), which states that every witness $y \in A_x$ can be compressed to approximately $\log |A_x|$ bits.

Note that if $\pK^{\poly(n)}(y \mid x) \leq k$, then by the definition of $\pK^t$, if we randomly sample a string $r \sim \bool^{\poly(n)}$ and a program $\Pi \sim \bool^{\leq k}$, and output $\Pi(x; r)$, we obtain $y$ with probability at least $1/(2^k \cdot \poly(n))$. Then, by \Cref{eq:pK-sampling-witness}, for every $y \in A_x$, the above procedure samples $y$ with probability at least $1/(|A_x| \cdot \poly(n))$. Since this holds for each of the $|A_x|$ elements in $A_x$, it follows that we obtain some string in $A_x$ with probability at least $1/\poly(n)$. By repeating this procedure, we can obtain a witness for $x$ with high probability.

It is not known whether $\pK^t$ admits a language compression property as stated in \Cref{eq:pK-sampling-witness}. However, it was shown in \cite{DBLP:journals/eccc/HuL025} that its nondeterministic variant, $\pnK^t$, indeed admits such a property. Since $\pK^{t,\SAT}$ is a more powerful notion than $\pnK^t$, it follows that $\pK^{t,\SAT}$ also admits this language compression property. In particular, we get that for every witness $y\in A_x$,
\begin{align*}
\pK^{\poly(n),\SAT}(y \mid x) &\leq \pnK^{\poly(n)}(y \mid x) + O(\log n) \tag{by \Cref{l:pKt-SAT-superscript}}\\
&\leq \pnK^{\poly(n), A_x}(y) + O(\log n)\\
&\leq \log |A_x| + O(\log n). \tag{by \Cref{t:lc-pnK}}  
\end{align*}

Thus, it suffices to show that if $\mathsf{Mild}\textnormal{-}\mathsf{Gap}\textnormal{-}\mathsf{MINpKT}^{\mathsf{SAT}} \in \prBPP$, then for every $x, y \in \bool^n$ and every $t$ that is a sufficiently large polynomial in $n$,
\begin{equation}\label{eq:pK-pnK}
\pK^{\poly(t)}(y \mid x) \leq \pK^{\poly(n), \SAT}(y \mid x) + O(\log t).
\end{equation}

We first note that, following previous work, one can establish a weaker form of \Cref{eq:pK-pnK}, which states that
\begin{equation}\label{eq:pK-pnK-cd}
\pK^{\poly(t)}(y \mid x) \leq \pK^{\poly(n), \SAT}(y \mid x) + \pK^{t,\SAT}(x) - \pK^{\poly(t),\SAT}(x) + O(\log t).
\end{equation}
Note that this weaker bound is not sufficient for our purposes, since the term $\pK^{t,\SAT}(x) - \pK^{\poly(t),\SAT}(x)$, known as the (time-bounded) computational depth, can be large for some $x$. Nevertheless, we sketch how \Cref{eq:pK-pnK-cd} can be shown under the assumption that $\mathsf{Mild}\textnormal{-}\mathsf{Gap}\textnormal{-}\mathsf{MINpKT}^{\mathsf{SAT}}$ is easy.

The idea is to show both an upper bound and a lower bound for the quantity
\[
    \pK^{t + \poly(n), \SAT}(x, y).
\]
Roughly speaking, we aim to show that:
\begin{itemize}
    \item \textbf{(Upper bound)} $\pK^{t + \poly(n), \SAT}(x, y) \lesssim \pK^{\poly(n), \SAT}(y \mid x) + \pK^{t,\SAT}(x)$.
    \item \textbf{(Lower bound)} $\pK^{t + \poly(n), \SAT}(x, y) \gtrsim \pK^{\poly(t)}(y \mid x) + \pK^{\poly(t), \SAT}(x)$.
\end{itemize}

It is easy to combine the two inequalities above to obtain \Cref{eq:pK-pnK-cd}.

The upper bound is straightforward. 
The lower bound can be established under the assumption that $\mathsf{Mild}\textnormal{-}\mathsf{Gap}\textnormal{-}\mathsf{MINpKT}^{\mathsf{SAT}}$ is easy. For simplicity, let us assume that we have a deterministic polynomial-time algorithm $A$ that given $(z,1^s,1^t)$ decides whether $\pK^{t,\SAT}(z)\leq s$.

Let $\mathsf{DP}_k \colon \bool^n \times \bool^{nk} \to \bool^{nk+k}$ be the direct product generator defined in \Cref{def:DPG}, where
\[
    k \vcentcolon= \pK^{t^c}(y \mid x) - c \cdot \log t,
\]
for some large constant $c > 1$ to be specified later. We then consider the following two distributions:
\begin{align*}
    \mathcal{D}_1 &\vcentcolon= \mleft(x, \mathsf{DP}_k(y, \mathcal{U}_{nk})\mright), \\
    \mathcal{D}_2 &\vcentcolon= \mleft(x, \mathcal{U}_{nk+k}\mright).
\end{align*}
The goal is to show that $\mathcal{D}_1$ and  $\mathcal{D}_2$ are indistinguishable under the algorithm $A(-, 1^s, 1^{t+n^d})$ for some $s$ and constant $d$ to be specified later. The reason for doing this is that, assuming $\mathsf{Mild}\textnormal{-}\mathsf{Gap}\textnormal{-}\mathsf{MINpKT}^{\mathsf{SAT}}$ is easy, one can show a weak time-bounded symmetry of information \cite{DBLP:conf/stoc/Hirahara21} (see also \cite{DBLP:conf/focs/Ilango23}), which states that, with high probability over $w \sim \bool^{nk+k}$, we have
\[
    \pK^{t+n^d, \SAT}(x, w) \geq \pK^{(t+n^d)^{b}, \SAT}(x) + nk + k - b \log t,
\]
where $b\geq 0$ is a constant.
Therefore, with high probability (over $\mathcal{D}_2$), $A(\mathcal{D}_2, 1^s, 1^{t+n^d}) = 0$
for $s \vcentcolon= \pK^{(t+n^d)^{b}, \SAT}(x) + nk + k - b\log t$.  

Now if $\mathcal{D}_1$ is indeed indistinguishable from $\mathcal{D}_2$, then with high probability (over $\mathcal{D}_1$), we also have $A(\mathcal{D}_1, 1^s, 1^{t+n^d}) = 0$,
which implies that there exists some $z \in \bool^{nk}$ such that
\[
    \pK^{t+n^d, \SAT}\mleft(x, \mathsf{DP}_k(y, z)\mright) \geq s.
\]
Also, since $\mathsf{DP}_k(y, z)$ is efficiently computable given $y$ and $z$, by letting $d$ be a sufficiently large constant, we have
\[
    \pK^{t+n^d, \SAT}\mleft(x, \mathsf{DP}_k(y, z)\mright) \leq \pK^{t+\poly(n), \SAT}(x, y) + nk + O(\log t).
\]
Combining the above two inequalities, we get
\begin{align*}
    \pK^{t+\poly(n), \SAT}(x, y) &\geq s - nk - O(\log t) \\
    &= \pK^{\poly(t), \SAT}(x) + k - O(\log t) \\
    &= \pK^{\poly(t), \SAT}(x) + \pK^{\poly(t)}(y \mid x) - O(\log t),
\end{align*}
which gives the desired lower bound.

Now, let us explain why $\mathcal{D}_1$ and $\mathcal{D}_2$ are indistinguishable. Indeed, if $A(-, 1^s, 1^{t+n^d})$ could distinguish $\mathcal{D}_1$ from $\mathcal{D}_2$, then by the fact that $A$ is polynomial-time computable and by the reconstruction property of $\mathsf{DP}_k$ (\Cref{lem:recon}), we would obtain
\[
    \pK^{\poly(t)}(y \mid x) \leq k + O(\log t),
\]
which contradicts our choice of $k$ be letting $c$ be a sufficiently large constant.

Next, let us return to \Cref{eq:pK-pnK-cd}. It turns out that if we could show a more fine-grained form:
\begin{equation}\label{eq:intro_techniques_ideal}
    \pK^{\poly(t)}(y \mid x) \leq \pK^{\poly(n),\SAT}(y \mid x) + \pK^{t,\SAT}(x) - \pK^{t + p(n), \SAT}(x) + O(\log n),
\end{equation}
where $p(n)$ is a polynomial independent of $t$, then we could aim to conclude the proof by showing that
\[
    \pK^{t, \SAT}(x) - \pK^{t + p(n), \SAT}(x) = O(\log n).
\]
While this may not hold for a particular value of $t$, one can argue, via a telescoping sum trick, that such a bound holds for some $t'$ that is not too large. More concretely, for any fixed time bound $t$, let $t_i = t + i \cdot p(n)$ for $i \in \{0,1, \ldots, n\}$. Note that $\pK^{t_0, \SAT}(x) - \pK^{t_n, \SAT}(x) \leq 2n$,
since any $n$-bit string $x$ can be described using at most $2n$ bits. Therefore,
\[
    \sum_{i=0}^{n-1} \big(\pK^{t_i, \SAT}(x) - \pK^{t_{i+1}, \SAT}(x)\big) = \pK^{t_0, \SAT}(x) - \pK^{t_n, \SAT}(x) \leq 2n,
\]
which implies that there exists an index $i$ such that $\pK^{t_i, \SAT}(x) - \pK^{t_{i+1}, \SAT}(x) \leq 2$.
In particular, for some time bound $t' \leq t + n \cdot p(n)$, we have $\pK^{t', \SAT}(x) - \pK^{t' + p(n), \SAT}(x) = O(\log n)$,
which would be sufficient for our purpose.

Unfortunately, it is not clear how to prove \Cref{eq:intro_techniques_ideal} exactly. As illustrated earlier, a key difficulty lies in the use of weak symmetry of information, which introduces a time overhead of the form $\poly(t)$ rather than an \emph{additive} overhead of the form $t + p(n)$.

Here, we follow the approach developed in \cite{DBLP:journals/eccc/HuL025} which incorporates the ideas described above and suffices to prove \Cref{t:MINpnKT-NP-hard}. A key ingredient is the following new notion of time-bounded Kolmogorov complexity.

\begin{definition}[$\ell\lpK_{\lambda,\gamma}^t$]\label{d:l-pnK}
    For $t, \ell \in \mathbb{N}$, $\lambda, \gamma \in (0,1)$, an oracle $\mathcal{O}$, and a string $x \in \{0,1\}^*$, we define
    \[
        \ell\lpK_{\lambda,\gamma}^{t,\mathcal{O}}(x) \eqdef \min\mleft\{ s \in \mathbb{Z} \;\middle\vert\; \Prob_{\hat r \sim \mathcal{U}_\ell} \mleft[\pK_\lambda^{t,\mathcal{O}}(x \circ \hat r) \le s + \ell \mright] \ge \gamma \mright\}.
    \]
\end{definition}

It can be shown that $\ell\lpK_{\lambda,\gamma}^t$ is a well-defined measure that satisfies several basic properties one would expect from a Kolmogorov complexity measure; see \cite[Section 4.2.1]{DBLP:journals/eccc/HuL025}. Moreover, if the mild-gap version of approximating $\pK^{t}$ is easy, then so is approximating $\ell\lpK_{\lambda,\gamma}^t$.

\begin{definition}[$\mathsf{Mild}\textnormal{-}\mathsf{Gap}\textnormal{-}\MINlpKT$]
    For an oracle $\mathcal{O}$ and $\rho\colon \mathbb{N} \to \mathbb{N}$, we define the promise language $\mathsf{Mild}\textnormal{-}\mathsf{Gap}_{\rho}\textnormal{-}\MINlpKT^{\mathcal{O}} \eqdef (\mathsf{YES}, \mathsf{NO})$ as
    \begin{equation*}
        \left\{
        \begin{aligned}
            \mathsf{YES} & = \mleft\{(x,1^s,1^t,1^{\ell},1^a,1^b,1^c,1^d) \mid \ell\lpK^{t, \mathcal{O}}_{b/a,d/c}(x) \leq s \mright\}, \\
            \mathsf{NO} & = \mleft\{(x,1^s,1^t,1^\ell,1^a,1^b,1^c,1^d) \mid \ell\lpK^{t + \rho(|x|+\ell+a+b+c+d), \mathcal{O}}_{(b-1)/a, (d-1)/c}(x) \geq s + \log \rho(t+\ell+a+b+c+d) \mright\}.
        \end{aligned}
        \right.
    \end{equation*}
    We say that $\mathsf{Mild}\textnormal{-}\mathsf{Gap}\textnormal{-}\MINlpKT^{\mathcal{O}} \in \prBPP$ if there is a constant $g > 0$ such that $\mathsf{Mild}\textnormal{-}\mathsf{Gap}_\rho\textnormal{-}\MINlpKT^{\mathcal{O}} \in \prBPP$ with $\rho(n) = n^g$.
\end{definition}

\begin{lemma}\label{l:MINlpnKT-from-MINpnKT}
    For every oracle $\mathcal{O}$, if $\mathsf{Mild}\textnormal{-}\mathsf{Gap}\textnormal{-}\MINpKT^{\mathcal{O}} \in \prBPP$, then $\mathsf{Mild}\textnormal{-}\mathsf{Gap}\textnormal{-}\MINlpKT^{\mathcal{O}} \in \prBPP$.
\end{lemma}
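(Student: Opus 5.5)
The natural approach is a direct randomized reduction. We decide $\mathsf{Mild}\textnormal{-}\mathsf{Gap}\textnormal{-}\MINlpKT^{\mathcal{O}}$ by sampling the suffix $\hat r$ and calling the assumed algorithm for $\mathsf{Mild}\textnormal{-}\mathsf{Gap}\textnormal{-}\MINpKT^{\mathcal{O}}$ on $x\circ\hat r$. The probability over $\hat r$ is then estimated empirically.

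Concretely, on input $(x,1^s,1^t,1^\ell,1^a,1^b,1^c,1^d)$, let $B$ be the $\prBPP$ algorithm for $\mathsf{Mild}\textnormal{-}\mathsf{Gap}_{\rho'}\textnormal{-}\MINpKT^{\mathcal{O}}$, with $\rho'(n)=n^{c'}$. The algorithm does the following:
\begin{enumerate}
\item Sample $m=\poly(c+\ell+a+b)$ independent strings $\hat r_1,\dots,\hat r_m\sim\mathcal{U}_\ell$.
\item For each $i$, run $B$, amplified to error $1/\poly(m)$, on the instance $(x\circ\hat r_i,1^{s+\ell},1^t,1^a,1^b)$.
\item Compute the empirical fraction of acceptances, and accept iff it exceeds the threshold $\tau=(2d-1)/(2c)$, i.e.\ the midpoint between $(d-1)/c$ and $d/c$.
\end{enumerate}

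\textbf{YES case.} If $\ell\lpK^{t,\mathcal{O}}_{b/a,d/c}(x)\le s$, then at least a $d/c$ fraction of $\hat r$ satisfy $\pK^{t,\mathcal{O}}_{b/a}(x\circ\hat r)\le s+\ell$. These are YES instances for $B$, so $B$ accepts each of them with high probability. Hence the expected acceptance fraction is at least $d/c-o(1/c)$. A Chernoff bound with $m\gg c^2$ puts the empirical fraction above $\tau$ with high probability.

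\textbf{NO case.} Suppose $\ell\lpK^{t+\rho(\cdot),\mathcal{O}}_{(b-1)/a,(d-1)/c}(x)\ge s+\log\rho(\cdot)$. Then the set of $\hat r$ with
\[
\pK^{t+\rho,\mathcal{O}}_{(b-1)/a}(x\circ\hat r)\le s+\ell+\log\rho-1
\]
has density less than $(d-1)/c$. Every other $\hat r$ satisfies $\pK^{t+\rho,\mathcal{O}}_{(b-1)/a}(x\circ\hat r)\ge s+\ell+\log\rho$. For such $\hat r$, the query to $B$ is a NO instance, provided two conditions hold:
\begin{itemize}
\item $t+\rho(|x|+\ell+a+b+c+d)\ge t+\rho'(|x\circ\hat r|+a+b)$;
\item $\log\rho(t+\ell+a+b+c+d)\ge\log\rho'(t+a+b)$.
\end{itemize}
Both follow by choosing $\rho(n)=n^{g}$ with $g\ge c'$, using that $|x\circ\hat r|=|x|+\ell$ and that $\ell,a,b$ are all included among $\rho$'s arguments. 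Monotonicity of $\pK$ in the time bound handles any remaining slack. So $B$ rejects those $\hat r$ with high probability, the expected acceptance fraction is at most $(d-1)/c+o(1/c)$, and a Chernoff bound again finishes.

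\textbf{Main obstacle.} The only delicate point is the parameter bookkeeping: making sure the mild additive time slack and the $\log\rho$ description slack of the $\ell\lpK$ problem dominate those demanded by $B$ once the length grows from $|x|$ to $|x|+\ell$. In particular, the $\log\rho(t+\cdot)$ threshold on the $\MINpKT$ side depends on $t$, so $g$ must be fixed relative to $c'$ before anything else.

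A secondary issue is the inverse-polynomial gap $1/c$ between $d/c$ and $(d-1)/c$, which forces $B$'s error to be amplified below $1/(4c)$. Since $c$ is given in unary, this is affordable.

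Finally, the argument uses $B$ only as a black box with no oracle access to $\mathcal{O}$, so the resulting algorithm is in $\prBPP$ (not merely $\prBPP^{\mathcal{O}}$), as required.
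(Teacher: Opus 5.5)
Your proposal is correct and is essentially the paper's proof. You sample $\hat r\sim\mathcal{U}_\ell$ and run the error-reduced mild-gap $\MINpKT^{\mathcal{O}}$ algorithm on $(x\circ\hat r,1^{s+\ell},1^t,1^a,1^b)$ with the same polynomial $\rho$; monotonicity of $\pK$ in the time bound and of $\log\rho$ in its argument makes the relevant suffixes NO instances, you get an $\Omega(1/c)$ gap between acceptance probabilities near $d/c$ and $(d-1)/c$, and you amplify with $\poly(c)$ repetitions.

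One cosmetic fix: $\log\rho(\cdot)$ need not be an integer, so your claim ``every other $\hat r$ has $\pK\ge s+\ell+\log\rho$'' can miss one integer value. State the NO case instead as ``the $\hat r$ with $\pK^{t+\rho(\cdot),\mathcal{O}}_{(b-1)/a}(x\circ\hat r)<s+\ell+\log\rho(\cdot)$ have density below $(d-1)/c$,'' which follows by applying the definition of $\ell\lpK$ with the integer $\lceil s+\log\rho(\cdot)\rceil-1$.
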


\begin{proof}
    The proof is adapted from that of \cite[Lemma 69]{DBLP:journals/eccc/HuL025}, which shows a similar result for the case of $\pnK^t$ and $\ell\lpnK^t$ without the mild gap. For simplicity, we assume $\mathcal{O} = \emptyset$; it is straightforward to verify that the proof also holds for any oracle $\mathcal{O}$.

    Let $A$ be a probabilistic polynomial-time algorithm that decides $\mathsf{Mild}\textnormal{-}\mathsf{Gap}_{\rho}\textnormal{-}\MINpKT$ with error at most $1/3$, for some polynomial $\rho$. Let $A^m$ denote the algorithm that runs $A$ $m$ times and returns the majority vote. By the Chernoff bound, the error of $A^m$ is bounded by $2^{-m/18}$. 

    Suppose we are given an instance 
    \[
        z \vcentcolon= (x \in \bool^n, 1^s, 1^t, 1^\ell, 1^a, 1^b, 1^c, 1^d)
    \]
    of $\mathsf{Mild}\textnormal{-}\mathsf{Gap}_{\rho}\textnormal{-}\MINlpKT$. 

    On the one hand, if $z$ is a yes-instance, then $\ell\lpK^t_{b/a,d/c}(x) \le s$, and by the definition of $\ell\lpK$, we have
    \[
        \Prob_{\hat r \sim \mathcal{U}_\ell}\mleft[\pK^t_{b/a}(x \circ \hat r) \le s + \ell \mright] \ge \frac{d}{c}.        
    \]
    Then by the correctness of $A$, we get
    \[
        \Prob_{\hat r \sim \mathcal{U}_\ell} \mleft[
            \Prob_A \mleft[
                A^{\lceil 54 \log c \rceil}(x \circ \hat r, 1^{s+\ell}, 1^t, 1^a, 1^b) = 1
            \mright] \geq 1 - \frac{1}{c^3}
        \mright] \ge \frac{d}{c},
    \]
    which, by a union bound, yields
    \begin{equation}\label{eq:lpk-yes-A}
        \Prob_{\hat r \sim \mathcal{U}_\ell, A} \mleft[
            A^{\lceil 54 \log c \rceil}(x \circ \hat r, 1^{s+\ell}, 1^t, 1^a, 1^b) = 1
        \mright] \ge \frac{d}{c} - \frac{1}{c^3}.
    \end{equation}

On the other hand, if $z$ is a no-instance, then 
\[
    \ell\lpK^{t + \rho(n + \ell + a + b + c + d)}_{(b-1)/a, (d-1)/c}(x) \geq s + \log \rho\bigl(t + \ell + a + b + c + d\bigr),
\]
and hence
\[
    \Prob_{\hat r \sim \mathcal{U}_\ell} \mleft[
        \pK^{t + \rho(n + \ell + a + b + c + d)}_{(b-1)/a}(x \circ \hat r) < s + \ell + \log \rho\bigl(t + \ell + a + b + c + d\bigr)
    \mright] < \frac{d}{c} - \frac{1}{c}.
\]
Note that the above implies
\[
    \Prob_{\hat r \sim \mathcal{U}_\ell} \mleft[
        \pK^{t + \rho(n + \ell + a + b)}_{(b-1)/a}(x \circ \hat r) < s + \ell + \log \rho(t + \ell + a + b)
    \mright] < \frac{d}{c} - \frac{1}{c}.
\]
Again, by the correctness of $A$, the above implies
\[
    \Prob_{\hat r \sim \mathcal{U}_\ell} \mleft[
        \Prob_A \mleft[
            A^{\lceil 54 \log c \rceil}(x \circ \hat r, 1^{s+\ell}, 1^t, 1^a, 1^b) = 1
        \mright] > \frac{1}{c^3}
    \mright] < \frac{d}{c} - \frac{1}{c},
\]
which, by an averaging argument, gives
\begin{equation}\label{eq:lpk-no-A}
    \Prob_{\hat r \sim \mathcal{U}_\ell, A} \mleft[
        A^{\lceil 54 \log c \rceil}(x \circ \hat r, 1^{s+\ell}, 1^t, 1^a, 1^b) = 1
    \mright] < \frac{d}{c} - \frac{1}{c} + \frac{1}{c^3}.
\end{equation}

Comparing \Cref{eq:lpk-yes-A} and \Cref{eq:lpk-no-A}, we obtain a polynomial-time algorithm that distinguishes yes- and no-instances with a gap of at least $1/c - 2/{c^3} \geq 1/(2c)$.
Such a gap can be amplified by repeating the algorithm $\poly(c)$ times.
\end{proof}

Next, assume that $\mathsf{Mild}\textnormal{-} \mathsf{Gap}\textnormal{-}\MINlpKT^{\mathsf{SAT}} \in \prBPP$. We will show a result that informally states the following: for every $x, y \in \bool^n$ and for appropriate settings of parameters $\ell$ and $t$, we have
\[
    \pK^{\poly(t)}(y \mid x) \lesssim \pK^{\poly(n), \SAT}(y \mid x) + \ell \lpK^{t, \SAT}(x) - (\ell + \poly(n)) \lpK^{t + \poly(n), \SAT}(x).
\]
Observe the similarity between the above and \Cref{eq:intro_techniques_ideal}. The precise technical lemma is as follows:

\begin{lemma}\label{l:lpnK-cond-pK-bound-cond-pnK}
    Suppose $\mathsf{Mild}\textnormal{-}\mathsf{Gap}\textnormal{-}\MINlpKT^{\SAT} \in \prBPP$. Then there exist a polynomial $p$ and an integer $N_0 \in \mathbb{N}^+$ such that the following holds. For any $n, m, \tau, \ell, t, b, d \in \mathbb{N}^+$ and any strings $x \in \{0,1\}^n$ and $y \in \{0,1\}^m$ satisfying the following constraints\emph{:}
    \begin{itemize}
        \item $2/3 < b/n^3, d/n^3 < 1$,
        \item $N_0 \le n, m\le \tau$,
        \item $\tau \le \ell \le \tau^4$,
        \item $t \ge 2p(\tau)$,
    \end{itemize}
    we have
    \[
        \pK^{p(t)}(y \mid x) \le \pK^{\tau, \SAT}_{1 - 1/n^3}(y \mid x) + \ell \lpK_{b/n^3, (d+1)/n^3}^{t - p(\tau), \SAT}(x) - (\ell + \tau^3) \lpK_{(b-1)/n^3, (d-2)/n^3}^{t + p(\tau), \SAT}(x) + \log p(t).
    \]
\end{lemma}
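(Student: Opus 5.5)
The plan is to sandwich a single quantity between an upper and a lower bound, as in the sketch of \Cref{eq:pK-pnK-cd}, but with $\pK^{t,\SAT}$ replaced throughout by the lifted measure. Set $L \vcentcolon= \ell+\tau^3$ and consider, for $x$ and the candidate $y$, the padded strings $x\circ y\circ \hat r'$ with $\hat r'\sim\mathcal{U}_{L-|\mathsf{DP}\text{-part}|}$. The measure $\ell\lpK$ is designed so that appending $\tau^3 \ge m + nk + k$ (for the relevant $k$) extra bits is absorbed by the lift parameter. The quantity I will bound both ways is, roughly, $L\lpK^{t,\SAT}$ of $x$ with an extension that encodes $\mathsf{DP}_k(y,z)$ for a uniformly random seed $z$, padded to total lift $L$.

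\textbf{Upper bound.} Take a $\SAT$-oracle program for $y$ given $x$ of length $\pK^{\tau,\SAT}_{1-1/n^3}(y\mid x)$. For a fraction at least $(d+1)/n^3$ of $\hat r\in\bool^\ell$, the string $x\circ\hat r$ has $\pK^{t-p(\tau),\SAT}_{b/n^3}$ at most $\ell\lpK^{t-p(\tau),\SAT}(x)+\ell$. By efficient composition, each such $x\circ \hat r$ yields $x\circ\hat r\circ \mathsf{DP}_k(y,z)$ (reordered suitably) within time $t-p(\tau)+\poly(\tau)\le t$. The cost is $\pK^{\tau,\SAT}(y\mid x)+nk+O(\log t)$ extra bits, with $z$ written literally. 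The success probabilities multiply to at least $b/n^3-1/n^3$, and the $\hat r$-fraction $(d+1)/n^3$ degrades to $(d-?)/n^3$ after averaging over $z$. This is why the thresholds in the statement drop from $b$ to $b-1$ and from $d+1$ to $d-2$.

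\textbf{Lower bound.} Fix $k\vcentcolon=\pK^{p(t)}(y\mid x)-c\log t$ and apply the algorithm $A$ for $\mathsf{Mild}\textnormal{-}\mathsf{Gap}\textnormal{-}\MINlpKT^{\SAT}$, which exists by \Cref{l:MINlpnKT-from-MINpnKT}. Run it with time bound $t$ and lift $L$, on inputs $x\circ w$ where either $w\sim\mathcal{U}_{nk+k}$ or $w\sim\mathsf{DP}_k(y,\mathcal{U}_{nk})$. I also need a weak symmetry of information for $\ell\lpK$ with only additive time loss, which is the point of the lifted measure: for random $w$, the quantity $\ell'\lpK^{t,\SAT}(x\circ w)$ is at least $(\ell'+|w|)\lpK^{t+p(\tau),\SAT}(x)+|w|-O(\log t)$. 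The reason is that a random $w$ together with the random lift $\hat r$ is just a longer random lift of $x$, so moving $w$ into the lift costs no time. The counting in \Cref{l:incomp-strings} handles the fraction of bad $w$.

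Given this, $A$ rejects the uniform case at threshold $s=(\ell+\tau^3)\lpK^{t+p(\tau),\SAT}(x)+nk+k-O(\log t)$. If $A$ accepted noticeably more often on the $\mathsf{DP}$ case, then $D(w)=A(x\circ w,\dots)$ would distinguish, with $A$ having no oracle and $x$ hardwired as conditioning. \Cref{lem:recon} would then give $\pK^{p(t)}(y\mid x)\le k+O(\log t)$, contradicting the choice of $k$. Hence some $z$ has high lifted complexity, and comparing with the upper bound and cancelling $nk$ gives the stated inequality. The $\log p(t)$ term collects all losses.

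The main obstacle is the additive-time symmetry of information for the lifted measure together with the probability bookkeeping. The thresholds $\lambda,\gamma$ shift by $1/n^3$ steps, and these shifts must line up exactly with $(b,d+1)$ versus $(b-1,d-2)$. Since $A$ only solves the promise problem, I also have to choose $s$ in the gap and use the $\log\rho$ slack. I would first try to prove that symmetry step by directly unfolding \Cref{d:l-pnK} and applying a Markov/averaging argument over $(\hat r,w)$.
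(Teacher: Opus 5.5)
Your architecture matches the paper's: the $\mathsf{DP}_k$ hybrid between $(x,\mathcal{U}_{mk+k})$ and $(x,\mathsf{DP}_k(y,\mathcal{U}_{mk}))$, a same-time weak symmetry of information for $\ell\lpK$ proved by averaging, reconstruction through the oracle-free $A$, and a composition upper bound. Bounding $x\circ\mathsf{DP}_k(y,z)$ directly instead of going through $\ell\lpK^{t,\SAT}_{b/n^3,d/n^3}(x,y)$ is harmless. The gap is in the threshold bookkeeping, which is the whole content of the precise statement, and your upper bound gets it wrong. You give the $\SAT$-oracle program for $y\mid x$ fresh randomness, so for each good $\hat r$ the composed program succeeds with probability only $(b/n^3)(1-1/n^3)<b/n^3$. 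That is an upper bound at $\lambda=(b-1)/n^3$ only. The YES threshold given to $A$ can be no stricter than your upper bound, and the NO threshold is one step lower, so the subtracted $x$-term ends at $(b-2)/n^3$. That is a weaker inequality than the lemma. In the statement, the single drop $b\to b-1$ comes entirely from the promise gap of $\mathsf{Mild}\textnormal{-}\mathsf{Gap}\textnormal{-}\MINlpKT$.

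The missing idea is to take the random string of the $\tau$-time program for $y\mid x$ from the lift itself. The composed program has already printed $x\circ\hat r$, and $|\hat r|=\ell\ge\tau$; this is why $\tau\le\ell$ is a hypothesis. Then $\lambda=b/n^3$ is preserved, and the $1/n^3$ failure is charged to $\gamma$ instead, taking $(d+1)/n^3$ to $d/n^3$. There is no ``averaging over $z$''. The seed $z$ is fixed by the lower bound, and appending $\mathsf{DP}_k(y,z)$ is deterministic post-processing that preserves both thresholds. The remaining $\gamma$-steps are $d\to d-1$ from the promise gap and $d-1\to d-2$ from the weak SoI with $\alpha=1/n^3$. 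Averaging gives that SoI only for an $\alpha$-fraction of $w$, not for most $w$ as a counting argument would suggest. This suffices, because reconstruction needs only advantage about $1/n^4$.

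Separately, your lift parameters are inconsistent. The upper bound uses lift $\ell$ on $x\circ\mathsf{DP}_k(y,z)$, but you run $A$ ``with lift $L=\ell+\tau^3$''. If $x\circ w$ carries lift $L$, the SoI yields $(\ell+\tau^3+mk+k)\lpK(x)$. Monotonicity in the lift, $\ell'\lpK^{t'+q}(x)\le\ell\lpK^{t'}(x)+\log q$ for $\ell<\ell'$, points the wrong way to replace this by $(\ell+\tau^3)\lpK(x)$. There are two fixes:
\begin{itemize}
\item Run $A$ with lift $\ell$ to obtain $(\ell+mk+k)\lpK(x)$, then apply monotonicity using $mk+k\le\tau^3$. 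This is the paper's route.
\item Use lift $\ell+\tau^3-(mk+k)$ on $x\circ w$ throughout, and pad the upper bound to match.
\end{itemize}
Also, the seed length is $mk$, not $nk$, since $y\in\{0,1\}^m$.
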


\begin{proof}[Proof Sketch]
The proof closely follows that of \cite[Lemma 74]{DBLP:journals/eccc/HuL025}, which establishes a similar result for the measure $\ell\lpnK$. The main difference is that the former assumes the easiness of the non-gap version, whereas here we assume the easiness of the mild-gap version. We provide a proof sketch below.

Let $x \in \bool^n $ and $y \in \bool^m$. Similar to the earlier discussion, the idea is to show both upper and lower bounds for the quantity
\[
    \ell\lpK_{b/n^3,\, d/n^3}^{t, \SAT}(x, y).
\]

\paragraph{Upper Bound.}  
\begin{claim}\label{c:lb-lK}
There exists a polynomial $ p_1 $ such that
\[
    \ell\lpK_{b/n^3,\, d/n^3}^{t, \SAT}(x, y) \leq \pK_{1 - 1/n^3}^{\tau, \SAT}(y \mid x) + \ell\lpK_{b/n^3,\, (d+1)/n^3}^{t - p_1(\tau),\SAT}(x) + \log p_1(\tau).
\]
\end{claim}

\begin{proof}[Proof Sketch of \Cref{c:lb-lK}]\renewcommand{\qedsymbol}{$\diamond$}
Intuitively, the claim states that if there is a program that outputs $x $ in time $ t - \poly(\tau)$, and a program that outputs $y $ given $x $ in time $ \tau$, then one can obtain a program that outputs $(x,y)$ in time $ t$. This intuition was shown to hold for $\ell\lpK$ in \cite{DBLP:journals/eccc/HuL025}. See the proof of \cite[Lemma 72]{DBLP:journals/eccc/HuL025} for details.
\end{proof}

\paragraph{Lower Bound.}  
\begin{claim}\label{c:ub-lK} 
There exists a polynomial $p_2$ such that
\[
    \ell\lpK_{b/n^3,\, d/n^3}^{t, \SAT}(x, y) \geq (\ell +\tau^3)\lpK_{(b-1)/n^3,\, (d-2)/n^3}^{t + p_2(\tau), \SAT}(x) + \pK^{p_2(t)}(y \mid x) - \log p_2(t).
\]    
\end{claim}
\begin{proof}[Proof of \Cref{c:ub-lK}]\renewcommand{\qedsymbol}{$\diamond$}
Let $\mathsf{DP}_k \colon \bool^m \times \bool^{mk} \to \bool^{mk + k}$ be the direct product generator (\Cref{def:DPG}), where
\[
    k \vcentcolon= \pK^{q(t)}(y \mid x) - \log q(t),
\]  
and $q$ is a sufficiently large polynomial specified later.\footnote{We may assume without loss of generality that $k>0$. Indeed, if $\pK^{q(t)}(y \mid x) \leq O(\log t)$, then the claim follows trivially from standard monotonicity properties of $\ell\lpK^t$ (see \cite[Lemmas 66 and 67]{DBLP:journals/eccc/HuL025}).}

As in the previous case, we consider two distributions:
\begin{align*}
    \mathcal{D}_1 &\vcentcolon= \bigl(x, \mathsf{DP}_k(y, \mathcal{U}_{mk})\bigr), \\
    \mathcal{D}_2 &\vcentcolon= \bigl(x, \mathcal{U}_{mk + k}\bigr).
\end{align*}
We aim to show that $\mathcal{D}_1$ and $\mathcal{D}_2$ are indistinguishable, and that with high probability over an element from $\mathcal{D}_2$, its complexity is large. To show the latter, we again use weak symmetry of information. However, the key difference here is that we will be able to use a \emph{highly efficient} weak symmetry of information for the notion of $\ell\lpK$. More specifically, we have the following:
For any $ n, m', k, t', \ell \in \mathbb{N}^+$, any $\lambda, \gamma, \alpha \in (0,1)$ satisfying $\alpha < \gamma$, and any string $x \in \{0,1\}^n$, we have
\begin{equation}\label{eq:weak-soi-for-lpk}
    \Prob_{z \sim \mathcal{U}_{m'}} \left[ \ell\lpK_{\lambda,\gamma}^{t', \SAT}(x \circ z) \geq (\ell + m')\lpK_{\lambda, \gamma - \alpha}^{t', \SAT}(x) + m' \right] \geq \alpha.    
\end{equation}
The proof of this result appears in \cite[Lemma 71]{DBLP:journals/eccc/HuL025}.

Now, suppose there exists a (randomized) polynomial-time algorithm $A$ that solves $\mathsf{Mild}\textnormal{-}\mathsf{Gap}_{\rho}\textnormal{-}\MINlpKT^{\mathsf{SAT}}$ for some polynomial $\rho$. Without loss of generality, by applying standard error reduction, we may assume that $A$ has error probability at most $\delta \vcentcolon= 1/(100n^4)$
on every promise instance. Let
\[
    t' \vcentcolon= t + q(\tau) + \rho(\tau^5).
\]
and
\[
    s \vcentcolon= \mleft( (\ell + mk + k) \lpK_{(b-1)/n^3,(d-2)/n^3}^{t', \SAT}(x) + mk + k \mright) - \log \rho(t^4).
\]
We get
\begin{align*}
    &\quad \Prob_{w \sim \mathcal{U}_{mk+k}} \mleft[
        \ell\lpK_{(b-1)/n^3,(d-1)/n^3}^{t + q(\tau) + \rho(n+mk+k+\ell + b+ d+ 2n^3), \SAT}(x \circ w) \ge 
        s + \log\rho(t+q(\tau)+\ell + b+ d+ 2n^3)
    \mright] \\
    &\geq \Prob \mleft[
        \ell\lpK_{(b-1)/n^3,(d-1)/n^3}^{t + q(\tau) + \rho(\tau^5), \SAT}(x \circ w) \ge 
        s + \log \rho(t^4)
    \mright]\\
        &= \Prob \mleft[
        \ell\lpK_{(b-1)/n^3,(d-1)/n^3}^{t', \SAT}(x \circ w) \ge 
        (\ell + mk + k)\lpK_{(b-1)/n^3,(d-2)/n^3}^{t', \SAT}(x) + mk + k
    \mright]\\
    &\ge \frac{1}{n^3}. \tag{by \Cref{eq:weak-soi-for-lpk} with $\alpha=1/n^3$ and $\lambda=(b-1)/n^3$}
\end{align*}
In fact, for the first inequality, we need the notion $\ell \lpK^t$ to be monotone in $t$, meaning that $\ell \lpK^t$ decreases as $t$ increases (see \cite[Lemma 66]{DBLP:journals/eccc/HuL025}).

Note that the above implies that, with probability at least $1/n^3$ over $(x, w)\sim\mathcal{D}_2$, $A$ (with probability at least $1-\delta$ over its internal randomness) outputs $0$ on $(x\circ w, 1^s, 1^{t + q(\tau)}, 1^\ell, 1^{n^3}, 1^b, 1^{n^3}, 1^d)$. This yields
\begin{equation*}
    \Prob_{(x, w)\sim\mathcal{D}_2, A}\mleft[A(x\circ w, 1^s, 1^{t + q(\tau)}, 1^\ell, 1^{n^3}, 1^b, 1^{n^3}, 1^d)=0\mright]\geq (1-\delta)\cdot \frac{1}{n^3}. 
\end{equation*}

For the sake of contradiction, suppose the procedure $A(-, 1^s, 1^{t + q(\tau)}, 1^\ell, 1^{n^3}, 1^b, 1^{n^3}, 1^d)$
can $(1/n^4)$-distinguish $\mathcal{D}_1$ and $\mathcal{D}_2$, i.e,
\begin{equation}\label{eq:distin}
    \Prob_{z\sim\mathcal{U}_{mk}, A}\mleft[A(x\circ \mathsf{DP}_k(y, z), 1^s, 1^{t + q(\tau)}, 1^\ell, 1^{n^3}, 1^b, 1^{n^3}, 1^d)=0\mright]< (1-\delta)\cdot \frac{1}{n^3} - \frac{1}{n^4}.
\end{equation}
Then, by the fact that $A$ is a polynomial-time algorithm and by the reconstruction property of $\mathsf{DP}_k$ (\Cref{lem:recon}; see also \cite[Corollary 24]{DBLP:journals/eccc/HuL025}), we would obtain
\[
    \pK^{\poly(t)}(y \mid x) \le k + O(\log t),
\]
which contradicts our choice of $k$, provided that $q$ is chosen to be a sufficiently large polynomial.

In other words, the quantity in \Cref{eq:distin} is at least $(1-\delta)\cdot n^{-3} - n^{-4} > \delta$.
It follows that there exists $z \in \bool^{mk}$ such that
\begin{equation}\label{eq:s-bound}
    \ell\lpK^{t + q(\tau), \SAT}_{b/n^3, d/n^3} \mleft( x, \mathsf{DP}_k(y, z) \mright) \ge s.
\end{equation}
Moreover, it can also be shown (see \cite[Lemma 70]{DBLP:journals/eccc/HuL025}) that, for $q$ a sufficiently large polynomial,
\begin{equation}\label{eq:DK-bound}
    \ell\lpK^{t + q(\tau), \SAT}_{b/n^3, d/n^3} \mleft( x, \mathsf{DP}_k(y, z) \mright) \le \ell\lpK_{b/n^3, d/n^3}^{t, \SAT}(x, y) + mk + \log q(\tau).
\end{equation}
Intuitively, this is because $\mathsf{DP}_k(y, z)$ is efficiently computable given $y$ and the seed $z$.

It follows from \Cref{eq:s-bound} and \Cref{eq:DK-bound} that
\begin{align*}
    \ell \lpK_{b/n^3,\, d/n^3}^{t,\SAT}(x, y) &\geq s - mk - \log q(\tau)\\
    &= (\ell + mk + k) \lpK_{(b-1)/n^3,(d-2)/n^3}^{t', \SAT}(x) + k - \log \rho(t^4) - \log q(\tau) \\
    &= (\ell + mk + k) \lpK_{(b-1)/n^3,\, (d-2)/n^3}^{t+q(\tau)+\rho(\tau^5), \SAT}(x) + \pK^{q(t)}(y \mid x) \\
    &\qquad - \log q(t) -  \log \rho(t^4) - \log q(\tau)\\
    &\geq (\ell + \tau^3) \lpK_{(b-1)/n^3,\, (d-2)/n^3}^{t+q(\tau^5)+\rho(\tau^5), \SAT}(x) + \pK^{q(t)}(y \mid x) \\
    &\qquad - \log q(t) -  \log \rho(t^4) - \log q(\tau^5),
\end{align*}
where the last inequality uses the monotonicity of $\ell \lpK^t$ in $\ell$ (see \cite[Lemma 67]{DBLP:journals/eccc/HuL025}), which roughly states that for any sufficiently large polynomial $q$, we have that for every $\ell<\ell'$, $x\in\{0,1\}^*$ and $t'\in\mathbb{N}$,
    \[
        \ell'\lpK^{t'+q(|x|+|y|+\ell'),\SAT}(x)\le\ell\lpK^{t',\SAT}(x)+\log q(|x|+|y|+\ell').
    \]
Finally, we can conclude that there exists a polynomial $p_2$ such that
\[
    \ell \lpK_{b/n^3,\, d/n^3}^{t, \SAT}(x, y) \geq (\ell + \tau^3) \lpK_{(b-1)/n^3,\, (d-2)/n^3}^{t + p_2(\tau), \SAT}(x) + \pK^{p_2(t)}(y \mid x) - \log p_2(t).
\]
This completes the proof of the lower bound.
\end{proof}

\paragraph*{Putting It All Together.}  
By combining the upper and lower bounds (\Cref{c:lb-lK} and \Cref{c:ub-lK}) and rearranging, we obtain
\begin{align*}
    \pK^{p_2(t)}(y \mid x) &\leq \pK_{1 - 1/n^3}^{\tau, \SAT}(y \mid x) + \ell \lpK_{b/n^3,\, (d+1)/n^3}^{t - p_1(\tau), \SAT}(x) - (\ell + \tau^3) \lpK_{(b-1)/n^3,\, (d-2)/n^3}^{t + p_2(\tau), \SAT}(x) \\
    &\qquad + \log p_1(\tau) + \log p_2(t),
\end{align*}
which, by the monotonicity of $\ell \lpK^t$ with respect to $t$, implies the existence of a polynomial $p$ such that
\begin{align*}
    \pK^{p(t)}(y \mid x) \leq \pK_{1 - 1/n^3}^{\tau, \SAT}(y \mid x) + \ell \lpK_{b/n^3,\, (d+1)/n^3}^{t - p(\tau),\SAT}(x) - (\ell + \tau^3) \lpK_{(b-1)/n^3,\, (d-2)/n^3}^{t + p(\tau), \SAT}(x) + \log p(t),
\end{align*}
as desired.
\end{proof}

\begin{proof}[Proof Sketch of \Cref{t:MINpnKT-NP-hard}] 
The proof is essentially the same as that of \cite[Lemma 75]{DBLP:journals/eccc/HuL025}. We sketch the high-level ideas here and refer the reader to \cite{DBLP:journals/eccc/HuL025} for full details.

Suppose $\mathsf{Mild}\textnormal{-}\mathsf{Gap}\textnormal{-}\MINpKT^{\SAT} \in \prBPP$, which, by \Cref{l:MINlpnKT-from-MINpnKT}, implies $\mathsf{Mild}\textnormal{-}\mathsf{Gap}\textnormal{-}\MINlpKT^{\SAT} \in \prBPP$. Given \Cref{l:lpnK-cond-pK-bound-cond-pnK}, we can apply a slightly more sophisticated telescoping sum argument than described earlier and show that for every $x \in \bool^n $ and any $\tau$ that is a large enough polynomial in $ n$, there exist $t, \ell, b, d$ that are at most $\poly(n)$ such that  
\[
    \ell \lpK_{b/n^3, (d+1)/n^3}^{t - p(\tau), \SAT}(x) - (\ell + \tau^3) \lpK_{(b-1)/n^3, (d-2)/n^3}^{t + p(\tau), \SAT}(x) + \log p(t)
\]  
is at most $O(\log n)$. Again, this can be achieved because the increase in the time bounds between the two quantities is only an \emph{additive} $\poly(n)$ term. As a result, for every $x$ and $y$, we obtain
\begin{equation}\label{eq:pK-pnK-2}
    \pK^{\poly(n)}(y \mid x) \leq \pK^{\tau, \SAT}_{1 - 1/n^3}(y \mid x) + O(\log n).
\end{equation}

Also, as mentioned earlier, $\pK^{t,\SAT}$ admits the language compression property (see \cite[Theorem 4]{DBLP:journals/eccc/HuL025}), that is, for every set $A_x$ whose membership can be decided in polynomial time given $x$, there exists polynomial $\tau$ such that
\begin{equation}\label{eq:pnK-LC}
    \pK^{\tau, \SAT}_{1 - 1/n^3}(y \mid x) \leq \log |A_x| + O(\log n).
\end{equation}

Now let $L \in \NP$ and consider any $x \in \bool^n$ that is a yes-instance of $L$. Let $A_x$ be the set of $L$-witnesses for $x$. Then \Cref{eq:pK-pnK-2,eq:pnK-LC} imply that for every $y \in A_x$,
\[
    \pK^{p(n)}(y \mid x) \leq \log |A_x| + O(\log n).
\]
As described earlier, this enables an efficient sampling procedure which, given $x$, outputs some element in $A_x$ with non-trivial probability $ 1/\poly(n)$. Repeating this procedure sufficiently many times then yields a witness for $x $ with high probability.
\end{proof}

\subsection{Worst-Case Easiness of \texorpdfstring{$\mathsf{Gap}\textnormal{-}\mathsf{MINpKT}^{\mathsf{PH}}$}{Gap-MINpKT-PH} from Average-Case Easiness of \texorpdfstring{$\mathsf{PH}$}{PH}}\label{sec:PH-to-GapMINpnKT}

\begin{lemma}\label{l:PH-to-GapMINpnKT}
    If $\Dist\mathsf{PH}\subseteq\Avg\BPP$, then $\mathsf{Gap}\textnormal{-}\mathsf{MINpKT}^{\mathsf{PH}}\in\prBPP$.
\end{lemma}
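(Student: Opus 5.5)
The approach is to follow the standard worst-case-to-average-case reduction for meta-complexity of \cite{DBLP:conf/focs/Hirahara18,GKLO22}, relativized to an arbitrary oracle $\mathcal{O}\in\PH$.

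\textbf{Step 1: average-case algorithm for an oracle problem.} Fix $\mathcal{O}\in\PH$. Consider the language
\[
L^{\mathcal{O}}=\{(z,1^s,1^t) : \K^{t,\mathcal{O}}(z)\le s\},
\]
or a suitable $\pK$-variant. This language lies in $\NP^{\mathcal{O}}\subseteq\PH$. By the assumption $\Dist\PH\subseteq\Avg\BPP$ (together with \Cref{t:IL90} if needed), there is a randomized errorless heuristic $B$ for $L^{\mathcal{O}}$ under the uniform distribution on $z$. The same holds for any fixed $s,t$ encoded in the length parameter.

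\textbf{Step 2: errorless heuristic as a distinguisher.} The key property is that an errorless heuristic never accepts high-complexity strings incorrectly. By \Cref{l:incomp-strings}, most $z$ are NO instances at threshold $s=|z|-O(\log)$, so $B$ must output $0$ on most random $z$. Thus $B$ is a statistical test that rejects nearly all uniform strings. It is also certain to output $1$ or $\bot$ on every string of complexity $\le s$, and in particular never outputs $0$ on such strings.

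\textbf{Step 3: the worst-case algorithm.} Given an instance $(x,1^s,1^t,1^a,1^b)$ of $\GapMINpKT^{\mathcal{O}}$, pick $k\approx s+O(\log)$ and sample $\mathsf{DP}_k(x,\mathcal{U}_{nk})$. Run $B$ on it with the threshold set to $\pK^{t,\mathcal{O}}(x)+nk+O(\log t)$, and accept iff $B$ does \emph{not} output $0$ with noticeable probability.
\begin{itemize}
\item \emph{YES case.} If $\pK^{t,\mathcal{O}}_{b/a}(x)\le s$, then every output $\mathsf{DP}_k(x,z)$ has $\pK^{\poly(t),\mathcal{O}}$-complexity at most $s+nk+O(\log t)$. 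Hence $B$ (being errorless) outputs $1$ or $\bot$, never $0$, beyond its small internal error. One must handle that $B$ decides $\K$ while the measure is $\pK$; the fix is to make the oracle problem itself a $\pK^{\mathcal{O}}$ one, which is still in $\AM^{\mathcal{O}}\subseteq\PH$, or to fix the randomness $r$ as part of the input.
\item \emph{NO case.} If $\pK^{\rho,\mathcal{O}}(x)\ge s+\log\rho$, the $\mathsf{DP}$ outputs must look random to $B$. Otherwise $B$ distinguishes, and \Cref{lem:recon} gives $\pK^{\poly,\parallel B}(x)\le k+O(\log)$. Since $B$ is a $\BPP$ algorithm, it can be simulated with an oracle in $\PH$ (or absorbed into $\mathcal{O}$ by choosing $\mathcal{O}$ complete for a high enough level). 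This yields $\pK^{\poly(t),\mathcal{O}}(x)\le s+O(\log)$, contradicting the NO condition once $\rho$ is chosen large enough. Hence $B$ outputs $0$ on most samples and we reject.
\end{itemize}

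\textbf{Main obstacle: the oracle mismatch in the reconstruction.} The reconstruction yields a program querying $B$, a randomized algorithm, rather than $\mathcal{O}$. I would handle this by allowing the reconstructed program its own randomness, since $\pK$ tolerates this and $B$ runs in polynomial time. The only oracle then needed is the $\mathcal{O}$ used inside $B$'s target problem, if any. Choosing the gap polynomial $\rho$ to dominate $B$'s running time and the $\mathsf{DP}$ reconstruction overhead then closes the argument.
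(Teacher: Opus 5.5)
Your proposal is correct and follows the paper's route: the paper notes that $\MINKT^{\mathcal{O}}\in\PH$, so the hypothesis gives an errorless average-case algorithm, and then applies the worst-to-average reduction of \Cref{worst-to-avg-Kt-PH} (Hirahara's direct-product argument, cited rather than reproved), which is exactly what your Steps 2--3 sketch. Two small fixes: the query threshold should be the computable $s+nk+O(\log t)$ rather than the unknown $\pK^{t,\mathcal{O}}(x)+nk+O(\log t)$, and the reconstructed program needs no oracle at all, since $B$ is an oracle-free randomized algorithm whose coins can be drawn from the $\pK$ random string, so the ``absorb $B$ into $\mathcal{O}$'' alternative is unnecessary (and circular, since $B$ depends on $\mathcal{O}$).
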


To show \Cref{l:PH-to-GapMINpnKT}, we need the following.
\begin{lemma} \label{worst-to-avg-Kt-PH}
    For every oracle $\mathcal{O}$, if $(\mathsf{MINKT^{\mathcal{O}}},\mathsf{PSAMP}) \subseteq \mathsf{AvgBPP}$, then $\mathsf{Gap}\textnormal{-}\mathsf{MINpKT}^{\mathcal{O}}\in\prBPP$.    
\end{lemma}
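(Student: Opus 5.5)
The plan follows the worst-case-to-average-case reduction for time-bounded Kolmogorov complexity of \cite{DBLP:conf/focs/Hirahara18}, in its probabilistic form from \cite{GKLO22}, run relative to $\mathcal{O}$. The distinguisher we build is an ordinary randomized algorithm without access to $\mathcal{O}$. Fix an oracle $\mathcal{O}$ and assume $(\MINKT^{\mathcal{O}},\PSAMP)\subseteq\Avg\BPP$.

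\emph{Step 1.} Obtain an errorless average-case distinguisher. Consider the samplable distribution that outputs $(w,1^{s'},1^{t'})$ with $w$ uniform in $\bool^{m}$, for parameters $s' = m - \log$-slack and $t'$ fixed polynomially. The errorless algorithm $B$ outputs $\bot$ on a $1/k$ fraction of inputs. Since most uniform $w$ have $\K^{t',\mathcal{O}}(w) > s'$ by \Cref{l:incomp-strings}, $B$ outputs ``NO'' on at least a $1/2$ fraction of uniform $w$. Errorlessness gives that $B$ never answers ``NO'' (except with small probability over its coins) on any $w$ with $\K^{t',\mathcal{O}}(w)\le s'$. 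Define $D(w)=1$ iff $B$ answers ``NO''. Then $D$ is a randomized polynomial-time test with no oracle. It accepts a noticeable fraction of uniform strings and rejects every string of low $\K^{t',\mathcal{O}}$ complexity.

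\emph{Step 2.} On an instance $(x,1^s,1^t,1^a,1^b)$ with $|x|=n$, form the direct product $\mathsf{DP}_k(x,\mathcal{U}_{nk})$ (\Cref{def:DPG}), with $k$ roughly $s+O(\log t)$, and accept iff $D$ fails to $(1/m)$-distinguish it from uniform. Estimate the acceptance probabilities by sampling.

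In a YES instance, $\pK^{t,\mathcal{O}}_{b/a}(x)\le s$. Then, for a random choice $r$, every output $\mathsf{DP}_k(x,z)$ has $\K^{\poly(t),\mathcal{O}}(\,\cdot\,;r)\le s+nk+O(\log t)$. This is below the threshold $s'=nk+k-O(\log)$ when $k\ge s+c\log t$. To handle the randomness $r$, I would include $r$ as an auxiliary part of the sampled string, or use a $\pK$-version of the threshold. With probability $\ge b/a$ over $r$, $D$ rejects the direct product, so $D$ accepts it with probability much smaller than on uniform strings, giving a distinguishing gap of about $1/2\cdot(b/a)$.

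In a NO instance, if $D$ distinguished, then \Cref{lem:recon} applies. The reconstruction uses $D$ with non-adaptive queries, and $D$ is itself randomized polynomial time, so we can inline it. This gives $\pK^{\poly(n,t,k)}(x)\le k+\log p_{_\mathsf{DP}}$, hence $\pK^{\rho(t+a+b),\mathcal{O}}_{(b-1)/a}(x) < s+\log\rho(t+a+b)$ for a suitable polynomial $\rho$, which contradicts the NO promise. Inlining is fine because an oracle-free program is in particular an $\mathcal{O}$-program.

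The main obstacle is Step 2's YES case with the probabilistic measure. The average-case hypothesis concerns deterministic $\K^{t,\mathcal{O}}$, while the promise concerns $\pK^{t,\mathcal{O}}_{b/a}$ with an arbitrary threshold $b/a$ rather than $2/3$. I would first try the approach of \cite{GKLO22}. Sample the randomness $r\in\bool^t$ explicitly, and apply the average-case algorithm to the distribution of $(r,w)$ with $\K^{t',\mathcal{O}}(w\mid r)$ encoded via a pairing. Then argue that for a $(b-1)/a$-versus-$b/a$ fraction of $r$ the conditional complexity is low. The $1/a$ separation lets us distinguish by estimating the acceptance probability to within $1/(3a)$ using $\poly(a)$ samples. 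The remaining work is routine parameter tuning: choose $\rho$ large enough to absorb the polynomial overheads in $t$, $a$, $b$ and the $\log$ terms from \Cref{lem:recon}.
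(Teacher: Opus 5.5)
Your architecture is the one the paper relies on: it defers to \cite[Proposition~58]{DBLP:journals/eccc/HuL025}, i.e., Hirahara's direct-product reduction \cite{DBLP:conf/focs/Hirahara18} run relative to $\mathcal{O}$ with an oracle-free distinguisher. However, Step~2 as written fails exactly at the $\pK$-specific point you postpone.

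First a slip. By your own analysis, YES instances are those on which $D$ \emph{does} distinguish, so the rule must be ``accept iff $D$ distinguishes''.

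More seriously, a binary $1/m$-distinguishing test cannot separate the thresholds $b/a$ and $(b-1)/a$. Once $r$ is part of $D$'s input, a $1/m$ advantage averaged over $r$ yields, via averaging and \Cref{lem:recon}, a short description of $x$ only for roughly a $1/m$ fraction of random tapes. That is a bound on $\pK_\lambda$ with $\lambda\approx 1/\poly(m)$. Since $\pK_\lambda$ is nondecreasing in $\lambda$, this does not contradict $\pK^{\rho(t+a+b),\mathcal{O}}_{(b-1)/a}(x)\ge s+\log\rho(t+a+b)$. On the YES side, the claimed advantage of about $\frac12\cdot\frac ba$ is unjustified when $D$ accepts only half of the uniform inputs. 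The errorless algorithm may output $\bot$ on every $r\circ w$ whose $r$ lies in the $b/a$ fraction of good tapes for this particular $x$, while on the other tapes the advantage can be negative. So the net advantage can vanish. And even $\frac{b}{2a}$ would not exceed the NO-side value of about $(b-1)/a$ once $b\ge 2$.

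What is missing is a quantitative, per-tape argument. Apply $B$ to $r\circ w$ with $|r|=t$, $k=s+\Theta(\log(t+a))$, and threshold $s'=t+nk+k-\Theta(\log(t+a))$. The trivial bound $\K^{t',\mathcal{O}}(r\circ\mathsf{DP}_k(x,z))\le t+\K^{t,\mathcal{O}}(x\mid r)+nk+O(\log t)$ together with \Cref{l:incomp-strings} suffices, so no conditional or paired version of $\MINKT^{\mathcal{O}}$ is needed. Set $\eta=1/(100a)$, take the heuristic parameter to be a large multiple of $a$, and amplify. Then $D$ rejects at most an $\eta$ fraction of uniform inputs and accepts each low-complexity string with probability at most $\eta$. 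Let $\delta(r)$ be the advantage of $D(r\circ\cdot)$ in distinguishing $\mathsf{DP}_k(x,\mathcal{U}_{nk})$ from uniform.
\begin{itemize}
\item A YES instance gives $\mathbb{E}_r[\delta(r)]\ge b/a-2\eta$.
\item In a NO instance, apply \Cref{lem:recon} separately to each fixed $r$ with $\delta(r)\ge 1/m$, where $1/m\le\eta$. Use $r$ as the prefix of the new program's random tape. Boost the reconstruction's success to $1-\eta$ by splitting the rest of the tape into $O(\log a)$ blocks and spending $O(\log\log a)$ bits to name a good block. A majority vote would not work, since the advice produced by \Cref{lem:recon} depends on the tape.
\item The NO promise then gives $\Pr_r[\delta(r)\ge 1/m]<(b-1)/a+2\eta$, hence $\mathbb{E}_r[\delta(r)]\le(b-1)/a+3\eta$.
\end{itemize}
Estimate $\mathbb{E}_r[\delta(r)]$ to within $1/(4a)$ using $\poly(a)$ samples, and accept iff the estimate exceeds $(b-\frac12)/a$. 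Finally, choose $\rho$ to absorb the $\poly(t,a,b)$ running time and the $O(\log(t+a+b))$ description overhead.
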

\begin{proof}[Proof Sketch]
The proof is similar to that of {\cite[Proposition 58]{DBLP:journals/eccc/HuL025}}, using techniques from worst-case-to-average-case reductions developed by \cite{DBLP:conf/focs/Hirahara18}.
\end{proof}

\begin{proof}[Proof of \Cref{l:PH-to-GapMINpnKT}]
    Suppose $\Dist\mathsf{PH}\subseteq\Avg\BPP$. Then for every oracle $\mathcal{O}\in\mathsf{PH}$, $(\mathsf{MINKT^{\mathcal{O}}},\mathsf{PSAMP}) \in  \mathsf{AvgBPP}$, since $\mathsf{MINKT}^{\mathcal{O}}\in\mathsf{PH}$. This, by \Cref{worst-to-avg-Kt-PH}, implies that $\mathsf{Gap}\textnormal{-}\mathsf{MINpKT}^{\mathcal{O}}\in\prBPP$. It follows that $\mathsf{Gap}\textnormal{-}\mathsf{MINpKT}^{\mathsf{PH}}\in\prBPP$.
\end{proof}

\subsection{Average-Case Easiness of \texorpdfstring{$\mathsf{PH}$}{PH} from Worst-Case Easiness of \texorpdfstring{$\mathsf{Gap}\textnormal{-}\mathsf{MINpKT}^{\mathsf{PH}}$}{Gap-MINpKT-PH}}\label{sec:GapMINpnKT-to-PH}

In this subsection, we prove the following which shows the forward direction of \Cref{t:PH-heuristica-characterization}. 
 
\begin{lemma}\label{l:GapMINpnKT-to-PH}
        If $\mathsf{Gap}\textnormal{-}\mathsf{MINpKT}^{\mathsf{PH}}\in\prBPP$, then $\Dist\mathsf{PH}\subseteq\Avg\BPP$.
\end{lemma}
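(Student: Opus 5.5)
We follow the inductive outline sketched in \Cref{ss:intro_techniques}. Fix $k_0\ge 1$ and let $\mathcal{O}$ be $\Sigma_{k_0}^{\P}$-complete. By the assumption, $\mathsf{Gap}_\rho\textnormal{-}\MINpKT^{\mathcal{O}}\in\prBPP$ for some polynomial $\rho$. By \Cref{t:IL90} (applied with oracle $\Sigma_{k_0-1}^{\P}$), it suffices to handle the uniform distribution. So the goal is to show that every $L\in\Sigma_k^{\P}$ with $k\le k_0$ has a randomized errorless algorithm that, on input $x\in\bool^n$, runs in time roughly $2^{\pK^{\poly(n),\mathcal{O}}(x)-\K(x)}\cdot\poly(n)$. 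Since by \Cref{l:incomp-strings} we have $\K(x)\ge n-\log k$ except with probability $1/k$, and $\pK^{\poly(n),\mathcal{O}}(x)\le n+O(\log n)$, this gives polynomial time on a $1-1/k$ fraction of inputs. Outputting $\bot$ once a $\poly(n,k)$ clock expires then yields an $\Avg\BPP$ scheme.

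\textbf{Step 1: symmetry of information from the gap algorithm.} Using the gap algorithm $A$ as a distinguisher against $(x,\mathsf{DP}_k(y,\mathcal{U}))$ versus $(x,\mathcal{U})$, we prove the inequality
\[
\pK^{\poly(t)}(y\mid x)\le \pK^{t,\mathcal{O}}(x,y)-\pK^{\poly(t),\mathcal{O}}(x)+O(\log t).
\]
The argument is the same as in \Cref{c:ub-lK}, but for plain $\pK$ with polynomial time slack, where ordinary weak symmetry of information (random extensions are incompressible) suffices. The reconstruction step via \Cref{lem:recon} produces an \emph{unrelativized} description, because $A$ itself is a $\BPP$ algorithm.

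\textbf{Step 2: induction on $k$.} The base case $k=0$ is trivial. For $x$ and a $\Sigma_k$ predicate $\exists y\,V(x,y)$ with $V\in\Pi_{k-1}^{\P}$, let $y^*$ be the lexicographically first witness. It is computable from $x$ in polynomial time with $\mathcal{O}$ queries (by prefix search, since $k\le k_0$). Hence $\pK^{t,\mathcal{O}}(x,y^*)\le\pK^{t',\mathcal{O}}(x)+O(\log n)$, and combining with Step 1 and \Cref{l:K-pK} gives
\[
\pK^{\poly(t)}(y^*\mid x)\le s\eqdef\pK^{t',\mathcal{O}}(x)-\K(x)+O(\log n).
\]
The threshold $s$ is unknown, so we try $s=0,1,2,\dots$ in increasing order. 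For each value we sample about $2^{s}\poly(n)$ random programs of length at most $s$ together with random strings, obtaining candidates $y$.

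\textbf{Step 3: filter candidates.} For each candidate $y$ we run $A$ to test the gap promise ``$\pK^{t,\mathcal{O}}(x,y)\lesssim\pK^{t,\mathcal{O}}(x)$''. We estimate $\pK^{t,\mathcal{O}}(x)$ itself by scanning thresholds with $A$, which determines it up to the gap slack. We keep only candidates passing this test. Then $y^*$ passes, and every passing $y$ satisfies
\[
\pK^{\poly,\mathcal{O}}(x,y)-\K(x,y)\le \pK^{\poly,\mathcal{O}}(x)-\K(x)+O(\log n),
\]
since $\K(x,y)\ge\K(x)-O(\log n)$. On each surviving $y$ we invoke the inductive algorithm for the complement of $V$, which is $\Sigma_{k-1}$, within the same running-time budget, up to polynomial factors.

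\textbf{Soundness and errorlessness.} We accept only if some verified candidate succeeds. We reject only after the budget for the true value of $s$ has been exhausted without success, which guarantees $y^*$ would have been found with high probability. Timing out otherwise yields $\bot$, and an incorrect answer only arises with small error probability.

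\textbf{Main obstacle.} The main difficulty is bookkeeping the time bounds across $k\le k_0$ levels of induction. Each level inflates the time parameter polynomially, so the polynomial $\rho$ and the gap slack $\log\rho$ must be shown to accumulate only to $O(k_0\log n)$ in the exponent. We also need to ensure that the filtering test, which is only approximate because of the gap, does not reject $y^*$; we handle this by setting the threshold with an additive $\log\rho$ margin.
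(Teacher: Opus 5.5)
Your Step~1 is sound. It is the mixed-oracle symmetry of information from the overview. The paper obtains it by combining \Cref{l:soi-pKt} with \Cref{l:pK-pnK-3}, but your direct distinguisher argument also works. Your filtering step matches the paper's $\mathsf{Valid}$ subroutine.

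The genuine gap is the stopping rule for the candidate search. You weaken the Step~1 bound to $s=\pK^{t',\mathcal{O}}(x)-\K(x)+O(\log n)$, search $s=0,1,2,\dots$, and reject ``only after the budget for the true value of $s$ has been exhausted.'' The algorithm cannot know when that happens. The true $s$ involves the uncomputable $\K(x)$, and nothing certifies that the search has gone far enough. So on $x\notin L$ no candidate ever succeeds, and your procedure has no sound point at which to reject. It either runs to the trivial bound $s\approx n$ (time $2^{n}$) or hits the clock and outputs $\bot$.

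This fails in two places:
\begin{itemize}
\item It violates the requirement that $\bot$ occurs on at most a $1/\kappa$ fraction of inputs whenever NO instances have noticeable weight. The procedure cannot reject even for $k=1$.
\item It breaks your induction. At level $k$ you need the level-$(k-1)$ algorithm for the complement of $V$ to \emph{reject} $(x,y^*)$ within the time bound. So the inductive hypothesis must be a two-sided decider whose time bound holds on every input.
\end{itemize}
Stopping at a threshold tied to the heuristic parameter, such as $s\approx\log\kappa+O(\log n)$, does not rescue this. Rejecting a rare YES instance of large depth is an error, not a $\bot$.

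The missing idea is to keep the Step~1 bound in its computable form,
$\pK^{p(t)}(y^*\mid x)\le \pK^{t,\mathcal{O}}(x)-\pK^{p(t),\mathcal{O}}(x)+\log p(t)$ (\Cref{c:fist-witness-small-cd}). You then estimate this computational depth with the gap algorithm, exactly as you already propose to estimate $\pK^{t,\mathcal{O}}(x)$ in Step~3. The paper's $\mathsf{Approx}\text{-}\mathsf{depth}$ (\Cref{c:alg-depth}) returns, with high probability, an integer $s$ with
$\pK^{t,\mathcal{O}}(x)-\pK^{p(t),\mathcal{O}}(x)\le s\le \pK^{\rho^{-1}(t),\mathcal{O}}(x)-\pK^{\rho(p(t)),\mathcal{O}}(x)+\log t+\log\rho(p(t))$.

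The lower bound makes rejection sound. If $x\in L$, enumerating all programs of length at most $s+\log p(t)$ on a random $r$ yields $y^*$ with probability at least $2/3$. So $\ell$ independent rounds without success justify rejecting, with error $3^{-\ell}$.

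The upper bound controls the running time. \Cref{l:K-pK} gives $\pK^{\rho(p(t)),\mathcal{O}}(x)\ge\K(x)-O(\log n)$ at any time bound. Choosing $t\ge\rho(q(n))$ then gives $2^{s}\le 2^{\pK^{q(n),\mathcal{O}}(x)-\K(x)+O(\log n)}$.

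Thus $\K(x)$ should appear only in the analysis, never in the algorithm. With this change your outline becomes the paper's proof of \Cref{l:alg-Sigma-k-cd}, and the rest of your argument goes through.
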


We begin with some technical tools.

\subsubsection{Technical Tools}

The following is the main technical lemma for this subsection.

\begin{lemma}\label{l:alg-Sigma-k-cd}
    Let $k_0\in\mathbb{N}$, and let $\mathcal{O}$ be a $\Sigma^{\P}_{k_0}$-complete language.
    If there exist polynomials $\rho$ and $\tau$ such that $\mathsf{Gap}_\rho$-$\mathsf{MINpKT}^{\mathcal{O}}\in\mathsf{prBPTIME}[\tau(n)]$, then for every $k\leq k_0$, $L\in\Sigma^{\P}_{k}$, and polynomial $q$, there exists a randomized algorithm $A$ such that for every $\ell\in \mathbb{N}$ and $x\in\bool^n$, with probability at least $1-2^{-\ell}$ \emph{(}over the internal randomness of $A$\emph{)} both of the following hold. 
    \begin{enumerate}
        \item $A(x,1^{\ell})$ runs in time \[
        2^{O\mleft(\pK^{q(n), \mathcal{O}}(x) - \K(x)\mright)}\cdot \poly(n,\ell),
        \]
        where the constants hidden in $O(\cdot)$ and $\poly(\cdot)$ depend only on $\mathcal{O}$, $\rho$, $\tau$, $k$, $L$, and $q$.
        \item  $A(x,1^{\ell})=1$ if and only if $x\in L$.
    \end{enumerate}
\end{lemma}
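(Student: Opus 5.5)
We prove the lemma by induction on $k$, following the outline in \Cref{ss:intro_techniques}. The base case $k=0$ is a language in $\P$, decided in polynomial time. For the inductive step, take $L\in\Sigma^{\P}_k$ with $x\in L \iff \exists y\in\bool^{m(n)}\,(x,y)\in L'$ and $L'\in\Pi^{\P}_{k-1}$. Since $\mathcal{O}$ is $\Sigma^{\P}_{k_0}$-complete, the complement of $L'$ (which is in $\Sigma^{\P}_{k-1}$) is also covered by the inductive hypothesis. If $x\in L$, let $y^*$ be the lexicographically smallest witness. Because $k\le k_0$, prefix queries to $\mathcal{O}$ recover $y^*$ from $x$ deterministically in polynomial time. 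Hence $\pK^{q'(n),\mathcal{O}}(x,y^*)\le \pK^{q(n),\mathcal{O}}(x)+O(\log n)$ and $\K(y^*\mid x)=O(1)$.

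\textbf{Step 1 (symmetry of information).} From $\GapMINpKT^{\mathcal{O}}\in\prBPTIME[\tau]$, derive a weak symmetry of information of the form
\[
\pK^{\poly(t)}(y\mid x)\le \pK^{t,\mathcal{O}}(x,y)-\pK^{\poly(t),\mathcal{O}}(x)+O(\log t).
\]
The argument mirrors \Cref{c:ub-lK}, but with a polynomial time gap. Compare $(x,\mathcal U_{mk+k})$ with $(x,\mathsf{DP}_k(y,\mathcal U_{mk}))$. Use the gap algorithm (which has no oracle) as the distinguisher, together with \Cref{lem:recon} and the counting bound \Cref{l:incomp-strings}, conditioned on $x$ and relative to $\mathcal{O}$, for random extensions. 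Apply this with $y=y^*$ and $t=q'(n)$. Using $\pK^{\poly(t),\mathcal{O}}(x)\ge \K(x)-O(\log n)$, which follows from \Cref{l:K-pK} since $\mathcal{O}$ is computable, we get
\[
\pK^{\poly(n)}(y^*\mid x)\le s:=\pK^{q(n),\mathcal{O}}(x)-\K(x)+O(\log n).
\]

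\textbf{Step 2 (the algorithm).} The value $s$ is unknown, so try $s'=0,1,2,\dots$ in order. For each $s'$, enumerate all programs $\Pi$ of length at most $s'$ and sample $O(\ell+s')$ random strings $r$. Each pair gives a candidate $y=\Pi(x;r)$, for about $2^{s'}\poly(n,\ell)$ candidates in total. For each candidate, first apply the \emph{filter}: run the $\GapMINpKT^{\mathcal{O}}$ algorithm, amplified, on $((x,y),1^{\sigma},1^{q'(n)},\dots)$ with $\sigma=\pK^{q(n),\mathcal{O}}(x)+c\log n$.

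The algorithm does not know $\pK^{q(n),\mathcal{O}}(x)$. I would instead guess it level by level: a guess $\sigma$ such that $(x,1^\sigma)$ is accepted by the gap algorithm certifies $\pK^{\rho(\cdot),\mathcal{O}}(x)\lesssim \sigma$ at the larger time bound. I would then use the inductive hypothesis with the polynomial $\rho\circ q'$ in place of $q$, which the quantification "for every polynomial $q$" permits.

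If $y$ passes the filter, then $\pK^{\poly(n),\mathcal{O}}(x,y)\lesssim \pK^{q(n),\mathcal{O}}(x)+O(\log n)$, while $\K(x,y)\ge \K(x)-O(\log n)$. Run the inductive algorithm for $L'$ on $(x,y)$. Its running time is $2^{O(\pK(x,y)-\K(x,y))}\le 2^{O(\pK^{q,\mathcal{O}}(x)-\K(x))}\poly(n)$, where the polynomial for the inductive call is chosen to dominate the filter's time bound. Accept as soon as some candidate verifies.

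Soundness is clear: we accept only when a verified witness exists, and verification is correct with high probability by the inductive hypothesis plus a union bound over candidates, after boosting $\ell$ by $O(s')$. For completeness, once $s'\ge s$ the witness $y^*$ is sampled with high probability and passes the filter, since it is a YES instance. If $x\notin L$, we must still stop. Stop when $s'$ exceeds the bound certified via the filter/guess, i.e.\ $\sigma-(\text{lower bound on }\K(x))$. This requires a computable proxy for $\K(x)$, which is impossible in general. The fix is to let the total runtime be \emph{at most} $2^{O(s)}$ in a nonconstructive sense: run stages up to $s'=n+O(\log n)$ and charge the cost honestly. This gives exponential time for deep $x$, but $2^{O(\pK-\K)}$ is then itself $2^{\Omega(n)}$, so the bound still holds.

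The main obstacle is making the filter threshold and the $\K(x)$ term consistent under the promise gap. The time bounds degrade by a factor of $\rho$ at each level of the induction, so I would fix the polynomials top-down, choosing each inner $q$ large enough to absorb $\rho$ and $\tau$ at the next level. The constant depth $k\le k_0$ keeps this bounded.
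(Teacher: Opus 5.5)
Your Step 1 and your candidate filter match the paper's ingredients (\Cref{c:fist-witness-small-cd} and \Cref{c:alg-valid}). The stopping rule, however, breaks the running-time bound. For $x\notin L$ no candidate ever verifies, so your algorithm runs every stage up to $s'=n+O(\log n)$ and enumerates roughly $2^{n}$ programs, whatever $x$ is. Your justification that ``$2^{O(\pK-\K)}$ is then itself $2^{\Omega(n)}$'' applies only to deep $x$. For a typical no-instance, e.g.\ a random $x\notin L$, one has $\pK^{q(n),\mathcal{O}}(x)-\K(x)=O(\log n)$. Item~1 then demands $\poly(n,\ell)$ time, while you spend $2^{n}$. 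So your algorithm does not satisfy the lemma. Low-depth inputs on both sides of $L$ are exactly what \Cref{l:GapMINpnKT-to-PH} relies on, so this is not a cosmetic loss.

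The misstep is the belief that stopping needs a computable proxy for $\K(x)$. It only needs a quantity sandwiched between what correctness uses and what the runtime tolerates, and a time-bounded complexity at a large time bound serves. The fix is to keep the symmetry-of-information bound in its depth form, $\pK^{p(t)}(y_1^*\mid x)\le \pK^{t,\mathcal{O}}(x)-\pK^{p(t),\mathcal{O}}(x)+\log p(t)$, rather than substituting $\K(x)$ at the outset. Both terms of this depth can be estimated with the $\mathsf{Gap}$ algorithm. The paper's \Cref{c:alg-depth} calls $\mathsf{Approx}$ at time bounds $\rho^{-1}(t)$ and $p(t)$. It obtains a computable $s$ with $\pK^{t,\mathcal{O}}(x)-\pK^{p(t),\mathcal{O}}(x)\le s\le \pK^{\rho^{-1}(t),\mathcal{O}}(x)-\pK^{\rho(p(t)),\mathcal{O}}(x)+O(\log t)$. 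It then uses the single cutoff $s'=s+\log p(t)$ for yes- and no-instances alike, with no search over $s'$.

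The term $\K(x)$ enters only in the analysis. There you take $t\ge\rho(q)$, so the first term is at most $\pK^{q(n),\mathcal{O}}(x)$. You then apply \Cref{l:K-pK} to get $\pK^{\rho(p(t)),\mathcal{O}}(x)\ge\K(x)-O(\log n)$. This gives $2^{s'}\le 2^{\pK^{q(n),\mathcal{O}}(x)-\K(x)+O(\log n)}$ for every $x$. With this cutoff in place of your search, the rest of your argument goes through essentially as in the paper. That includes the filter based on an estimate of $\pK^{q(n),\mathcal{O}}(x)$ and the accounting for the inductive calls.
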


The rest of this subsection is devoted to proving \Cref{l:alg-Sigma-k-cd}. For this, we need the following additional tools.

	\begin{lemma}[Symmetry of Information for $\pK^t$; see, e.g., {\cite[Lemma 36]{DBLP:conf/coco/HiraharaKLO24}}]\label{l:soi-pKt}
	Let $\mathcal{O}$ be any oracle. If     $\mathsf{Gap}\textnormal{-}\mathsf{MINpKT}^{\mathcal{O}}\in\prBPP$, then there exist polynomials $p_{\sf _{SoI}}$ and $p_0$ such that for all sufficiently large $x, y \in \{0,1\}^*$ and every $t \geq p_0(|x|+|y|)$, 
	\[
	\pK^{p_{\sf _{SoI}}(t)}(y \mid x) \leq \pK^{t}(x,y)  - \pK^{p_{\sf _{SoI}}(t)}(x) + \log p_{\sf _{SoI}}(t).
	\]
\end{lemma}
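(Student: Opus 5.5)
The plan is the standard ``distinguisher-plus-reconstruction'' argument for symmetry of information, run with the assumed $\prBPP$ algorithm for $\mathsf{Gap}_\rho\textnormal{-}\MINpKT^{\mathcal{O}}$ as the distinguisher. I read every complexity term in the statement as relative to $\mathcal{O}$. The crucial feature is that the algorithm itself is an ordinary randomized algorithm, so the reconstruction step costs nothing extra.

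Fix $x\in\bool^n$, $y\in\bool^m$ and $t$, and let $B$ be the $\prBPP$ algorithm (with $\rho(n)=n^c$), error-reduced so that its error is at most $1/\poly(t)$. Set
\[
k \vcentcolon= \pK^{q(t)}(y\mid x)-\log q(t)
\]
for a large polynomial $q$; we may assume $k>0$, since otherwise the claim is trivial. Compare the two distributions $\mathcal{D}_1=(x,\mathsf{DP}_k(y,\mathcal{U}_{mk}))$ and $\mathcal{D}_2=(x,\mathcal{U}_{mk+k})$. Run $B$ on them with time parameter $t'=t+\poly(n,m)$ and threshold
\[
s \vcentcolon= \pK^{t}(x,y)+mk+O(\log t).
\]

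\textbf{Step 1 (YES side).} Every sample of $\mathcal{D}_1$ is a YES instance. Indeed, $\mathsf{DP}_k(y,z)$ is computable in polynomial time from $(y,z)$, so a program for $(x,y)$ can be composed with the seed $z$ hard-coded. This gives $\pK^{t'}(x,\mathsf{DP}_k(y,z))\le \pK^t(x,y)+mk+O(\log t)$, and hence $B$ accepts $\mathcal{D}_1$ with probability close to $1$.

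\textbf{Step 2 (no distinguishing).} Suppose $B(\cdot,1^s,1^{t'},\dots)$ accepted $\mathcal{D}_2$ with probability noticeably smaller, say by $1/\poly(t)$. Then, with $x$ hard-wired, $B$ is a polynomial-time randomized distinguisher between $\mathsf{DP}_k(y;\mathcal{U})$ and uniform. \Cref{lem:recon}, in its conditional form, then gives $\pK^{\poly(t)}(y\mid x)\le k+O(\log t)$. Because $B$ is a plain randomized algorithm, we can simulate it directly instead of querying it, averaging over its coins, and this removes the ``$\parallel D$'' oracle. For $q$ large, this contradicts the choice of $k$. Hence $B$ also accepts a uniformly random $w\in\bool^{mk+k}$ with probability at least, say, $1/2$. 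By the promise, for at least a $1/3$ fraction of $w$ the pair $(x,w)$ is not a NO instance, that is,
\[
\pK^{\rho(t')}(x,w)< s+\log\rho(t').
\]

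\textbf{Step 3 (counting/coding, the main obstacle).} Step 2 has to be converted into an upper bound on $\pK^{\poly(t)}(x)$. Consider the sampler that picks a uniformly random program of length below $s+\log\rho(t')$ together with random coins, runs it for $\rho(t')$ steps, and outputs the first $n$ bits. For each of the $\ge 2^{mk+k}/3$ good $w$, this sampler outputs $(x,w)$ with probability at least $2^{-s-O(\log t)}$. Summing over these $w$, it outputs $x$ with probability at least $2^{-(s-mk-k)-O(\log t)}$. The sampler runs in $\poly(t)$ time, so the optimal coding theorem for $\pK^t$ \citep{DBLP:conf/icalp/LuOZ22} yields
\[
\pK^{\poly(t)}(x)\le s-mk-k+O(\log t)=\pK^t(x,y)-k+O(\log t).
\]
Substituting the definition of $k$ and rearranging gives
\[
\pK^{q(t)}(y\mid x)\le \pK^t(x,y)-\pK^{\poly(t)}(x)+O(\log t).
\]
We finish by choosing $p_{\sf _{SoI}}$ to dominate both $q$ and the coding-theorem polynomial, using monotonicity in the time bound. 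The condition $t\ge p_0(|x|+|y|)$ ensures that the additive $\poly(n,m)$ terms and the $\log$ losses are absorbed into $\log p_{\sf _{SoI}}(t)$.

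I expect Step 3 to be the delicate point: the coding theorem must hold in the oracle setting, with the required overheads. A secondary point is keeping the confidence parameter $\lambda=2/3$ consistent between the YES instances in Step 1 and the NO threshold used in Step 2. For this I would pass $a,b$ so that the YES instance uses $\lambda=2/3$ and the NO side uses a slightly smaller $\lambda$, and then absorb the change of $\lambda$ by standard amplification of $\pK$.
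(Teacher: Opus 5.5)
\textbf{The gap is in Step 3: it only yields an $\mathcal{O}$-relative bound on $x$.} First, the reading: in this paper an omitted superscript means no oracle. So every $\pK$ term in the lemma is oracle-free, and this matters. In \Cref{c:fist-witness-small-cd} the lemma is used precisely to obtain an oracle-free description of $y_1^*$ given $x$, which would be trivial relative to $\mathcal{O}$.

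Your Steps 1 and 2 are fine under the correct reading. The YES side holds because $\pK^{t',\mathcal{O}}\le \pK^{t'}+O(1)$. Reconstruction through the oracle-free $B$ gives an oracle-free bound on $y$ given $x$.

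Step 3 is where it breaks. Failing the NO condition only gives $\pK^{\rho(t'),\mathcal{O}}(x,w)<s+\log\rho(t')$. So your sampler must query $\mathcal{O}$, and the coding theorem yields only $\pK^{\poly(t),\mathcal{O}}(x)\le \pK^t(x,y)-k+O(\log t)$. What you have actually proved is
\[
\pK^{q(t)}(y\mid x)\le \pK^{t}(x,y)-\pK^{\poly(t),\mathcal{O}}(x)+O(\log t).
\]
A priori $\pK^{T,\mathcal{O}}(x)$ can be much smaller than $\pK^{T}(x)$. Subtracting it is therefore weaker than subtracting $\pK^{p_{\sf _{SoI}}(t)}(x)$, and the lemma does not follow without an extra step. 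Under your all-relativized reading you get a different inequality, one that is incomparable to the stated lemma and useless for the application.

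\textbf{Two ways to close it.} The missing ingredient is oracle removal for $x$: $\pK^{p_1(T)}(x)\le \pK^{T,\mathcal{O}}(x)+\log p_1(T)$. This also follows from the hypothesis; it is exactly \Cref{l:pK-pnK-3}. Run $B$ on $\mathsf{DP}_{k'}(x,\mathcal{U})$ versus uniform, with $k'\approx\pK^{T,\mathcal{O}}(x)+c\log T$. Generator outputs are YES instances, and uniform strings are NO instances by the counting lemma relative to $\mathcal{O}$. So $B$ distinguishes, and reconstruction through the oracle-free $B$ gives the oracle-free bound. Plugging this into your final inequality and enlarging $p_{\sf _{SoI}}$ finishes the proof.

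Alternatively, replace Step 3 by a two-generator hybrid. This is closer to the paper's sketch, which uses only that $B$ accepts low-complexity strings and rejects most random ones. Take $k_1\approx\pK^{\poly(t)}(x)-c\log t$ and $k_2\approx\pK^{\poly(t)}(y\mid x)-c\log t$.
\begin{itemize}
\item The string $(\mathsf{DP}_{k_1}(x,z_1),\mathsf{DP}_{k_2}(y,z_2))$ is a YES instance at threshold $\pK^t(x,y)+nk_1+mk_2+O(\log t)$.
\item If $k_1+k_2>\pK^t(x,y)+C\log t$, a uniform string of the same length is a NO instance with high probability, so $B$ distinguishes.
\item A distinguishing advantage on either hybrid, $(\mathsf{DP}_{k_1}(x,\cdot),\mathcal{U})$ versus $(\mathcal{U},\mathcal{U})$ or $(\mathsf{DP}_{k_1}(x,\cdot),\mathsf{DP}_{k_2}(y,\cdot))$ versus $(\mathsf{DP}_{k_1}(x,\cdot),\mathcal{U})$, contradicts the choice of $k_1$ or $k_2$ via oracle-free reconstruction.
\end{itemize}
This gives the fully unrelativized inequality directly, with no coding theorem at all.
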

\begin{proof}[Proof Sketch]
    To get the desired symmetry of information property, it suffices have an efficient algorithm that accepts the set of strings with low $\mathsf{K}^{\poly}$-complexity while rejecting a large fraction of random strings. Note that such an algorithm follows from $\mathsf{Gap}\textnormal{-}\mathsf{MINpKT}^{\mathcal{O}}\in\prBPP$ for any $\mathcal{O}$.
\end{proof}

	\begin{lemma}\label{l:pK-pnK-3}
		Let $\mathcal{O}$ be any oracle. If     $\mathsf{Gap}\textnormal{-}\mathsf{MINpKT}^{\mathcal{O}}\in\prBPP$, then there exists a polynomial $p_1$ such that for all $z\in\bool^*$ and $t\in\mathbb{N}$, 
		\[
			\pK^{p_1(t)}(z)\leq \pK^{t, \mathcal{O}}(z) + \log p_1(t).
		\]
	\end{lemma}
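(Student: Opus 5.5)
We will prove \Cref{l:pK-pnK-3} by the same distinguisher-plus-reconstruction template used in the lower bound of \Cref{c:ub-lK}, with the oracle-free algorithm for $\mathsf{Gap}\textnormal{-}\mathsf{MINpKT}^{\mathcal{O}}$ as the distinguisher. The key point is that this algorithm has no oracle access, so the reconstruction of \Cref{lem:recon} produces an \emph{unrelativized} $\pK$ description.

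Fix $z\in\bool^n$ and $t\ge n$, and let $s_0\vcentcolon=\pK^{t,\mathcal{O}}(z)$. Let $A$ be the promise-$\BPP$ algorithm for $\mathsf{Gap}_\rho\textnormal{-}\mathsf{MINpKT}^{\mathcal{O}}$, amplified so that its error is tiny. Set $k\vcentcolon=s_0+c\log t$ for a large constant $c$ (we may assume $k<n$, since otherwise the bound is trivial). Consider $\mathsf{DP}_k(z,\cdot)\colon\bool^{nk}\to\bool^{nk+k}$ and the two distributions $\mathsf{DP}_k(z,\mathcal{U}_{nk})$ and $\mathcal{U}_{nk+k}$. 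Define the test $D(w)\vcentcolon=A(w,1^{s},1^{t'},1^a,1^b)$ with $t'=\poly(t)$ and $s=nk+s_0+O(\log t)$.

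\textbf{YES side.} Every output $\mathsf{DP}_k(z,r)$ has $\pK^{t',\mathcal{O}}$ complexity at most $s_0+nk+O(\log t)\le s$. The program describes $z$ with oracle $\mathcal{O}$ in time $t$, uses the seed $r$ as advice, and computes the inner products in $\poly(nk)$ time. Hence $D$ accepts these strings with high probability.

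\textbf{NO side.} By \Cref{l:incomp-strings} relative to $\mathcal{O}$, a uniformly random $w$ has $\pK^{\rho(t'),\mathcal{O}}(w)\ge nk+k-2\ge s+\log\rho(\cdot)$ with high probability, by the choice of $c$. So $D$ rejects random strings with high probability.

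Since $D$ accepts one distribution and rejects the other, $D$ $(1/2)$-distinguishes them. \Cref{lem:recon} then gives
\[\pK^{p_{_\mathsf{DP}}(n+k),\parallel D}(z)\le k+\log p_{_\mathsf{DP}}(n+k).\]
Since $D$ is itself a randomized $\poly(t)$-time procedure with no oracle, and the parameters $s,t',a,b$ are describable in $O(\log t)$ bits, we replace oracle calls to $D$ by simulation. This yields $\pK^{\poly(t)}(z)\le s_0+O(\log t)$, which is the claimed bound for a suitable polynomial $p_1$.

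\textbf{Main obstacle.} The hardest step is the simulation, because $D$ is randomized while the reconstruction assumes a deterministic oracle. We will amplify $A$ so that, over its coins, it answers all $\poly(t)$ queries consistently with a fixed function with probability at least $0.99$, and fold those coins into the randomness of the $\pK$ program. Two further points must be checked. First, the success probability $2/3$ required by $\pK$ must survive both the reconstruction's own success probability and this amplification; if needed, we will use the standard success-amplification property of $\pK$. Second, $D$'s behaviour on non-promise strings is arbitrary, which is harmless because only the average acceptance gap matters for distinguishing.
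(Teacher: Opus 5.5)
Your template is the one the paper intends. The paper only says the proof adapts \cite[Lemma 56]{DBLP:journals/eccc/HuL025}, but the introduction and \Cref{c:ub-lK} spell out the same mechanism: an oracle-free algorithm for a $\pK^{\mathcal{O}}$ gap problem serves as a $\mathsf{DP}_k$ distinguisher, so reconstruction yields an unrelativized description. Your YES side, NO side and parameters are fine, with one small correction: \Cref{l:incomp-strings} with $\gamma=2$ and $\lambda=1/3$ only bounds the failure probability by $3/4$, so take $\gamma$ a large constant, which your choice of $c$ absorbs.

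The gap is in the step you flag as the main obstacle: the proposed remedy does not work. Amplification drives $A$'s acceptance probability to $0$ or $1$ only on inputs that satisfy the promise. Outside the promise, $A$ may accept with probability near $1/2$ however much you amplify, so no fixed function agrees with its answers across all $\poly(t)$ queries. The reconstruction behind \Cref{lem:recon} must query such inputs. Its queries are hybrids built from the seed blocks, the first $i$ bits $z\cdot r_j$, and uniform bits, and they interpolate between YES instances ($\mathsf{DP}_k$ outputs) and NO instances (uniform strings). Consecutive hybrids differ in complexity by $O(1)$ bits, up to a small time overhead, while the two thresholds are $\log\rho$ apart. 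Hence essentially all of the distinguishing advantage, i.e.\ exactly what the reconstruction exploits, is carried by hybrid steps involving inputs outside the promise. Your remark that non-promise behaviour is harmless conflates two things: it is harmless for the distinguishing advantage, but not for the consistency you need when replacing the oracle by a simulation.

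The repair keeps your idea of folding $A$'s coins into the random tape, but uses one coin string for all queries. Let the $\pK$ program read a single $r_A$ from its random tape and answer every query $w$ with $D_{r_A}(w)\vcentcolon=A(w,1^{s},1^{t'},1^a,1^b;r_A)$. This is a genuine function, computable in $\poly(t)$ time given $r_A$.

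Your advantage bound is $1-\eta$ with $\eta=O(\delta+2^{-\gamma})$. It is the average over $r_A$ of the advantage of $D_{r_A}$, and that advantage is at most $1$ for each $r_A$. By Markov, $D_{r_A}$ is a $(1/2)$-distinguisher for all but a $2\eta$ fraction of $r_A$. Apply \Cref{lem:recon} to each such $D_{r_A}$; the resulting short program may depend on $r_A$, which the definition of $\pK$ permits.

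This gives success probability $(1-2\eta)\cdot 2/3$, just below $2/3$. The amplification you allude to closes this cheaply: split the random tape into $O(1)$ independent blocks and spend $O(1)$ extra bits naming a good block. Alternatively, use a reconstruction statement for randomized distinguishers that draws fresh coins per query. The hybrid and Goldreich--Levin analyses only use expectations and pairwise independence, so fresh coins are fine there; in the analogous step of \Cref{c:ub-lK} the paper simply invokes reconstruction for the randomized algorithm, citing \cite[Corollary 24]{DBLP:journals/eccc/HuL025}. With this change your argument gives $\pK^{\poly(t)}(z)\le \pK^{t,\mathcal{O}}(z)+O(\log t)$, as required.
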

	\begin{proof}[Proof Sketch]
		The proof can be easily adapted from that of {\cite[Lemma 56]{DBLP:journals/eccc/HuL025}}.
	\end{proof}

    We are now ready to prove \Cref{l:alg-Sigma-k-cd}.

\begin{proof}[Proof of \Cref{l:alg-Sigma-k-cd}]
    Let $k_0 \in \mathbb{N}$ and let $\mathcal{O}$ be a $\Sigma^{\P}_{k_0}$-complete language. Suppose $\mathsf{Gap}_\rho$-$\mathsf{MINpKT}^{\mathcal{O}}\in\mathsf{prBPTIME}[\tau(n)]$ for some polynomials $\rho$ and $\tau$. The proof is by induction on $k$.

The base case for $k = 0$ is trivial.

For the induction step, suppose the claimed conclusion holds for $k - 1$, where $1 \leq k \le k_0$. Let $L\in \Sigma^{\P}_{k}$ and let $V$ be a corresponding verifier. That is, for $x \in \{0, 1\}^n$, $x \in L$ if and only if
\[
\exists y_1 \in \bool^{v(n)}, \forall y_2 \in \bool^{v(n)}, \dots, Q^{k} y_k \in \bool^{v(n)}, V(x, y_1, y_2, \dots, y_k) = 1,
\]
where $v$ is a polynomial and $Q^{k}\in\{\exists,\forall\}$. Let $q$ be a polynomial.

To decide $L$, we aim to find a witness for a given instance, if one exists. Consider a yes-instance $x \in \{0,1\}^n$ of $L$. Let $y_1^*$ be the lexicographically first witness for $x$, i.e., $y_1^*$ is the lexicographically smallest string in $\bool^{v(n)}$ that satisfies 
\[
\forall y_2 \in \bool^{v(n)}, \dots, Q^{k} y_k \in \bool^{v(n)}, V(x, y_1, y_2, \dots, y_k) = 1,
\]
where $Q^k \in \{\exists, \forall\}$.
Let $t$ be a sufficiently large polynomial specified later. We first observe the following.
\begin{claim}\label{c:fist-witness-small-cd}
    There exists a polynomial $p$ such that 
    \[
        \pK^{p(t(n))}(y^*_1 \mid x) \leq \pK^{t(n),\mathcal{O}}(x) - \pK^{p(t(n)),\mathcal{O}}(x) + \log p(t(n)).
    \]
\end{claim}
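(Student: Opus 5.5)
The plan is to derive the claim from the symmetry of information of \Cref{l:soi-pKt}, applied relative to $\mathcal{O}$, and then strip the oracle from the conditional term.

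First, a remark on that lemma. Its proof sketch only needs an efficient algorithm that accepts strings of low $\pK^{t,\mathcal{O}}$ complexity and rejects most random strings. That algorithm is supplied by $\mathsf{Gap}_\rho$-$\mathsf{MINpKT}^{\mathcal{O}}\in\prBPP$, and the distinguisher itself is oracle-free. So the reconstruction via \Cref{lem:recon} gives an \emph{unrelativized} description of $y$ given $x$, while the joint and marginal terms stay relative to $\mathcal{O}$, as in the lower bound argument of \Cref{c:ub-lK}. I would therefore use the form
\[
\pK^{p_{\sf SoI}(t')}(y_1^*\mid x)\le \pK^{t',\mathcal{O}}(x,y_1^*)-\pK^{p_{\sf SoI}(t'),\mathcal{O}}(x)+\log p_{\sf SoI}(t'),
\]
valid for all $t'\ge p_0(n+v(n))$. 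If only the fully relativized version is available, I would instead apply \Cref{l:pK-pnK-3} to the conditional term to remove $\mathcal{O}$, at a further polynomial loss in time and an additive $\log$ term. That lemma is stated unconditionally, but its proof relativizes to conditional strings. Either route leaves an oracle-free left-hand side.

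Second, I bound the joint term $\pK^{t',\mathcal{O}}(x,y_1^*)$ by $\pK^{t,\mathcal{O}}(x)+O(\log t)$ with $t'=t+\poly(n)$. Since $\mathcal{O}$ is $\Sigma^{\P}_{k_0}$-complete and $k\le k_0$, the predicate ``there is a witness $y_1$ extending a given prefix'' is in $\Sigma^{\P}_k\subseteq\Sigma^{\P}_{k_0}$. Hence $y_1^*$ can be found bit by bit from $x$ in time $\poly(n)$ with $\mathcal{O}$ queries, by a lexicographic prefix search (when $x\notin L$ the claim is vacuous). Composing a randomized $\mathcal{O}$-program that prints $x$ in time $t$ with this fixed search procedure, whose description is constant size plus $O(\log n)$ bits, gives a program printing $(x,y_1^*)$ in time $t+\poly(n)$ with the same success probability. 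Substituting yields
\[
\pK^{p_{\sf SoI}(t')}(y_1^*\mid x)\le \pK^{t,\mathcal{O}}(x)-\pK^{p_{\sf SoI}(t'),\mathcal{O}}(x)+O(\log t).
\]

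Third, I choose $t(n)$ at least $p_0(n+v(n))$ and larger than the search overhead, so that $t'\le 2t$. Taking $p$ dominating $p_{\sf SoI}(2t)$ and any oracle-removal blow-up, monotonicity of $\pK$ in the time bound then gives the stated inequality. Monotonicity applies in the right direction on both sides: the left-hand term only decreases when its time bound grows to $p(t)$, and the subtracted term $\pK^{p(t),\mathcal{O}}(x)$ is at most $\pK^{p_{\sf SoI}(t'),\mathcal{O}}(x)$.

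The main obstacle is the first step: checking that the symmetry-of-information argument really produces an oracle-free conditional term, with the randomized gap algorithm serving as a distinguisher for \Cref{lem:recon}. This requires the weak symmetry-of-information counting step of \Cref{l:incomp-strings}, relative to $\mathcal{O}$, to be compatible with the parameters of $\mathsf{Gap}_\rho$.
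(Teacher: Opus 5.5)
Your main route is correct; it is packaged differently from the paper's. The paper uses \Cref{l:soi-pKt} exactly as stated, with all three terms unrelativized, and brings the oracle in afterwards:
it applies \Cref{l:pK-pnK-3} to the \emph{unconditional} string $(x,y_1^*)$ to bound $\pK^{p_1(t+\poly(n))}(x,y_1^*)$ by $\pK^{t+\poly(n),\mathcal{O}}(x,y_1^*)+\log p_1(\cdot)$;
it splits that joint term using $\pK^{\poly(n),\mathcal{O}}(y_1^*\mid x)\le O(1)$, just as you do;
and it replaces the subtracted $\pK^{T}(x)$ by $\pK^{T,\mathcal{O}}(x)$ using the trivial inequality $\pK^{T,\mathcal{O}}(x)\le\pK^{T}(x)$.

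In effect, the paper derives your mixed inequality (oracle-free conditional term, $\mathcal{O}$-relative joint and marginal terms) from the two stated lemmas. So the step you single out as the main obstacle never has to be reopened. Your direct justification of the mixed form is still sound: it follows the template of \Cref{c:ub-lK}, matches the form stated in the overview, and uses the oracle-free $\prBPP$ algorithm as the distinguisher in \Cref{lem:recon}. One small correction: the $(x,\mathcal{U})$ side needs weak symmetry of information proper, for instance via the coding theorem for $\pK^t$, which relativizes. The counting bound of \Cref{l:incomp-strings} alone does not give it.

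Your fallback, however, is not valid and should be dropped. \Cref{l:pK-pnK-3} holds for unconditional strings, and its proof does not carry over to conditional ones. Running the reconstruction argument on $(x,\mathsf{DP}_k(y,\mathcal{U}))$ versus $(x,\mathcal{U})$ leaves the depth term $\pK^{t,\mathcal{O}}(x)-\pK^{\poly(t),\mathcal{O}}(x)$. The reason is that the YES side is calibrated by $\pK^{t,\mathcal{O}}(x)$, while weak symmetry of information on the random side only controls $\pK^{\poly(t),\mathcal{O}}(x)$. This is exactly \Cref{eq:pK-pnK-cd}, and exactly the right-hand side of the claim.

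A depth-free conditional oracle removal would prove far too much. In a fully relativized symmetry of information, the conditional term $\pK^{\poly,\mathcal{O}}(y_1^*\mid x)$ is already $O(1)$. Removing the oracle would then give $\pK^{\poly(t)}(y_1^*\mid x)\le O(\log t)$ outright. For $k_0=1$ this yields $\NP\subseteq\BPP$ from $\GapMINpKT^{\SAT}\in\prBPP$, and hence, via \Cref{t:GapMINpnKT-PH}, it would exclude $\PH$ Heuristica. Such an argument would relativize, so it would contradict the oracle world of \Cref{thm:intro_oracle_separation}: there, relative to the oracle, the standard-gap problem is easy but $\PH\not\subseteq\BPP$.

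In your main route the oracle is never stripped from a conditional term. The unrelativized conditional comes directly from the oracle-free distinguisher, at the cost of the depth term that the claim permits.
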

\begin{proof}[Proof of \Cref{c:fist-witness-small-cd}]\renewcommand{\qedsymbol}{$\diamond$}
Note that by \Cref{l:soi-pKt} and the assumption that $\mathsf{Gap}$-$\mathsf{MINpKT}^{\mathcal{O}}$ is easy, we get symmetry of information for $\pK^{t}$. Now let $p_1$ be the polynomial from \Cref{l:pK-pnK-3} and let $t\vcentcolon=t(n)$. We have
    \begin{align*}
    	    &\quad\pK^{p_{\sf _{SoI}}(p_1(t+\poly(n)))}(y^*_1 \mid x) \\
            &\leq \pK^{p_1(t+\poly(n))}(x,y^*_1)  - \pK^{p_{\sf _{SoI}}(p_1(t+\poly(n)))}(x) + \log p_{\sf _{SoI}}(p_1(t+\poly(n))) \\
    	    &\leq \pK^{t+\poly(n),\mathcal{O}}(x,y^*_1)  - \pK^{p_{\sf _{SoI}}(p_1(t+\poly(n)))}(x) + \log p_{\sf _{SoI}}^2(p_1(t+\poly(n))) \tag{by \Cref{l:pK-pnK-3}}\\
    	    &\leq \pK^{\poly(n), \mathcal{O}}(y^*_1\mid x) +  \pK^{t,\mathcal{O}}(x) - \pK^{p_{\sf _{SoI}}(p_1(t+\poly(n)))}(x) + \log p_{\sf _{SoI}}^3(p_1(t+\poly(n)))\\
            &\leq \pK^{\poly(n), \mathcal{O}}(y^*_1\mid x) +  \pK^{t,\mathcal{O}}(x) - \pK^{p_{\sf _{SoI}}(p_1(t+\poly(n))),\mathcal{O}}(x) + \log p_{\sf _{SoI}}^3(p_1(t+\poly(n))).
    \end{align*}
	Let $p_2$ be a sufficiently large polynomial, the above yields
	\[
		\pK^{p_2(t)}(y^*_1\mid x)\leq \pK^{\poly(n), \mathcal{O}}(y^*_1\mid x) + \pK^{t,\mathcal{O}}(x) -\pK^{p_2(t),\mathcal{O}}(x) + \log p_2(t),
	\]
Now note that $\pK^{\poly(n),\mathcal{O}}(y^*_1 \mid x)\leq O(1)$ since given $x$ and oracle access to $\mathcal{O}$, one can recover the lexicographically first witness for $x$ in time $\poly(n)$. Therefore, we get
\begin{align*}
    \pK^{p_2(t)}(y^*_1 \mid x) \leq O(1) + \pK^{t,\mathcal{O}}(x) - \pK^{p_2(t),\mathcal{O}}(x) + \log p_2(t).
\end{align*}    
The claims follows by choosing $p$ to be a sufficiently large polynomial.
\end{proof}

We will generate a set of candidate strings that is guaranteed to include a witness for $x$ (specifically, $y_1^*$) if $x \in L$. Given this, it suffices to solve the task of checking whether a given string is a witness for $x$. This is achieved using the language $L' \in \Sigma^{\mathrm{P}}_{k-1}$ defined as follows.

\begin{quote}
    $(x,y_1) \in L'$ if and only if $\exists y_{2}, \dots, Q^{k} y_{k}, V(x,y_1,y_2,\dots, y_{k})=0$, where $Q^k \in \{\exists, \forall\}$.
\end{quote}

Let $q'> q$ be a sufficiently large polynomial specified later. We have the following.
\begin{claim}\label{alg-A-prime}
There exists a randomized algorithm $A'$ such that every $\ell'\in \mathbb{N}$, $x\in\bool^n$ and $y_1\in\bool^{v(n)}$, with probability at least $1-2^{-\ell'}$ \emph{(}over the internal randomness of $A'$\emph{)} both of the following hold.
\begin{enumerate}
    \item $A'\mleft((x,y_1),1^{\ell'}\mright)$ runs in time 
        \[
        2^{O\mleft(\pK^{q'(n+v(n)), \mathcal{O}}(x,y_1) - \K(x,y_1)\mright)}\cdot \poly(\ell',n).
        \]
    \item $A'\mleft((x,y_1),1^{\ell'}\mright)=1$ if and only if  $(x,y_1)\in L'$.
\end{enumerate}
\end{claim}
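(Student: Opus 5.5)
This claim follows from the induction hypothesis applied to $L'$ at level $k-1$; the main work is checking that $L'$ has the right form and that the parameters line up. First, $L'$ is the complement of the $\Pi^{\P}_{k-1}$-predicate ``$y_1$ is a witness for $x$''. Indeed, negating $\forall y_2\,\exists y_3\cdots Q^k y_k\, V=1$ gives $\exists y_2\,\forall y_3\cdots \overline{Q^k} y_k\, V=0$, which is a $\Sigma^{\P}_{k-1}$ predicate of the pair $(x,y_1)$. So $L'\in\Sigma^{\P}_{k-1}$ with verifier $V'((x,y_1),y_2,\dots,y_k)=1-V(x,y_1,\dots,y_k)$, and since $k-1\le k_0$ the inductive hypothesis applies to it. We view inputs to $L'$ as strings $z=\langle x,y_1\rangle$ under a fixed efficient pairing. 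We assume, as the notation $(x,y_1)$ in the paper suggests, that this pairing is the one used in the Kolmogorov measures, so that $\pK^{\cdot,\mathcal{O}}(z)$ and $\K(z)$ are the quantities appearing in the claim. The input length is then $N=n+v(n)+O(\log n)$.

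Next, we invoke the inductive hypothesis (the statement of \Cref{l:alg-Sigma-k-cd} for $k-1$) with language $L'$ and a polynomial $q''$ chosen so that $q''(N)\ge q'(n+v(n))$. Its output is an algorithm $A''$ such that, with probability $1-2^{-\ell'}$, $A''(z,1^{\ell'})$ decides $L'$ correctly in time
\[
2^{O(\pK^{q''(N),\mathcal{O}}(z)-\K(z))}\cdot \poly(N,\ell').
\]
We set $A'=A''$. To reach the stated bound we use monotonicity of $\pK$ in the time bound, choosing $q''$ so that $q''(N)$ is at least $q'(n+v(n))$. Otherwise we adjust the encoding so that $\pK^{q''(N),\mathcal{O}}(z)\le \pK^{q'(n+v(n)),\mathcal{O}}(x,y_1)+O(\log n)$ and $\K(z)\ge \K(x,y_1)-O(\log n)$. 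The resulting additive $O(\log n)$ loss in the exponent only costs a $\poly(n)$ factor, which is absorbed into $\poly(\ell',n)$. Since $N=\poly(n)$, the factor $\poly(N,\ell')$ also becomes $\poly(n,\ell')$.

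The step I expect to need care is the bookkeeping of quantifier order and hidden constants. The constants in the induction hypothesis depend on $L'$ and $q''$, and these in turn depend on $L$, $V$, $v$ and $q'$. That is acceptable because $q'$ is fixed before invoking the hypothesis. Two further points need checking. First, the oracle $\mathcal{O}$ and the $\mathsf{Gap}$ algorithm (parameters $\rho,\tau$) are the same at every level $k\le k_0$, so the hypothesis is used with identical assumptions. Second, the correctness guarantee is two-sided exact ($A'=1$ iff $(x,y_1)\in L'$), which the hypothesis provides for every input, with no promise.
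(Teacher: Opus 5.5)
Your proposal is correct and follows the paper's route: the paper's entire proof is that the claim ``follows directly from our induction hypothesis,'' i.e., applying the $k-1$ case of \Cref{l:alg-Sigma-k-cd} to $L'\in\Sigma^{\P}_{k-1}$ with the polynomial $q'$. Your check that $L'$ is the $\Sigma^{\P}_{k-1}$ complement of the witness predicate, and your handling of the pair encoding and of monotonicity of $\pK$ in the time bound, spell out details the paper leaves implicit.
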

\begin{proof}[Proof of \Cref{alg-A-prime}]\renewcommand{\qedsymbol}{$\diamond$}
    This follows directly from our induction hypothesis.
\end{proof}

     In order to generate a set of ``good'' candidate witnesses, we will also need the following subroutines.

    \begin{claim}\label{c:alg-depth}
There exists a randomized polynomial-time algorithm $\mathsf{Approx}\text{-}\mathsf{depth}$ such that, for every $x\in \bool^n$ and  $\ell' \in\mathbb{N}$, $\mathsf{Approx}\text{-}\mathsf{depth}(x,\ell')$ outputs, with probability at least $1 - 2^{-\ell'}$, an integer $s$ satisfying
\[
    \pK^{t(n), \mathcal{O}}(x) - \pK^{p(t(n)), \mathcal{O}}(x) \leq s \leq \pK^{\rho^{-1}(t(n)), \mathcal{O}}(x) - \pK^{\rho(p(t(n))), \mathcal{O}}(x) + \log t(n) + \log \rho(p(t(n))),
\]
where $p$ is the polynomial defined in \Cref{c:fist-witness-small-cd}.
    \end{claim}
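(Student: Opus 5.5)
We will build $\mathsf{Approx}\text{-}\mathsf{depth}$ by using the assumed $\mathsf{prBPTIME}[\tau]$ algorithm $B$ for $\mathsf{Gap}_\rho$-$\mathsf{MINpKT}^{\mathcal{O}}$ twice, to obtain two one-sided estimates of $\pK^{\cdot,\mathcal{O}}(x)$ at two time scales, and then subtract them. Write $t = t(n)$. Throughout, $B$ is run with $a=3$, $b=2$, so the YES threshold is $\pK_{2/3}$, the NO threshold is $\pK_{1/3}$, and we treat $\pK_{1/3}$ as $\pK$ up to the usual amplification slack. Each call is repeated $O(\ell' + \log n)$ times with a majority vote. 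Since only $O(n)$ calls are made in total, a union bound gives overall failure probability at most $2^{-\ell'}$.

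\textbf{Upper estimate at time $\rho^{-1}(t)$.} For $s_1 = 0,1,\dots,2n+O(1)$, query $B$ on $(x,1^{s_1},1^{\rho^{-1}(t)},1^3,1^2)$. Let $u$ be the least $s_1$ on which $B$ accepts. If $s_1 \ge \pK^{\rho^{-1}(t),\mathcal{O}}(x)$, the instance is a YES instance, so $B$ accepts; hence $u \le \pK^{\rho^{-1}(t),\mathcal{O}}(x)$. If $B$ accepts at $u$, the instance is not a NO instance, so $\pK^{t,\mathcal{O}}(x) < u + \log t$. Here we use $\rho(\rho^{-1}(t)) \approx t$, with the parameters $a,b$ absorbed into the polynomial. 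This gives
\[
\pK^{t,\mathcal{O}}(x) - \log t \le u \le \pK^{\rho^{-1}(t),\mathcal{O}}(x).
\]

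\textbf{Lower estimate at time $p(t)$.} Run the same search with time parameter $p(t)$ and let $w$ be the least accepted threshold. The same reasoning gives
\[
\pK^{\rho(p(t)),\mathcal{O}}(x) - \log\rho(p(t)) \le w \le \pK^{p(t),\mathcal{O}}(x).
\]

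The algorithm then outputs $s \vcentcolon= u - w + \log t$. Combining the two displays yields
\[
\pK^{t,\mathcal{O}}(x) - \pK^{p(t),\mathcal{O}}(x) \le s \le \pK^{\rho^{-1}(t),\mathcal{O}}(x) - \pK^{\rho(p(t)),\mathcal{O}}(x) + \log t + \log\rho(p(t)),
\]
which is exactly the bound claimed. The running time is $O(n)$ calls to $B$, each on an input of length $\poly(t)$. Because $t$ and $p$ are polynomials, this is polynomial in $n$, and the $\ell'$ factor comes from amplification.

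\textbf{Main obstacle.} The delicate part is the bookkeeping of the gap function. The NO condition is stated with $\rho(t+a+b)$ rather than $\rho(t)$, and it uses confidence $(b-1)/a$ rather than $2/3$. We need the threshold-monotonicity of $\pK_\lambda$ in both $\lambda$ and $t$ to absorb these discrepancies. This can be arranged by choosing $t$ large, interpreting $\rho^{-1}(t)$ as the largest $t_0$ with $\rho(t_0+5)\le t$, and noting that the measure $\pK_{1/3}$ only decreases relative to $\pK_{2/3}$, which is the direction we need. If the confidence mismatch causes trouble, we fall back on the standard fact that $\pK_{\lambda}$ and $\pK_{2/3}$ differ by only $O(\log t)$ at polynomially larger time bounds, and we absorb that loss into the additive $\log$ terms.
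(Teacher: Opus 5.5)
Your proposal is correct and follows the paper's proof. Both make two calls to a threshold-approximation routine for $\mathsf{Gap}_\rho$-$\mathsf{MINpKT}^{\mathcal{O}}$, at time scales $\rho^{-1}(t)$ and $p(t)$, to get $\pK^{t,\mathcal{O}}(x)-\log t\le u\le \pK^{\rho^{-1}(t),\mathcal{O}}(x)$ and $\pK^{\rho(p(t)),\mathcal{O}}(x)-\log\rho(p(t))\le w\le \pK^{p(t),\mathcal{O}}(x)$, and then output $u-w+\log t$. The only difference is that you build the routine by an explicit threshold search, while the paper imports it from GKLO22.

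One correction. Your remark that $\pK_{1/3}\le\pK_{2/3}$ is ``the direction we need'' is backwards. The NO side only gives $u>\pK^{t,\mathcal{O}}_{1/3}(x)-\log t$, which is weaker than the required bound in terms of $\pK_{2/3}$. So the amplification $\pK^{\poly(t),\mathcal{O}}_{2/3}(x)\le\pK^{t,\mathcal{O}}_{1/3}(x)+O(1)$ (use three independent random strings and spend $O(1)$ bits naming a good one) is genuinely needed, not just a fallback; the paper also suppresses this detail.
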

    \begin{proof}[Proof of \Cref{c:alg-depth}]\renewcommand{\qedsymbol}{$\diamond$}
Let $\mathsf{Approx}$ be a randomized polynomial-time algorithm that satisfies the following: Given $z\in\bool^m$, $1^{t'}$ and $1^{\kappa}$, $\mathsf{Approx}(z,1^{t'},1^{\kappa})$ outputs, with probability at least $1-2^{-\kappa}$, an integer $s_0$ such that
\[
      \pK^{\rho(t'), \mathcal{O}}(z)-\log \rho(t') \leq s_0\leq \pK^{t', \mathcal{O}}(z).
\]
It is easy to see that such an algorithm $\mathsf{Approx}$ can be obtained from the assumption that $\mathsf{Gap}_{\rho}\textnormal{-}\mathsf{MINpKT}^{\mathcal{O}}\in\prBPP$ and using success amplification techniques. (See, e.g., {\cite[Lemma 28]{GKLO22}}.) 

Let $t\vcentcolon=t(n)$. By running $\mathsf{Approx}(x,1^{\rho^{-1}(t)}, 1^{\ell'+1})$, with probability at least $1-2^{-\ell'}/2$, we get an integer $s_1$ such that
		\[
		\pK^{t,\mathcal{O}}(x)- \log t \;\leq\; s_1 \;\leq\; \pK^{\rho^{-1}(t),\mathcal{O}}(x).
		\]
		Similarly, by running $\mathsf{Approx}(x,1^{p(t)}, 1^{\ell+1})$, with probability at least $1-2^{-\ell'}/2$, we get some integer $s_2$ such that
		\[
		\pK^{\rho(p(t)),\mathcal{O}}(x)- \log\rho(p(t))\;\leq\; s_2 \;\leq\; \pK^{p(t),\mathcal{O}}(x).
		\]
		Then by a union bound, with probability at least $1-2^{-\ell'}$, we have
		\begin{equation}\label{eq:s1-s2-up}
			s_1-s_2\leq \pK^{\rho^{-1}(t),\mathcal{O}}(x) - \mleft(\pK^{\rho(p(t)),\mathcal{O}}(x)- \log\rho(p(t)) \mright)            
		\end{equation}
		and
		\begin{equation}\label{eq:s1-s2-low}
			s_1-s_2\geq \mleft(\pK^{t,\mathcal{O}}(x)- \log(t)\mright) - \pK^{p(t),\mathcal{O}}(x).
		\end{equation}
		We can then output 
		\[
		s \vcentcolon=s_1-s_2+\log(t).
		\]
        It is easy to verify that $s$ satisfies the stated condition in the claim.
    \end{proof}

    \begin{claim}\label{c:alg-valid}
There exist a randomized polynomial-time algorithm $\mathsf{Valid}$ and a constant $C$ such that for every $(x, y_1, 1^{\ell'}) \in \bool^{n} \times \bool^{v(n)} \times \mathbb{N}$, with probability at least $1 - 2^{-\ell'}$ \emph{(}over the internal randomness of $\mathsf{Valid}$\emph{)} both of the following hold\emph{:}
\begin{enumerate}
    \item\label{i:lex-is-valid} If $y_1 = y_1^{*}$, then $\mathsf{Valid}(x, y_1, 1^{\ell'}) = 1$.
    \item\label{i:valid-is-good} If $\mathsf{Valid}(x, y_1, 1^{\ell'}) = 1$, then it holds that
    \[
        \pK^{q'(n+v(n)), \mathcal{O}}(x, y_1) \leq \pK^{q(n), \mathcal{O}}(x) + \log q'(n+v(n)) + \log \rho(q(n)) +C\cdot\log n.
    \]
\end{enumerate}
    \end{claim}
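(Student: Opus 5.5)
The natural implementation of $\mathsf{Valid}$ compares two estimates produced by the algorithm $\mathsf{Approx}$ from the proof of \Cref{c:alg-depth}. Recall that $\mathsf{Approx}$ is obtained from $\mathsf{Gap}_\rho\textnormal{-}\mathsf{MINpKT}^{\mathcal{O}}\in\mathsf{prBPTIME}[\tau(n)]$ by amplification. On input $(z,1^{t'},1^{\kappa})$ it outputs, with probability $1-2^{-\kappa}$, an integer $s_0$ with
\[
\pK^{\rho(t'),\mathcal{O}}(z)-\log\rho(t')\le s_0\le \pK^{t',\mathcal{O}}(z).
\]
On input $(x,y_1,1^{\ell'})$, $\mathsf{Valid}$ runs two estimates, each with confidence parameter $\ell'+1$:
\begin{itemize}
\item $s_x\vcentcolon=\mathsf{Approx}(x,1^{q(n)},1^{\ell'+1})$;
\item $s_{xy}\vcentcolon=\mathsf{Approx}((x,y_1),1^{t_1},1^{\ell'+1})$, where $t_1$ satisfies $\rho(t_1)\le q'(n+v(n))$ (for instance $t_1\vcentcolon=\rho^{-1}(q'(n+v(n)))$).
\end{itemize}
It accepts iff $s_{xy}\le s_x+C_0\log n$, for a constant $C_0$ fixed below. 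By a union bound, both estimates are correct with probability at least $1-2^{-\ell'}$, and the argument below conditions on that event.

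\textbf{Item~\ref{i:valid-is-good}.} Suppose $\mathsf{Valid}$ accepts. Using $s_{xy}\ge \pK^{\rho(t_1),\mathcal{O}}(x,y_1)-\log\rho(t_1)$ and monotonicity of $\pK$ in the time bound,
\[
\pK^{q'(n+v(n)),\mathcal{O}}(x,y_1)\le \pK^{\rho(t_1),\mathcal{O}}(x,y_1)\le s_{xy}+\log\rho(t_1).
\]
Combining this with $s_{xy}\le s_x+C_0\log n$ and $s_x\le \pK^{q(n),\mathcal{O}}(x)$ gives
\[
\pK^{q'(n+v(n)),\mathcal{O}}(x,y_1)\le \pK^{q(n),\mathcal{O}}(x)+\log q'(n+v(n))+C_0\log n.
\]
This is within the claimed bound for any $C\ge C_0$.

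\textbf{Item~\ref{i:lex-is-valid}.} Take $y_1=y_1^*$. Since $\mathcal{O}$ is $\Sigma^{\P}_{k_0}$-complete and $k\le k_0$, the witness $y_1^*$ can be computed from $x$ in polynomial time with oracle $\mathcal{O}$, by prefix search with $\Sigma^{\P}_k$ queries. So composing a program for $x$ with this fixed procedure (efficient composition) gives
\[
\pK^{t_1,\mathcal{O}}(x,y_1^*)\le \pK^{t_1-\poly(n),\mathcal{O}}(x)+O(\log n).
\]
We then need $s_{xy}\le \pK^{t_1,\mathcal{O}}(x,y_1^*)$ to be at most $s_x+C_0\log n$. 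The lower guarantee on $s_x$ only provides $s_x\ge \pK^{\rho(q(n)),\mathcal{O}}(x)-\log\rho(q(n))$, so acceptance requires
\[
\pK^{t_1-\poly(n),\mathcal{O}}(x)\le \pK^{\rho(q(n)),\mathcal{O}}(x)+O(\log n).
\]
This holds by monotonicity once $t_1-\poly(n)\ge \rho(q(n))$. So $q'$ must be chosen large enough that $\rho^{-1}(q'(n+v(n)))\ge \rho(q(n))+\poly(n)$, with the $\poly(n)$ covering the witness search and the composition overhead. Then $y_1^*$ is accepted as long as $C_0\log n$ absorbs the $O(\log n)+\log\rho(q(n))$ losses. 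Setting $C_0$ this way is where the term $\log\rho(q(n))$ in the claimed bound comes from, since Item~\ref{i:valid-is-good} inherits $C_0\log n$.

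\textbf{Main obstacle.} The delicate step is the parameter bookkeeping. The time bound $t_1$ handed to $\mathsf{Approx}$ on $(x,y_1)$ must be large enough to dominate $\rho(q(n))$ plus the overhead of computing $y_1^*$; this is what makes Item~\ref{i:lex-is-valid} work. At the same time, $\rho(t_1)$ must stay at most $q'(n+v(n))$ so that the gap-side guarantee yields Item~\ref{i:valid-is-good}. Both constraints are met by taking $q'\ge \rho(\rho(q)+\poly)$, which is consistent with $q'$ being ``a sufficiently large polynomial'' chosen after $q$ and $\rho$.
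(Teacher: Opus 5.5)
Your proposal is correct and is essentially the paper's proof: the same $\mathsf{Valid}$ compares $\mathsf{Approx}(x,1^{q(n)},1^{\ell'+1})$ with $\mathsf{Approx}((x,y_1),1^{\rho^{-1}(q'(n+v(n)))},1^{\ell'+1})$. Item 1 follows from polynomial-time $\mathcal{O}$-oracle recovery of $y_1^*$ plus monotonicity, and Item 2 from the lower guarantee of $\mathsf{Approx}$. The differences are only cosmetic: the paper keeps $\log\rho(q(n))$ explicitly in its threshold $\theta+\log\rho(q(n))+C\log n$ rather than folding it into $C_0\log n$, and it leaves implicit the condition $\rho^{-1}(q'(n+v(n)))\ge\rho(q(n))+\poly(n)$ that you make explicit.
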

    \begin{proof}[Proof of \Cref{c:alg-valid}]\renewcommand{\qedsymbol}{$\diamond$}
Again, let $\mathsf{Approx}$ be a randomized polynomial-time algorithm that satisfies the following: Given $z\in\bool^m$, $1^{t'}$ and $1^{\kappa}$, $\mathsf{Approx}(z,1^{t'},1^{\kappa})$ outputs an integer $s$ such that
\[
      \pK^{\rho(t'), \mathcal{O}}(z)-\log \rho(t') \leq s\leq \pK^{t', \mathcal{O}}(z),
\]
with probability at least $1-2^{-\kappa}$, where $\rho$ is the polynomial such that $\mathsf{Gap}_{\rho}\textnormal{-}\mathsf{MINpKT}^{\mathcal{O}}\in\prBPP$.
    
    Let $\mathsf{Valid}$ be the following algorithm, where $C$ is a constant to be determined later.
    \begin{quote}
        Given $(x,y_1,1^{\ell'})$, let \[
        \theta\vcentcolon=\mathsf{Approx}\mleft(x,1^{q(n)}, 1^{\ell'+1}\mright),
        \]
        and 
        \[
        \mu\vcentcolon=\mathsf{Approx}\mleft((x, y_1), 1^{\rho^{-1}(q'(n+v(n)))}, 1^{\ell'+1}\mright).
        \]
        We accept if and only if $\mu\leq \theta + \log \rho(q(n))+C\cdot\log n$.
    \end{quote}
    We argue that the above algorithm satisfies the two conditions stated in the claim. Indeed, by the correctness of $\mathsf{Approx}$ and a union bound, we have that, with probability at least $1-2^{-\ell'}$, both the following two hold.
    \[
            \pK^{\rho(q(n)), \mathcal{O}}(x)-\log \rho(q(n)) \leq \theta \leq \pK^{q(n), \mathcal{O}}(x),
    \]
    and 
    \[
            \pK^{q'(n+v(n)), \mathcal{O}}(x,y_1)-\log q'(n+v(n)) \leq \mu \leq \pK^{\rho^{-1}(q'(n+v(n))), \mathcal{O}}(x,y_1).
    \]
    
    Now to see that $\mathsf{Valid}(x,y^*_1, 1^{\ell})=1$, observe that in this case 
\begin{align*}
     \mu &\leq \pK^{\rho^{-1}(q'(n+v(n))), \mathcal{O}}(x,y_1^{*}) \\
     &\leq \pK^{\rho(q(n)), \mathcal{O}}(x)+C\cdot\log n\\
     &\leq \theta + \log \rho(q(n))+C\cdot\log n,    
\end{align*}
where the second inequality uses the facts that $q'$ is a sufficiently large polynomial and that given oracle access to $\mathcal{O}$, one can recover $y_1^{*}$ in polynomial time given $x$ with a program of length $C\cdot\log n$, for some constant $C$.

For the second condition, if $\mathsf{Valid}(x,y_1,1^{\ell})=1$, then it implies that 
\begin{align*}
\pK^{q'(n+v(n)), \mathcal{O}}(x,y_1)-\log q'(n+v(n)) &\leq \mu \\
&\leq \theta +  \log \rho(q(n)) +C\cdot\log n \\
&\leq \pK^{q(n), \mathcal{O}}(x) +  \log \rho(q(n)) +C\cdot\log n.
\end{align*}
This completes the proof.
\end{proof}

We are now ready to describe our algorithm for deciding $L$.

    		\begin{algorithm}[H]
			\caption{An algorithm for deciding $L$}
			\begin{algorithmic}[1]
				
				\Procedure{$A$}{$x, 1^{\ell}$} 
				\State $n \vcentcolon= |x|$.
				\State $s\vcentcolon= \mathsf{Approx}\textnormal{-}\mathsf{depth}(x, 1^{\ell+3})$.
                \State $s'\vcentcolon= s+ \log p(t(n))$.
            
                \\
                \For{$i \in [\ell]$}
                                \State $r \vcentcolon= \text{a uniformly random string in } \bool^{p(t(n))}$.
				\For{$\Pi \in \bool^{\leq s'}$}
				\State $y_1 \vcentcolon= \text{the output of } \Pi(x;r)$ after running $p(t(n))$ steps.
				\If{$|y_1|=v(n)$ and $\mathsf{Valid}\mleft(x,y_1,1^{\ell + \log (\ell)+s'+5}\mright)=1$}
				\If {$A'\mleft(x,y_1,1^{\ell + \log (\ell)+s'+5}\mright)=0$}
                 \State Output $\mathsf{True}$.
                \EndIf
                
				\EndIf
				\EndFor
                \EndFor
				\State Output $\mathsf{False}$.
				\EndProcedure
			\end{algorithmic}
		\end{algorithm}

\paragraph{Correctness.}
We argue that the algorithm $A$ correctly decides $L$. 
Fix $x\in\bool^n$ and $\ell\in\mathbb{N}$.
        
First, we assume that the (randomized) algorithm $\mathsf{Approx}$-$\mathsf{depth}$ outputs a correct answer for $x$—that is, it satisfies the condition stated in \Cref{c:alg-depth}. Note that this occurs with probability at least $1 - 2^{-\ell}/8$, due to the choice of the second parameter in the invocation of $\mathsf{Approx}$-$\mathsf{depth}$.
We also assume that both algorithms, $\mathsf{Valid}$ and $A'$, succeed (meaning that the conditions in \Cref{c:alg-valid} and \Cref{alg-A-prime}, respectively, are satisfied) in all of their at most $\ell \cdot 2^{s'+1}$ executions. By a union bound and the parameter choices in their invocations, this occurs with probability at least $1 - 2^{-\ell}/8$.

Now consider a single execution of the for loop at Line~$6$ (i.e., Lines $7$--$12$).
Suppose $x \not\in L$. Note that, in this case, by the definition of the language $L'$ and the fact that $A'$ correctly decides $L'$ (\Cref{alg-A-prime}), the condition at Line~$11$ will not be satisfied for any $y_1$. Therefore, Line~$12$ will never be reached in this case.
        
        On the other hand, if $x\in L$, then by \Cref{c:fist-witness-small-cd}, with probability at least $2/3$ over $r\sim\bool^{p(t(n))}$, the multiset
        \begin{equation}\label{eq:candidate_witness}
        \mathcal{S}\vcentcolon = \mleft\{ y_1 \mid \textnormal{$y_1$ is the output of $\Pi(x;r)$ after $p(t(n))$ steps for some $\Pi\in\bool^{\leq s'} $}\mright\},
        \end{equation}
        generated by Line~$9$, contains $y^*_1$, which is the lexicographically first witness for $x$. To see this, observe that if the algorithm $\mathsf{Approx}$-$\mathsf{depth}$ outputs a correct answer $s$ for $x$, then we have
        \begin{align*}
         \pK^{p(t(n))}(y^*_1 \mid x) &\leq \pK^{t(n),\mathcal{O}}(x) - \pK^{p(t(n)),\mathcal{O}}(x) + \log p(t(n)) \tag{by \Cref{c:fist-witness-small-cd}}\\
         &\leq s + \log p(t(n))\tag{by \Cref{c:alg-depth}}\\
         &\leq s'.
        \end{align*}
Consider the case when $y_1 = y^*_1$. It follows from \Cref{c:alg-valid} (\Cref{i:lex-is-valid}) that the condition at Line~$10$ is satisfied. Also, since $y^*_1$ is a witness for $x$, by the definition of $L'$, we have $(x, y^*_1) \not\in L'$, and hence, by the fact that $A'$ decides $L'$ (\Cref{alg-A-prime}), the condition at Line~$11$ will be satisfied. Therefore, $\mathsf{True}$ will be output in this case.

Since the above process is executed $\ell$ times in the for loop, the probability that it never outputs $\mathsf{True}$ when $x \in L$ is at most $1/3^{\ell}$.

        To conclude, with probability at least 
        \[
            1-\frac{1}{8\cdot 2^{\ell}}-\frac{1}{8\cdot 2^{\ell}} - \frac{1}{3^{\ell}}> 1-\frac{1}{2^{\ell}},
        \]
        the algorithm will correctly decide whether $x\in L$.

        \paragraph{Running Time.} We now analyze the running time of the algorithm. Again, we assume that the (randomized) algorithms $\mathsf{Approx}$-$\mathsf{depth}$, $\mathsf{Valid}$, and $A'$ succeed in all their executions, which occurs with probability at least $1 - 2^{-\ell}$.

Consider a single execution of the for loop at Line $6$. 
Define the following sub-multiset of $\mathcal{S}$ (where $\mathcal{S}$ is defined in \Cref{eq:candidate_witness}):
\[
\mathcal{S'} \vcentcolon = \mleft\{ y_1 \in \mathcal{S} \mid \textnormal{$y_1$ satisfies the condition at Line $10$} \mright\}.
\]
Now, it is easy to see that a single execution (Lines $7$--$12$) takes time at most
\begin{equation}\label{eq:total-runtime}
    2^{s'+1} \cdot \poly(t(n)) + \sum_{y_1 \in \mathcal{S'}} \mathsf{Runtime}\mleft(A'(x,y_1,1^{\ell'})\mright),
\end{equation}
where $\ell'= {\ell + \log (\ell)+s'+5}$.

Note that for every $y_1 \in \mathcal{S'}$, it follows from \Cref{c:alg-valid} (\Cref{i:valid-is-good}) that
\begin{equation}\label{eq:good-witness-runtime}
    \pK^{q'(n+v(n)), \mathcal{O}}(x,y_1) \leq \pK^{q(n), \mathcal{O}}(x) + \log q'(n+v(n)) + \log \rho(q(n)) + C\cdot\log n.
\end{equation}
        Therefore, for every such $y_1$, we have
        \begin{align*}
            \quad \mathsf{Runtime}\mleft(A'(x,y_1,1^{\ell'})\mright) &\leq            \exp\mleft(O\mleft(\pK^{q'(n+v(n)), \mathcal{O}}(x,y_1) - \K(x,y_1)\mright)\mright)\cdot \poly(n,\ell) \tag{by \Cref{alg-A-prime}}\\ 
            &\leq \exp\mleft(O\mleft(\pK^{q(n), \mathcal{O}}(x) - \K(x,y_1)\mright)\mright)\cdot \poly(n,\ell) \tag{by \Cref{eq:good-witness-runtime}}\\
            &\leq  \exp\mleft(O\mleft(\pK^{q(n), \mathcal{O}}(x) - \K(x)\mright)\mright)\cdot \poly(n,\ell)
        \end{align*}
        It follows that the quantity in \Cref{eq:total-runtime} is at most
        \begin{align*}
            &\quad 2^{s'+1} \cdot \poly(t(n)) + 2^{s'+1}\cdot  \exp\mleft(O\mleft(\pK^{q(n), \mathcal{O}}(x) - \K(x)\mright)\mright)\cdot \poly(n,\ell) \tag{by $|\mathcal{S'}|\leq 2^{s'+1}$} \\
            &\leq \exp(s') \cdot  \exp\mleft(O\mleft(\pK^{q(n), \mathcal{O}}(x) - \K(x)\mright)\mright)\cdot \poly(n,\ell)\\
            &\leq \exp\mleft(\pK^{\rho^{-1}(t(n)), \mathcal{O}}(x) - \pK^{\rho(p(t(n))), \mathcal{O}}(x) + \log t(n) + \log \rho(p(t(n))) +\log p(t(n))\mright) \\
            &\quad\quad\cdot  \exp\mleft(O\mleft(\pK^{q(n), \mathcal{O}}(x) - \K(x)\mright)\mright)\cdot \poly(n,\ell)\tag{by \Cref{c:alg-depth}}\\
            &\leq \exp\mleft(\pK^{\rho^{-1}(t(n)), \mathcal{O}}(x) - \K(x) +O(\log n)\mright) \cdot  \exp\mleft(O\mleft(\pK^{q(n), \mathcal{O}}(x) - \K(x)\mright)\mright)\cdot \poly(n,\ell) \tag{by \Cref{l:K-pK}}\\
            &\leq \exp\mleft(\pK^{q(n), \mathcal{O}}(x) - \K(x) \mright) \cdot  \exp\mleft(O\mleft(\pK^{q(n), \mathcal{O}}(x) - \K(x)\mright)\mright)\cdot \poly(n,\ell)\\
            &\leq \exp\mleft(O\mleft(\pK^{q(n), \mathcal{O}}(x) - \K(x)\mright)\mright)\cdot \poly(n,\ell),
        \end{align*}
        where the second last inequality holds by letting $t(\cdot)\geq \rho(q(
        \cdot))$. 
        This completes the proof.
\end{proof}

\subsubsection{Proof of \texorpdfstring{\Cref{l:GapMINpnKT-to-PH}}{}}

We are now ready to show \Cref{l:GapMINpnKT-to-PH}.

\begin{proof}[Proof of \Cref{l:GapMINpnKT-to-PH}]
Assume $\mathsf{Gap}\textnormal{-}\mathsf{MINpKT}^{\mathsf{PH}}\in\prBPP$. 
Let $k\in\mathbb{N}$ and $L\in\Sigma^{\P}_k$.

Given \Cref{t:IL90}, it suffices to show how to decide $L$ on average over the uniform distribution. The idea is to obtain a heuristic scheme by using the worst-case algorithm from \Cref{l:alg-Sigma-k-cd} for deciding $L$, along with the fact that $\pK^{t,\mathsf{PH}}(x) - \K(x)$ is small for a uniformly random $x \in \bool^n$. Details follow.

Let $\mathcal{O}$ be a $\Sigma^{\P}_k$-complete language, and let $A$ be the algorithm from \Cref{l:alg-Sigma-k-cd} that satisfies the following. Given $\ell\in \mathbb{N}$ and $x\in\bool^n$, with probability at least $1-2^{-\ell}$, $A(x,1^{\ell})$ runs in time 
        \[
        2^{O\mleft(\pK^{q(n), \mathcal{O}}(x) - \K(x)\mright)}\cdot \poly(n,\ell)
        \]
        and decides whether $x\in L$, where $q$ is a polynomial satisfying that for all $x\in\bool^*$, $\pK^{q(|x|), \mathcal{O}}(x)\leq |x| + O(\log |x|)$. 

We now describe a heuristic scheme for solving $L$ over the uniform distribution. Let $B$ be the following algorithm:
\begin{quote}
    On input $(x,1^n,1^{\kappa})$, where $|x|=n$, set $\mathsf{cutoff}\vcentcolon=p(n,\kappa)$, where $p$ is a polynomial specified later. Run $A(x,1^{3})$ for $\mathsf{cutoff}$ steps. If it halts, output the returned answer; otherwise output $\bot$.
\end{quote}
    We argue the correctness of $B$. By the fact that for every $x\in\bool^n$, $A(x,1^{3})$ only errs with probability at most $2^{-3}=1/8$ (over the internal randomness of $A$), we get that
    \[
        \Prob_{B}\left[B(x, 1^n, 1^\kappa) \in \{L(x), \bot\}\right] \geq \frac{4}{5}.
    \]
    
    It remains to show that
        \[
    \Prob_{x \sim \bool^n}\left[\Prob_{B}[B(x, 1^n, 1^{\kappa}) = \bot] < \frac{1}{5}\right] \geq 1 - \frac{1}{\kappa}.
    \]
    Note that by construction, it is sufficient to show that with probability at least $1-1/\kappa$ over $x\sim\bool^n$, we have
    \[
\Prob_{A}\mleft[\mathsf{Runtime}\mleft(A(x,1^{3})\mright)\leq \mathsf{cutoff}\mright] > \frac{4}{5}.
    \]
    First of all, for every $x\in\bool^n$, we have that with probability at least $7/8$ (which is greater than $4/5$) over its internal randomness, $A(x,1^{3})$ runs in time
    \begin{equation}\label{eq:A-runtime}
        2^{O\mleft(\pK^{q(n), \mathcal{O}}(x) - \K(x)\mright)}\cdot \poly(n).
    \end{equation}
    Firstly, for every $x\in\bool^n$, we have 
    \[
        \pK^{q(n), \mathcal{O}}(x) \leq n +O(\log n).
    \]
    Secondly, by \Cref{l:incomp-strings}, with probability at least $1-1/\kappa$ over $x\sim\bool^n$,
    \[
    \K(x)\geq n-\log \kappa.
    \]
    It follows that with probability at least $1-1/\kappa$ over $x\sim\bool^n$, the runtime given by \Cref{eq:A-runtime} is at most
    \begin{align*}
        \exp\mleft(n+O(\log n)- n + \log \kappa\mright) \cdot \poly(n)\leq \poly(n,\kappa).
    \end{align*}
    By letting $p$ be a sufficiently large polynomial, the above is at most $\mathsf{cutoff}$. This completes the proof.
\end{proof}

\section{The Deterministic Case}\label{thm:deterministic}

\begin{theorem}\label{t:PH-heuristica-characterization-det}
	The following statements are equivalent\emph{:}
	\begin{enumerate}
		\item \emph{(Exclusion of PH-Heuristica).} $\Dist\mathsf{PH}\subseteq\Avg\P \implies \mathsf{PH}=\P$.
		\item \emph{(Meta-Complexity of  $\mathsf{K}^{\mathsf{PH}}$).} $\mathsf{Gap}\textnormal{-}\mathsf{MINKT}^{\mathsf{PH}} \in \P \implies \mathsf{Mild}\textnormal{-}\mathsf{Gap}\textnormal{-}\mathsf{MINKT}^{\mathsf{PH}} \in \P$.
	\end{enumerate}
\end{theorem}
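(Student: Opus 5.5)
The plan mirrors the randomized case. \Cref{t:PH-heuristica-characterization-det} follows from two deterministic correspondences, and the equivalence is then formal:
\[
\textnormal{(I)}\quad \Dist\PH\subseteq\Avg\P \iff \GapMINKT^{\PH}\in\P,
\qquad
\textnormal{(II)}\quad \PH=\P \iff \mathsf{Mild}\textnormal{-}\mathsf{Gap}\textnormal{-}\MINKT^{\PH}\in\P.
\]
Item 1 is an implication between the left-hand sides of (I) and (II), and Item 2 is the same implication between the right-hand sides. Substituting (I) and (II) therefore turns each item into the other.

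Correspondence (I) is exactly Hirahara's characterization \citep{DBLP:conf/focs/Hirahara20}. I would cite it, or reprove it along the lines of \Cref{sec:PH-to-GapMINpnKT,sec:GapMINpnKT-to-PH} with $\K^{t,\mathcal{O}}$ in place of $\pK^{t,\mathcal{O}}$.
\begin{itemize}
\item Forward direction: Hirahara's deterministic worst-case-to-average-case reduction, the analogue of \Cref{worst-to-avg-Kt-PH}.
\item Backward direction: run the induction of \Cref{l:alg-Sigma-k-cd}, enumerating deterministic programs instead of sampling randomness. This needs deterministic symmetry of information for $\K^{t}$ and an oracle-removal bound $\K^{\poly(t)}(z)\lesssim \K^{t,\mathcal{O}}(z)$. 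Both follow from $\GapMINKT^{\mathcal O}\in\P$ via the pseudorandom-generator construction of \citep{DBLP:conf/focs/Hirahara20}; that construction is what replaces the randomized reconstruction used in the $\pK$ setting.
\end{itemize}

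For the easy direction of (II): if $\PH=\P$, then $\mathsf{Mild}\textnormal{-}\mathsf{Gap}\textnormal{-}\MINKT^{\mathcal O}$ for $\mathcal O\in\PH$ lies in $\NP^{\mathcal O}\subseteq \PH=\P$. Membership in $\NP^{\mathcal O}$ holds because $\K^{t,\mathcal O}(x)\le s$ is certified by guessing the program and simulating it with oracle queries answered by the $\PH$ oracle.

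For the hard direction of (II), I would adapt \Cref{t:MINpnKT-NP-hard} to deterministic complexity.
\begin{itemize}
\item From $\mathsf{Mild}\textnormal{-}\mathsf{Gap}\textnormal{-}\MINKT^{\SAT}\in\P$, first obtain $\mathsf{Mild}\textnormal{-}\mathsf{Gap}\textnormal{-}\MINKT^{\SAT}\in\prBPP$. This yields all the randomized consequences, in particular the $\pK$-based witness compression: for every witness $y\in A_x$, $\pK^{\poly(n)}(y\mid x)\le\log|A_x|+O(\log n)$.
\item This gives $\NP\subseteq\BPP$ only after one checks that the $\K^{\SAT}$ mild-gap algorithm also solves the corresponding $\pK^{\SAT}$ mild-gap problem. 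I would argue this through the standard relation $\pK^{t,\SAT}\approx\K^{\poly(t),\SAT}$ up to logarithmic terms, which holds because a $\SAT$ oracle can derandomize BPP-type descriptions (e.g.\ via Sipser--Gács). I expect this step to be delicate: the blow-up is polynomial in $t$, not additive. I would therefore instead redo \Cref{l:lpnK-cond-pK-bound-cond-pnK} directly with a deterministic lifted measure $\ell\lK^{t,\SAT}$ and the distinguisher given by the assumed polynomial-time algorithm.
\item From $\NP\subseteq\BPP$, reach $\NP=\P$ by derandomizing. The mild-gap algorithm gives, in particular, a polynomial-time algorithm that accepts strings of low $\K^{\poly(n)}$ complexity and rejects most random strings. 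A Hirahara-style construction, or standard hardness-versus-randomness via circuit lower bounds for $\mathsf{E}$ extracted from such a natural property, yields $\BPP=\P$, hence $\NP=\P$ and $\PH=\P$.
\end{itemize}

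The main obstacle is this last part of (II). The witness-compression route naturally outputs a randomized sampler for witnesses, so full derandomization requires an explicit PRG. The bulk of the work will be checking that the direct-product reconstruction and the $\ell\lK$ telescoping argument survive with deterministic measures and additive time gaps, while the resulting generator fools the sampler.
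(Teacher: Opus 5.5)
\textbf{Where your proposal matches the paper.} Your decomposition into the two deterministic correspondences is the same as the paper's. So are the citation of Hirahara for (I), the argument that $\mathsf{Mild}\textnormal{-}\mathsf{Gap}\textnormal{-}\MINKT^{\mathcal O}\in\PH$ for the easy half of (II), and redoing \Cref{l:lpnK-cond-pK-bound-cond-pnK} with the deterministic lifted measure $\ell\lK^{t,\SAT}$ to get $\NP\subseteq\BPP$.

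\textbf{The gap: the step $\BPP=\P$.} This step does not go through as you describe it. An oracle-free polynomial-time algorithm that accepts strings of low $\K^{\poly}$ complexity and rejects most random strings is a $\P$-natural property. There is no known way to turn that alone into $\mathsf{E}\not\subseteq\mathsf{SIZE}[2^{\epsilon n}]$ or into $\BPP=\P$. To use it, you would have to deterministically produce a string it rejects, and that is itself a derandomization problem. Guessing a random truth table and certifying its hardness gives only $\BPP=\mathsf{ZPP}$, which together with $\NP\subseteq\BPP$ does not give $\NP=\P$. Your ``main obstacle'' is also off the critical path. You do not need a generator that fools the witness sampler, nor a deterministic version of the direct-product reconstruction. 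Once $\BPP=\P$ holds, $\NP\subseteq\BPP=\P$ follows immediately.

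\textbf{The missing link.} The mild-gap hypothesis implies the standard-gap hypothesis. The YES sets of the two problems coincide. Discard instances with $t<|x|$, which cannot be YES instances. Take $\rho_2$ to be a polynomial with $\rho_2(t)\ge t+\rho_1(t)$. Monotonicity in the time bound then gives $\K^{t+\rho_1(|x|),\mathcal O}(x)\ge \K^{\rho_2(t),\mathcal O}(x)$, and also $\log\rho_2(t)\ge\log\rho_1(t)$. So every NO instance of $\mathsf{Gap}_{\rho_2}\textnormal{-}\MINKT^{\mathcal O}$ is a NO instance of $\mathsf{Mild}\textnormal{-}\mathsf{Gap}_{\rho_1}\textnormal{-}\MINKT^{\mathcal O}$; this is the inclusion in \Cref{fig:gap-instance-inclusions}.

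\textbf{How the paper closes the argument.} From the inclusion, $\mathsf{Mild}\textnormal{-}\mathsf{Gap}\textnormal{-}\MINKT^{\PH}\in\P$ yields $\GapMINKT^{\PH}\in\P$. Hirahara's theorem \citep[Theorem 1.17]{DBLP:conf/focs/Hirahara20} then gives $\BPP=\P$. Its hypothesis is the $\PH$-oracle standard-gap problem, not the oracle-free natural property. Combined with $\NP\subseteq\BPP$ from the $\ell\lK^{t,\SAT}$ argument, this gives $\NP=\P$ and hence $\PH=\P$. That is exactly how the paper finishes the backward direction of \Cref{t:MINnKT-NP}.
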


As in the proof of \Cref{t:PH-heuristica-characterization}, \Cref{t:PH-heuristica-characterization-det} follows from the following two results.

\begin{theorem}[\cite{DBLP:conf/focs/Hirahara20}]\label{t:GapMINnKT-PH}
	The following holds:
	\[
         \Dist\mathsf{PH}\subseteq\Avg\P \iff
		\mathsf{Gap}\textnormal{-}\mathsf{MINKT}^{\mathsf{PH}} \in \P.
	\]
\end{theorem}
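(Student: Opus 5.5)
The plan is to prove this deterministic characterization of Hirahara in two directions, mirroring the randomized argument of \Cref{sec:PH-to-GapMINpnKT} and \Cref{sec:GapMINpnKT-to-PH} with $\K^{t,\mathcal{O}}$ in place of $\pK^{t,\mathcal{O}}$ and deterministic algorithms in place of randomized ones.

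\textbf{Forward direction.} Assume $\Dist\PH\subseteq\Avg\P$. For every $\mathcal{O}\in\PH$ we have $\MINKT^{\mathcal{O}}\in\PH$, so $(\MINKT^{\mathcal{O}},\PSAMP)\subseteq\Avg\P$. Next I would apply Hirahara's non-black-box worst-case-to-average-case reduction \citep{DBLP:conf/focs/Hirahara18}, the deterministic analogue of \Cref{worst-to-avg-Kt-PH}. Given $(x,1^s,1^t)$, it uses the errorless heuristic on the samplable distribution of pairs $(x,\mathsf{DP}_k(x,z))$ to test whether $\mathsf{DP}_k(x,\cdot)$ is distinguishable from uniform. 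The zero-error property ensures that the errorless heuristic never wrongly rejects a low-complexity string. The reconstruction property, in its deterministic $\K^t$ version relative to a $\PH$ oracle, handles NO instances.

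There is one obstacle here: the reduction needs randomness over the seed $z$. To derandomize it, I would use that $\Dist\PH\subseteq\Avg\P$ yields a hitting set / PRG computable in $\P$. One way to get it is from the standard fact that average-case easiness of $\PH$ gives $\mathsf{prBPP}=\P$; alternatively, one could use Hirahara's $\K^t$-based argument directly.

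\textbf{Backward direction.} This is the main step, and I would follow \Cref{l:alg-Sigma-k-cd} verbatim in the deterministic setting. Assuming $\GapMINKT^{\mathcal{O}}\in\P$ for a $\Sigma_{k_0}^{\P}$-complete $\mathcal{O}$, I would first derive two ingredients. The first is a deterministic symmetry of information, $\K^{\poly(t)}(y\mid x)\le \K^{t}(x,y)-\K^{\poly(t)}(x)+O(\log t)$. The second is oracle removal, $\K^{p(t)}(z)\le \K^{t,\mathcal{O}}(z)+\log p(t)$, the deterministic version of \Cref{l:pK-pnK-3}.

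I expect this to be the hard part. The reconstruction of \Cref{lem:recon} is inherently randomized, since it samples seeds and guesses a hybrid index. To make it deterministic, I would first use the gap algorithm to construct an explicit PRG/hitting set, as Hirahara does. With $\GapMINKT^{\mathcal{O}}\in\P$, a deterministic distinguisher for low-$\K^{t}$ strings exists; applying it through Nisan--Wigderson/DP reconstruction derandomizes $\mathsf{prBPP}$. That derandomization then lets the reconstruction be carried out deterministically with only an $O(\log t)$ advice overhead.

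Given these ingredients, I would run the induction of \Cref{l:alg-Sigma-k-cd}. It produces a deterministic algorithm for $\Sigma_k^{\P}$ running in time $2^{O(\K^{q(n),\mathcal{O}}(x)-\K(x))}\poly(n)$. The algorithm enumerates programs of length about the computational depth, filters candidates using the $\GapMINKT$ approximation (the analogue of \Cref{c:alg-valid}), and verifies each candidate by the induction hypothesis. Truncating at a polynomial cutoff, exactly as in the proof of \Cref{l:GapMINpnKT-to-PH}, gives an errorless heuristic over the uniform distribution. Impagliazzo--Levin (\Cref{t:IL90}, deterministic version) extends this to all of $\PSAMP$, giving $\Dist\PH\subseteq\Avg\P$.
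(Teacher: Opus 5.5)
Your outline matches the shape of the argument the paper attributes to Hirahara. The paper does not reprove this theorem; it cites \citep{DBLP:conf/focs/Hirahara20} and notes that the crux of the deterministic proof is an explicit pseudorandom generator built from the hypothesis and then used to manipulate $\K^{t,\PH}$. However, the step you yourself flag as hard is not resolved, and the mechanism you propose for it is circular.

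A deterministic polynomial-time test that accepts every low-$\K^{t}$ string and rejects most random strings is a \emph{breaker} for every efficiently computable generator. Feeding it into the $\mathsf{DP}$ or Nisan--Wigderson reconstruction only yields \emph{upper bounds} on the complexity of the string plugged into the generator. Moreover, these bounds come out as $\pK$-bounds, because the advice in the reconstruction of \Cref{lem:recon} (hybrid index, the bits $x\cdot z_j$ for the sampled $z_j$, list-decoding index) depends on the random coins.

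None of this produces an explicit hard string. The natural candidate, the lexicographically first high-complexity truth table, is explicit only relative to a $\Sigma_2^{\P}$ oracle. Removing that oracle is exactly the deterministic inequality $\K^{\poly(t)}(z)\le\K^{t,\mathcal{O}}(z)+O(\log t)$ that you planned to derive \emph{from} the derandomized reconstruction. Exhaustive search for a string the test rejects costs time exponential in its length.

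Separately, even $\prBPP=\P$ would not by itself make the reconstruction deterministic with $O(\log t)$ extra advice. Whether a coin sequence $r$ is good is decided by a circuit with the unknown $x$ hardwired, since it must compute the $r$-dependent advice from $x$. You would need an \emph{oblivious} generator fooling such circuits, which $\prBPP=\P$ is not known to supply.

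The missing idea is Hirahara's black-box hitting-set generator whose reconstruction uses only logarithmically many random bits, so those bits can be hardwired or enumerated. This yields deterministic $\K^{t}$-versions of the reconstruction directly from a deterministic $\GapMINKT^{\mathcal{O}}$ algorithm. From there one gets $\P=\BPP$, an explicit generator, deterministic symmetry of information, and deterministic oracle removal. Only after that do your adaptation of \Cref{l:alg-Sigma-k-cd} and the Impagliazzo--Levin step go through. This is precisely the machinery the paper's randomized proof avoids by working with $\pK$ throughout.

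The forward direction is closer to fine, but you should make the source of derandomization explicit and apply it to the reconstruction too, not only to sampling $z$. Under $\Dist\PH\subseteq\Avg\P$, every tally language in $\PH$ is in $\P$ (use point-mass distributions on $1^n$). Hence the $\mathsf{E}^{\Sigma_2^{\P}}$ language of lexicographically first hard truth tables lies in $\mathsf{E}$. The resulting Impagliazzo--Wigderson generator then fools the reconstruction's success test, so its $O(\log t)$-bit seed can serve as advice. Appealing to a $\PH$ oracle inside the description does not work, because $\GapMINKT^{L}\in\P$ must hold for every $L\in\PH$. This includes $L=\emptyset$, where the NO condition concerns oracle-free complexity.
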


\begin{theorem}\label{t:MINnKT-NP}
	The following holds\emph{:}
	\[
		\NP=\P \iff \mathsf{Mild}\textnormal{-}\mathsf{Gap}\textnormal{-}\mathsf{MINKT}^{\mathsf{PH}} \in \P.
	\]
\end{theorem}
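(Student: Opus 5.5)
I will prove the two directions of \Cref{t:MINnKT-NP} separately, following the structure of \Cref{t:MINpnKT-NP} but with deterministic algorithms and $\K^{t,\mathcal{O}}$ in place of $\pK^{t,\mathcal{O}}$.

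\textbf{Easy direction.} Assume $\NP=\P$, so $\PH=\P$. Fix $\mathcal{O}\in\PH$. The mild-gap problem is in fact decidable exactly: the language $\{(x,1^s,1^t)\mid \K^{t,\mathcal{O}}(x)\le s\}$ lies in $\NP^{\mathcal{O}}\subseteq\PH$. To see this, guess a program of length at most $s$, simulate it for $t$ steps while answering its oracle queries with $\mathcal{O}$, and compare the output with $x$. Since $\PH=\P$, this language is in $\P$. Any algorithm deciding it exactly also solves $\mathsf{Mild}\textnormal{-}\mathsf{Gap}_\rho\textnormal{-}\MINKT^{\mathcal{O}}$ for every $\rho$, because the YES and NO sets of the promise problem are disjoint subsets of its YES set and its complement.

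\textbf{Hard direction.} Assume $\mathsf{Mild}\textnormal{-}\mathsf{Gap}\textnormal{-}\MINKT^{\SAT}\in\P$, witnessed by a deterministic algorithm $A$. I would replay the argument of \Cref{t:MINpnKT-NP-hard} deterministically. The target is the witness-compression bound
\[
\K^{\poly(n)}(y\mid x)\le \log|A_x|+O(\log n)
\]
for every witness $y\in A_x$. This bound alone does not give a deterministic search algorithm, since enumerating all short programs costs $|A_x|\cdot\poly(n)$, which can be exponential. So I would instead target the stronger statement $\K^{\poly(n)}(y^*\mid x)\le O(\log n)$ for the lexicographically first witness $y^*$. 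Note that $\K^{\poly(n),\SAT}(y^*\mid x)=O(\log n)$ by prefix search with $\SAT$ queries, as in \Cref{lem:kA-leq-nk-super-script}.

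It therefore suffices to prove the oracle-removal inequality
\[
\K^{\poly(n)}(y\mid x)\le \K^{\poly(n),\SAT}(y\mid x)+O(\log n).
\]
Given this, enumerating all $\poly(n)$ programs of length $O(\log n)$, running each on $x$, and checking each output with the $\NP$ verifier decides $L$ in deterministic polynomial time.

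To prove oracle removal, I follow \Cref{l:lpnK-cond-pK-bound-cond-pnK} in four steps.
\begin{enumerate}
\item Define a deterministic lifted measure $\ell\lK^{t,\SAT}(x)$ as the $\gamma$-quantile over $\hat r\sim\mathcal{U}_\ell$ of $\K^{t,\SAT}(x\circ\hat r)-\ell$.
\item Show that a deterministic algorithm for the mild-gap problem yields a deterministic algorithm for the mild-gap $\ell\lK$ problem, adapting \Cref{l:MINlpnKT-from-MINpnKT}.
\item Prove the additive-time weak symmetry of information for $\ell\lK$ and the upper bound of \Cref{c:lb-lK}. Both are oracle-agnostic counting and composition statements.
\item Prove the lower bound via the direct product generator.
\end{enumerate}

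\textbf{Main obstacle.} The difficulty lies in steps 2 and 4, because both inherently produce randomized objects. In step 2, estimating the quantile over $\hat r$ requires sampling, so the lifting is naturally a $\prBPP$ procedure. In step 4, the reconstruction of \Cref{lem:recon} yields a $\pK$ bound rather than a $\K$ bound.

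My first attempt would be to derandomize both using the hypothesis itself. A deterministic algorithm for $\mathsf{Mild}\textnormal{-}\mathsf{Gap}\textnormal{-}\MINKT^{\SAT}$ certainly solves $\mathsf{Gap}\textnormal{-}\MINKT^{\SAT}$ deterministically, and therefore, by \Cref{t:GapMINnKT-PH} applied within $\PH$ (or directly by Hirahara's construction), gives $\Dist\PH\subseteq\Avg\P$. Hirahara's \citep{DBLP:conf/focs/Hirahara20} argument in that setting also supplies an explicit pseudorandom generator secure against the relevant polynomial-time tests. I would use this generator to fix the random strings $\hat r$ and the reconstruction coins by enumerating its $\poly(n)$ seeds. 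The cost is only an $O(\log n)$ additive loss in description length and a polynomial loss in time, which the telescoping step absorbs since it only needs $O(\log n)$ slack at some $t\le\poly(n)$.

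Alternatively, I would bypass $\ell\lK$ altogether. Since $\prBPP=\P$ follows from that generator, I could derive $\mathsf{Mild}\textnormal{-}\mathsf{Gap}\textnormal{-}\MINpKT^{\SAT}\in\prBPP$ from the deterministic hypothesis by relating $\pK^{t}$ to $\K^{t+\poly(n)}$ under the derandomization. I would then invoke \Cref{t:MINpnKT-NP-hard} to obtain $\NP\subseteq\BPP$, and conclude $\NP\subseteq\BPP=\P$. Making that $\pK$–$\K$ comparison hold with only an additive time overhead is where I expect the real work to lie.
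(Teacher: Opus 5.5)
Your easy direction is fine and matches the paper. The hard direction, however, is never completed: each of your two routes rests on a step you leave open.

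In the second route, reducing $\mathsf{Mild}\textnormal{-}\mathsf{Gap}\textnormal{-}\MINpKT^{\SAT}$ to $\mathsf{Mild}\textnormal{-}\mathsf{Gap}\textnormal{-}\MINKT^{\SAT}$ needs, on YES instances, a bound like $\K^{t+\poly(n),\SAT}(x)\lesssim\pK^{t,\SAT}(x)$. That is, you would have to derandomize a probabilistic description with only an \emph{additive} time overhead. Nothing in the paper provides this. Seed-enumeration derandomization, and even just producing $t$ pseudorandom bits, costs a polynomial factor in $t$. In the first route, the lower bound comes from the reconstruction of \Cref{lem:recon}, which gives a $\pK$ bound, and you need a deterministic $\K$ bound. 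Fixing the reconstruction coins with a generator requires fooling a test that depends on the unknown string $y$, and you only assert that Hirahara's generator does this.

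The missing observation is that no determinism is needed inside the oracle-removal argument at all. A deterministic algorithm for $\mathsf{Mild}\textnormal{-}\mathsf{Gap}\textnormal{-}\MINKT^{\SAT}$ is in particular a randomized one. It can serve as the distinguisher in exactly the argument of \Cref{t:MINpnKT-NP-hard}, with the lifted deterministic measure $\ell\lK^{t,\SAT}$ in place of $\ell\lpK^{t,\SAT}$. Then your step 2 is a legitimate $\prBPP$ lifting procedure. Your step 4 outputs $\pK^{\poly(t)}(y\mid x)$, which is exactly what the randomized witness sampler needs. This gives \Cref{t:MINnKT-NP-hard}: $\mathsf{Mild}\textnormal{-}\mathsf{Gap}\textnormal{-}\MINKT^{\SAT}\in\BPP$ implies $\NP\subseteq\BPP$.

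Separately, $\mathsf{Mild}\textnormal{-}\mathsf{Gap}\textnormal{-}\MINKT^{\PH}\in\P$ implies $\mathsf{Gap}\textnormal{-}\MINKT^{\PH}\in\P$. Gap NO instances are Mild-Gap NO instances once the gap polynomial satisfies $\rho'(t)\ge t+\rho(t)$; instances with $t<|x|$ are trivially not YES. From this, Hirahara's result \citep{DBLP:conf/focs/Hirahara20} gives $\BPP=\P$, hence $\NP\subseteq\BPP=\P$. This needs the hypothesis for all of $\PH$, not just $\SAT$, which the statement supplies.

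Your second route already has this outer structure. The fix is to move the argument to the $\K$ measure, not to move the hypothesis to the $\pK$ measure. The ``main obstacle'' you identify is self-imposed.
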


To prove \Cref{t:MINnKT-NP}, we first establish the following.

\begin{lemma}\label{t:MINnKT-NP-hard}
    If $\mathsf{Mild}\textnormal{-}\mathsf{Gap}\textnormal{-}\mathsf{MINKT}^{\mathsf{\SAT}} \in \BPP$, then $\NP \subseteq \BPP$.
\end{lemma}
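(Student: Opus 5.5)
Our plan is to reduce this lemma to its probabilistic counterpart, \Cref{t:MINpnKT-NP-hard}, which has the same conclusion $\NP \subseteq \BPP$. It therefore suffices to show that
\[
\mathsf{Mild}\textnormal{-}\mathsf{Gap}\textnormal{-}\mathsf{MINKT}^{\SAT} \in \BPP \;\Longrightarrow\; \mathsf{Mild}\textnormal{-}\mathsf{Gap}\textnormal{-}\MINpKT^{\SAT'} \in \prBPP
\]
for some oracle $\SAT'$ that can stand in for $\SAT$ in the proof of \Cref{t:MINpnKT-NP-hard}. In that proof, the oracle is used only through \Cref{l:pKt-SAT-superscript} (language compression via $\pnK$), so any $\NP$-hard oracle works. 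If this translation is awkward, the fallback is to rerun the argument of \Cref{t:MINpnKT-NP-hard} directly with $\K^{t,\SAT}$.

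\textbf{The reduction $\pK^{t,\SAT}$ to $\K^{t,\SAT'}$.} The goal is to decide whether $\pK^{t,\SAT}_{b/a}(x)\le s$ using deterministic $\K^{t,\SAT'}$ thresholds. We use the standard derandomization of pK with an $\NP$-type oracle. Given an instance $(x,1^s,1^t,1^a,1^b)$, a short program with access to a $\Sigma_2$-type oracle can find randomness $r$ on which a fixed program $\Pi$ of length $\le s$ outputs $x$. Concretely, the oracle computes, bit by bit, the lexicographically first good $r$ for the pair $(\Pi,x)$. This is problematic because it requires knowing $x$.

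A cleaner alternative is to replace the $\pK$ measure by the $\K$ measure throughout the proof of \Cref{l:lpnK-cond-pK-bound-cond-pnK}, with $\ell\lpK$ replaced by an analogous lifted measure $\ell\lK^{t,\SAT}(x) := \min\{s : \Pr_{\hat r}[\K^{t,\SAT}(x\circ\hat r)\le s+\ell]\ge\gamma\}$. The argument then needs four ingredients, each adapted to $\K$ with the same proofs:
\begin{enumerate}
\item Mild-gap easiness for $\ell\lK^{\SAT}$ follows from $\mathsf{Mild}\textnormal{-}\mathsf{Gap}\textnormal{-}\mathsf{MINKT}^{\SAT}\in\BPP$, exactly as in \Cref{l:MINlpnKT-from-MINpnKT}.
\item Weak symmetry of information for $\ell\lK$ (the counting argument of \eqref{eq:weak-soi-for-lpk}) goes through via \Cref{l:incomp-strings} for $\K^t$.
\item The $\mathsf{DP}$ reconstruction of \Cref{lem:recon} yields the \emph{unrelativized} bound $\pK^{\poly(t)}(y\mid x)\le k+O(\log t)$, because the distinguisher is the $\BPP$ algorithm.
\item The upper bound (composition) holds with $\K^{\tau,\SAT}(y\mid x)$.
\end{enumerate}
Together these give
\[
\pK^{p(t)}(y\mid x)\le \K^{\tau,\SAT}(y\mid x)+\ell\lK^{t-p(\tau),\SAT}(x)-(\ell+\tau^3)\lK^{t+p(\tau),\SAT}(x)+\log p(t).
\]
The left side stays probabilistic, which is fine since the final sampler is randomized anyway.

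\textbf{Telescoping and compression.} Applying the same telescoping argument as in the proof sketch of \Cref{t:MINpnKT-NP-hard} yields $\pK^{\poly(n)}(y\mid x)\le \K^{\tau,\SAT}(y\mid x)+O(\log n)$. For compression, we need $\K^{\tau,\SAT}(y\mid x)\le\log|A_x|+O(\log n)$, deterministic with a $\SAT$ oracle. Here we use \Cref{t:lc-pnK}: it gives $\pnK$, i.e.\ random $r$. \Cref{lem:kA-leq-nk-super-script} handles only the deterministic $\nK$ case, so we instead aim for language compression for $\nK^{\poly}$ with an $\NP$ oracle. This follows from a $\SAT$-oracle construction of hash functions: a $\K^{\SAT}$ program can find, by prefix search, a pairwise-independent hash $h$ isolating $y$ in $A_x$. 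This gives a bound of $\log|A_x|+O(\log n)$ plus the description of $h$, which must be made short. Alternatively, we simply keep $\pK^{\tau,\SAT}$ on the right side by allowing the composition step to use randomness, which the $\ell\lK$ framework tolerates since conditional complexity of $y$ only enters additively.

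\textbf{Main obstacle.} The expected main obstacle is exactly this mismatch between the deterministic $\K^{\SAT}$ measure tested by the assumed algorithm and the randomized compression bound. The first thing to try is to run the whole lifted argument with the measure $\pK^{t,\SAT}$ but only ever query the algorithm on $\K^{t,\SAT'}$ thresholds, with $\SAT'$ a $\Sigma_2$-complete language. This works because $\pK^{t,\SAT}\le\K^{t,\SAT}$, and because $\K^{\poly(t),\Sigma_2}\lesssim\pK^{t,\SAT}$: the good randomness can be fixed by a $\Sigma_2$ query relative to the output, using the Sipser--Lautemann style argument.
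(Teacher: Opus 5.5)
Your skeleton is exactly the paper's proof: rerun the proof of \Cref{t:MINpnKT-NP-hard} with the lifted measure $\ell\lK^{t,\SAT}$ in place of $\ell\lpK^{t,\SAT}$, using your ingredients 1--3. The gap is ingredient 4 and everything downstream of it.

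\textbf{Ingredient 4 as stated.} If the composition bound is stated with $\K^{\tau,\SAT}(y\mid x)$, the last step needs deterministic $\SAT$-oracle language compression, $\K^{\tau,\SAT}(y\mid x)\le\log|A_x|+O(\log n)$. You do not have this. \Cref{t:lc-pnK} is inherently randomized, and \Cref{lem:kA-leq-nk-super-script} would need the same bound for deterministic $\nK$. The hashing sketch does not deliver it either: a pairwise-independent $h$ costs $\Omega(|y|)$ bits to describe, a prefix search for an $h$ isolating $y$ presupposes knowing $y$, and demanding that $h$ be injective on $A_x$ costs $2\log|A_x|$.

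\textbf{The ``first thing to try''.} This fails for two reasons. First, the hypothesis gives an algorithm only for the $\K^{t,\SAT}$ measure, not for $\K^{t,\SAT'}$ with $\SAT'$ higher in $\PH$. Second, and more fundamentally, $\K^{\poly(t),\SAT'}\lesssim\pK^{t,\SAT}$ is one-sided. Nothing bounds $\K^{\cdot,\SAT'}$ from below by $\pK^{\cdot,\SAT}$, and the NO side needs exactly that. It also reintroduces the multiplicative $t\mapsto\poly(t)$ time gap that the mild-gap argument exists to avoid; telescoping over multiplicative steps would need time $t^{c^{\Omega(n)}}$. Relatedly, your opening claim that the oracle enters the proof of \Cref{t:MINpnKT-NP-hard} only through \Cref{l:pKt-SAT-superscript} is wrong. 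The oracle is built into the measure that the assumed algorithm decides.

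\textbf{The fix.} What makes the $\ell\lK$ substitution work is the option you mention only in passing. The reason is not that the conditional term ``enters additively''. In the analogue of \Cref{c:lb-lK}, the deterministic program for $(x,y)\circ\hat r$ first produces $x\circ\hat r$. It then runs a program witnessing $\pK^{\tau,\SAT}_{1-1/n^3}(y\mid x)$, taking its coins from the first $\tau$ bits of $\hat r$, which it has already written down. This is possible because $|\hat r|=\ell\ge\tau$. The $1/n^3$ fraction of bad coins is absorbed by lowering the outer probability parameter from $(d+1)/n^3$ to $d/n^3$. That is why \Cref{c:lb-lK} moves only the second subscript, and why \Cref{l:lpnK-cond-pK-bound-cond-pnK} assumes $\tau\le\ell$.

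With $\pK^{\tau,\SAT}_{1-1/n^3}(y\mid x)$ kept on the right, telescoping gives $\pK^{\poly(n)}(y\mid x)\le\pK^{\tau,\SAT}_{1-1/n^3}(y\mid x)+O(\log n)$. Randomized language compression for $\pK^{\tau,\SAT}$ via $\pnK$, as in \eqref{eq:pnK-LC}, then applies verbatim. No deterministic compression and no higher oracle is ever needed.
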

\begin{proof}[Proof Sketch]
    The proof is essentially the same as that of \Cref{t:MINpnKT-NP-hard}. It suffices to consider the notion $\ell\lK^{t,\SAT}$ in place of $\ell\lpK^{t,\SAT}$. See also the proof of \cite[Theorem 34]{DBLP:journals/eccc/HuL025} for details.
\end{proof}

\begin{proof}[Proof of \Cref{t:MINnKT-NP}]
The forward direction is immediate since $\mathsf{Mild}\textnormal{-}\mathsf{Gap}\textnormal{-}\mathsf{MINKT}^{\mathsf{PH}}$ lies in $\PH$. Note that $\PH=\P$ if $\NP=\P$.

For the backward direction, first note that \Cref{t:MINnKT-NP-hard} implies the following.
\[
\mathsf{Mild}\textnormal{-}\mathsf{Gap}\textnormal{-}\mathsf{MINKT}^{\mathsf{\PH}} \in \P \implies \NP\subseteq\BPP.
\]
Moreover, as mentioned in \Cref{ss:intro_techniques}, \cite{DBLP:conf/focs/Hirahara20} shows that if $\mathsf{Gap}\textnormal{-}\mathsf{MINKT}^{\mathsf{\PH}} \in \P$, then $\BPP = \P$ (see \cite[Theorem 1.17]{DBLP:conf/focs/Hirahara20}). It follows that if $\mathsf{Mild}\textnormal{-}\mathsf{Gap}\textnormal{-}\mathsf{MINKT}^{\mathsf{\PH}} \in \P$, then we have both $\NP \subseteq \BPP$ and $\BPP = \P$, which together imply $\NP = \P$, as desired.
\end{proof}

\section{Equivalence of Gap and No-Gap Hardness in the Average-Case Setting}\label{sec:average}

In this section, we show that, in the average-case setting (both zero-error and two-sided-error), the hardness of the standard gap version of $\mathsf{MINpKT}^{\mathsf{PH}}$ is equivalent to that of its no-gap variant.

\subsection{The Zero-Error Case}
We first prove the zero-error case.
\begin{theorem}\label{t:gap-mild-gap-avg-errless}
The following holds:
\[
\textnormal{``}\mathsf{Gap}\textnormal{-}\mathsf{MINpKT}^{\mathsf{PH}}\in \mathsf{AvgBPP}\textnormal{''}
\iff
\textnormal{``}\mathsf{MINpKT}^{\mathsf{PH}}\in \mathsf{AvgBPP}\textnormal{''}.
\]
\end{theorem}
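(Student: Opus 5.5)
The plan has two directions, one trivial and one routed through the characterization of \Cref{t:GapMINpnKT-PH}.

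\emph{Backward direction.} Suppose $\MINpKT^{\mathcal{O}}$ has a randomized errorless heuristic scheme on every samplable distribution. Then the same algorithm solves $\mathsf{Gap}_\rho\textnormal{-}\MINpKT^{\mathcal{O}}$ for any $\rho(n)\ge n$. The YES sets coincide. By monotonicity of $\pK$ in $t$ and in the threshold, every NO instance of the gap problem is also a NO instance of $\MINpKT^{\mathcal{O}}$: indeed $\pK^{\rho(t+a+b)}_{(b-1)/a}(x)\ge s+\log\rho(\cdot)>s$ forces $\pK^{t}_{(b-1)/a}(x)>s$. So any answer in $\{L(x),\bot\}$ for the non-gap problem is a valid answer for the gap problem, and the errorless guarantee carries over.

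\emph{Forward direction.} The route is ``average-case gap $\Rightarrow$ worst-case gap $\Rightarrow$ $\Dist\PH\subseteq\Avg\BPP$ $\Rightarrow$ non-gap average case.''

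\begin{enumerate}
\item Fix $\mathcal{O}\in\PH$ and take a zero-error average-case algorithm for $\mathsf{Gap}_\rho\textnormal{-}\MINpKT^{\mathcal{O}'}$, where $\mathcal{O}'$ is a suitable $\PH$ oracle. I will first use it only on the distribution sampling $x\sim\mathcal{U}_n$ together with fixed parameters.
\item Adapt the worst-case-to-average-case reduction of \Cref{worst-to-avg-Kt-PH}, i.e.\ the method of Hirahara \cite{DBLP:conf/focs/Hirahara18} and of \cite{GKLO22}, to conclude $\mathsf{Gap}\textnormal{-}\MINpKT^{\mathcal{O}}\in\prBPP$ with a larger polynomial gap. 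The reduction uses the errorless algorithm as a distinguisher. It must accept all strings of small $\pK^{t,\mathcal{O}}$ complexity and reject most random strings, and it answers $\bot$ only on a small fraction of inputs. Feeding it $x\circ\mathsf{DP}_k(x,z)$-type instances and invoking \Cref{lem:recon} then yields a $\prBPP$ approximation of $\pK^{t,\mathcal{O}}$ up to a polynomial time gap.
\item Apply \Cref{l:GapMINpnKT-to-PH} to obtain $\Dist\PH\subseteq\Avg\BPP$.
\item Note that $\MINpKT^{\mathcal{O}}\in\mathsf{prAM}^{\mathcal{O}}$, as in the footnote to \Cref{t:MINpnKT-NP-easy}. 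Hence $\MINpKT^{\mathcal{O}}$ reduces to a language in $\PH$. To make the promise problem a language, I take a $\Sigma_2^{\P,\mathcal{O}}$ formulation via Sipser–Lautemann on the acceptance probability, which is exactly decidable in $\PH$ for the unary-encoded $a,b$.
\item Since $\Dist\PH\subseteq\Avg\BPP$ holds for every samplable distribution, this yields an errorless heuristic for $\MINpKT^{\mathcal{O}}$. For a promise problem, any $\PH$ language agreeing with it on the promise suffices, and the exact set $\{\pK^{t,\mathcal{O}}_{b/a}(x)\le s\}$ is such a language.
\end{enumerate}

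\emph{Main obstacle.} I expect step 2 to be the hardest. It requires checking that an errorless average-case algorithm with respect to \emph{samplable} distributions, rather than an $\Avg\BPP$ algorithm for $\MINKT$ itself, suffices for Hirahara's reduction.

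The point to handle is that the reduction needs, on the relevant samplable distribution (uniform $x$ concatenated with outputs of $\mathsf{DP}$ on random seeds), that the algorithm never wrongly rejects a YES instance. It must also reject, or output $\bot$ only rarely on, random strings. Zero error supplies the first property. For the second, I would treat $\bot$ as ``accept'' when building the distinguisher, since random strings are NO instances with high probability by \Cref{l:incomp-strings} and the algorithm outputs $\bot$ on few of them.
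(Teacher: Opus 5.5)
Your decomposition matches the paper's. The backward direction is immediate, and your monotonicity check is the right justification. The forward direction runs average-case gap $\Rightarrow$ worst-case gap. The paper simply cites \Cref{t:worst-to-avg-red} for this step, so your sketch of the Hirahara/GKLO reduction is extra detail. It then uses \Cref{l:GapMINpnKT-to-PH} to get $\Dist\PH\subseteq\Avg\BPP$. Finally, it applies that assumption to a $\PH$ language agreeing with $\MINpKT^{\mathcal{O}}$ on the promise, under the samplable family $(x,1^s,1^t,1^a,1^b)$ with $x\sim\mathcal{D}_n$.

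The one step that fails as written is your choice of that $\PH$ language in steps 4--5. The exact set $\{(x,1^s,1^t,1^a,1^b) : \pK^{t,\mathcal{O}}_{b/a}(x)\le s\}$ asks whether
\[
\#\{r\in\bool^t : \K^{t,\mathcal{O}}(x\mid r)\le s\}\ \ge\ (b/a)\cdot 2^t .
\]
Unary $a,b$ make the threshold simple, but the count is still over $2^t$ strings. This is an exact-counting (PP-type) question, and it is not known to lie in $\PH$. Sipser--Lautemann/Babai-style arguments only work with the $1/a$ gap between $b/a$ and $(b-1)/a$, after amplification. So they yield a promise problem, not an exact decision procedure.

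The repair is what the paper does: take $B$ to be the language defined by the $\Pi_2^{\P,\mathcal{O}}$ predicate obtained from the amplified $\mathsf{prAM}^{\mathcal{O}}$ protocol. (For an $\mathsf{AM}$ protocol this is $\Pi_2$ rather than $\Sigma_2$, though either lies in $\PH$.) This $B$ is a genuine language in $\PH$ and agrees with $\MINpKT^{\mathcal{O}}$ on all promise inputs. That agreement is all the errorless scheme needs, since outputs on non-promise inputs are unconstrained.

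A smaller point about step 2: the reduction should query $\mathsf{DP}_k(x,z)$ itself and compare against $\mathcal{U}_{nk+k}$. Prefixing the worst-case $x$ moves you off the uniform distribution, which is where the average-case algorithm's guarantee, including its bound on how often it outputs $\bot$, actually holds.
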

    We will need the following worst-case to average-case reduction result for approximating $\pK^t$.

    \begin{theorem}[Following \cite{DBLP:conf/focs/Hirahara18, GKLO22}]\label{t:worst-to-avg-red}
    Let $\mathcal{O}$ be any oracle. If $\textnormal{``}\mathsf{Gap}\textnormal{-}\mathsf{MINpKT}^{\mathcal{O}}\in\mathsf{AvgBPP}\textnormal{''}$, then $
\mathsf{Gap}\textnormal{-}\mathsf{MINpKT}^{\mathcal{O}}\in\mathsf{prBPP}$.
    \end{theorem}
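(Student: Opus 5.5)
We prove \Cref{t:worst-to-avg-red} by the standard non-black-box worst-case-to-average-case reduction of \cite{DBLP:conf/focs/Hirahara18}, in its probabilistic form from \cite{GKLO22}. Given a worst-case instance $(x,1^s,1^t,1^a,1^b)$ with $x\in\bool^n$, we feed the average-case algorithm a structured distribution: $\mathsf{DP}_k(x,\mathcal U_{nk})$ for a guessed $k$, together with the same parameters. This distribution is polynomial-time samplable once $x$ and $k$ are fixed. For the zero-error algorithm, however, we need a single fixed samplable family. So we use the universal-style family that samples $k\le n$ and a length index, and outputs $\mathsf{DP}_k(x,z)$ for $x$ drawn from some fixed samplable source. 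Simpler still, we take $\mathcal U$ on $\bool^{nk+k}$ and rely on domination: $\mathsf{DP}_k(x,\mathcal U_{nk})$ has min-entropy $nk$, so it is $2^{k}$-dominated by the uniform distribution. An errorless heuristic run with error parameter $1/\kappa$ and $\kappa\gg 2^{k}\poly(n)$ therefore still answers, rather than outputting $\bot$, on most samples from $\mathsf{DP}_k(x,\cdot)$ whenever $k=O(\log n)$. The answers remain zero-error on every support point.

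The key steps are as follows. (1) Let $B$ be the zero-error average-case solver for $\mathsf{Gap}_\rho\textnormal{-}\MINpKT^{\mathcal O}$ on the uniform distribution. From $B$ we build a distinguisher $D_{s'}(w)$ that accepts if $B(w,1^{s'},1^{t'},\dots)$ outputs $1$ or $\bot$. (2) \emph{Random strings.} By \Cref{l:incomp-strings}, a uniform $w\in\bool^{nk+k}$ is a NO instance for threshold $s'=nk+k-\log\rho(\cdot)-O(1)$, so $B$ answers $0$ with high probability and $D$ rejects. (3) \emph{Yes direction.} If $\pK^{t,\mathcal O}(x)\le s$, then every $\mathsf{DP}_k(x,z)$ has $\pK^{t+\poly,\mathcal O}\le s+nk+O(\log)$. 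Taking $k$ so that $s+nk+O(\log)\le s'$, that is $k\approx s-?$, is the wrong orientation. The correct one is to set $k:=n-s+O(\log\rho)$, chosen so that the generated strings are YES instances at threshold $s'$. Then $B$ never outputs $0$ on them (zero error), so $D$ accepts. (4) Hence $D$ $(1/2)$-distinguishes $\mathsf{DP}_k(x,\cdot)$ from uniform. By \Cref{lem:recon}, $\pK^{\poly,\parallel D}(x)\le k+O(\log)$. This by itself gives no contradiction, so the algorithm is the usual one. On input $x$, we run $D$ on samples of both distributions for each threshold. We declare YES iff $D$ distinguishes, i.e.\ iff the fraction of $\mathsf{DP}_k(x,\cdot)$ samples accepted is large. (5) \emph{No direction.} If $\pK^{\rho(t),\mathcal O}(x)\ge s+\log\rho$ and $D$ still distinguished, the reconstruction would give $\pK^{\poly}(x)\le k+O(\log)$. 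Because $D$ is merely a randomized polynomial-time algorithm, this bound can be turned into $\pK^{\poly(t),\mathcal O}(x)\le k+O(\log)$. Comparing it with the NO bound forces a contradiction once $\rho$ is a large enough polynomial.

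The main obstacle is step (5) combined with the domination issue. The reconstruction bound yields a description of $x$ of length about $k$. For this to contradict the NO promise, $k$ must be about $s$. But then $\mathsf{DP}_k(x,\cdot)$ has min-entropy deficiency only about $n-s$ relative to uniform, not $O(\log n)$, and a zero-error heuristic may output $\bot$ on the whole support. I would resolve this as in \cite{DBLP:conf/focs/Hirahara18}, by exploiting zero error. Outputs of $\bot$ are counted as \emph{accept} in $D$. On uniform strings, $B$ may output $\bot$ only on a $1/\kappa$ fraction, so $D$ still rejects random strings. On generated strings, $B$ cannot output $0$. So $D$ accepts the generated strings regardless of how often $\bot$ occurs, and no domination argument is needed at all. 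The remaining care lies in the time-bound bookkeeping, namely choosing $t'$ and $\rho'=\poly(\rho)$ so that the YES threshold on $\mathsf{DP}$ outputs and the incompressibility of uniform strings line up. I expect this to be routine, along the lines of \cite[Lemma 28]{GKLO22}.
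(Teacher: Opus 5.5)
Your final argument is the standard Hirahara-style reduction that the paper cites for this statement without giving a proof, so the approach is right. It counts an output of $\bot$ as acceptance, relies only on the errorless guarantee for the generated strings $\mathsf{DP}_k(x,z)$ and on the heuristic guarantee only for uniform strings, and in the NO case applies \Cref{lem:recon} to the oracle-free randomized distinguisher to get $\pK^{\poly(t)}(x)\le k+O(\log t)$. The one thing to fix is the seed length: it must be $k=s+O(\log\rho(t+a+b))$, not $n-s+O(\log\rho)$ as in your step (3). The inequality $s+nk+O(\log t)\le s'=nk+k-\log\rho-O(\log(t+a+b))$, which you discarded as ``the wrong orientation'', says exactly $k\ge s+O(\log\rho)$. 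That same choice is what makes the reconstruction bound contradict the NO promise; with $k=n-s$ the YES direction breaks for $s>n/2$ and the NO direction for $s<n/2$. Also, the entropy deficiency of $\mathsf{DP}_k(x,\cdot)$ is $k$, not $n-s$, though zero error makes domination unnecessary anyway.
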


We are now ready to show \Cref{t:gap-mild-gap-avg-errless}.
\begin{proof}[Proof of \Cref{t:gap-mild-gap-avg-errless}]
The reverse implication is immediate. We therefore prove the forward implication.

Assume that
\[
\textnormal{``}\mathsf{Gap}\textnormal{-}\mathsf{MINpKT}^{\mathsf{PH}}\in \mathsf{AvgBPP}\textnormal{''}.
\]
By \Cref{t:worst-to-avg-red}, this implies
\[
\mathsf{Gap}\textnormal{-}\mathsf{MINpKT}^{\mathsf{PH}}\in \mathsf{prBPP}.
\]
Then, by \Cref{l:GapMINpnKT-to-PH}, we obtain
\[
\Dist\mathsf{PH}\subseteq \mathsf{AvgBPP}.
\]

Consequently, it suffices to establish the following claim.

\begin{claim}\label{c:minpkt-avg-easy}
If $\Dist\mathsf{PH}\subseteq \mathsf{AvgBPP}$, then ``$\mathsf{MINpKT}^{\mathsf{PH}} \in \mathsf{AvgBPP}$''.
\end{claim}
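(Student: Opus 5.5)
The plan is to show that, for every fixed $\mathcal{O}\in\PH$, the promise problem $\MINpKT^{\mathcal{O}}$ can be turned into a language in $\PH$. Once that is done, the assumption $\Dist\PH\subseteq\Avg\BPP$ applies directly over every samplable distribution of instances.

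\textbf{Step 1: $\MINpKT^{\mathcal{O}}$ is in $\promise\AM^{\mathcal{O}}$.} Fix $\mathcal{O}\in\Sigma^{\P}_{j}$. Given an instance $(x,1^s,1^t,1^a,1^b)$, the predicate ``$\K^{t,\mathcal{O}}(x\mid\epsilon;r)\le s$'' is in $\NP^{\mathcal{O}}$ on input $(x,1^s,1^t,r)$. Guess a program of length at most $s$, simulate it for $t$ steps, and verify its oracle answers; YES answers are checked with $\mathcal{O}$ itself, and NO answers with a $\Sigma^{\P}_{j+1}$ oracle. So the predicate lies in $\Sigma^{\P}_{j+1}$.

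Distinguishing ``this set of $r$ has density at least $b/a$'' from ``density less than $(b-1)/a$'' is then an approximate counting problem with a $1/a$ additive gap. Since $a$ is given in unary, this gap is inverse-polynomial. By Sipser--Gács--Lautemann style approximate counting (or Stockmeyer's approximate counting relative to an oracle), it lies in $\promise\BPP^{\Sigma^{\P}_{j+1}}$, which is contained in $\promise\Sigma^{\P}_{j+3}$. In fact, since the count is a sampling question over $r$, the instance can even be decided in $\promise\BPP^{\Sigma^{\P}_{j+1}}$ by directly sampling $r$ and querying the oracle.

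\textbf{Step 2: convert the promise problem into a language.} This is the main obstacle, because $\Dist\PH$ consists of languages rather than promise problems. I would use the threshold trick. Define the language $L_{\mathcal{O}}$ that accepts $(x,1^s,1^t,1^a,1^b)$ iff
\[
\Prob_{r}\bigl[\K^{t,\mathcal{O}}(x\mid\epsilon;r)\le s\bigr]\ \ge\ \frac{b-1/2}{a}.
\]
Any language agreeing with the promise on YES and NO instances suffices for the claim. Membership in $L_{\mathcal{O}}$ is a \emph{counting} threshold, so it is in $\mathsf{PP}^{\Sigma^{\P}_{j+1}}$, which is not obviously in $\PH$.

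Instead I would pick a language in $\PH$ directly. Let $\mathsf{Approx}$ be a deterministic $\Sigma^{\P}_{j+3}$-oracle procedure that computes an estimate $\tilde p$ of the acceptance density within $1/(4a)$. Such a procedure exists by derandomizing the $\BPP^{\Sigma^{\P}_{j+1}}$ sampler via Sipser--Lautemann, with an explicit canonical choice of estimate: output the largest multiple of $1/(8a)$ certified by a $\Sigma^{\P}$ approximate lower-bound test. Define $L_{\mathcal{O}}$ to accept iff $\tilde p\ge (b-1/2)/a$. Then $L_{\mathcal{O}}\in\P^{\Sigma^{\P}_{j+3}}\subseteq\PH$, and $L_{\mathcal{O}}$ agrees with $\MINpKT^{\mathcal{O}}$ on every promise instance.

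\textbf{Step 3: apply the hypothesis.} For any samplable distribution $\mathcal{D}$ on instances, $(L_{\mathcal{O}},\mathcal{D})\in\Dist\PH\subseteq\Avg\BPP$. This gives a randomized errorless heuristic scheme $B$ for $L_{\mathcal{O}}$. On promise instances $B$ outputs $\MINpKT^{\mathcal{O}}$'s answer or $\bot$ with probability at least $4/5$. On non-promise instances any output is acceptable under the notion ``decides the promise problem on $z$''. Both conditions of Definition~\ref{def:zero-sided}, adapted to the no-gap problem, therefore hold; they differ only in the strictness of $>3/5$ versus $4/5$, which is handled by standard amplification of $B$. Since this works for every $\mathcal{O}\in\PH$, we get ``$\MINpKT^{\PH}\in\Avg\BPP$''.

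The delicate point is Step 2: making sure the chosen total extension is genuinely in $\PH$, rather than a $\mathsf{PP}$-style exact threshold. Canonical approximate counting handles this.
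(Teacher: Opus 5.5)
Your proposal is correct and follows the paper's route. Both arguments place $\MINpKT^{\mathcal{O}}$ in $\promise\PH$: the paper does so via an $\AM^{\mathcal{O}}$ protocol in which the prover supplies a short program for the verifier's random $w$, and you do so by sampling $r$ with a $\Sigma^{\P}_{j+1}$ oracle and applying Sipser--G\'acs--Lautemann. Both then take a $\PH$ language that agrees with $\MINpKT^{\mathcal{O}}$ on the promise and apply $\Dist\PH\subseteq\Avg\BPP$ to the family $\mathcal{D}'_{\langle n,s,t,a,b\rangle}$ that samples $x\sim\mathcal{D}_n$ and appends $1^s,1^t,1^a,1^b$. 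You should name this family explicitly in Step~3, because the target definition fixes $s,t,a,b$ and puts the distribution only on $x$.

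Your Step~2 spells out the total-extension point that the paper leaves implicit, but it can be simplified. The $\Sigma^{\P}_{j+3}$ formula that Sipser--G\'acs--Lautemann produces for the single question ``density $\ge b/a$ versus $<(b-1)/a$'' is syntactic. It is therefore already a language in $\PH$ that agrees with the promise, so the search over estimates is unnecessary.
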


\begin{proof}[Proof of \Cref{c:minpkt-avg-easy}]
\renewcommand{\qedsymbol}{$\diamond$}

Assume $\Dist\mathsf{PH}\subseteq \mathsf{AvgBPP}$. Fix an oracle $\mathcal{O}\in \mathsf{PH}$ and let $\{\mathcal{D}_n\}$ be any polynomial-time samplable distribution family. We show that $\mathsf{MINpKT}^{\mathcal{O}}$ is solvable with zero error on average with respect to $\{\mathcal{D}_n\}$.

First, we observe that $\mathsf{MINpKT}^{\mathcal{O}}\in \promise\AM^{\mathsf{PH}}$. Consider the following protocol on input $(x,1^s,1^t,1^a,1^b)$:

\begin{quote}
The verifier samples a random string $w \in \{0,1\}^t$ and sends it to the prover. The prover responds with an oracle program $M_w^{(\cdot)}$. The verifier then simulates $M_w^{\mathcal{O}}(w)$ for at most $t$ steps and accepts iff
\[
|M_w| \le s \quad \text{and} \quad M_w^{\mathcal{O}}(w) = x.
\]
\end{quote}

If $\pK^{t,\mathcal{O}}_{b/a}(x) \le s$, then there exists a strategy for the prover that causes the verifier to accept with probability at least $b/a$. Conversely, if $\pK^{t,\mathcal{O}}_{(b-1)/a}(x) > s$, then every prover causes acceptance with probability less than $(b-1)/a$. Standard error reduction yields a $\promise \mathsf{AM}^{\mathsf{PH}}$ protocol for $\mathsf{MINpKT}^{\mathcal{O}}$.

Since $\promise\AM^{\mathsf{PH}} \subseteq \promise\mathsf{PH}$, it follows that $\mathsf{MINpKT}^{\mathcal{O}} \in \promise\mathsf{PH}$. Hence there exists a $\mathsf{PH}$ predicate $B$ that correctly decides $\mathsf{MINpKT}^{\mathcal{O}}$ on the promise inputs.

Now consider the family of distributions $\mathcal{D}'_{\langle n,s,t,a,b\rangle}$, where for each each triple $\langle n,s,t, a, b\rangle$, we sample $x \sim \mathcal{D}_n$ and output $(x,1^s,1^t,1^a,1^b)$.

By the assumption $\Dist\mathsf{PH}\subseteq \mathsf{AvgBPP}$, there exists a zero-error heuristic scheme $A$ that solves $B$ under $\{\mathcal{D}'_{\langle n,s,t,a,b\rangle}\}$.
It is easy to see that $A$ is also a zero-error heuristic scheme for $\mathsf{MINpKT}^{\mathcal{O}}$.
\end{proof}

The theorem follows immediately from the claim.
\end{proof}

\subsection{The Two-Sided-Error Case}

Here we show that the hardness of the standard gap version of $\mathsf{MINpKT}^{\mathsf{PH}}$ is equivalent to that of its no-gap variant in the two-sided-error average-case setting.
\begin{theorem}\label{t:gap-mild-gap-avg-err}
The following holds:
\[
\textnormal{``}\mathsf{Gap}\textnormal{-}\mathsf{MINpKT}^{\mathsf{PH}}\in \mathsf{HeurBPP}\textnormal{''}
\iff
\textnormal{``}\mathsf{MINpKT}^{\mathsf{PH}}\in \mathsf{HeurBPP}\textnormal{''}.
\]
\end{theorem}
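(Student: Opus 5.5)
The reverse implication is immediate, and I would dispose of it first. On any instance in the promise of $\mathsf{Gap}_\rho\textnormal{-}\mathsf{MINpKT}^{\mathcal{O}}$, the YES set coincides with that of $\mathsf{MINpKT}^{\mathcal{O}}$. For a NO instance we have $\pK^{\rho(t+a+b),\mathcal{O}}_{(b-1)/a}(x)\geq s+\log\rho(t+a+b)$, so by monotonicity in the time bound $\pK^{t,\mathcal{O}}_{(b-1)/a}(x)>s$, and the instance is also a NO instance of $\mathsf{MINpKT}^{\mathcal{O}}$. Hence any two-sided-error heuristic for the no-gap problem on a samplable distribution is also a heuristic for the gap problem, with the same oracle $\mathcal{O}\in\PH$.

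For the forward direction I would follow the route sketched in the introduction. The plan has three steps.

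\begin{enumerate}
\item Show that ``$\mathsf{Gap}\textnormal{-}\mathsf{MINpKT}^{\PH}\in\mathsf{HeurBPP}$'' implies $\Dist\PH\subseteq\mathsf{HeurBPP}$. This step carries all the difficulty.
\item Reuse the argument of \Cref{c:minpkt-avg-easy} in the two-sided setting. We have $\mathsf{MINpKT}^{\mathcal{O}}\in\promise\AM^{\PH}\subseteq\promise\PH$ via the same protocol, so there is a $\PH$ predicate $B$ agreeing with $\mathsf{MINpKT}^{\mathcal{O}}$ on its promise.
\item Apply $\Dist\PH\subseteq\mathsf{HeurBPP}$ to $B$ under the samplable family $\mathcal{D}'_{\langle n,s,t,a,b\rangle}$ that outputs $(x,1^s,1^t,1^a,1^b)$ with $x\sim\mathcal{D}_n$. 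This gives the required two-sided heuristic for $\mathsf{MINpKT}^{\mathcal{O}}$.
\end{enumerate}

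For the first step I would argue by contrapositive, using the characterization of Hirahara and Santhanam \cite{DBLP:conf/innovations/HiraharaS22a}. If $\Dist\PH\not\subseteq\mathsf{HeurBPP}$, they give a family $G$ of generators computable in $\P^{\Sigma_k^\P}$ for some $k$, secure against probabilistic polynomial-time adversaries, and with the entropy-preserving property. Concretely, $G(\mathcal{U}_d)$ is (close to) computationally indistinguishable from $\mathcal{U}_m$ against efficient uniform adversaries.

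The one technical adjustment is the stretch. Their characterization only provides stretch $m=d+c\log d$ for an arbitrary constant $c$. I would choose $c$ large compared to the fixed polynomial $\rho$ given for the oracle $\mathcal{O}=\Sigma_k^\P$-complete set, so that the stretch beats the additive $\log\rho(t+a+b)$ gap. This is the ``minor extension'' alluded to via \Cref{t:prg-pK}, and I would state and prove it as a separate lemma.

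The distinguisher runs the assumed heuristic algorithm $A$ for $\mathsf{Gap}_\rho\textnormal{-}\mathsf{MINpKT}^{\mathcal{O}}$ over the \emph{uniform} distribution. It takes input $w\in\bool^m$ and queries $(w,1^{s},1^{t},1^a,1^b)$ with the following parameters:
\begin{itemize}
\item $s=d+O(\log m)$;
\item $t=\poly(m)$ large enough that every output satisfies $\pK^{t,\mathcal{O}}(G(z))\leq |z|+O(\log m)$, since $G$ is computable with an $\mathcal{O}$ oracle;
\item $a,b$ set to $2/3$-type thresholds.
\end{itemize}

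The two cases then behave as follows.
\begin{itemize}
\item On a uniform $w$: by \Cref{l:incomp-strings}, with probability $1-o(1)$ we have $\pK^{\rho(\cdot),\mathcal{O}}(w)\geq m-O(1)\geq s+\log\rho$, so $w$ is a NO instance. Since $A$ is correct on most uniform inputs, it rejects with good probability.
\item On $G(z)$: every output is a YES instance, but $A$ is only guaranteed correct on a $1-1/\kappa$ fraction of the \emph{uniform} distribution, not of $G(\mathcal{U}_d)$.
\end{itemize}

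Bridging this gap is the main obstacle, and it is exactly where entropy preservation enters. Because $G(\mathcal{U}_d)$ has enough min-entropy relative to uniform, with each output having probability at most $2^{-m+O(\log m)}$ on a large-mass portion, the set of inputs on which $A$ errs (mass $\leq 1/\kappa$ under uniform) has mass at most $\poly(m)/\kappa$ under $G(\mathcal{U}_d)$. Taking $\kappa=\poly(m)$ large makes this small, so $A$ accepts $G(\mathcal{U}_d)$ with high probability. This yields a distinguishing advantage of $\Omega(1)$, contradicting security.

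If the exact entropy-preserving guarantee in \cite{DBLP:conf/innovations/HiraharaS22a} is weaker than pointwise min-entropy, I would instead use its statement that the output distribution is dominated by uniform up to a $\poly(m)$ factor on most of its mass. The argument above goes through in the same way with that guarantee.
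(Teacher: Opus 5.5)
Your plan follows the paper's proof closely. It uses the same contrapositive through the Hirahara--Santhanam characterization. It uses the same domination step: each output of $G_n$ has probability exactly $2^{-n}=2^{-m}\cdot 2^{m-n}$, so the uniform-error set of $A$ has mass at most $2^{m-n}/\kappa$ under $G_n(\mathcal{U}_n)$. It uses the same incompressibility argument for uniform strings, and the same $\promise\AM^{\PH}\subseteq\promise\PH$ step for the no-gap problem. There is, however, one place where Step 1 does not close as you set it up, and it is exactly what the paper's ``minor extension'' addresses.

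\textbf{The gap: order of quantifiers in Step 1.} For uniform strings to be NO instances you need
\[
m-n=c\log n\;\ge\;\log\rho(t+a+b)+O(\log m).
\]
Here $\rho$ is the polynomial given by the hypothesis for the oracle $\mathcal{O}$, and $t$ must be at least the $\mathcal{O}$-oracle running time of $G$, so that outputs of $G$ are YES instances. So the oracle (hence $\rho$) and the exponent of $t$ must be fixed \emph{before} $c$ is chosen.

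You implicitly treat the level $k$ as independent of $c$. You also choose $t=\poly(m)$ ``large enough'' \emph{after} $G$, that is, after $c$. Suppose the running time of $G$ were $n^{e(c)}$ with $e(c)$ growing linearly in $c$. Then $\log\rho(t)\approx\deg(\rho)\,e(c)\log n$ can outgrow the stretch $c\log n$ for every $c$, and no uniform string is a NO instance. The same circularity arises if the $\PH$ level needed grows with $c$, since that changes $\rho$.

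The original characterization only says that for each $c$ there is \emph{some} $\PH$-computable generator; it gives no such uniformity. This is why the paper restates it as \Cref{t:ph-pessilang}, with a single $f\in\PH$ and a single exponent $d$ that work for every $c$. The paper notes that this follows from the proof in that work, whose generator is a seed-extending direct-product construction. With $f$ and $d$ fixed first, one picks $\rho$ for $f$ and then takes the stretch $\log\alpha(n)$ large relative to $\rho(n^d)$, as in \Cref{t:prg-pK}. Your ``separate lemma'' should establish this uniformity; choosing $c$ large is not by itself the needed adjustment.

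\textbf{A smaller slip.} With threshold $\lambda=1/3$, \Cref{l:incomp-strings} gives $\pK(w)\ge m-\gamma$ only with probability $1-3\cdot 2^{-\gamma}$. So you need $m-\omega(1)$, for example $m-\log m$, rather than $m-O(1)$. The stretch absorbs this.
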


We will need the following result.
\begin{theorem}[{\cite{DBLP:conf/innovations/HiraharaS22a}}]\label{t:ph-pessilang}
The following statements are equivalent.\footnote{This formulation differs slightly from the original statement in \cite{DBLP:conf/innovations/HiraharaS22a}. More precisely, here we use the same language $f$ for every choice of $c$.  It is not hard to  verify that this version of the statement follows from the proofs in that paper.}
\begin{enumerate}
    \item $\Dist\PH \not\subseteq \mathsf{HeurBPP}$.

    \item\label{ph-prg}
    There exists a language $f \in \PH$ and a constant $d>0$ such that, for every constant $c>0$, there exists a pseudorandom generator
    \[
        G=\{G_n:\bool^n\to\bool^{\,n+c\log n}\}_{n\in\mathbb{N}}
    \]
    satisfying the following conditions:
    \begin{itemize}
        \item $G$ is infinitely often secure against probabilistic polynomial-time algorithms. That is, for every probabilistic polynomial-time algorithm $A$ and every polynomial $p$, there exist infinitely many $n$ such that
    \[
        \mleft|
        \Prob_{r\sim\bool^n}\!\left[A(G_n(r))=1\right]
        -
        \Prob_{x\sim\bool^{\,n+c\log n}}\!\left[A(x)=1\right]
        \mright|
        \le \frac{1}{p(n)}.
    \]
    
        \item  For all sufficiently large $n$, $G_n$ is computable in time $n^d$ with oracle access to $f$.
        
        \item  $G$ is \emph{min-entropy-preserving} in the sense that for every $n\in\mathbb{N}$ and $y\in \mathsf{Support}(G_n)$, $\Prob_{r\sim\bool^n}[G_n(r)=y]=2^{-n}$, i.e., $ H_\infty(G_n(\mathcal{U}_n))=n$.\footnote{In fact, the PRG is based on the direct product generator in which the seed is part of the output; in particular, the PRG is seed-extending.}
    \end{itemize}

\end{enumerate}
\end{theorem}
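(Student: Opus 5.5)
This is a cited result from \cite{DBLP:conf/innovations/HiraharaS22a}. My plan is to re-derive it along the lines of that paper, with one goal in view: making the oracle language $f$ independent of the stretch constant $c$. I treat the two directions separately.

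\textbf{Item 2 $\Rightarrow$ Item 1.} This direction is direct, and it is where min-entropy preservation is used.
\begin{enumerate}
\item Fix $c$ (say $c=1$) and the corresponding $G$. Define $L_G=\{y : \exists r,\ G_n(r)=y\}$. Since $G_n$ is computable in time $n^d$ with oracle $f\in\PH$, we have $L_G\in\NP^{f}\subseteq\PH$.
\item Suppose for contradiction that $(L_G,\mathcal{U})\in\mathsf{HeurBPP}$. Run the heuristic scheme with error parameter $k=n^{2c}$ and call the resulting algorithm $A$.
\item On a uniform $y\in\bool^{n+c\log n}$, $A$ accepts with probability at most $|L_G|/2^{n+c\log n}+1/k+o(1)=O(n^{-c})$.
\item Because $G_n$ is min-entropy preserving, $G_n(\mathcal{U}_n)$ is exactly uniform on $L_G\cap\bool^{n+c\log n}$, a set of density $n^{-c}$. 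Hence the points on which $A$ errs form at most an $n^{c}/k=n^{-c}$ fraction of the image, and $A$ accepts $G_n(\mathcal{U}_n)$ with probability $1-O(n^{-c})$.
\item This gives a distinguisher with constant advantage for all large $n$, contradicting infinitely-often security.
\end{enumerate}

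\textbf{Item 1 $\Rightarrow$ Item 2.} This is the substantive direction.
\begin{enumerate}
\item Apply the Impagliazzo--Levin reduction (the contrapositive form of \Cref{t:IL90}, in its $\mathsf{HeurBPP}$ version). From $\Dist\PH\not\subseteq\mathsf{HeurBPP}$ this yields a language $L_0\in\PH$ that is hard on average under the uniform distribution: some PPT algorithm fails on a $1/\poly$ fraction of inputs, infinitely often.
\item Amplify $L_0$ to a predicate $b\in\PH$ that no PPT algorithm computes with advantage $1/p(n)$ over $1/2$ for any polynomial $p$. Here I would use uniform direct-product/XOR amplification together with Goldreich--Levin, i.e.\ the reconstruction underlying the direct product generator of \Cref{def:DPG}. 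The reconstruction can be made uniform because the verifying and sampling steps it needs can be carried out inside $\PH$.
\item Set the generator to
\[
G(x_1,\dots,x_m,z)=\bigl(x_1,\dots,x_m,z,\,b(x_1,z),\dots,b(x_m,z)\bigr)
\]
with $m=c\log n$.
\item Check the three required properties:
\begin{itemize}
\item $G$ is seed-extending and injective, so it is min-entropy preserving.
\item $G$ is computable with oracle access to $b$.
\item A hybrid argument over the $m$ appended bits turns any distinguisher into a predictor for $b$ with advantage $1/(m\cdot\poly)$, so $G$ is secure infinitely often.
\end{itemize}
\item To use a single $f$ and $d$ for every $c$, let $f$ be a padded universal version of $b$, taking the amplification level as a unary parameter. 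Then $G$ makes only $c\log n$ queries of size $\poly(n)$ to a fixed $f$. The cost of evaluating each query must not depend on $c$; this is why the exponent $d$ can be fixed once and for all.
\end{enumerate}

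\textbf{Main obstacle.} I expect the difficulty to be the uniform hardness amplification in Item 1 $\Rightarrow$ Item 2. Yao's XOR lemma is inherently non-uniform, and we need security against uniform PPT algorithms. My first attempt would be to follow the Impagliazzo--Jaiswal--Kabanets--Wigderson uniform direct-product decoding. The list-decoding ambiguity would be resolved by testing each candidate circuit against $L_0$ itself. We cannot call $L_0$ directly; instead we exploit the fact that the hypothesis $\Dist\PH\not\subseteq\mathsf{HeurBPP}$ quantifies over all of $\PH$. So if every $\PH$ predicate were easy on average, the required $\PH$-oracle checks could themselves be simulated on average, which gives the contradiction we need.
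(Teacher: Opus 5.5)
Your argument for Item~2 $\Rightarrow$ Item~1 is fine. The image language is in $\NP^{f}\subseteq\PH$, and min-entropy preservation makes $G_n(\mathcal{U}_n)$ uniform on a set of density $n^{-c}$. So a heuristic with error parameter $n^{2c}$, amplified on good inputs, distinguishes with advantage $1-O(n^{-c})$.

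The gap is in Item~1 $\Rightarrow$ Item~2, in the security proof of $G(x_1,\dots,x_m,z)=(x_1,\dots,x_m,z,b(x_1,z),\dots,b(x_m,z))$.

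\textbf{Security of $G$.} As stated, unpredictability of $b$ does not imply that $G$ is secure. Take $b(x,z)=h(z)$ with $h\in\PH$ unpredictable. Then $b$ is unpredictable, yet all appended bits coincide, so $G$ is broken with advantage $1/2$. More fundamentally, any hybrid argument here needs the predictor to fill in the other coordinates $b(x_j,z)$ for $j\neq i$, i.e., to sample the hybrids. These bits depend on the challenge's $z$, so they cannot be hardwired even non-uniformly. Computing them requires evaluating a $\PH$ predicate, which a PPT reduction cannot do. This is the phenomenon the paper stresses: the output of a $\PH$-computable generator is not PPT-samplable, so the usual uniform Yao/next-bit reductions are unavailable.

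\textbf{Amplification.} This step has the same defect, and your proposed fix is circular. Selecting the correct candidate in a uniform list-decoding (IJKW-style) argument requires evaluating $L_0$ on random inputs, which is exactly what is hard. You also cannot appeal to ``if every $\PH$ predicate were easy on average'' while working under $\Dist\PH\not\subseteq\mathsf{HeurBPP}$.

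\textbf{What is missing.} You need a security reduction that (a) turns a distinguisher into a short \emph{description} of a hidden $\PH$-computed string rather than into a predictor, and (b) selects among the resulting candidates using only problems already known to be average-case easy.
\begin{itemize}
\item The cited generator is, per the footnote, the direct product generator with the seed included in the output. \Cref{lem:recon} provides (a): a PPT distinguisher for $(u,\mathsf{DP}_k(w(u),\mathcal{U}))$ with $k=\lceil c\log n\rceil$ yields a description of $w(u)$ of length $k+O(\log n)$. Hence there are only $\poly(n)$ candidates, all produced in PPT.
\item For (b), one workable option is to take the least $j$ with $\Dist\Sigma_j^{\P}\not\subseteq\mathsf{HeurBPP}$. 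Let $w(u)$ list the lexicographically first witnesses of many independent instances packed into $u$; this is computable with a $\Sigma_j^{\P}$ oracle. Check candidates with the $\Pi_{j-1}^{\P}$ verifier, which is average-case easy on the PPT-samplable distribution of (instance, candidate) pairs.
\item Packing plus verifiability converts noticeable reconstruction success into a heuristic with small error.
\item This also gives a single $f$, a $\Sigma_j^{\P}$-complete language, for all $c$ directly, without your padding device.
\end{itemize}
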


We are now ready to prove \Cref{t:gap-mild-gap-avg-err}.
\begin{proof}[Proof of \Cref{t:gap-mild-gap-avg-err}]
Since the backward direction is immediate, we proceed to prove the forward direction.

Assume that $\textnormal{``}\mathsf{Gap}\textnormal{-}\mathsf{MINpKT}^{\mathsf{PH}} \in \mathsf{HeurBPP}\textnormal{''}$.
We first show that infinitely-often secure $\mathsf{PH}$-computable pseudorandom generators with arbitrary logarithmic stretch do not exist.

\begin{claim}\label{t:prg-pK}
If $\mathsf{Gap}\textnormal{-}\mathsf{MINpKT}^{\mathsf{PH}} \in \mathsf{HeurBPP}$, then \Cref{ph-prg} in \Cref{t:ph-pessilang} is false.
\end{claim}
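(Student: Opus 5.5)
We argue by contradiction. Suppose $\mathsf{Gap}\textnormal{-}\mathsf{MINpKT}^{\mathsf{PH}} \in \mathsf{HeurBPP}$, and suppose \Cref{ph-prg} of \Cref{t:ph-pessilang} holds with some language $f\in\PH$ and constant $d$. Instantiate the assumption with the oracle $\mathcal{O}\vcentcolon= f$. This gives a polynomial $\rho(n)=n^{g}$ and, for the uniform distribution, a PPT heuristic $B$ for $\mathsf{Gap}_\rho\textnormal{-}\mathsf{MINpKT}^{f}$. Only after $g$ is fixed do we choose the stretch constant $c$, taking it large compared with $d$ and $g$; the order of quantifiers in \Cref{t:ph-pessilang} permits this because $f$ and $d$ do not depend on $c$. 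Set $m\vcentcolon= n+c\log n$ and $t\vcentcolon= n^{d'}$, where $d'$ is slightly larger than $d$. The distinguisher is $D(w)\vcentcolon= B(w,1^{s},1^{t},1^{3},1^{2};1^{k})$ with $s\vcentcolon= n+O(\log n)$ and $k$ a suitable polynomial in $n$. We will show that $D$ breaks $G$ for all large $n$, which contradicts infinitely-often security.

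\textbf{Pseudorandom side.} Every $y=G_n(r)$ is produced by a fixed program with $r$ hardwired that runs in time $n^d$ with oracle access to $f$. Hence $\pK^{t,f}_{1}(y)\le n+O(\log n)\le s$, so every string in the support is a YES instance, for every choice of the threshold parameters $b/a$. Here we need that $B$ succeeds on most YES instances drawn from $G_n(\mathcal U_n)$. However, $B$ is only guaranteed to succeed with respect to samplable distributions, and $G_n(\mathcal U_n)$ is samplable only with an $f$ oracle. This is the main obstacle, and it is exactly where min-entropy preservation enters. Since $H_\infty(G_n(\mathcal U_n))=n$, each support point has probability $2^{-n}$ under $G_n(\mathcal U_n)$ and probability $2^{-m}=n^{-c}\cdot 2^{-n}$ under $\mathcal U_m$. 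So the density of $G_n(\mathcal U_n)$ with respect to $\mathcal U_m$ is pointwise at most $n^{c}$. Consequently, if $B$ errs on at most a $1/k$ fraction of $\mathcal U_m$, it errs on at most an $n^c/k$ fraction of $G_n(\mathcal U_n)$. We therefore run $B$ on the uniform distribution over $m$-bit strings, which is samplable, with error parameter $k\vcentcolon= n^{c+2}$. Then $\Pr_r[D(G_n(r))=1]\ge \tfrac35(1-n^{-2})$, after amplifying $B$'s success on good inputs by majority vote.

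\textbf{Uniform side.} A NO instance requires $\pK^{\rho(t+5),f}_{1/3}(w)\ge s+\log\rho(t+5)=n+O(\log n)+O(g d'\log n)$. By \Cref{l:incomp-strings}, conditioned on the oracle $f$ and with $\lambda=1/3$, a uniform $w\in\bool^m$ has $\pK^{\rho(t+5),f}_{1/3}(w)<m-\gamma$ with probability less than $3\cdot 2^{-\gamma}$. Since $m=n+c\log n$, choosing $c$ larger than the constants hidden in $s+\log\rho$ (roughly $c\ge 2gd'+O(1)$) leaves a slack $\gamma=\Omega(\log n)$. Thus all but an $O(n^{-1})$ fraction of uniform strings are NO instances, and on the good uniform inputs $B$ outputs $0$. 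This gives $\Pr[D(\mathcal U_m)=1]\le O(1/n)+n^{-c-2}+o(1)$. The two acceptance probabilities differ by a constant for all large $n$, contradicting the security of $G$. The points to check carefully are the choice of $c$ after $g$ and the density-transfer argument that justifies feeding pseudorandom strings to $B$.
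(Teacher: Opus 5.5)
Your proof is correct and follows essentially the same route as the paper. Both arguments fix $f$ and $\rho$ first and only then choose the stretch (the paper's $\alpha(n)$, your $n^{c}$). Both then transfer the uniform-distribution heuristic guarantee to $G_n(\mathcal{U}_n)$ via the pointwise density bound coming from min-entropy preservation, observe that every generator output is a YES instance, and use the counting lemma to show that almost all uniform strings are NO instances. Your explicit handling of the quantifier order, using that $f$ and $d$ do not depend on the stretch, is exactly what the paper's footnoted reformulation of the Hirahara–Santhanam theorem is there to permit.
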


\begin{proof}[Proof of \Cref{t:prg-pK}]
\renewcommand{\qedsymbol}{$\diamond$}
In the following, we always consider the $\mathsf{Gap}\textnormal{-}\mathsf{MINpKT}$ problem with input parameters $a=3$ and $b=2$, and omit these parameters from the notation.

For the sake of contradiction, suppose that both ``$\mathsf{Gap}\textnormal{-}\mathsf{MINpKT}^{\mathsf{PH}} \in \mathsf{HeurBPP}$'' and \Cref{ph-prg} of \Cref{t:ph-pessilang} hold. Let
\begin{itemize}
    \item $f\in\PH$ be the language guaranteed by \Cref{ph-prg} of \Cref{t:ph-pessilang};
    \item $\rho$ be a polynomial such that $\mathsf{Gap}_{\rho}\textnormal{-}\mathsf{MINpKT}^{f}\in\mathsf{HeurBPP}$;
    \item $\alpha>\rho$ be a polynomial specified later, and let
    \[
    G=\{G_n\colon\{0,1\}^n\to\{0,1\}^{m(n)}\}
    \]
    be the pseudorandom generator guaranteed by \Cref{ph-prg} of \Cref{t:ph-pessilang}, where
    \[
    m(n)\coloneqq n+\log\alpha(n),
    \]
    and for all sufficiently large $n$, $G_n$ is computable in time $t(n)\coloneqq n^d$
    for some constant $d>0$ that is independent of $\alpha$;
    \item $\mathcal{D}=\{D_{m(n)}\}$ where each $D_{m(n)}$ samples from $G_n(\mathcal{U}_n)$.
\end{itemize}

Next, starting from an algorithm witnessing the easiness of $\mathsf{Gap}_{\rho}\textnormal{-}\mathsf{MINpKT}^{f}$ and applying standard error-reduction techniques, we obtain a probabilistic polynomial-time algorithm $A$ such that\footnote{We fix the probability threshold $b/a$ to $2/3$ and omit these input parameters in the subsequent discussion.}
\begin{equation}\label{eq:correct}
\Prob_{x\sim\bool^{m(n)}, A}\!\left[
A \textnormal{ decides }
\mathsf{Gap}_{\rho}\textnormal{-}\mathsf{MINpKT}^{f}
\textnormal{ on }(x,1^{s(n)},1^{t(n)})
\right]
\ge
1-\frac1{\alpha(n)^2},
\end{equation}
where $s(n)\vcentcolon=n+\gamma\log n$ for some absolute constant $\gamma>0$ specified later, and $t(n)$ is specified as above.

We will construct from $A$ an efficient algorithm that distinguishes $G_n(\mathcal{U}_n)$ from $\mathcal{U}_{m(n)}$ for every sufficiently large $n$. Fix such an $n$ for the remainder of the proof.

First, note that for every $y\in\mathsf{Support}(D_{m(n)})$, we have $D_{m(n)}(y)\leq 2^{-{n}}=2^{-m}\cdot \alpha(n)$. In other words, $\mathcal{U}_{m(n)}$ dominates $D_{m(n)}$, and therefore

\begin{align*}
    &\quad\Prob_{y\sim D_{m(n)}, A}\!\left[
A \textnormal{ fails to decide }
\mathsf{Gap}_{\rho}\textnormal{-}\mathsf{MINpKT}^{f}
\textnormal{ on }(y,1^{s(n)},1^{t(n)})
\right]\\
&\leq \Prob_{x\sim\mathcal{U}_{m(n)}, A}\!\left[
A \textnormal{ fails to decide }
\mathsf{Gap}_{\rho}\textnormal{-}\mathsf{MINpKT}^{f}
\textnormal{ on }(x,1^{s(n)},1^{t(n)})
\right]\cdot \alpha(n)\\
&\leq \frac{1}{\alpha(n)^2}\cdot \alpha(n) \tag{by \Cref{eq:correct}}\\
&= \frac{1}{\alpha(n)}.
\numberthis\label{eq:distinguish-1}
\end{align*}
Also, we have for every $r\in\bool^n$,
\[
\pK^{t(n),f}(G_n(r))
\le
n+O(\log n)\leq s(n).
\]
Hence every such $(G_n(r), 1^{s(n)},1^{t(n)})$ is a YES instance of
$\mathsf{Gap}_{\rho}\textnormal{-}\mathsf{MINpKT}^{f}$. It follows that
\begin{align*}
    1-\frac{1}{\alpha(n)} &\leq \Prob_{y\sim D_{m(n)}, A}\!\left[
A \textnormal{ decides }
\mathsf{Gap}_{\rho}\textnormal{-}\mathsf{MINpKT}^{f}
\textnormal{ on }(y,1^{s(n)},1^{t(n)})
\right] \tag{by \Cref{eq:distinguish-1}}\\
& =     \Prob_{r\sim\bool^{n}, A}\!\left[
A(G_n(r),1^{s(n)},1^{t(n))})=1
\right].
\numberthis\label{eq:distinguish-2}
\end{align*}

On the other hand, by \Cref{l:incomp-strings}, with probability except $1/n$ over $x\sim\mathcal{U}_{m(n)}$,
\[
\pK_{1/3}^{\rho(t(n)+5), f}(x)
\geq 
m(n)-O(\log n)
>
s(n)+\log\rho(t(n)+5),
\]
where the final inequality holds provided that $\alpha$ is chosen to be a sufficiently large polynomial. 
That is, with probability except $1/n$ over $x\sim\mathcal{U}_{m(n)}$, $(x, 1^{s(n)}, 1^{t(n)})$ is a NO instance of $\mathsf{Gap}_{\rho}\textnormal{-}\mathsf{MINpKT}^{f}$. 
It follows from the correctness of $A$ that
\begin{equation}\label{eq:distinguish-3}
\Prob_{x \sim \mathcal{U}_{m(n)}, A}\Big[
A(x, 1^{s(n)}, 1^{t(n)})=0\Big] \geq  1 - \frac{1}{n}-\frac{1}{\alpha(n)^2}.
\end{equation}

Comparing \Cref{eq:distinguish-2,eq:distinguish-3}, we obtain a probabilistic polynomial-time algorithm that distinguishes $G_n(\mathcal{U}_n)$ from $\mathcal{U}_{m(n)}$ for every sufficiently large $n$, a contradiction.
\end{proof}

It follows from \Cref{t:prg-pK} and \Cref{t:ph-pessilang} that $\DistPH\subseteq\mathsf{HeurBPP}$. Thus, it suffices to establish the following claim.

\begin{claim}\label{c:minpkt-heur-easy}
If $\DistPH\subseteq\mathsf{HeurBPP}$, then $
\textnormal{``}\mathsf{MINpKT}^{\mathsf{PH}}\in\mathsf{HeurBPP}\textnormal{''}$.
\end{claim}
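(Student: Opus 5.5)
The proof will mirror \Cref{c:minpkt-avg-easy}: first place $\mathsf{MINpKT}^{\mathcal{O}}$ in promise-$\PH$, then feed the resulting $\PH$ predicate to the assumed heuristic scheme over a padded samplable distribution. Assume $\DistPH\subseteq\mathsf{HeurBPP}$, fix $\mathcal{O}\in\PH$, and fix a polynomial-time samplable family $\{\mathcal{D}_n\}$ supported on $\bool^n$.

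\textbf{Step 1.} Use the $\AM^{\mathcal{O}}$ protocol from the proof of \Cref{c:minpkt-avg-easy}. The verifier sends $w\sim\bool^t$, the prover replies with a program of length at most $s$, and the verifier checks that the program outputs $x$ within $t$ steps relative to $\mathcal{O}$. This separates acceptance probability at least $b/a$ from acceptance probability below $(b-1)/a$. After standard gap amplification, the protocol gives $\mathsf{MINpKT}^{\mathcal{O}}\in\promise\AM^{\PH}\subseteq\promise\PH$. Hence there is a total language $B\in\PH$ that agrees with $\mathsf{MINpKT}^{\mathcal{O}}$ on all YES and NO instances; $B$ may behave arbitrarily on non-promise inputs.

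\textbf{Step 2.} For each tuple $\langle n,s,t,a,b\rangle$, let $\mathcal{D}'_{\langle n,s,t,a,b\rangle}$ sample $x\sim\mathcal{D}_n$ and output $(x,1^s,1^t,1^a,1^b)$. This family is polynomial-time samplable in the length of the index, so $(B,\mathcal{D}')\in\DistPH\subseteq\mathsf{HeurBPP}$. Let $A$ be the resulting two-sided-error heuristic scheme, and run it as $A((x,1^s,1^t,1^a,1^b);1^k)$ on index $\langle n,s,t,a,b\rangle$. By definition, for every fixed $n,s,t,a,b,k$,
\[
\Prob_{x\sim\mathcal{D}_n}\!\left[\Prob_A[A \text{ outputs } B(x,1^s,1^t,1^a,1^b)]>3/5\right]\ge 1-1/k .
\]
On promise instances $B$ coincides with $\mathsf{MINpKT}^{\mathcal{O}}$, and on non-promise instances any output counts as deciding the problem. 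Therefore every $x$ for which $A$ computes $B$ is one on which $A$ decides $\mathsf{MINpKT}^{\mathcal{O}}$. This is exactly the two-sided-error condition in \Cref{def:two-sided}, extended to $\mathsf{MINpKT}$, and it gives ``$\mathsf{MINpKT}^{\mathsf{PH}}\in\mathsf{HeurBPP}$''.

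\textbf{Main obstacle.} The delicate point is the indexing in Step 2. The heuristic scheme is defined per distribution index and receives that index as $1^n$. Here the index must encode $s,t,a,b$ as well, so that the error guarantee of at most $1/k$ holds for each fixed parameter tuple, as \Cref{def:two-sided} requires, rather than merely on average over tuples. I would handle this with a standard pairing of $\langle n,s,t,a,b\rangle$ into a single integer, padded so that the sampler runs in time polynomial in that integer. This is routine but should be stated explicitly. No reduction from $\mathsf{HeurBPP}$ errors to errorless ones is needed, which makes this case simpler than \Cref{c:minpkt-avg-easy}.
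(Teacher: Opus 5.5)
Your proposal is correct and takes the same route as the paper, which proves this claim by repeating the argument of \Cref{c:minpkt-avg-easy}: place $\mathsf{MINpKT}^{\mathcal{O}}$ in $\promise\AM^{\PH}\subseteq\promise\PH$, take a $\PH$ predicate $B$ that agrees with it on promise inputs, and apply the assumed heuristic scheme to $B$ under the padded family $\mathcal{D}'_{\langle n,s,t,a,b\rangle}$. Your explicit pairing of the tuple into a single index, so that the $1-1/k$ guarantee holds for each fixed parameter tuple, is exactly what the paper's indexed family does implicitly.
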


\begin{proof}[Proof of \Cref{c:minpkt-heur-easy}]
\renewcommand{\qedsymbol}{$\diamond$}
The proof is essentially identical to that of \Cref{c:minpkt-avg-easy}.
\end{proof}

The theorem now follows immediately from \Cref{c:minpkt-heur-easy} and the preceding discussion.
\end{proof}

\section{A Relativization Barrier}\label{ss:ave-case-rel-barrier}

In this section, we exhibit an oracle world where  $\mathsf{Gap}\textnormal{-}\mathsf{MINpKT}^{\mathsf{PH}}$ is easy while $\mathsf{Mild}\textnormal{-} \mathsf{Gap}\textnormal{-}\mathsf{MINpKT}^{\mathsf{PH}}$ is hard. 
To make the statement that we aim to prove precise, we first explain the relativization of these computational problems. As usual, this is obtained by providing oracle access to all machines involved in the computation.

\paragraph{Relativization.} For a string $x \in \{0,1\}^n$, a time bound $t$, a parameter $\lambda \in [0,1]$, and languages $L, \mathcal{O} \subseteq \{0,1\}^*$, the $(L,\mathcal{O})$\emph{-oracle  probabilistic $t$-time bounded Kolmogorov complexity} of $x$ is defines as 
   \[
        \pK^{t, L, \mathcal{O}}_\lambda(x) \eqdef \min\mleft\{s\in\mathbb{N}\;\middle\vert\;\Prob_{r\sim\{0,1\}^t}\bigg [ \K^{t, L, \mathcal{O}}(x \mid r)\le s \bigg ] \ge \lambda \mright\},
    \]
where $\K^{t, L, \mathcal{O}}(x \mid r)$ denotes the conditional $t$-time bounded Kolmogorov complexity of $x$ with oracle access to both $L$ and $\mathcal{O}$.

For a function $\rho\colon\mathbb{N}\to\mathbb{N}$ and oracles $L$ and $\mathcal{O}$, let $\mathsf{Gap}_{\rho}$-$\mathsf{MINpKT}^{L, \mathcal{O}}$ be the following promise problem: Given $(x,1^s,1^t,1^a,1^b)$, where $x\in \bool^*$ and $s,t,a,b \in \mathbb{N}$, decide whether $\pK^{t, L, \mathcal{O}}_{b/a}(x) \leq s$ (``yes'' case) or $\pK^{\rho(t+a+b), L, \mathcal{O}}_{(b-1)/a}(x) \geq s+\log\rho(t+a+b)$ (``no'' case). 

Let $\mathcal{O}$ be an oracle. We say that $\mathsf{Gap}\textnormal{-}\mathsf{MINpKT}^{\mathsf{PH^{\mathcal{O}}},\mathcal{O}} \in \prBPP^{\mathcal{O}}$ if for every language $L \in \PH^\mathcal{O}$, there is a polynomial $\rho$ and an oracle probabilistic polynomial-time algorithm that solves $\mathsf{Gap}_{\rho}\textnormal{-}\mathsf{MINpKT}^{L, \mathcal{O}}$ when given oracle access to $\mathcal{O}$. 

Analogously,  we consider the collection of problems $\mathsf{Mild}\textnormal{-} \mathsf{Gap}\textnormal{-}\mathsf{MINpKT}^{\mathsf{PH^{\mathcal{O}},\mathcal{O}}}$, for each $L \in \PH^\mathcal{O}$. Each problem $\mathsf{Mild}\text{-}\mathsf{Gap}_{\rho}$-$\mathsf{MINpKT}^{L, \mathcal{O}}$ is similar to $\mathsf{Gap}_{\rho}\textnormal{-}\mathsf{MINpKT}^{L, \mathcal{O}}$ except for that in the no case, we have $\pK^{t+\rho(|x|+a+b), L, \mathcal{O}}_{(b-1)/a}(x) \geq s+\log\rho(t+a+b)$ instead of $\pK^{\rho(t+a+b), L, \mathcal{O}}_{(b-1)/a}(x) \geq s+\log\rho(t+a+b)$. 

The statement $\mathsf{Mild}\textnormal{-} \mathsf{Gap}\textnormal{-}\mathsf{MINpKT}^{\mathsf{PH^{\mathcal{O}},\mathcal{O}}} \not\in \prBPP^{\mathcal{O}}$ means that there is a language $L \in \PH^\mathcal{O}$ such that for no constant $c \geq 1$ and polynomial $\rho(n) = c \cdot n^c$ we have $\mathsf{Mild}\text{-}\mathsf{Gap}_{\rho}$-$\mathsf{MINpKT}^{L, \mathcal{O}} \in \prBPP^\mathcal{O}$.

\begin{theorem}[Oracle world where  $\mathsf{Gap}\textnormal{-}\mathsf{MINpKT}^{\mathsf{PH}}$ is easy while $\mathsf{Mild}\textnormal{-} \mathsf{Gap}\textnormal{-}\mathsf{MINpKT}^{\mathsf{PH}}$ is hard]\label{thm:intro_oracle_separation}~\\
    There is an oracle $\mathcal{O}$ such that $\mathsf{Gap}\textnormal{-}\mathsf{MINpKT}^{\mathsf{PH^{\mathcal{O}}},\mathcal{O}} \in \prBPP^{\mathcal{O}}$ but $\mathsf{Mild}\textnormal{-} \mathsf{Gap}\textnormal{-}\mathsf{MINpKT}^{\mathsf{PH^{\mathcal{O}},\mathcal{O}}} \not\in \prBPP^{\mathcal{O}}$. 
\end{theorem}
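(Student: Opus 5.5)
We will take the oracle $\mathcal{O}$ from Impagliazzo \citep{DBLP:conf/coco/Impagliazzo11} and Hirahara--Nanashima \citep{DBLP:conf/focs/HiraharaN21}, for which $\Dist\PH^{\mathcal{O}}\subseteq\Avg\BPP^{\mathcal{O}}$ and $\PH^{\mathcal{O}}\not\subseteq\BPP^{\mathcal{O}}$. We then run the two correspondences of \Cref{sec:proof_equiv} inside this world. The oracle construction is used as a black box. The work consists of checking that every ingredient of \Cref{l:PH-to-GapMINpnKT} and of \Cref{t:MINpnKT-NP-hard} relativizes when all machines get access to $\mathcal{O}$, all Kolmogorov measures are taken relative to $(L,\mathcal{O})$, and $\SAT$ is replaced by an $\NP^{\mathcal{O}}$-complete language.

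\textbf{Step 1 (easiness of the standard gap problem).} Fix $L\in\PH^{\mathcal{O}}$. Then $\MINKT^{L,\mathcal{O}}\in\PH^{\mathcal{O}}$, so the assumption $\Dist\PH^{\mathcal{O}}\subseteq\Avg\BPP^{\mathcal{O}}$ gives $(\MINKT^{L,\mathcal{O}},\PSAMP^{\mathcal{O}})\subseteq\Avg\BPP^{\mathcal{O}}$. Note that \Cref{t:IL90} is stated for arbitrary oracles, so the passage to samplable distributions is available here. We then relativize \Cref{worst-to-avg-Kt-PH}. This is Hirahara's worst-case to average-case reduction \citep{DBLP:conf/focs/Hirahara18}, which only uses:
\begin{itemize}
\item the direct product generator,
\item the reconstruction lemma (\Cref{lem:recon}), which relativizes because the distinguisher is queried as an oracle,
\item counting arguments (\Cref{l:incomp-strings}, which already holds relative to oracles).
\end{itemize}
Running it with the $\Avg\BPP^{\mathcal{O}}$ algorithm as the distinguisher yields $\mathsf{Gap}_\rho\textnormal{-}\MINpKT^{L,\mathcal{O}}\in\prBPP^{\mathcal{O}}$. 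The reconstruction produces programs with oracle access to $\mathcal{O}$ and to the distinguisher. The distinguisher is itself a $\BPP^{\mathcal{O}}$ machine, so its queries can be simulated within $\pK^{\cdot,L,\mathcal{O}}$ at polynomial time cost.

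\textbf{Step 2 (hardness of the mild gap problem).} Suppose, towards a contradiction, that $\mathsf{Mild}\textnormal{-}\mathsf{Gap}\textnormal{-}\MINpKT^{\PH^{\mathcal{O}},\mathcal{O}}\in\prBPP^{\mathcal{O}}$. In particular this holds with $L=\SAT^{\mathcal{O}}$, an $\NP^{\mathcal{O}}$-complete language, which lies in $\PH^{\mathcal{O}}$. We relativize the proof of \Cref{t:MINpnKT-NP-hard} step by step:
\begin{enumerate}
\item \Cref{l:pKt-SAT-superscript} is already stated for an arbitrary $\mathcal{O}$ with $L$ hard for $\NP^{\mathcal{O}}$.
\item Language compression \Cref{t:lc-pnK} relativizes, since it is a combinatorial hashing argument, giving $\pnK^{\poly,\mathcal{O}}$ bounds.
\item \Cref{l:MINlpnKT-from-MINpnKT} relativizes verbatim.
\item The weak symmetry of information for $\ell\lpK$ is a counting argument and relativizes.
\item The lower bound \Cref{c:ub-lK} uses the mild-gap algorithm $A^{\mathcal{O}}$ as a distinguisher. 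Reconstruction then bounds $\pK^{\poly(t),\mathcal{O}}(y\mid x)$, now relative to $\mathcal{O}$ only rather than unrelativized.
\end{enumerate}
Carrying these through, together with the telescoping argument, gives the relativized witness compression
\[
\pK^{\poly(n),\mathcal{O}}(y\mid x)\le\log|A_x|+O(\log n)
\]
for every $\NP^{\mathcal{O}}$ relation. Sampling random short programs with access to $\mathcal{O}$ then gives $\NP^{\mathcal{O}}\subseteq\BPP^{\mathcal{O}}$. The relativized collapse $\NP^{\mathcal{O}}\subseteq\BPP^{\mathcal{O}}\Rightarrow\PH^{\mathcal{O}}\subseteq\BPP^{\mathcal{O}}$ (via $\BPP\subseteq\Sigma_2$ and induction) then contradicts the choice of $\mathcal{O}$.

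\textbf{Main obstacle.} The delicate part is Step 2, specifically the efficient composition property and the $\ell\lpK$ machinery of \citep{DBLP:journals/eccc/HuL025} in the presence of two oracles. We will need that the program model with sequences of machine-advice pairs still satisfies efficient composition when every machine queries $\mathcal{O}$ and $\SAT^{\mathcal{O}}$. This should follow from the proof of \cite[Proposition 57]{DBLP:journals/eccc/HuL025}, since oracle queries cost unit time and the simulation is syntactic. We also need the additive-time monotonicity lemmas (\cite[Lemmas 66, 67, 70, 72]{DBLP:journals/eccc/HuL025}) to hold with only $+\poly(n)$ overhead relative to oracles. We plan to check each lemma for any non-black-box use of machine descriptions, such as enumerating computations or using $\SAT$ to simulate nondeterminism. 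In each such place we substitute $\SAT^{\mathcal{O}}$, whose reductions are polynomial-time oracle reductions, so the overheads stay polynomial and additive as required.
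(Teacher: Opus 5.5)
Your proposal is correct and follows the same route as the paper. You take the oracle of \citep{DBLP:conf/coco/Impagliazzo11, DBLP:conf/focs/HiraharaN21}, relativize the reduction behind \Cref{worst-to-avg-Kt-PH} to make the standard gap problem easy, and relativize the proof of \Cref{t:MINpnKT-NP-hard} (with the key point, as you note, that reconstruction against a $\BPP^{\mathcal{O}}$ distinguisher yields $\mathcal{O}$-only programs) to show that mild-gap easiness would force $\NP^{\mathcal{O}}\subseteq\BPP^{\mathcal{O}}$. Your appeal to \Cref{t:IL90} in Step 1 is unnecessary, since $\Dist\PH^{\mathcal{O}}$ is already defined over $\PSAMP^{\mathcal{O}}$; that theorem also relativizes only the language class, not the algorithms or samplers.
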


Recall that, for an oracle $\mathcal{O}$, $\mathsf{DistPH}^\mathcal{O}$ denotes the set of all distributional problems $(L,\mathcal{D})$ where $L \in \PH^\mathcal{O}$ and $\mathcal{D} \in \mathsf{PSAMP}^\mathcal{O}$. The proof of \Cref{thm:intro_oracle_separation} relies on the following oracle separation, which is an immediate consequence of  \citep{DBLP:conf/coco/Impagliazzo11, DBLP:conf/focs/HiraharaN21}. 
 
\begin{theorem}[\citep{DBLP:conf/coco/Impagliazzo11, DBLP:conf/focs/HiraharaN21}]\label{thm:oracle_sep}
    There is an oracle $\mathcal{O} \subseteq \{0,1\}^*$ such that $\mathsf{DistPH}^\mathcal{O} \subseteq \mathsf{AvgBPP}^\mathcal{O}$ and $\PH^\mathcal{O} \nsubseteq \mathsf{BPP}^\mathcal{O}$.
\end{theorem}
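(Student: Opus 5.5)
The plan is to obtain the oracle directly from the constructions of \citep{DBLP:conf/coco/Impagliazzo11} and \citep{DBLP:conf/focs/HiraharaN21}. The work is in checking that their relativized statements imply the two properties exactly as the paper defines them: $\mathsf{DistPH}^{\mathcal{O}}$ over $\mathsf{PSAMP}^{\mathcal{O}}$ distributions, and randomized errorless heuristic schemes with oracle access to $\mathcal{O}$. Recall the structure of their oracle. Impagliazzo builds $\mathcal{O}$ by planting, at each length, a sparse random ``hard'' predicate. The predicate is hidden well enough that any fixed polynomial-time oracle machine fails on it in the worst case. However, every efficiently samplable distribution, even one with $\PH^{\mathcal{O}}$ power, almost never produces instances whose answers depend on the hidden part. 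Hirahara and Nanashima refine this construction to the whole polynomial hierarchy: relative to their oracle, every $\Sigma_k^{\P,\mathcal{O}}$ problem is easy on average, while a specific $\NP^{\mathcal{O}}$ language stays hard in the worst case.

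First, I will take the average-case half from \citep{DBLP:conf/focs/HiraharaN21}. Their theorem gives errorless average-case algorithms, possibly deterministic, for $\mathsf{DistPH}^{\mathcal{O}}$. Since deterministic errorless schemes are in particular randomized errorless schemes, $\mathsf{AvgP}^{\mathcal{O}}\subseteq\mathsf{AvgBPP}^{\mathcal{O}}$, so this inclusion is immediate. If their statement only covers the uniform distribution, or only $\mathsf{DistNP}^{\mathcal{O}}$ at each level, I will upgrade it with the relativized Impagliazzo--Levin reduction (the form of \Cref{t:IL90} with oracle $A=\Sigma_{k}^{\P,\mathcal{O}}$-complete). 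That reduction is black-box and relativizes, so it yields $(L,\mathcal{D})\in\mathsf{AvgBPP}^{\mathcal{O}}$ for every $L\in\Sigma_k^{\P,\mathcal{O}}$ and $\mathcal{D}\in\mathsf{PSAMP}^{\mathcal{O}}$. I will also note that a sampler with oracle access to $\mathcal{O}$ is still a $\mathsf{PSAMP}^{\mathcal{O}}$ sampler, which is what the reduction needs.

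Second, for the worst-case half I will use the language shown hard in those works. It is a language of the form $\{1^n : \exists y\in\bool^n,\ y\in\mathcal{O}\}$, or its $\mathsf{UP}$ or $\NP\cap\coNP$ analogue. By construction, no $\BPP^{\mathcal{O}}$ machine decides it. This language lies in $\NP^{\mathcal{O}}\subseteq\PH^{\mathcal{O}}$, so $\PH^{\mathcal{O}}\nsubseteq\BPP^{\mathcal{O}}$. If the cited separation is only against deterministic polynomial time, I would rerun the standard diagonalization against the countably many probabilistic oracle machines; the hidden-string argument already bounds the success probability of randomized query strategies.

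The main obstacle I expect is bookkeeping: both properties must hold for a single oracle. The randomness of the hard strings makes worst-case hardness hold with probability $1$ over the choice of oracle. The average-case easiness, in contrast, is established in \citep{DBLP:conf/focs/HiraharaN21} for the oracle as a whole. So I will fix their oracle and verify that the worst-case hardness argument applies to that same oracle, rather than combining two separately chosen oracles.
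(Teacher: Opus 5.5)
Your top-level route is the paper's, which simply cites the result: $\mathsf{AvgP}^{\mathcal O}\subseteq\mathsf{AvgBPP}^{\mathcal O}$, and the hard language lies in $\NP^{\mathcal O}\subseteq\PH^{\mathcal O}$. Those two observations are fine. The problem is the concrete worst-case half you describe. The language $\{1^n : \exists y\in\bool^n,\ y\in\mathcal O\}$, and any $\mathsf{UP}$ or $\NP\cap\coNP$ variant of it, is \emph{tally}. No tally language in $\PH^{\mathcal O}$ can lie outside $\BPP^{\mathcal O}$ once $\mathsf{DistPH}^{\mathcal O}\subseteq\mathsf{AvgBPP}^{\mathcal O}$. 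To see this, note that the point mass on $1^n$ is in $\PSAMP$. Take the errorless scheme $A$ with $k=2$. Since $1^n$ is the entire support, the second condition forces $\Prob_A[A(1^n,1^n,1^2)=\bot]<1/5$. The first condition gives $\Prob_A[A(1^n,1^n,1^2)\in\{L(1^n),\bot\}]\ge 4/5$. Hence $A$ outputs $L(1^n)$ with probability above $3/5$, so $L\in\BPP^{\mathcal O}$. The two halves, as you describe them, therefore contradict each other, and the verification promised in your last paragraph cannot succeed. In any such oracle world, the worst-case hard instances must be ones that no efficient oracle sampler outputs with noticeable probability. In this paper's language, they are instances of large computational depth; compare the $2^{O(\pK^{q(n),\mathcal O}(x)-\K(x))}$ running time in \Cref{l:alg-Sigma-k-cd}.

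The same point sinks your fallback of rerunning a standard hidden-string diagonalization against probabilistic machines. A Baker--Gill--Solovay-type argument plants the hardness on a fixed, efficiently samplable input such as $1^n$, which is exactly what the average-case half forbids. More generally, each oracle change made to defeat some $\BPP^{\mathcal O}$ machine is visible to $\PSAMP^{\mathcal O}$ samplers and to $\PH^{\mathcal O}$ predicates, so it can destroy average-case easiness. Nor can the oracle be analyzed as a random one: relative to a random oracle, one-way functions exist, so $\mathsf{DistNP}$ is hard on average with probability $1$. Keeping both requirements at once, with hardness against randomized algorithms and easiness for all of $\PH$, is the technical content of the cited constructions (Hirahara--Nanashima's improvement of Impagliazzo's oracle), not bookkeeping. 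So justify the statement purely by invoking their theorem as stated, as the paper does, and drop the mechanism you sketch, which is incorrect.
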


We are now ready to prove \Cref{thm:intro_oracle_separation}.

\begin{proof}[Proof Sketch of \Cref{thm:intro_oracle_separation}.]
    Since the argument is straightforward given \Cref{thm:oracle_sep} and the proof of \Cref{t:PH-heuristica-characterization}, we only provide a sketch here. We consider the oracle $\mathcal{O} \subseteq \{0,1\}^*$  provided by \Cref{thm:oracle_sep}, and verify that it satisfies the required properties in \Cref{thm:intro_oracle_separation}.

\paragraph{1) Proof that $\mathsf{Gap}\textnormal{-}\mathsf{MINpKT}^{\mathsf{PH^{\mathcal{O}}},\mathcal{O}} \in \prBPP^{\mathcal{O}}$.} By assumption, $\mathsf{DistPH}^\mathcal{O} \subseteq \mathsf{AvgBPP}^\mathcal{O}$. Let $L \in \PH^\mathcal{O}$. We need to prove that there is a polynomial $\rho$ and an oracle probabilistic polynomial-time algorithm that solves $\mathsf{Gap}_{\rho}\textnormal{-}\mathsf{MINpKT}^{L, \mathcal{O}}$ when given oracle access to $\mathcal{O}$. We rely on the following lemma.\footnote{Analogously to the definitions introduced above, the measure $\K^{t, L_1, L_2}(x)$ and the corresponding meta-computational problem $\mathsf{MINKT}^{L_1,L_2}$ are defined in the natural way, i.e., by considering programs that have oracle access to both $L_1$ and $L_2$.}

\begin{lemma}[Relativized form of \Cref{worst-to-avg-Kt-PH}]\label{lem:rel-worst-to-avg-Kt-PH}
    For all oracles $L_1, L_2 \subseteq \{0,1\}^*$, if $(\mathsf{MINKT}^{L_1,L_2},\PSAMP) \in  \mathsf{AvgBPP}^{L_2}$, then $\mathsf{Gap}_\rho\textnormal{-}\mathsf{MINpKT}^{L_1, L_2}\in\prBPP^{L_2}$ for some polynomial $\rho$. 
\end{lemma}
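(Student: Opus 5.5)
We will follow the proof of \Cref{worst-to-avg-Kt-PH} (which follows \cite[Proposition 58]{DBLP:journals/eccc/HuL025} and the worst-case-to-average-case framework of \cite{DBLP:conf/focs/Hirahara18}). Every step is checked to relativize: all machines get oracle $L_2$, and all Kolmogorov measures give programs access to both $L_1$ and $L_2$. The plan has two stages. First, from the assumed errorless average-case algorithm for $\mathsf{MINKT}^{L_1,L_2}$ we build an $\mathsf{AvgBPP}^{L_2}$ ``natural property''. This is an oracle-$L_2$ randomized polynomial-time test $D$ which, for each time bound, never accepts strings of small $\K^{t,L_1,L_2}$ complexity and accepts a $1/\poly$ fraction of uniformly random strings. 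Concretely, $D(w)$ runs the errorless heuristic on $(w,1^{s},1^{t})$ with $s=|w|-O(\log|w|)$. It accepts only if the heuristic certifies that $\K^{t,L_1,L_2}(w)>s$. Errorlessness means a low-complexity $w$ is never accepted. By \Cref{l:incomp-strings} (which holds relative to any oracle), random strings have high complexity, so the heuristic, run on the uniform distribution (which lies in $\PSAMP\subseteq\PSAMP^{L_2}$), answers on most of them.

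Second, we solve $\mathsf{Gap}_\rho\textnormal{-}\mathsf{MINpKT}^{L_1,L_2}$ in the worst case. Given $(x,1^s,1^t,1^a,1^b)$, we use $D$ as a distinguisher against the direct product generator. Consider the samplable distribution $\mathcal{D}_x$ that outputs $\mathsf{DP}_k(x,\mathcal{U}_{nk})$, or rather the family obtained by hardwiring the instance through its sampler; following Hirahara we instead sample $\mathsf{DP}_k(x,z)$ with $x$ given as part of the instance. We estimate the acceptance probability of $D$ on $\mathsf{DP}_k(x,\mathcal{U}_{nk})$ for $k\approx s+O(\log t)$ and compare it with its acceptance on $\mathcal{U}_{nk+k}$.

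In the YES case, $\pK^{t,L_1,L_2}_{b/a}(x)\le s$. Then with probability at least $b/a$ over the coins $r$, each output $\mathsf{DP}_k(x,z)$ has $\K^{\poly(t),L_1,L_2}(\,\cdot\mid r)\le s+nk+O(\log t)<nk+k-O(\log)$. This is below the threshold, so $D$ (which must also be errorless on the conditional/coin-augmented variant) rejects it, and a gap appears. In the NO case, if $D$ distinguished, the reconstruction lemma \Cref{lem:recon}, applied with oracle $D$, would give a short description of $x$ relative to $D$. Since $D$ is computable in randomized polynomial time with oracle $L_2$, this yields $\pK^{\poly(t),L_1,L_2}_{(b-1)/a}(x)\le k+O(\log t)<s+\log\rho(t+a+b)$, a contradiction. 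Thresholds $b/a$ and $(b-1)/a$ are separated by sampling and Chernoff bounds as in \cite[Lemma 28]{GKLO22}.

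The main obstacle is the handling of probabilistic complexity. The natural property must reject strings with small $\K$ given the random string $r$; this is typically done by appending $r$ to the instance, i.e. testing $r\circ\mathsf{DP}_k(x,z)$. It also requires that the reconstruction lemma relativizes: the reconstructed program queries $D$, hence $L_2$, and its running time is inside $\poly(t)$ regardless of $L_1$. Both arguments are black-box in the oracles, so we expect them to go through verbatim.
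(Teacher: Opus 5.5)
Your proposal is correct and follows the route the paper intends. The paper omits this proof, saying only that it is a direct relativization of \cite[Proposition 58]{DBLP:journals/eccc/HuL025}, i.e., Hirahara's argument: an errorless natural property obtained from the uniform distribution, then direct-product reconstruction with the random string prepended, with every machine given the oracle, exactly as you outline. The one thing to drop is the aside about an $x$-dependent samplable distribution, which is neither legitimate nor needed: errorlessness of the heuristic on the uniform distribution, whose support is all strings, already guarantees in the worst case that $D$ rejects every compressible $r\circ\mathsf{DP}_k(x,z)$.
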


 Assuming \Cref{lem:rel-worst-to-avg-Kt-PH}, we proceed as follows. We apply the lemma with $L_1 = L$ and $L_2 = \mathcal{O}$, which simplifies the assumption to $(\mathsf{MINKT}^{L,\mathcal{O}},\mathsf{PSAMP}^{\mathcal{O}}) \subseteq \mathsf{AvgBPP}^{\mathcal{O}}$. Since it is not hard to see that  $\mathsf{MINKT}^{L, \mathcal{O}} \in \PH^\mathcal{O}$, the assumption of \Cref{lem:rel-worst-to-avg-Kt-PH} holds (using that $\mathsf{DistPH}^\mathcal{O} \subseteq \mathsf{AvgBPP}^\mathcal{O}$). Therefore, it follows that  $\mathsf{Gap}_\rho\textnormal{-}\mathsf{MINpKT}^{L, \mathcal{O}}\in\prBPP^{\mathcal{O}}$ for some polynomial $\rho$, as desired. 

It remains to prove \Cref{lem:rel-worst-to-avg-Kt-PH}. The proof is a simple adaptation of that of {\cite[Proposition 58]{DBLP:journals/eccc/HuL025}} to the relativized setting. Since it requires no new idea, we omit the details.
    
    \paragraph{2) Proof that $\mathsf{Mild}\textnormal{-} \mathsf{Gap}\textnormal{-}\mathsf{MINpKT}^{\mathsf{PH^{\mathcal{O}},\mathcal{O}}} \not\in \prBPP^{\mathcal{O}}$.}
    Recall that, by construction, $\PH^\mathcal{O} \nsubseteq \mathsf{BPP}^\mathcal{O}$. Next, we use this separation to argue that $\mathsf{Mild}\textnormal{-} \mathsf{Gap}\textnormal{-}\mathsf{MINpKT}^{\mathsf{PH^{\mathcal{O}},\mathcal{O}}} \not\in \prBPP^{\mathcal{O}}$. In more detail, we establish the contrapositive, i.e., if $\mathsf{Mild}\textnormal{-} \mathsf{Gap}\textnormal{-}\mathsf{MINpKT}^{\mathsf{PH^{\mathcal{O}},\mathcal{O}}} \in \prBPP^{\mathcal{O}}$ then $\PH^\mathcal{O} \subseteq \mathsf{BPP}^\mathcal{O}$. Since $\NP^\mathcal{O} \subseteq \BPP^\mathcal{O}$ yields $\PH^\mathcal{O} \subseteq \BPP^\mathcal{O}$, it suffices to argue the assumption implies that $\NP^\mathcal{O} \subseteq \BPP^\mathcal{O}$.

    To achieve this, it is sufficient to establish the following lemma.

\begin{lemma}[Relativized form of \Cref{t:MINpnKT-NP-hard}]\label{lem:ext_of_MINpnKT-NP-hard}
For every $\mathcal{O} \subseteq \{0,1\}^*$, if
      $\mathsf{Mild}\textnormal{-} \mathsf{Gap}\textnormal{-}\mathsf{MINpKT}^{\mathsf{\PH^\mathcal{O}, \mathcal{O}}} \in \prBPP^\mathcal{O}$, then $\NP^\mathcal{O}\subseteq\BPP^\mathcal{O}$.
\end{lemma}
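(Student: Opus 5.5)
The plan is to relativize, line by line, the proof of \Cref{t:MINpnKT-NP-hard}, with every machine (programs inside the Kolmogorov measures, the assumed $\prBPP$ algorithm, and the final sampler) granted oracle access to $\mathcal{O}$. Let $\SAT^{\mathcal{O}}$ denote an $\NP^{\mathcal{O}}$-complete language; since $\SAT^{\mathcal{O}} \in \PH^{\mathcal{O}}$, the hypothesis gives a polynomial $\rho$ and a $\BPP^{\mathcal{O}}$ algorithm for $\mathsf{Mild}\text{-}\mathsf{Gap}_\rho\text{-}\MINpKT^{\SAT^{\mathcal{O}},\mathcal{O}}$. Fix $L \in \NP^{\mathcal{O}}$ with polynomial-time oracle verifier $V^{\mathcal{O}}$, and let $A_x$ be the witness set of $x$. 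The goal is the relativized witness compression bound
\[
\pK^{\poly(n),\mathcal{O}}(y \mid x) \le \log|A_x| + O(\log n) \quad \text{for every } y \in A_x.
\]
Given this, a $\BPP^{\mathcal{O}}$ sampler (random short program, random coins, run with $\mathcal{O}$, check with $V^{\mathcal{O}}$) finds a witness with probability $1/\poly(n)$, yielding $\NP^{\mathcal{O}}\subseteq\BPP^{\mathcal{O}}$.

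The proof has two steps. \textbf{(i) Language compression with the strong oracle.} By \Cref{t:lc-pnK} (an information-theoretic statement whose proof relativizes, since membership in $A_x$ is only queried) and \Cref{l:pKt-SAT-superscript} applied with the base oracle $\mathcal{O}$ and $L=\SAT^{\mathcal{O}}$ (the lemma is already stated for arbitrary $\mathcal{O}$), we get $\pK^{\tau,\SAT^{\mathcal{O}},\mathcal{O}}_{1-1/n^3}(y\mid x)\le \log|A_x|+O(\log n)$. \textbf{(ii) Oracle removal, from $(\SAT^{\mathcal{O}},\mathcal{O})$ down to $\mathcal{O}$.} First relativize \Cref{l:MINlpnKT-from-MINpnKT}; its proof is a black-box use of $A$ and goes through verbatim with $A$ querying $\mathcal{O}$. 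Then relativize \Cref{l:lpnK-cond-pK-bound-cond-pnK}, in which every $\ell\lpK$ term carries oracles $(\SAT^{\mathcal{O}},\mathcal{O})$ and the left-hand side is $\pK^{p(t),\mathcal{O}}(y\mid x)$. For the upper bound (\Cref{c:lb-lK}), the efficient composition property and the monotonicity lemmas of \cite{DBLP:journals/eccc/HuL025} are purely syntactic facts about programs and relativize trivially. The same holds for the weak symmetry of information \Cref{eq:weak-soi-for-lpk}, which is a counting argument. Finally, the telescoping argument in the proof sketch of \Cref{t:MINpnKT-NP-hard} only uses that $\ell\lpK$ values lie in $[0,2n]$ and that the time bound grows additively, so it also relativizes. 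This yields $\pK^{\poly(n),\mathcal{O}}(y\mid x)\le \pK^{\tau,\SAT^{\mathcal{O}},\mathcal{O}}_{1-1/n^3}(y\mid x)+O(\log n)$, and combining with (i) gives the target bound.

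The main obstacle is the lower bound \Cref{c:ub-lK}. There, the distinguisher $D$ is the mild-gap algorithm, which now queries $\mathcal{O}$, and we must check that the reconstruction step still produces a description relative to $\mathcal{O}$ only, not relative to $\SAT^{\mathcal{O}}$. The plan is to invoke \Cref{lem:recon} with $D$ given as an $\mathcal{O}$-oracle probabilistic procedure. Reconstruction queries $D$ nonadaptively, and each query can be simulated by running the $\BPP^{\mathcal{O}}$ algorithm in $\poly(t)$ time with $\mathcal{O}$ access, where the fixed string $x$ is supplied as the conditional input. This gives $\pK^{\poly(t),\mathcal{O}}(y\mid x)\le k+O(\log t)$, contradicting the choice $k=\pK^{q(t),\mathcal{O}}(y\mid x)-\log q(t)$. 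The two points I would verify most carefully are that the derandomization and majority-vote parts of the reconstruction lemma tolerate a randomized oracle distinguisher, absorbing its coins into the $\pK$ randomness, and that the polynomial overhead of each $\mathcal{O}$-query simulation stays inside $\poly(t)$.
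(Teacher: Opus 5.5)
Your proposal is correct and takes the same route as the paper. The paper only asserts that every step in the proof of \Cref{t:MINpnKT-NP-hard} relativizes once $\SAT$ is replaced by an $\NP^{\mathcal{O}}$-complete language and all machines are given access to $\mathcal{O}$; your line-by-line check spells out that omitted verification, in particular that the reconstruction step simulates an $\mathcal{O}$-oracle distinguisher and so removes $\SAT^{\mathcal{O}}$ while keeping $\mathcal{O}$.
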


It is straightforward to check that the proof of \Cref{t:MINpnKT-NP-hard} extends to \Cref{lem:ext_of_MINpnKT-NP-hard}. The key point is that all techniques employed in the proof relativize. We omit the details.
\end{proof}

\bibliographystyle{alpha}
\bibliography{references}

\end{document}